\newtheorem{theorem}{Theorem}[section]
\newtheorem{proposition}{Proposition}[section]
\newtheorem{lemma}{Lemma}[section]
\newtheorem{corollary}{Corollary}[section]
\newtheorem{remark}{Remark}[section]
\newcommand{\at}{\tilde{a}}
\begin{document}

\title[Exponentially-tailed regularity and decay rate for Boltzmann]{Exponentially-tailed regularity and decay rate to equilibrium for the Boltzmann equation}

\maketitle

\begin{center}
Ricardo Alonso*, Irene M. Gamba**, Maja Taskovi\'{c} ***\\
*Department of Mathematics \\
Texas A\&M University at Qatar\\
email: ricardo.alonso@qatar.tamu.edu\\
**Department of Mathematics  and Oden Institute\\
University of Texas at Austin\\
email: gamba@math.utexas.edu\\
***Department of Mathematics \\
Emory University\\
email: maja.taskovic@emory.edu
\end{center}

\begin{abstract}
After revisiting the existence and uniqueness theory of solutions to the homogeneous Boltzmann equation whose transition probabilities (or collision kernels) \cite{ AGbams, MiscWenn} are given by Maxwell type and hard intramolecular potentials, under just integrability condition for the angular scattering kernel, we present in this manuscript several new results.  We start by showing the  Lebesgue and Sobolev propagation of the exponential tails for such solutions. Previous results required stronger angular scattering kernel integrability conditions \cite{ATh, GPV}. { We point out that one of the novel tools for obtaining these results includes pointwise (i.e. strong) commutators between fractional derivatives and the collision operator.  The paper includes the analysis for the critical case of Maxwell interactions corresponding to propagation of tails rather than generation. } 
In addition, we show new estimates giving  $L^{p}$-integrability  generation of exponential tails in the case of hard potential interactions in the range $p\in[1,\infty]$, exponentially-fast convergence rate  to thermodynamical equilibrium (under rather general physical initial data), and regularization in the sense of exponential attenuation of singularities. In many ways, this work is an improvement and an extension of several classical works in the area \cite{AGbams,ATh,A,BG,GPV,MV,Wennberg-stability}.  We, both,  use known techniques and introduce new and flexible ideas that achieve the proofs in a rather  elementary manner.
\end{abstract}
\vspace{.8cm}
{\bf Keywords:} {Boltzmann equation, Lebesgue integrability, fractional regularity, entropic methods, exponential convergence, decomposition theorem}.\\ 

{\bf MSC:} {82B40, 45Gxx.}

\section{Introduction}
We study qualitative and quantitative aspects of the exponentially weighted Lebesgue $L^{p}_{exp}$  integrability, propagation and generation, in the sense of tails, and the exponentially weighted Sobolev $H^{k}_{exp}$  propagation for the  homogeneous Boltzmann equation with hard potentials and Maxwell molecules for integrable angular scattering kernel.  In addition, we study the exponentially fast asymptotic convergence of solutions towards thermodynamical equilibrium under general physical initial data, propagation of regularity using pointwise (strong) commutators between the collision operator and fractional differentiation, and exponential attenuation of singularities.  The proof is valid for any dimension $d\geq 2$, integrability $p\in[1,\infty]$, regularity $k\geq0$, and integrability assumption on the scattering kernel.

More precisely, in this paper we study the Cauchy problem for   the  spatially homogeneous Boltzmann equation with hard potentials and Maxwell molecules for integrable angular scattering kernel
\begin{equation}\label{HBE}
\begin{aligned}
\begin{cases}
&\partial_{t}f(t,v) = Q_{\gamma,b}(f,f)(t,v)\,,\quad (t,v)\in\mathbb{R}^{+}\times\mathbb{R}^{d},\\
&f(0,v) = f_0(v),
\end{cases}
\end{aligned}
\end{equation}
where the collision operator is given by
\begin{align*}
&Q_{\gamma,b}(f,g)(v)=Q^{+}_{\gamma,b}(f,g) - Q^{-}_{\gamma,b}(f,g)\\
&\hspace{1cm}:=\int_{\mathbb{R}^{d}}\int_{\mathbb{S}^{d-1}}f(v')g(v'_{*})B(|u|,\hat{u}\cdot w)\text{d}w\text{d}v_{*}
- f(v) \int_{\mathbb{R}^{d}}\int_{\mathbb{S}^{d-1}}g(v_{*})B(|u|,\hat{u}\cdot w)\text{d}w\text{d}v_{*}\,,
\end{align*}
and where the collisional variables are defined as
\begin{equation*}
v':=v - u^{-}, \qquad v'_{*} := v_{*} + u^{-}, \qquad u:=v-v_{*},\qquad u^{\pm}:= \frac{u \pm |u|w}{2}\,.
\end{equation*}
The scattering angle $\theta$ is simply defined as $\cos(\theta):=\hat{u}\cdot w$, while the scattering kernel is assumed to have the form
\begin{align*}
B(|u|,\hat{u}\cdot w) = |u|^\gamma b(\hat{u}\cdot w), \; \text{ with }\,\;\gamma\in[0,2]\,,
\end{align*}
where $b$ is assumed to be integrable over the sphere $w\in \mathbb{S}^{d-1}$, and we can set, without loss of generality, $\|b\|_{L^{1}(\mathbb{S}^{d-1})}=1$. Another way of representing the kernel is via
\begin{equation*}
B(x, y) = x^{\gamma}\,b(y)\,,\; \text{ with }\;\; x\geq0,\;\;y\in[0,1],\;\;\gamma\in[0,2].
\end{equation*}
 We note that the support of $b$ is assumed to be in $[0,1]$ thanks to a symmetrization argument (thus, $b$ is assumed to be symmetrized \cite[pg. 3]{GPV}).
When the dependence of the angular scattering kernel $b$ is important, say in the constants involved in the estimates, it will be explicitly stated. 

We will commonly use, for technical reasons, the  following decomposition of the angular scattering kernel to estimate distinct parts of the collision operator
\begin{equation}\label{DAK}
\setlength{\jot}{.3cm}
\begin{aligned}
b\big(\cos(\theta)\big) &= b\big(\cos(\theta)\big) \big(\text{1}_{|\sin(\theta)| \geq \varepsilon} + \text{1}_{|\sin(\theta)| < \varepsilon}\big)=:b^{\varepsilon}_{1}\big(\cos(\theta)\big) + b^{\varepsilon}_{2}\big(\cos(\theta)\big)\,,\\
&\hspace{3cm}\text{ or, in terms of } y \\
b(y) &= b(y) \big(\text{1}_{ y\leq\sqrt{1-\varepsilon^{2}} } + \text{1}_{ y>\sqrt{1-\varepsilon^{2}} }\big)=:b^{\varepsilon}_{1}(y) + b^{\varepsilon}_{2}(y)\,.
\end{aligned}
\end{equation}
 This decomposition is motivated by the Young's inequality for the gain collision operator (see   Theorem \ref{ApT-1} for the statement), and is designed to separate grazing collisions (close to $\theta =0$) for which a singularity appears in the definition of the constant \eqref{C-1}. Different estimates will be applied to singular part of the kernel $ b^{\varepsilon}_{2}$ and the part corresponding to the remainder $ b^{\varepsilon}_{1}$  (see Lemma \ref{APq+}).

The initial data considered in this document is \textit{nonnegative} and has finite mass and energy,
\begin{equation}\label{iconditions}
\int_{\mathbb{R}^{d}}f_0\langle v \rangle^{2}\text{d}v<\infty,\qquad \text{ where }\;\langle v \rangle:= \sqrt{1+|v|^{2}},
\end{equation} 
moreover, without loss of generality we can set the initial mass equal to unity.  In this case, equation \eqref{HBE} has a unique solution $f(t,v)\geq0$ that conserves initial mass, momentum and energy, see \cite{MiscWenn}.
\\

Before starting with the technical details, let us mention that propagation of $L^{p}$ integrability for the Boltzmann equation, with different degrees of cut-off and weights, has been studied for quite some time now. In the case of polynomial weights, classical papers for the hard potential case are \cite{C,A} for $p=\infty$.   The case  $p\in(1,\infty)$ has been studied in \cite{G,MV,DM} and  $p=1$ in \cite{MiscWenn,D}, while the  Maxwell molecules model for the whole range  $p\in[1,\infty]$ is studied in \cite[Prop. 1.4]{Wennberg-stability1}.  In particular, the $L^{\infty}$-norm propagation has been shown for $b$ assumed bounded.  Interestingly in all these works, entropy plays a subtle role in the argument: it is used to find a suitable \textit{a priori} uniform lower bound for the \textit{loss} collision operator.  In fact, we show that for hard potentials entropy will not play such a relevant job as far as propagation of $L^{p}$-norms is concerned, instead, the propagation of a higher moment will be key.  Yet, for the Maxwell model, it plays a central part when estimating the \textit{gain} collision operator in a suitable way.

The propagation of $L^p$-exponentially weighted norms is more uncommon in the literature, however, reference \cite{GPV} is an illuminating example of it, where pointwise ($L^\infty$) gaussian estimates are shown to be propagated in the case of hard potentials using a comparison principle.  These techniques have been applied later for the Maxwell case in \cite{BG} after nontrivial modifications.  In these works, more restrictive assumptions on the angular function $b$ are assumed. In particular, $b$ has to behave in a specific way at the grazing angle $\theta=0$. 
Exponentially weighted $L^1$-norms are well understood in the general case $b\in L^1$ and have been extensively studied in \cite{ B, GPV, Mo, ACGM} for $\gamma>0$ and in \cite{B,BG}  for $\gamma=0$. The work in \cite{AGbams} emphasizes on the dependance of the characterization of the exponential rates depending on the coerciveness of the Boltzmann flow in $L^1_\mu(\mathbb{R}^d)$-norms.

For previous results on propagation of Sobolev norms we refer to \cite{MV} with polynomial weights, and \cite{ATh} with gaussian weights. These references are restricted to the hard potential case with cut-off condition $b\in L^{2}(\mathbb{S}^{d-1})$.  For the Maxwell molecules model, the propagation of smoothness and its relation with the relaxation towards equilibrium is studied in \cite{CGT}.  In this document, the analysis of propagation of regularity and relaxation are independent giving a more robust and general approach.

One of the main contributions of the present document with respect to  $L^{p}$-integrability and   propagation of Sobolev norms is to unify all previous works with a relatively simple line of reasoning that includes all ranges of integrability $p\in[1,\infty]$ and weights, polynomial and exponential, including gaussian.  It also includes both, Maxwell and hard interactions. Another novelty of our paper is that we  extend  the range of $p$ for the propagation of $L^p$ exponentially weighted norms from $p=1$ and $p=\infty$ to the entire range $p\in [1,\infty]$. We also relax the assumption on the angular kernel for the $L^\infty$ results by asking that the kernel is merely integrable (as opposed to being bounded in the previous works). Finally, in the context of Sobolev norms we relax the assumption  from $b\in L^2(\mathbb S^{d-1})$ to $b\in L^1(\mathbb S^{d-1})$,  we are able to prove propagation of fractional regularity by introducing \textit{pointwise commutators} between fractional differentiation and the collision operator, and we consider Maxwell molecules.

 The program to prove Lebesgue exponential tail propagation/generation consists of 3 main steps, \vspace{6pt}

 { 1)} prove propagation and generation  of exponential moments \cite{B, B1, BGP, ACGM, BG, MM, AGbams}, \vspace{6pt}
  
  { 2)}prove a so-called ``gain of integrability" inequality for the gain collision operator in the spirit of \cite{MV,AG,AGbams} and, \vspace{6pt}
 
  {3)} use Young's inequality \cite{G,ACG,AGbams} for the gain collision operator for  the case $p=\infty$. \vspace{6pt}

  In addition to these main steps, we will need an explicit lower bound for the negative part of the collision operator which seems to be classical in the literature, at least when finite initial entropy is assumed, see Lemma \ref{APq-} below.  Contrary to the hard potential case, the critical case of Maxwell interactions will need the \textit{a priori} propagation of entropy.  That is, the additional assumption
\begin{equation*}
\int_{\mathbb{R}^{d}}f_{0}\ln(f_0)\text{d}v<\infty
\end{equation*}
will be required on the initial condition.  This is harmless in our context since we will impose more restrictive conditions on $f_0$, namely, $f_{0}\in\big(L^{1}_{2}\cap L^{p}\big)(\mathbb{R}^{d})$ for some $p>1$, when proving propagation of Lebesgue norms.

 This short summary ends by stating a few words about the relaxation to thermal equilibrium of solutions to the Boltzmann equation.  The literature is ample in this respect, notable examples are \cite{A1,Mo,MV,Wennberg-stability,CGT} and the references therein.  It is well established that such convergence will occur with exponential rate in $L^{p}$-norms, even Sobolev norms, provided enough regularity is satisfied by the angular scattering kernel and the initial data.  This holds also for the case of Maxwell molecules \cite{CGT} for initial data with enough tail decay and regularity.  In all aforementioned references such angular scattering kernel must be at least bounded and the initial condition is usually taken at least in $L^{2}(\mathbb{R}^{d})$.  In references \cite{A1,Wennberg-stability} a clever approach is introduced using a dyadic splitting of the collision operator.  In \cite{Mo,MV} the analysis relies on the decomposition theorem, that is, propagation of smoothness and roughness, and the entropic methods, see for instance \cite{V}.  In both approaches spectral theory is also key, in particular, reference \cite{Mo} introduced a powerful quantitative technique known as enlargement of the spectral functional space. 

The contributions of this document in this topic is given, for hard potentials, in terms of the requirement of the scattering angle kernel and the generality of the initial data.  In particular, we will only assume
\begin{equation}\label{Initial-Data}
\text{Initial data:}\qquad\int_{\mathbb{R}^{d}}f_0\langle v \rangle^{2}\text{d}v<\infty,\qquad \int_{\mathbb{R}^{d}} f_0\ln(f_0)\text{d}v<\infty\,.
\end{equation}
\begin{equation}\label{Grad-cutoff-lower}
\text{Angular kernel:}\qquad b\in L^{1}(\mathbb{S}^{d-1})\,,\qquad b \geq b_o>0\,,\quad\text{ and }
\end{equation}
{We note that the assumption $b \ge b_o>0$ will be needed only for the entropy dissipation estimate in Theorem \ref{Dissipation-entropy} to fulfil the condition \eqref{K_B}
\begin{equation*}
B(x, y) = x^{\gamma}\,b(y) \geq x^{\gamma} b_o>0\,,
\end{equation*}
that is, in our particular case $K_B=b_o$ in that theorem.  This condition is not needed in previous sections.}  The strategy follows the entropic methods.  One of main contributions is the relaxation of the result given in \cite{V} with respect to the dissipation of entropy for hard potentials.  This will allow us to prescind of the decomposition theorem.  This is important since the  most general proof of the decomposition theorem available requires $b\in L^{2}(\mathbb{S}^{d-1})$, see \cite{MV}.  Once this is achieved, the result will follow after a fine decomposition of the Boltzmann linearized operator and the spectral enlargement result given in \cite{GMM}.  As an application of this convergence result, we prove the decomposition theorem under just assumptions \eqref{Initial-Data} and \eqref{Grad-cutoff-lower}.

\subsection{Outline of the paper}
In Section 2, we list several lemmas that will be used throughout the manuscript. They include the lower bound of the loss operator and the upper bound on the gain operator corresponding to the parts $b_1^\varepsilon$ and $b_2^\varepsilon$  of the angular kernel.  We state a result on the $L^\infty$-norm propagation for solutions which is based on these two estimates and the classical $L^{2}$-norm propagation.

Section 3 concerns the propagation and generation of exponentially weighted $L^p$ norms  of the solution to the homogeneous Boltzmann equation.    The case of hard potentials and the case of Maxwell molecules are studied separately. The latter one requires the \textit{a priori} boundedness of the entropy. We remark that the generation results holds only for hard potentials.

In  Section 4, we prove exponentially-tailed Sobolev regularity propagation for solutions of the homogeneous Boltzmann equation.  Propagation of regularity has been discussed before in \cite{MV, ATh}.  One of the central ingredients for the proof of propagation of Sobolev norms is the estimate \cite[Theorem 2.1]{BD} which is valid for any angular transition under the assumption  $b\in L^{2}(\mathbb{S}^{d-1})$.  This result was used in \cite{MV} to prove propagation of regularity in the case of hard potentials with polynomial weights and later, in \cite{ATh} for exponential weights.  Three extensions are given in this section with respect to \cite{MV, ATh} enable us, 

 1) to relax the assumption on the angular transition to be  merely $b\in L^{1}(\mathbb{S}^{d-1})$, 
 
 2) to able to prove propagation of fractional regularity by introducing pointwise/strong commutators, and 
 
 3) to develop regularity for Maxwell molecules cases,  in the context of exponential tails.  
 
 Although our main goal is to prove all previous extensions in the context of exponential weights in the spirit of \cite{ATh}, similar results follow with the same line of reasoning for polynomial weights of any order.

Finally, in Section 5 we show the exponentially fast convergence towards thermodynamical equilibrium and prove, as a corollary, a decomposition theorem for propagation of smoothness and roughness in the case of hard potentials.  The effort is focused in two fronts: 
\vspace{6pt}

1) provide a proof in the context of integrable scattering kernel hypothesis, and \vspace{6pt}

2) work with general physical data, that is,
 initial data having only finite mass, energy, and entropy.  
\vspace{6pt}

These results generalize classical references in the topic such as \cite{A1,Mo,MV,Wennberg-stability} at the level of the model and the initial data.  An entropic method \cite{V} together with the technique of spectral space enlargement \cite{Mo,GMM} will lead to the desired results.  In contrast to the usual argument made in the proof of exponential convergence that uses a decomposition theorem first and then an entropy method, our argument eliminates the need of the decomposition theorem.  This considerably reduces the technicalities here and in other contexts where entropic methods are used in Boltzmann-like equations.   
\subsection{Notation} We work with classical Lebesgue spaces $L^{p}(\mathbb{R}^{d})$ for $p\in[1,\infty]$.  The addition of polynomial or exponential weights are central throughout the manuscript.  No particular notation will be used, however, in some places we adopt the following standard notation for convenience
\begin{align*}
L^{p}_{\mu}(\mathbb{R}^{d}):&=\Big\{f\;\text{measurable}\,\big|\, \|f\|_{L^{p}_{\mu}(\mathbb{R}^{d})}:=\|f\langle\cdot\rangle^{\mu}\|_{L^{p}(\mathbb{R}^{d})}<+\infty\Big\}\,,\;\;\text{and}\\
L^{p}_{E\alpha,r}(\mathbb{R}^{d}):&=\Big\{f\;\text{measurable}\,\big|\, \|f\|_{L^{p}_{E\alpha,r}(\mathbb{R}^{d})}:=\|f\,e^{r\langle \cdot\rangle^{\alpha}}\|_{L^{p}(\mathbb{R}^{d})}<+\infty\Big\}\,,
\end{align*} 
for some $\mu\geq0$, $r>0$, and $\alpha>0$.  We restrict ourself to the Sobolev spaces $H^{k}(\mathbb{R}^{d})$, with $k\geq0$, defined as
\begin{equation*}
H^{k}(\mathbb{R}^{d}):=\Big\{f\in L^{2}(\mathbb{R}^{d})\,\big|\, \big\|\big(1+(-\Delta)\big)^{\frac{k}{2}}f\big\|_{L^{2}(\mathbb{R}^{d})}<+\infty\Big\}\,.
\end{equation*}
Here, the operator $\big(1+(-\Delta)\big)^{\frac{k}{2}}$ is defined using the Fourier transform $\mathcal{F}$,
\begin{equation*}
\mathcal{F}\Big\{\big(1+(-\Delta)\big)^{\frac{k}{2}}f\Big\}(\xi) = \langle \xi \rangle^{k}\mathcal{F}\big\{f\big\}(\xi)\,,
\end{equation*}
where, we recall, the brackets stand for $\langle \xi \rangle := \sqrt{1+|\xi|^{2}}$.  Polynomial and exponential weights will also be used in these spaces,
\begin{equation*}
H^{k}_{\mu}(\mathbb{R}^{d}):=\Big\{f\in H^{k}(\mathbb{R}^{d})\,\big|\, \big\|\langle \cdot\rangle^{\mu} \big(1+(-\Delta)\big)^{\frac{k}{2}}f\big\|_{L^{2}(\mathbb{R}^{d})}<+\infty\Big\}\,,\;\mu\geq0,
\end{equation*}
and,
\begin{equation*}
H^{k}_{exp}(\mathbb{R}^{d}):=\Big\{f\in H^{k}(\mathbb{R}^{d})\,\big|\, \big\|e^{r\langle\cdot\rangle^{\alpha}}\big(1+(-\Delta)\big)^{\frac{k}{2}}f\big\|_{L^{2}(\mathbb{R}^{d})}<+\infty\Big\}\,,
\end{equation*}
for some positive $r$ and $\alpha$.  These are the spaces of functions enjoying Sobolev regularity with polynomial and exponential tails respectively.  Observe that weights were chosen to be outside the differentiation operator.  Special care has to be made with this choice when fractional differentiation is performed in the particular case of exponential weights.  Finally, we will commonly use the norm
\begin{equation*}
\| \cdot \|_{(L^{p}\cap L^{q})(\mathbb{R}^{d})} := \max\big\{ \| \cdot \|_{L^{p}(\mathbb{R}^{d})},\| \cdot \|_{L^{q}(\mathbb{R}^{d})}\big\}
\end{equation*}
for the intersection space $L^{p}(\mathbb{R}^{d})\cap L^{q}(\mathbb{R}^{d})$, with $p,\,q\in[1,\infty]$.

\section{Important lemmas and $L^{\infty}$-propagation}
We begin this section with a classical inequality used to obtain a lower estimate for the loss operator $Q^{-}$.  This estimate is well know in the community when entropy is assumed to be finite, see for example \cite{A}, and it has been used in a central way in the analysis of moments \cite{WennbergMP} and the propagation of $L^{p}$-norms \cite{MV}.  The case where no finite entropy is assumed is presented in \cite{GPV}, however, the proof is strongly based in the fact that $f(t,v)$ solves the Boltzmann equation.  Refer to \cite[Lemma 27]{AGbams} and \cite[Lemma 4.2]{AGH} for an elementary proof based only on functional arguments.

\begin{lemma}[Lower bound]\label{APq-}
Fix $\gamma\in[0,2]$, and assume $0\leq \{f(t)\}_{t\geq0}\subset L^{1}_{2}(\mathbb{R}^{d})$ satisfies
\begin{equation*}
C\geq \int_{\mathbb{R}^{d}}f(t,v)dv \geq c\,,\quad  C\geq\int_{\mathbb{R}^{d}}f(t,v)|v|^{2}dv \geq c\,,\quad \int_{\mathbb{R}^{d}}f(t,v)\,v\,dv=0\,,
\end{equation*}
for some positive constants $C$ and $c$.  Assume also the boundedness of some $2^{+}$ moment
\begin{equation*}
\int_{\mathbb{R}^{d}}f(t,v)|v|^{2^{+}}dv\leq B.
\end{equation*}
Then, there exists $c_o:=c_o(B,C,c)>0$ such that
\begin{equation*}
\big(f(t,\cdot)\ast|\cdot|^{\gamma}\big)(v)\geq c_o\langle v \rangle^{\gamma}\,.
\end{equation*}
\end{lemma}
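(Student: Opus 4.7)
The plan is to split the argument by the size of $|v|$ into a tail regime $|v|\geq R_1$ and a bulk regime $|v|\leq R_1$, with $R_1$ chosen to depend only on the parameters $c,C,B$. The case $\gamma=0$ is immediate from the mass lower bound, so I restrict attention to $\gamma\in(0,2]$.

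For the tail regime, I would restrict the convolution to the set $\{|v_*|\leq |v|/2\}$, on which $|v-v_*|\geq|v|/2$. Chebyshev against the upper energy bound gives $\int_{|v_*|>|v|/2}f\,dv_*\leq 4C/|v|^2$, so picking $R_1:=\max(\sqrt{8C/c},1)$ guarantees $\int_{|v_*|\leq|v|/2}f\,dv_*\geq c/2$. This yields
\[
(f\ast|\cdot|^\gamma)(v)\;\geq\; \tfrac{c}{2}\bigl(|v|/2\bigr)^\gamma\;\geq\; c_1\,\langle v\rangle^\gamma,\qquad|v|\geq R_1,
\]
using $|v|\geq R_1\geq 1$.

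The bulk regime is the main work. Since $\langle v\rangle\leq\langle R_1\rangle$ there, I only need a positive constant lower bound. The central idea is to reduce to the quadratic case by the elementary pointwise inequality
\[
|v-v_*|^\gamma \;\geq\; (2M)^{\gamma-2}\,|v-v_*|^2,\qquad |v|\leq R_1,\ |v_*|\leq M,\ M\geq R_1,
\]
which follows from $|v-v_*|\leq 2M$ and $\gamma-2\leq 0$. It then suffices to bound
\[
\int_{|v_*|\leq M} f(v_*)\,|v-v_*|^2\,dv_* \;=\; |v|^2 \!\!\int_{|v_*|\leq M}\! f\,dv_* \;-\; 2v\cdot\!\!\int_{|v_*|\leq M}\! f\,v_*\,dv_* \;+\; \int_{|v_*|\leq M}\! f\,|v_*|^2\,dv_*
\]
from below. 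I would drop the first term (nonnegative); bound the third below by $c-B/M^\epsilon$ using the lower energy and the $2^+$-moment tail bound $\int_{|v_*|>M}f\,|v_*|^2\,dv_*\leq M^{-\epsilon}B$ with $\epsilon:=2^+-2>0$; and control the cross term by invoking zero momentum to rewrite $\int_{|v_*|\leq M}f\,v_*\,dv_* = -\int_{|v_*|>M}f\,v_*\,dv_*$, whose magnitude is at most $B/M^{1+\epsilon}$ by Markov with the $2^+$-moment. Choosing $M$ large enough that both correction terms are $\leq c/4$ yields a lower bound of $c/2$, whence $(f\ast|\cdot|^\gamma)(v)\geq c_2:=(2M)^{\gamma-2}(c/2)$, which on $|v|\leq R_1$ gives $(f\ast|\cdot|^\gamma)(v)\geq c_2\langle R_1\rangle^{-\gamma}\langle v\rangle^\gamma$. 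Taking the minimum of $c_1$ and $c_2\langle R_1\rangle^{-\gamma}$ produces $c_o$.

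The main obstacle is the bulk regime, specifically killing the cross term: zero momentum alone is not enough, because $\int_{|v_*|\leq M}f\,v_*\,dv_*$ need not vanish for any finite $M$. Eliminating it requires uniform integrability of $f\,v_*$ at infinity, which is exactly what the strictly higher $2^+$-moment hypothesis supplies. This is the structural reason the statement assumes a moment beyond mass and energy; once these tails are tamed, the rest reduces to Chebyshev/Markov inequalities together with the pointwise reduction from $\gamma<2$ to $\gamma=2$.
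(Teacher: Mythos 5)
Your proof is correct and follows essentially the same route as the paper: a bulk/tail split in $v$, reduction of $|\cdot|^{\gamma}$ to the quadratic case on a truncated region with the $2^{+}$-moment controlling the truncation error, and an elementary lower bound of order $|v|^{\gamma}$ for large $|v|$. The only cosmetic differences are that the paper centers its truncation at $v$ (using the zero-momentum identity on the full integral, so no cross term survives to require your extra Markov estimate) and treats the tail region via $|v-w|^{\gamma}\geq c_{\gamma}\big(|v|^{\gamma}-|w|^{\gamma}\big)$ rather than restricting to $\{|v_{*}|\leq |v|/2\}$.
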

\begin{remark}
The symbol $a^{\pm}$, with $a>0$, denotes a fixed real number bigger (+) or smaller (-) than $a$.
\end{remark}

The following lemma provides estimates on the gain operator, which are essential for proving propagation of $L^p$-exponential tails in Theorem \ref{T1}. 
The novel argument in Theorem \ref{T1} is for $p=\infty$ and relies on weighted $L^1$ and $L^2$ propagation results and  on the Young's inequality  (see    Theorem \ref{ApT-1}). In order to ensure that the constant in the Young's inequality is finite, one needs to take special care of the singular point close to grazing collisions (see Remark \ref{remark on C}). This is why the decomposition \eqref{DAK} of the kernel is introduced. In the lemma below, the singular part (with $b^{\varepsilon}_{2}$) is estimated in terms of weighed $L^1$ norms. Such estimate produces a finite Young's inequality constant, which also can be made small for small $\varepsilon$. The remainder part (with $b^{\varepsilon}_{1}$) is estimated in terms of weighed $L^2$ norms which are  propagated.
 
\begin{lemma}[Upper bound]\label{APq+} Write $b=b^{\varepsilon}_{1}+b^{\varepsilon}_{2}$ as in \eqref{DAK}.  Then, for every $\gamma\geq0$,
\begin{align}\label{upper}
\begin{split}
\| Q^{+}_{o,b^{\varepsilon}_{1} }\big(f,f\langle\cdot\rangle^{\gamma}\big) \|_{L^{\infty}(\mathbb{R}^{d})} &\leq \varepsilon^{-\frac{d}{2}}\,C(b) \|f\|_{L^{2}(\mathbb{R}^{d})}\| f \langle \cdot \rangle^{\gamma}\|_{L^{2}(\mathbb{R}^{d})}\,,\\
\| Q^{+}_{o,b^{\varepsilon}_{2} }\big(f,f\langle\cdot\rangle^{\gamma}\big) \|_{L^{\infty}(\mathbb{R}^{d})} &\leq \mathfrak{m}(b^{\epsilon}_{2})\| f \|_{L^{\infty}(\mathbb{R}^{d})}\| f \langle \cdot \rangle^{\gamma} \|_{L^{1}(\mathbb{R}^{d})}\,.
\end{split}
\end{align}
The constants are such that $C(b)\sim\|b\|_{L^{1}(\mathbb{S}^{d-1})}$ and $\mathfrak{m}(b^{\epsilon}_{2})\sim\|b^{\varepsilon}_{2}\|_{L^{1}(\mathbb{S}^{d-1})}$.  In particular, $\lim_{\varepsilon\rightarrow0}\mathfrak{m}(b^{\epsilon}_{2})=0$.
\end{lemma}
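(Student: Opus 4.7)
The plan is to prove the two bounds independently by exploiting the collisional change of variables. For each fixed $w\in\mathbb{S}^{d-1}$ and $v$ fixed, the maps $v_*\mapsto v'$ and $v_*\mapsto v'_*$ have Jacobian determinants $(1-\cos\theta)/2^d$ and $(1+\cos\theta)/2^d$ respectively, and are diffeomorphisms onto appropriate half-spaces. Since the symmetrization of $b$ places $\cos\theta\in[0,1]$ on the support of $b$, the Jacobian for $v_*\to v'_*$ is uniformly nondegenerate (bounded between $2^{-d}$ and $2^{1-d}$), whereas that for $v_*\to v'$ degenerates only as $\theta\to 0$ and satisfies $(1-\cos\theta)/2^d\geq\varepsilon^2/2^{d+1}$ on the support of $b_1^\varepsilon$ (using $1-\cos\theta\geq 1-\sqrt{1-\varepsilon^2}\geq\varepsilon^2/2$).

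For the second estimate, pull out the pointwise bound $f(v')\leq\|f\|_{L^\infty(\mathbb{R}^d)}$ and, for each fixed $w$, perform the change of variables $v_*\to v'_*$ in the remaining integral. The bounded Jacobian yields
\begin{equation*}
\int_{\mathbb{R}^d}(f\langle\cdot\rangle^\gamma)(v'_*)\,\text{d}v_*\leq 2^d\,\|f\langle\cdot\rangle^\gamma\|_{L^1(\mathbb{R}^d)}.
\end{equation*}
Integrating this uniform bound against $b_2^\varepsilon(\hat{u}\cdot w)\,\text{d}w$ produces the claimed estimate with $\mathfrak{m}(b_2^\varepsilon)$ proportional to $\|b_2^\varepsilon\|_{L^1(\mathbb{S}^{d-1})}$, which tends to $0$ as $\varepsilon\to 0$ by integrability of $b$ and dominated convergence.

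For the first estimate, apply the Cauchy-Schwarz inequality in the weighted measure $b_1^\varepsilon\,\text{d}w\,\text{d}v_*$:
\begin{equation*}
Q^+_{o,b_1^\varepsilon}(f,g)(v)\leq \left(\int_{\mathbb{R}^d}\int_{\mathbb{S}^{d-1}} f(v')^2\,b_1^\varepsilon\,\text{d}w\,\text{d}v_*\right)^{1/2}\left(\int_{\mathbb{R}^d}\int_{\mathbb{S}^{d-1}} g(v'_*)^2\,b_1^\varepsilon\,\text{d}w\,\text{d}v_*\right)^{1/2}.
\end{equation*}
The second factor is controlled by the bounded $v_*\to v'_*$ substitution together with $\|b_1^\varepsilon\|_{L^1(\mathbb{S}^{d-1})}\leq\|b\|_{L^1(\mathbb{S}^{d-1})}$, giving a contribution of order $\|b\|_{L^1(\mathbb{S}^{d-1})}^{1/2}\|g\|_{L^2(\mathbb{R}^d)}$. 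For the first factor, invoke the $v_*\to v'$ change with reciprocal Jacobian $\leq C\varepsilon^{-2}$ on the support of $b_1^\varepsilon$; a spherical parametrization of $w$ around the axis $v-v'$ (with $d\omega$ on $\mathbb{S}^{d-2}$ and polar variable in the appropriate half-axis) then distributes the Jacobian blow-up across the $d$-dimensional angular geometry and, after square-rooting, produces the factor $\varepsilon^{-d/2}$ multiplied by $C(b)\sim\|b\|_{L^1(\mathbb{S}^{d-1})}^{1/2}\|f\|_{L^2(\mathbb{R}^d)}$.

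The principal technical point is the spherical bookkeeping for the first estimate: carefully tracking how the non-grazing cutoff $|\sin\theta|\geq\varepsilon$ interacts with the polar parametrization of $w$ relative to the axis $v-v'$, and how the reciprocal Jacobian's pointwise blow-up is balanced against the angular measure on $\mathbb{S}^{d-1}$ to extract the precise $\varepsilon^{-d/2}$ exponent.
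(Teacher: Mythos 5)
Your strategy is genuinely different from the paper's: the paper disposes of both bounds in two lines by citing the sharp Young-type inequality for $Q^{+}$ (Theorem \ref{ApT-1}, from \cite{ACG}) with $(p,q,r)=(2,2,\infty)$ and $(\infty,1,\infty)$, and then simply estimating the explicit constants \eqref{C-1} and \eqref{C-2} on the supports of $b^{\varepsilon}_{1}$ and $b^{\varepsilon}_{2}$ (where $(1-s)^{-1}\leq 2\varepsilon^{-2}$, resp. $(1+s)^{-1}\leq 1$). You instead propose to reprove these two special cases from scratch via Cauchy--Schwarz and the pre/post-collisional changes of variables; this is viable and, if carried out, recovers the same constants, but it re-does precisely the computation that the cited theorem packages.

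There are two places where your write-up has a real gap. First, in the second estimate the step ``integrating this uniform bound against $b^{\varepsilon}_{2}(\hat{u}\cdot w)\,\text{d}w$'' is not a legitimate operation: once you change variables $v_{*}\mapsto v'_{*}=v-u^{+}$ at fixed $w$, the angular argument becomes a function of the new variable, namely $\hat{u}\cdot w = 2\big(\widehat{(v-v'_{*})}\cdot w\big)^{2}-1$, so you cannot integrate out $v_{*}$ first and only afterwards produce $\|b^{\varepsilon}_{2}\|_{L^{1}(\mathbb{S}^{d-1})}$ (nor can you bound $b^{\varepsilon}_{2}$ pointwise, since $b$ is only integrable). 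The fix is to keep the angular factor, apply Fubini, and evaluate $\int_{\mathbb{S}^{d-1}} b^{\varepsilon}_{2}\big(2(e\cdot w)^{2}-1\big)\,\big(2(e\cdot w)^{2}\big)^{-1}\text{d}w$ for a fixed unit vector $e$; on the support of $b$ the weight is bounded and the angular substitution $s=2t^{2}-1$ shows this is comparable to $\|b^{\varepsilon}_{2}\|_{L^{1}(\mathbb{S}^{d-1})}$, which is what $\mathfrak{m}(b^{\epsilon}_{2})$ means. Second, in the first estimate you defer exactly the computation on which the claim rests: the reciprocal Jacobian only supplies $\varepsilon^{-2}$ (hence $\varepsilon^{-1}$ after the square root); the remaining $\varepsilon^{-(d-2)}$ must come from the same angular re-parametrization, where a weight of order $(1-s)^{-(d-2)/2}$ appears relative to the measure $(1-s^{2})^{\frac{d-3}{2}}\text{d}s$ and is bounded by $C\varepsilon^{-(d-2)}$ on the support of $b^{\varepsilon}_{1}$. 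That bookkeeping does yield $\varepsilon^{-d/2}$ after square-rooting, consistent with \eqref{C-1}, so your exponent is correct, but as written it is asserted rather than proved; moreover the same re-parametrization is also needed, silently, for your ``second factor'' bound by $\|b\|^{1/2}_{L^{1}(\mathbb{S}^{d-1})}\|g\|_{L^{2}(\mathbb{R}^{d})}$. The paper itself carries out this kind of bookkeeping elsewhere (see \eqref{L1Harde5}--\eqref{L1Harde6}), so the missing steps are standard, but they are where both the $\varepsilon^{-d/2}$ and the proportionality $\mathfrak{m}(b^{\epsilon}_{2})\sim\|b^{\varepsilon}_{2}\|_{L^{1}(\mathbb{S}^{d-1})}$ actually live.
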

\begin{proof}
Both estimates are a direct consequence of Young's inequality for the gain part of the collision operator \cite[Theorem 1]{ACG} (see Theorem \ref{ApT-1} in the appendix for a clear statement of such theorem).  Indeed, recalling that $b$ has support in $[0,1]$, we use for the first estimate in \eqref{upper} the case $(p,q,r)=(2,2,\infty)$ with constant \eqref{C-1}
\begin{align*}
C&=K\left(\int^{1}_{0}\Big(\frac{1-s}{2}\Big)^{-\frac{d}{2}}\big(1-s^{2}\big)^{\frac{d-3}{2}}b^{\varepsilon}_{1}(s)\text{d}s\right)^{\frac{1}{2}}\left(\int^{1}_{0}\Big(\frac{1+s}{2}\Big)^{-\frac{d}{2}}\big(1-s^{2}\big)^{\frac{d-3}{2}}b^{\varepsilon}_{1}(s)\text{d}s\right)^{\frac{1}{2}}\\
&\leq \varepsilon^{-\frac{d}{2}}2^{\frac{3d}{4}}K\int^{1}_{0}\big(1-s^{2}\big)^{\frac{d-3}{2}}b(s)\text{d}s=\varepsilon^{-\frac{d}{2} }2^{\frac{3d}{4}+\gamma+2}\|b\|_{L^{1}(\mathbb{S}^{d-1})}=:\varepsilon^{-\frac{d}{2}}C(b),
\end{align*}
where to obtain the inequality we used the fact that $(1-s)^{-1}\leq 2 \varepsilon^{-2}$ in the support of $b^{\varepsilon}_{1}\leq b$.  

For the second estimate we use the case the case $(p,q,r)=(\infty,1,\infty)$ with constant \eqref{C-2} 
\begin{align*}
C=K\int^{1}_{0}\Big(\frac{1+s}{2}\Big)^{-\frac{d}{2}}\big(1-s^{2}\big)^{\frac{d-3}{2}}b^{\varepsilon}_{2}(s)\text{d}s\leq 2^{\frac{d}{2}}K\int^{1}_{0}\big(1-s^{2}\big)^{\frac{d-3}{2}}b^{\varepsilon}_{2}(s)\text{d}s =: \mathfrak{m}(b^{\epsilon}_{2})\,.
\end{align*}
The fact that $\lim_{\varepsilon\rightarrow0}\mathfrak{m}(b^{\epsilon}_{2})=0$ is a direct consequence of the monotone convergence theorem.
\end{proof}
\begin{remark} In the sequel the symbol $\mathfrak{m}(b)$ will be interpreted, more generally, as a quantity proportional to $\|b\|_{L^{1}(\mathbb{S}^{d-1})}$ having a constant of proportionality that depends only on the dimension $d\geq2$ and $\gamma\geq0$.  Such symbol will be reserved to a quantity that will be taken sufficiently small at some point in the argument in question.
\end{remark}
A direct application of these two lemmas gives an elementary proof of the $L^{\infty}$-norm propagation for the homogeneous Boltzmann equation, see \cite[Theorem 31]{AGbams}.
\begin{theorem}[Propagation of $L^{\infty}$]\label{Tinfty}
Take $\gamma\in(0,1]$, $b\in L^{1}(\mathbb{S}^{d-1})$ be the angular kernel (with mass normalized to unity) and
\begin{equation*}
\|f_0\|_{(L^{1}_{2^{+}}\cap L^{\infty})(\mathbb{R}^{d})} = C_{o}\,,
\end{equation*} 
for some positive constant $C_{o}$.  Then, there exist constant $C(f_0)>0$ depending on $C_o$, $\gamma$ and $b$ such that
\begin{equation*}
\|f(t,\cdot)\|_{L^{\infty}(\mathbb{R}^{d})} \leq C(f_0)\,,\quad t\geq 0\,,
\end{equation*}
for the solution $f(t,v)$ of the Boltzmann equation.
\end{theorem}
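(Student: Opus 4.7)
The plan is to combine the coercive lower bound on the loss operator (Lemma \ref{APq-}) with the Young-type upper bounds on the gain operator (Lemma \ref{APq+}), and then to close the resulting estimate via a Duhamel integration. Since $\gamma>0$ and $f_{0}\in L^{1}_{2^{+}}$, classical propagation of $L^{1}$-moments for hard potentials yields $\sup_{t\geq 0}\|f(t)\|_{L^{1}_{2^{+}}}<\infty$, so the hypotheses of Lemma \ref{APq-} hold with $t$-independent constants and $(f(t)\ast|\cdot|^{\gamma})(v)\geq c_{o}\langle v\rangle^{\gamma}$. The Boltzmann equation \eqref{HBE} then produces the pointwise inequality
\begin{equation*}
\partial_{t}f(t,v)+c_{o}\langle v\rangle^{\gamma}f(t,v)\leq Q^{+}_{\gamma,b}(f,f)(t,v)\,.
\end{equation*}

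For the gain, the restriction $\gamma\in(0,1]$ lets me use $|v-v_{*}|^{\gamma}=|v'-v'_{*}|^{\gamma}\leq\langle v'\rangle^{\gamma}+\langle v'_{*}\rangle^{\gamma}$ inside the collisional integral to obtain $Q^{+}_{\gamma,b}(f,f)\leq Q^{+}_{o,b}(f\langle\cdot\rangle^{\gamma},f)+Q^{+}_{o,b}(f,f\langle\cdot\rangle^{\gamma})$. I then decompose $b=b_{1}^{\varepsilon}+b_{2}^{\varepsilon}$ as in \eqref{DAK} and apply Lemma \ref{APq+}: the $L^{2}\times L^{2}\to L^{\infty}$ bound on each $b_{1}^{\varepsilon}$-piece and the $L^{\infty}\times L^{1}\to L^{\infty}$ bound on each $b_{2}^{\varepsilon}$-piece (the swapped-argument variant is handled by the same Young combinations). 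Integrating the pointwise inequality with the factor $e^{c_{o}\langle v\rangle^{\gamma}t}$, using $\langle v\rangle^{\gamma}\geq 1$, and taking $\sup$ in $v$ yields
\begin{equation*}
\|f(t)\|_{L^{\infty}}\leq\|f_{0}\|_{L^{\infty}}+\tfrac{1}{c_{o}}\sup_{0\leq s\leq t}\bigl(A_{\varepsilon}(s)+B_{\varepsilon}\|f(s)\|_{L^{\infty}}\bigr)\,,
\end{equation*}
with $A_{\varepsilon}(t)=2\varepsilon^{-d/2}C(b)\|f(t)\|_{L^{2}}\|f(t)\|_{L^{2}_{\gamma}}$ and $B_{\varepsilon}=2\mathfrak{m}(b_{2}^{\varepsilon})M_{1,\gamma}$, where $M_{1,\gamma}$ denotes the uniform $L^{1}_{\gamma}$-moment bound. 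Taking $\sup$ in $t$ and rearranging then delivers uniform $L^{\infty}$ boundedness as soon as $B_{\varepsilon}<c_{o}$, which is arranged by picking $\varepsilon$ small (since $\mathfrak{m}(b_{2}^{\varepsilon})\to 0$).

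The main obstacle is controlling $A_{\varepsilon}(t)$ uniformly in $t$: naively interpolating $\|f\|_{L^{2}}^{2}\leq\|f\|_{L^{\infty}}\|f\|_{L^{1}}$ reinjects $\|f\|_{L^{\infty}}$ into $A_{\varepsilon}$ and converts the scheme into a bare Gronwall estimate whose coefficient $2\varepsilon^{-d/2}C(b)\sqrt{M_{0}M_{2\gamma}}+B_{\varepsilon}$ cannot be pushed below $c_{o}$, because the $\varepsilon^{-d/2}$ blow-up defeats the grazing smallness of $\mathfrak{m}(b_{2}^{\varepsilon})$. The resolution is to secure a uniform-in-time $L^{2}$ bound on $f$ separately, by a parallel Duhamel argument on $\|f\|_{L^{2}}$ (the $p=2$ instance of the $L^{p}$ gain-of-integrability program announced in the introduction). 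With $\sup_{t}\|f(t)\|_{L^{2}}<\infty$ in hand, $A_{\varepsilon}(t)$ reduces to a fixed constant depending only on $\varepsilon$ and the moments, and the ODE closes to give $\|f(t)\|_{L^\infty}\leq C(f_0)$ as claimed.
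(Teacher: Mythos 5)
Your overall architecture (coercive loss bound from Lemma \ref{APq-}, the splitting \eqref{DAK}, the Young-type bounds of Lemma \ref{APq+}/Theorem \ref{ApT-1}, a Duhamel integration in time, and a separately secured uniform weighted $L^{2}$ bound coming from the classical $L^{p}$ propagation theory) is the same as the paper's, but there is a genuine gap in how you bound the gain term. Distributing the kinetic weight as $|u|^{\gamma}=|v'-v'_{*}|^{\gamma}\leq\langle v'\rangle^{\gamma}+\langle v'_{*}\rangle^{\gamma}$ produces, besides the harmless $Q^{+}_{o,b}(f,f\langle\cdot\rangle^{\gamma})$, the swapped term $Q^{+}_{o,b}(f\langle\cdot\rangle^{\gamma},f)$, and its grazing part $Q^{+}_{o,b^{\varepsilon}_{2}}(f\langle\cdot\rangle^{\gamma},f)$ is \emph{not} ``handled by the same Young combinations''. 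The only exponents that give the small factor $\mathfrak{m}(b^{\varepsilon}_{2})$ are $(p,q,r)=(\infty,1,\infty)$, i.e.\ $L^{\infty}$ on the first slot and $L^{1}$ on the second, which here yields $\mathfrak{m}(b^{\varepsilon}_{2})\,\|f\langle\cdot\rangle^{\gamma}\|_{L^{\infty}}\|f\|_{L^{1}}$ and therefore re-injects the weighted sup norm $\|f\langle\cdot\rangle^{\gamma}\|_{L^{\infty}}$, which is not the quantity you are propagating; the alternative $(p,q,r)=(1,\infty,\infty)$ carries the constant \eqref{C-2} with the factor $\big(\tfrac{1-s}{2}\big)^{-\frac{d}{2}}$, which blows up precisely where $b^{\varepsilon}_{2}$ is supported ($s\to1$), so that constant is not small and may even be infinite under mere $b\in L^{1}(\mathbb{S}^{d-1})$. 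Consequently the coefficient in front of $\sup_{s}\|f(s)\|_{L^{\infty}}$ in your closing inequality cannot be pushed below $c_{o}$, and the fixed-point/absorption step fails at this term.

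The paper avoids this by a different pointwise bound: since $b$ is supported on $\hat{u}\cdot w\in[0,1]$, one has $|u|\leq 2|v-v'_{*}|\leq 2\sqrt{2}\,\langle v\rangle\langle v'_{*}\rangle$, hence $Q^{+}_{\gamma,b}(f,f)(v)\leq 2^{\frac{3\gamma}{2}}\langle v\rangle^{\gamma}\,Q^{+}_{o,b}\big(f,f\langle\cdot\rangle^{\gamma}\big)(v)$; all the weight lands on the second slot (the one that is measured in $L^{1}$ or weighted $L^{2}$), and the external factor $\langle v\rangle^{\gamma}$ is absorbed against the loss term $-c_{o}\langle v\rangle^{\gamma}f$ inside the Duhamel integral. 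Replacing your weight distribution by this bound repairs the argument, and the rest of your scheme then goes through as in the paper. A further, minor, imprecision: your $A_{\varepsilon}(t)$ involves $\|f(t)\|_{L^{2}_{\gamma}}$, not just $\|f(t)\|_{L^{2}}$, so the auxiliary bound you must secure is a uniform \emph{weighted} $L^{2}$ estimate; this is available exactly as in the paper, which propagates $\|f(t)\|_{L^{2}_{1}}$ using that $f_{0}\in L^{2}_{1}$ by interpolation of $L^{1}_{2}$ with $L^{\infty}$ and the classical $L^{p}$ theory.
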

\section{Propagation and generation of $L^{p}$-exponential tail integrability for cut-off Boltzmann}
In this section we study the propagation and generation of $L^{p}$-exponential tails.  The argument for the case $p\in[1,\infty)$ follows closely the standard theory used in the literature for polynomial weights.  The reasoning for the case $p=\infty$ is novel and follows the one given in previous section.  We divide the proof in the hard potentials $\gamma\in(0,2]$ and Maxwell molecules $\gamma=0$ cases as they are slightly different.  For instance, as opposed to the hard potentials model, the Maxwell molecules model does not create tail only propagates it.  We begin the section by considering the hard potentials case, one can get propagation of $L^{p}$-exponential tails for $p\in[1,\infty)$.

\subsection{Hard potential case}

For the hard potentials case, we first address the question of propagation  of $L^{p}$-exponential tails. The idea is to multiply the Boltzmann equation by $e^{r\langle\cdot\rangle^{\alpha}}$ in order to obtain a differential inequality \eqref{inequalityg1} for $g:=f\,e^{r\langle\cdot\rangle^{\alpha}}$. Then by Lemma \ref{APq-} and techniques of \cite{AGbams}  we have the following result.

\begin{theorem}[Propagation of $L^{p}$-exponential tails]\label{T1}
Let $\gamma\in(0,2]$ be the potential exponent, $b\in L^{1}(\mathbb{S}^{d-1})$ be the angular kernel (with mass normalized to unity), and
\begin{equation*}
\|f_0(\cdot)\,e^{a_{o}\langle\cdot\rangle^{\alpha}}\|_{(L^{1}\cap L^{p})(\mathbb{R}^{d})} = C_{o} <\infty \,,
\end{equation*} 
for some $\alpha\in(0,2]$, $p\in[1,\infty]$ and positive constants $a_o$ and $C_{o}$.  Then, there exist positive constants $a$ and $C$ depending on the initial mass, energy, $a_o$, $C_o$, $\gamma$ and $b$ such that
\begin{equation*}
\|f(t,\cdot)\,e^{a\langle\cdot\rangle^{\alpha}}\|_{L^{p}(\mathbb{R}^{d})} \leq C\,,\quad t\geq 0\,,
\end{equation*}
for the solution $f(t,v)$ of the Boltzmann equation.
\end{theorem}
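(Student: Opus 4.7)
The plan is to derive a closed differential inequality for $Y(t) := \|F(t,\cdot)\|_{L^p}^p$ where $F(t,v):=f(t,v)e^{a\langle v\rangle^{\alpha}}$, for a constant $a\in(0,a_o]$ to be chosen small, of the form
\begin{equation*}
\frac{d}{dt}Y(t) \le C_1 - c_2\, Y(t),
\end{equation*}
which yields the uniform-in-time bound. The case $p=\infty$ will be handled separately by mimicking Theorem~\ref{Tinfty} with $F$ in place of $f$, while the cases $p\in[1,\infty)$ are treated in weak form.

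The central ingredient is a pointwise pre/post-collision transfer of the exponential weight. Energy conservation $|v|^{2}+|v_{*}|^{2}=|v'|^{2}+|v'_{*}|^{2}$ combined with subadditivity of $x\mapsto x^{\alpha/2}$ (valid for $\alpha\in(0,2]$) yields $\langle v\rangle^{\alpha}\le\langle v'\rangle^{\alpha}+\langle v'_{*}\rangle^{\alpha}$, so $e^{a\langle v\rangle^{\alpha}}\le e^{a\langle v'\rangle^{\alpha}}e^{a\langle v'_{*}\rangle^{\alpha}}$ pointwise on the collisional manifold. Combined with the bound $|u|^{\gamma}\le 2^{3\gamma/2}\langle v\rangle^{\gamma}\langle v'_{*}\rangle^{\gamma}$ already used in Theorem~\ref{Tinfty}, this gives
\begin{equation*}
e^{a\langle v\rangle^{\alpha}}\, Q^{+}_{\gamma,b}(f,f)(v) \le 2^{3\gamma/2}\,\langle v\rangle^{\gamma}\, Q^{+}_{o,b}\bigl(F,\,F\langle\cdot\rangle^{\gamma}\bigr)(v),
\end{equation*}
reducing the positive part of all estimates to the Maxwell-type bounds of Lemma~\ref{APq+} applied to $F$. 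In parallel, propagation of exponential moments \cite{ACGM, BGP, MM}, after possibly shrinking $a_o$ to some $a\in(0,a_o]$, supplies the uniform control $\sup_{t\ge 0}\|F(t,\cdot)\langle\cdot\rangle^{\gamma}\|_{L^{1}(\mathbb{R}^{d})}<\infty$, the polynomial weight being absorbed by an arbitrarily small increase of the exponential rate.

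For $p\in[1,\infty)$, I would test the Boltzmann equation against $p\,f^{p-1}e^{pa\langle v\rangle^{\alpha}}$. The negative contribution reads $-p\int f^{p}(f\ast|\cdot|^{\gamma})e^{pa\langle v\rangle^{\alpha}}dv$, and Lemma~\ref{APq-} (whose hypotheses are met thanks to propagation of $L^{1}_{2^{+}}$-moments for hard potentials, \cite{ACGM}) bounds it above by the coercive term $-p\,c_{o}\int F^{p}\langle v\rangle^{\gamma}dv$. For the positive contribution, the transfer identity above and the splitting $b=b^{\varepsilon}_{1}+b^{\varepsilon}_{2}$ combined with Lemma~\ref{APq+} produce an $L^{\infty}$-bound on $Q^{+}_{o,b}(F,F\langle\cdot\rangle^{\gamma})$ comprising an $\varepsilon^{-d/2}$ piece controlled by $\|F\|_{L^{2}}\|F\langle\cdot\rangle^{\gamma}\|_{L^{2}}$ and a small piece $\mathfrak{m}(b^{\varepsilon}_{2})\|F\|_{L^{\infty}}\|F\langle\cdot\rangle^{\gamma}\|_{L^{1}}$. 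Pairing this with the test function $F^{p-1}\langle v\rangle^{\gamma}$ and picking $\varepsilon$ small, the $\|F\|_{L^{\infty}}$-factor gets absorbed into the coercive negative term, leaving only quantities bounded by previously established norms.

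The case $p=\infty$ is then handled exactly as in Theorem~\ref{Tinfty} with $F$ in place of $f$: the pointwise inequality
\begin{equation*}
\partial_{t}F(v)\le\bigl(2^{3\gamma/2}\,Q^{+}_{o,b}(F,F\langle\cdot\rangle^{\gamma})(v)-c_{o}\,F(v)\bigr)\langle v\rangle^{\gamma}
\end{equation*}
is integrated in time via Duhamel, Lemma~\ref{APq+} controls the positive part after the $b$-splitting, and the small-$\varepsilon$ choice closes the supremum-in-$t$ bootstrap exactly as before. The main obstacle I expect is avoiding circularity: several estimates above require $\|F\|_{L^{2}}$ and $\|F\|_{L^{\infty}}$ uniformly in time. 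This is resolved by sequencing: first run the finite-$p$ argument with $p=2$, using only the propagated $\|F\|_{L^1}$ and Young's $L^2$-output for $Q^{+}_{o,b_1^{\varepsilon}}$, then run the $p=\infty$ argument from the resulting base, and finally feed $\|F\|_{L^{\infty}}$ into the argument for arbitrary $p\in(1,\infty)$.
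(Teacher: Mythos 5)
Your $p=\infty$ argument is faithful to the paper's: the weight transfer $e^{r\langle v\rangle^\alpha}\le e^{r\langle v'\rangle^\alpha}e^{r\langle v'_*\rangle^\alpha}$, the reduction $Q^+(g,g)\le 2^{3\gamma/2}\langle\cdot\rangle^\gamma Q^+_{o,b}(g,g\langle\cdot\rangle^\gamma)$ with $g:=fe^{r\langle\cdot\rangle^\alpha}$, Lemma~\ref{APq-} on the loss term, the $b=b_1^\varepsilon+b_2^\varepsilon$ split with Lemma~\ref{APq+}, and the Duhamel bootstrap. The gap is in your treatment of $p\in(1,\infty)$. Pairing the $L^\infty$-bound of Lemma~\ref{APq+} with the test function $F^{p-1}\langle\cdot\rangle^\gamma$ leaves you with $\int F^{p-1}\langle v\rangle^\gamma\,dv$, which is not a controlled quantity for general $p$: for $p\in(1,2)$ it may simply diverge, and for $p>2$ it is not dominated by the propagated moments nor by the coercive term $\int F^p\langle v\rangle^\gamma\,dv$. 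Lemma~\ref{APq+} is tailored to $p=\infty$ (it rests on the Young triple $(2,2,\infty)$) and does not dualize into an $L^p$-estimate for finite $p$.

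For the $p=2$ step you invoke a "Young's $L^2$-output", but plain Young's inequality (Theorem~\ref{ApT-1}) does not close the argument. Any admissible triple with target $L^2$, e.g.\ $(2,1,2)$ or $(4/3,4/3,2)$, yields, after Cauchy--Schwarz pairing, a bound of the form $C(\varepsilon)M\,\|F\langle\cdot\rangle^{\gamma/2}\|_{L^2}^2$ where $M$ is a fixed propagated moment; since $C(\varepsilon)\to\infty$ as $\varepsilon\to0$ while $c_o$ is fixed, there is no absorption by the coercive term. What is actually required is a genuine gain-of-integrability estimate of the type $\|Q^+_{b_1^\varepsilon}(f,g)\|_{L^p}\le C(\varepsilon)\|f\|_{L^q}\|g\|_{L^1}$ with $q<p$; this is a regularizing effect of the cut-off gain operator and is emphatically \emph{not} a corollary of Theorem~\ref{ApT-1}. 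The paper treats the whole range $p\in(1,\infty)$ by invoking precisely this result, \cite[Theorem 4.1]{MV} or \cite[Corollary 1.1]{AG}, as a black box, obtaining the differential inequality $\partial_t g\le Q^+(g,g)-c_o g\langle\cdot\rangle^\gamma$ together with the uniform $L^1_2$-control of $g$ and then citing these references to conclude $\sup_t\|g(t)\|_{L^p}<\infty$. With the strict gain, interpolation against $\|F\|_{L^1}$ produces a power strictly below $p$ of the coercive quantity, which is then absorbable. Your sequencing (first $p=2$, then $p=\infty$, then arbitrary $p$) also fails to cover $p\in(1,2)$, since $f_0\,e^{a_o\langle\cdot\rangle^\alpha}\in L^1\cap L^p$ with $p<2$ does not provide the $L^2$-seed. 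You should state the gain-of-integrability ingredient explicitly; as written, the finite-$p$ case does not close.
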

\begin{proof}
We first recall that thanks to the propagation of moments  of \cite[Theorem 2]{ACGM} for $p=1$ we have
\begin{equation}\label{momentpropagation}
\|f(t,\cdot)\,e^{\at\langle\cdot\rangle^{\alpha}}\|_{L^{1}(\mathbb{R}^{d})} \leq C\,,\quad t\geq 0\,,
\end{equation}
for some positive $\at$ and $C$ with dependence as stated.  

Our next goal is to prove the result for $p\in(1,\infty)$.
Note that for any $\alpha\in[0,2]$
\begin{align*}
\langle v \rangle^{\alpha} = \big(1+|v|^{2}\big)^{\frac{\alpha}{2}} & \leq\big(1+ |v|^{2}+|v_{*}|^{2}\big)^{\frac{\alpha}{2}} = \big(1 + |v'|^{2} + |v'_{*}|^{2}\big)^{\frac{\alpha}{2}}\\
&\leq \big(1+|v'|^{2})^{\frac{\alpha}{2}} + |v'_{*}|^{\alpha}\leq \langle v' \rangle^{\alpha} + \langle v'_{*} \rangle^{\alpha}\,,
\end{align*}
and therefore, $e^{r\langle v\rangle^{\alpha}} \leq e^{r\langle v'\rangle^{\alpha}} e^{r\langle v'_{*}\rangle^{\alpha}}$ for any $r>0$.  As a consequence
\begin{align*}
Q(f,f)(v)\,e^{r\langle v\rangle^{\alpha}} &= Q^{+}(f,f)(v)\,e^{r\langle v\rangle^{\alpha}} - Q^{-}(f,f)(v)\,e^{r\langle v\rangle^{\alpha}} \\
& \leq Q^{+}(fe^{r\langle \cdot \rangle^{\alpha}},fe^{r\langle\cdot\rangle^{\alpha}})(v) - f(v)e^{r\langle v\rangle^{\alpha}}\,\big(f\ast|\cdot|^{\gamma}\big)(v)\,. 
\end{align*}
Thus, defining $g:=f\,e^{r\langle\cdot\rangle^{\alpha}}$, the Boltzmann equation implies that
\begin{equation}\label{inequalityg1}
\partial_{t}g(v) \leq Q^{+}(g,g)(v) - g(v)\,\big(f\ast|\cdot|^{\gamma}\big)(v)\,. 
\end{equation}
As a consequence of Lemma \ref{APq-}, estimate \eqref{inequalityg1} implies that
\begin{equation}\label{inequalityg2}
\partial_{t}g(v) \leq Q^{+}(g,g)(v) - c_o\,g(v)\langle v \rangle^{\gamma}\,.
\end{equation} 
Estimate \eqref{inequalityg2} suffices to conclude using the techniques of \cite[Theorem 4.1]{MV} or \cite[Corollary 1.1]{AG} that
\begin{equation}\label{casepfinite0}
\sup_{t\geq0}\|g(t,\cdot)\|_{L^{p}(\mathbb{R}^{d})} \leq C_{p}(g_0)\,,\quad \text{ with }\, p\in(1,\infty)\,,
\end{equation}
where $C_{p}(g_0)$ depends on an upper bound of $\|g_0\|_{L^{p}} + \sup_{t\geq0}\|g(t,\cdot)\|_{L^{1}_{2}}$.  Of course, such upper bound is finite for any $r<\min\{\at,a_o\}$ thanks to \eqref{momentpropagation} and the weighted $L^{p}$ integrability of $f_0$.  So, setting $a<\min\{\at,a_o\}$ proves the result for any $p\in(1,\infty)$.\\

\noindent
Let us prove the case $p=\infty$.  

Observe that $|u|\leq \sqrt{2}\,|u^{+}| \leq \sqrt{2}\langle v'_{*} \rangle \langle v \rangle$ by symmetrization of $b$.  Then,
\begin{equation*}
Q^{+}(f,f)(v)\leq 2^{\frac{\gamma}{2}}Q^{+}_{o,b}\big(f,f\langle\cdot\rangle^{\gamma}\big)(v)\,\langle v \rangle^{\gamma}\,.
\end{equation*}
Using estimate \eqref{inequalityg2}, we conclude that
\begin{equation}\label{estimateinf}
\partial_{t}g (v) \leq \Big(2^{\frac{\gamma}{2}}Q^{+}_{o,b}\big(g,g\langle \cdot \rangle^{\gamma}\big)(v) - c_o\,g(v)\Big)\langle v \rangle ^{\gamma}\,.
\end{equation}
Furthermore, Lemma \ref{APq+} leads to
\begin{align}\label{estimateinf1}
\begin{split}
\| Q^{+}_{o,b^{\varepsilon}_{1} }\big(g,g\langle\cdot\rangle^{\gamma}\big) \|_{L^{\infty}(\mathbb{R}^{d})} &\leq \varepsilon^{-\frac{d}{2}}\,C(b) \|g\|_{L^{2}(\mathbb{R}^{d})}\| g \langle \cdot \rangle^{\gamma}\|_{L^{2}(\mathbb{R}^{d})}\,,\\
\| Q^{+}_{o,b^{\varepsilon}_{2} }\big(g,g\langle\cdot\rangle^{\gamma}\big) \|_{L^{\infty}(\mathbb{R}^{d})} &\leq \mathfrak{m}(b^{\epsilon}_{2})\| g \|_{L^{\infty}(\mathbb{R}^{d})}\| g \langle \cdot \rangle^{\gamma} \|_{L^{1}(\mathbb{R}^{d})}\,.
\end{split}
\end{align}
In addition, estimates \eqref{momentpropagation} and \eqref{casepfinite0} ensure that
\begin{equation}\label{estimateinf2}
\sup_{t\geq0}\Big(\| g(t) \langle \cdot \rangle^{\gamma} \|_{L^{1}(\mathbb{R}^{d})} + \| g(t) \langle \cdot \rangle^{\gamma}\|_{L^{2}(\mathbb{R}^{d})}\Big) \leq C(g_0)\,, \quad \text{ for any }\, r< \min\{\at,a_o\}\,.
\end{equation}
Using again the splitting \eqref{DAK} as $Q^{+}_{o,b}=Q^{+}_{o,b^{\varepsilon}_{1}}+Q^{+}_{o,b^{\varepsilon}_{2}}$, the conclusion from the estimates \eqref{estimateinf}, \eqref{estimateinf1} and \eqref{estimateinf2} is that for some constants $C_{i}(g_0, b)$, $i=1,2$, depending on $g_0$ through the norm $\|g_0\|_{L^{1}\cap L^{2}(\mathbb{R}^{d})}$ it holds that
\begin{align}\label{estimateinf3}
\begin{split}
\partial_{t}g(t,v) &\leq \Big( \varepsilon^{-\frac{d}{2}} C_{1}(g_0) + \mathfrak{m}(b^{\epsilon}_{2})\,C_{2}(g_0)\|g(t)\|_{L^{\infty}(\mathbb{R}^{d})}\Big)\langle v \rangle^{\gamma} - c_o\,g(t,v)\langle v \rangle^{\gamma}\\
&= \bigg( \tilde{C}_{1}(g_0) + \frac{c_o}{4}\|g(t)\|_{L^{\infty}(\mathbb{R}^{d})}\bigg)\langle v \rangle^{\gamma} - c_o\,g(t,v)\langle v \rangle^{\gamma}\,,
\end{split}
\end{align}
where, for the latter, we simply took $\varepsilon>0$ sufficiently small such that $\mathfrak{m}(b^{\epsilon}_{2})C_{2}(g_0)\leq \frac{c_o}{4}$.  Finally, let us integrate estimate \eqref{estimateinf3} to conclude that
\begin{equation}\label{estimateinffinal}
\sup_{t\geq0}\|g(t)\|_{L^{\infty}(\mathbb{R}^{d})}\leq \frac{4}{3}\Big(\|g_0\|_{L^{\infty}(\mathbb{R}^{d})} +\frac{\tilde{C}_{1}(g_0)}{c_o}\Big)=:G(g_0)\,. 
\end{equation}
Indeed, bear in mind that $\partial_{t}g$ exist a.e. in $\mathbb{R}^{+}\times\mathbb{R}^{d}$ since $f$ is solution of the Boltzmann equation.  Thus, we can integrate estimate \eqref{estimateinf3} in $[0,t]$ to obtain
\begin{align*}
g(t,v) &\leq g_{0}(v)e^{-c_{o}\langle v \rangle^{\gamma} t} + \int^{t}_{0}e^{-c_o\langle v \rangle^{\gamma}(t-s)}\Big( \tilde{C}_{1}(g_0) + \frac{c_o}{4}\|g\|_{L^{\infty}(\mathbb{R}^{d})}\Big)\langle v \rangle^{\gamma}\text{d}s\\
&\hspace{-.5cm}\leq \|g_{0}\|_{L^{\infty}(\mathbb{R}^{d})} + \Big(\tilde{C}_{1}(g_0) + \frac{c_o}{4}\sup_{0\leq s \leq t}\|g(s)\|_{L^{\infty}(\mathbb{R}^{d})}\Big)\langle v \rangle^{\gamma}\int^{t}_{0}e^{-c_o\langle v \rangle^{\gamma}(t-s)}\text{d}s\\
&\leq \frac{3}{4}G(g_0)+\frac{1}{4}\sup_{0\leq s \leq t}\|g(s)\|_{L^{\infty}(\mathbb{R}^{d})}\,,\quad \text{ a.e. in } v\in\mathbb{R}^{d}\,.
\end{align*}
As a consequence, for any $0\leq t \leq T$ it holds
\begin{equation}\label{estimateinf4}
g(t,v) \leq \frac{3}{4}G(g_0)+\frac{1}{4}\sup_{0\leq s \leq T}\|g(s)\|_{L^{\infty}(\mathbb{R}^{d})}\,,\quad \text{ a.e. in } v\in\mathbb{R}^{d}\,.
\end{equation}
Estimate \eqref{estimateinffinal} readily follows after computing the essential supremum in $v\in\mathbb{R}^{d}$ and, then, the supremum in $t\in[0,T]$ in estimate \eqref{estimateinf4}.   So, setting $a<\min\{\at,a_o\}$ proves the result for $p = \infty$.
\end{proof}
It is well know that instantaneous generation of tails at the level of moments occurs for the Boltzmann equation for hard potentials.  This is at odds with the Maxwell molecules model where only propagation of tails is possible.  This was first noticed in \cite{WennbergMP}.  More recently, $L^{1}$-exponential tail generation has been originally studied in \cite{ACGM}. 

In this manuscript, we follow the techniques and results from the recent work  \cite{AGbams}  by obtaining sharper  $L^{1}_{\mu}$-polynomial and  $L^{1}_{\text{exp}}$-exponential tail estimates  induced  by  new ordinary differential inequalities needed for the Lebesgue Sobolev spaces, yielding smaller upper bounds for larger times.  

We show here that from $L^{1}_{\text{exp}}$, it is possible to deduce $L^{p}_{\text{exp}}$-exponential  tail generation following the argument presented previously for propagation of exponential integrability.  We should stress that integrability is not created only the tails are.
\begin{theorem}[Generation of $L^{p}$-tails]\label{T1/Creation}
Let $\gamma\in(0,2)$ be the potential exponent, $b\in L^{1}(\mathbb{S}^{d-1})$ be the angular kernel (with mass normalized to unity), and assume that for some $p\in(1,\infty)$
\begin{equation*}
\|f_0(\cdot)\|_{(L^{1}_{2}\cap L^{p})(\mathbb{R}^{d})} = C_{o} <\infty \,,
\end{equation*} 
for some constant $C_{o}>0$.  Then, there exist positive constants $a$ and $C$ depending on the initial mass, energy, $C_o$, $\gamma$ and $b$ such that
\begin{equation}\label{E/Creation}
\|f(t,\cdot)\,e^{a\min\{1,t\}\langle\cdot\rangle^{\gamma}}\|_{L^{p}(\mathbb{R}^{d})} \leq C\,,\quad t\geq 0\,,
\end{equation}
for the solution $f(t,v)$ of the Boltzmann equation.  This estimate is also true in the case $\gamma=2$ if a moment $2^{+}$ is assumed to be finite for $f_0$.\\\\
\noindent
For the  case $p=+\infty$, with $\gamma\in(0,2)$, assume
\begin{equation*}
\|f_0(\cdot)\|_{(L^{1}_{2}\cap L^{2}_{2} \cap L^{\infty})(\mathbb{R}^{d})} = C_{o} <\infty \,.
\end{equation*} 
Then, estimate \eqref{E/Creation} holds true.  Again, this estimate holds for $\gamma=2$ provided the $L^{1}_{2^{+}}\cap L^{2}_{2^{+}}$ norm is assumed finite for the initial datum.
\end{theorem}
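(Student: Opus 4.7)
\emph{Proof plan.} The strategy ports the scheme of Theorem~\ref{T1} from the propagation regime (constant weight $r$) to the creation regime (time-dependent weight $r(t)=a\min\{1,t\}$), replacing the $L^{1}$-exponential tail \emph{propagation} input used there by the $L^{1}$-exponential tail \emph{creation} result of \cite{ACGM}. The three ingredients are: (i) an $L^{1}$ seed bound on the weighted solution from \cite{ACGM}; (ii) a pointwise ODI that absorbs the time derivative of the weight into the coercive loss term supplied by Lemma~\ref{APq-}; (iii) the Young/duality argument of Theorem~\ref{T1} to close the estimate in $L^{p}$, and the splitting of Lemma~\ref{APq+} to close it in $L^{\infty}$.

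\emph{Step 1 (seed $L^{1}$-bound and uniform $L^{p}$-bound).} Invoke the creation half of \cite[Theorem 2]{ACGM} to obtain, for a constant $a_{0}>0$ depending only on the mass, energy, $\gamma$ and $b$ (and on the $2^{+}$ moment when $\gamma=2$),
\begin{equation*}
\sup_{t\geq0}\bigl\|f(t,\cdot)\,e^{a_{0}\min\{1,t\}\langle\cdot\rangle^{\gamma}}\bigr\|_{L^{1}(\mathbb{R}^{d})}\leq M<\infty.
\end{equation*}
Classical $L^{p}$-propagation \cite{MV,AG}, exactly as used in Theorem~\ref{T1}, gives $\sup_{t\geq0}\|f(t,\cdot)\|_{L^{p}(\mathbb{R}^{d})}<\infty$ for $p\in(1,\infty)$ under $f_{0}\in L^{1}_{2}\cap L^{p}$, and $\sup_{t\geq0}\|f(t,\cdot)\|_{L^{2}_{2}(\mathbb{R}^{d})}<\infty$ under the stronger hypothesis required for $p=\infty$.

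\emph{Step 2 (pointwise ODI).} Fix $a\in(0,a_{0}]$ to be chosen and let $g(t,v):=f(t,v)\,e^{r(t)\langle v\rangle^{\gamma}}$ with $r(t):=a\min\{1,t\}$. Sub-multiplicativity $e^{r\langle v\rangle^{\gamma}}\leq e^{r\langle v'\rangle^{\gamma}}e^{r\langle v'_{*}\rangle^{\gamma}}$ together with the lower bound of Lemma~\ref{APq-}, applied as in the derivation of \eqref{inequalityg2}, yields
\begin{equation*}
\partial_{t}g(v)\leq Q^{+}(g,g)(v)-\bigl(c_{o}-r'(t)\bigr)\,g(v)\,\langle v\rangle^{\gamma}.
\end{equation*}
Since $r'(t)\in\{a,0\}$ a.e., choosing $a\leq c_{o}/2$ reduces this to the same inequality available in Theorem~\ref{T1} with $c_{o}$ replaced by $c_{o}/2$; the $L^{1}$-tail bound of Step 1 provides the uniform weighted $L^{1}$ control of $g$ needed to feed the right-hand side.

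\emph{Step 3 (closing in $L^{p}$ and in $L^{\infty}$; main obstacle).} At $t=0$ one has $g(0,\cdot)=f_{0}\in L^{1}_{2}\cap L^{p}$, hence the initial datum is admissible for the Theorem~\ref{T1} machinery. For $p\in(1,\infty)$, multiplying the ODI by $g^{p-1}$, integrating in $v$, and invoking the standard gain-of-integrability estimate \cite{MV,AG} closes an ODE of the form $\tfrac{\text{d}}{\text{d}t}\|g\|_{L^{p}}^{p}+c\|g\,\langle\cdot\rangle^{\gamma/p}\|_{L^{p}}^{p}\leq C$, giving \eqref{E/Creation}. For $p=\infty$ the $p=2$ instance just proven delivers $\sup_{t\geq0}\|g(t,\cdot)\langle\cdot\rangle^{\gamma}\|_{L^{2}}<\infty$, and then the splitting $Q^{+}_{o,b}=Q^{+}_{o,b^{\varepsilon}_{1}}+Q^{+}_{o,b^{\varepsilon}_{2}}$ of Lemma~\ref{APq+} plus termwise integration of the ODI (with $\varepsilon$ chosen small so that $\mathfrak{m}(b^{\varepsilon}_{2})$ absorbs into $c_{o}/4$) reproduces verbatim the last block of Theorem~\ref{T1}. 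The only delicate point is the interplay between the size of $a$ and the constant $c_{o}$ of Lemma~\ref{APq-}: $a$ must be small relative to $c_{o}$, and for $\gamma=2$ the constant $c_{o}$ depends on a bound for a $2^{+}$ moment, which is precisely why the stronger $L^{1}_{2^{+}}$ (resp.\ $L^{2}_{2^{+}}$) hypothesis is imposed in that limiting case.
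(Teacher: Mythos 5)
Your proposal is correct and follows essentially the same route as the paper: the time-dependent weight $r(t)=a\min\{1,t\}$ whose derivative is absorbed into the coercive loss term from Lemma \ref{APq-}, the $L^{1}$-tail creation of \cite{ACGM} as seed (note the paper cites \cite[Theorem 1]{ACGM} for creation, not Theorem 2), the \cite{MV,AG} machinery for $p\in(1,\infty)$, and the $p=2$ instance combined with $L^{2}_{2}$ propagation plus the Lemma \ref{APq+} splitting for $p=\infty$. No substantive differences from the paper's argument.
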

\begin{proof}
Let us handle first the case $p\in(1,\infty)$.  Note that for any $r>0$ and $\gamma\in(0,2]$ it follows that
\begin{equation*}
e^{r\,\min\{1,t\}\langle v\rangle^{\gamma}}\partial_{t}f(t,v) = \partial_{t}g(t,v) - r\,\chi_{[0,1)}(t)\langle v \rangle^{\gamma} g(t,v)\,,\quad t\geq0\,,
\end{equation*}
where $g(t,v):=f(t,v)e^{r\,\min\{1,t\}\langle v\rangle^{\gamma}}$.  Thus, arguing as in the proof of Theorem \ref{T1}, it follows that
\begin{equation*}
\partial_{t}g(v) \leq Q^{+}(g,g)(v) - (c_o-r)\,g(v)\langle v \rangle^{\gamma}\,.
\end{equation*}
Moreover, in \cite[Theorem 1]{ACGM} it is proved that if $f_0\in L^{1}_{2}(\mathbb{R}^{d})$ then
\begin{equation*}
\|f(t,\cdot)\,e^{a_o\min\{1,t\}\langle\cdot\rangle^{\gamma}}\|_{L^{1}(\mathbb{R}^{d})} \leq C_o,\quad t\geq0\,,
\end{equation*}
for some positive constants $a_o$ and $C_o$ depending on the initial mass, energy, $\gamma$ and $b$.  Choosing $r<\min\{a_o,c_o\}$ sufficiently small, this estimate suffices to conclude, using the techniques of \cite[Theorem 4.1]{MV} or \cite[Corollary 1.1]{AG}, that
\begin{equation}\label{casepfinite}
\sup_{t\geq0}\|g(t,\cdot)\|_{L^{p}(\mathbb{R}^{d})} \leq C_{p}(g_0)\,,\quad \text{ with }\, \gamma\in(0,2]\,,
\end{equation}
where $C_{p}(g_0)$ depends on an upper bound of $\|g_0\|_{L^{p}} + \sup_{t\geq0}\|g(t,\cdot)\|_{L^{1}_{\gamma^{+}}}<+\infty$. This concludes case $p\in(0,\infty)$ by choosing $a<\min\{a_o,c_o\}$ sufficiently small.\\

\noindent The case $p=\infty$ is obtained following the respective argument of Theorem \ref{T1} for this case.   Indeed, one easily arrives to the estimate
\begin{equation*}
\partial_{t}g (v) \leq \Big(2^{\frac{\gamma}{2}}Q^{+}_{o}\big(g,g\langle \cdot \rangle^{\gamma}\big)(v) - (c_o-r)\,g(v)\Big)\langle v \rangle ^{\gamma}\,.
\end{equation*}
We will choose $r=a$ with $a<c_o$ sufficiently small, thus, we only need to guarantee the finiteness of the term
\begin{equation*}
\sup_{t\geq0}\Big(\big\|g(t)\langle \cdot \rangle^{\gamma} \|_{L^{1}} + \big\|g(t)\langle \cdot \rangle^{\gamma} \|_{L^{2}}\Big)\,.
\end{equation*}
This holds for any $\gamma\in(0,2)$ (recall that $f_0\in L^{1}_{2}\cap L^{2}_{2}$) due to the propagation of polynomial integrability \cite[Theorem 4.1]{MV} and generation of exponential integrability Theorem \ref{T1/Creation} for the case $p=2$.  The case $\gamma=2$ holds true by assuming $f_0\in L^{1}_{2^{+}}\cap L^{2}_{2^{+}}$ and invoking the same theorems. 
\end{proof}
\subsection{Maxwell molecules case}
Maxwell molecules, $\gamma=0$, is a critical case for uniform propagation of $L^{p}$ integrability for general initial data.  Indeed, as soon as $\gamma<0$, i.e. soft potential case, the uniform propagation of moments is lost for general initial data, refer to \cite{CCL} for an interesting discussion.  In order to compensate for this issue, we will use propagation of entropy.  Thus, we implicitly have the additional requirement on the initial data
\begin{equation*}
\int_{\mathbb{R}^{d}}f_0(v)\ln\big(f_0(v)\big)\text{d}v<\infty\,,
\end{equation*}
that is, we work with initial data having finite initial entropy.  Clearly, this is harmless since more restrictive conditions on $f_0$ are imposed in our context, namely, $f_0\in \big(L^{1}_{2}\cap L^{p}\big)(\mathbb{R}^{d})$, for $p>1$.  As a consequence, initial  entropy is finite since for any $1<p<\frac{d}{d-1}\leq 2$
\begin{align*}
\bigg|\int_{\mathbb{R}^{d}} &f_0(v)\ln\big(f_0(v)\big)\text{d}v \bigg|  \leq \int_{\{f_0\leq1\}} \big|f_0(v)\ln\big(f_0(v)\big)\big|\text{d}v  +  \int_{\{f_0\geq1\}} |f_0(v)\ln\big(f_0(v)\big)|\text{d}v\\
& \leq C_{p}\int_{\{f_0\leq1\}} \big|f_0(v)\big|^{\frac{1}{p}}\text{d}v + \int_{\{f_0\geq1\}} |f_0(v)|^{p}\text{d}v\\
& \leq C_{p,d}\bigg(\int_{\{f_0\leq1\}} f_0(v)\langle v \rangle^{p}\text{d}v\bigg)^{\frac{1}{p}} + \int_{\{f_0\geq1\}} |f_0(v)|^{p}\text{d}v =: C\Big(d,\|f_0\|_{(L^{1}_{2}\cap L^{p})(\mathbb{R}^{d})}\Big)<+\infty.
\end{align*}
In the second estimate we used that $|x\ln(x)| \leq C_{p}\,x^{\frac{1}{p}}$, for any $p>1$ and $x\in[0,1]$.  Furthermore, it is well known that using energy conservation and entropy dissipation it follows for $f(t,v)$, the solution of the homogeneous Boltzmann equation (see \cite[page 329]{DL} or more recently \cite[Lemma A.1]{ALods}), that
\begin{equation}\label{entropycontrol}
\sup_{t\geq0}\int_{\mathbb{R}^{d}}f(t,v)\big|\ln\big(f(t,v)\big)\big|\text{d}v \leq C\bigg(\int f_0\ln(f_0),\int f_0| \cdot |^{2}\bigg)\,.
\end{equation}
We begin this section by proving an $L^p$ estimate on the gain operator for Maxwell molecules by again relying on the Young's inequality.
\begin{lemma}\label{L1MM}
For any $K>1$, $\varepsilon>0$ and $p\in[1,\infty]$ one has
\begin{align*}
\|Q^{+}_{o,b}(f,f)\|_{ L^{p}(\mathbb{R}^{d}) } & \leq \frac{ C(b) }{ \ln(K) }  \|f \ln(f) \|_{L^{1}(\mathbb{R}^{d}) } \|f\|_{L^{p}(\mathbb{R}^{d})} \\
&\hspace{-1cm} + \varepsilon^{-\frac{d}{2p'}}\,K^{\frac{1}{2p'}}C(b)\|f\|^{1+\frac{1}{2p}}_{L^{1}(\mathbb{R}^{d}) }\|f\|^{\frac{1}{2}}_{L^{p}(\mathbb{R}^{d})} + \mathfrak{m}(b^{\epsilon}_{2})\|f\|_{L^{1}(\mathbb{R}^{d}) } \|f\|_{L^{p}(\mathbb{R}^{d})}\,.
\end{align*}
\end{lemma}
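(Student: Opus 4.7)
The natural plan is a double splitting: decompose the angular kernel as $b = b_1^\varepsilon + b_2^\varepsilon$ per \eqref{DAK} and simultaneously split the distribution by size, setting $f_K := f\,\mathbf{1}_{\{f \le K\}}$ and $f^K := f\,\mathbf{1}_{\{f > K\}}$. By bilinearity,
\begin{equation*}
Q^+_{o,b}(f,f) \;=\; Q^+_{o,b_1^\varepsilon}(f_K,f) \;+\; Q^+_{o,b_1^\varepsilon}(f^K,f) \;+\; Q^+_{o,b_2^\varepsilon}(f,f)\,,
\end{equation*}
and the three summands will produce the three terms of the claim, in reverse order.

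For the $b_2^\varepsilon$ summand, Young's inequality (Theorem \ref{ApT-1}) with exponents $(p_1,q_1,r) = (1,p,p)$ gives $\|Q^+_{o,b_2^\varepsilon}(f,f)\|_{L^p} \le \mathfrak{m}(b_2^\varepsilon)\,\|f\|_{L^1}\|f\|_{L^p}$, where $\mathfrak{m}(b_2^\varepsilon) \sim \|b_2^\varepsilon\|_{L^1(\mathbb{S}^{d-1})}$ by the remark following Lemma \ref{APq+} (this is the easy side of Young's, with one $L^1$ slot, and so carries no $\varepsilon$-singular prefactor). The same $(1,p,p)$ Young estimate applied to $Q^+_{o,b_1^\varepsilon}(f^K,f)$ yields $C(b)\,\|f^K\|_{L^1}\|f\|_{L^p}$ with constant $\sim \|b\|_{L^1(\mathbb{S}^{d-1})}$. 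The entropy enters only through the size cutoff: since $K>1$ gives $\ln f > \ln K > 0$ on $\{f>K\}$,
\begin{equation*}
\|f^K\|_{L^1} \;=\; \int_{\{f>K\}} f\, dv \;\le\; \frac{1}{\ln K}\int_{\{f>K\}} f\ln f\, dv \;\le\; \frac{1}{\ln K}\,\|f\ln f\|_{L^1}\,,
\end{equation*}
which recovers the first term of the claim.

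The interesting piece is $Q^+_{o,b_1^\varepsilon}(f_K,f)$, which must supply the $\varepsilon^{-d/(2p')}K^{1/(2p')}$ scaling. My plan is to obtain the intermediate bound
\begin{equation*}
\|Q^+_{o,b_1^\varepsilon}(g,h)\|_{L^p} \;\le\; \varepsilon^{-d/(2p')}\,C(b)\,\|g\|_{L^{p_\theta}}\,\|h\|_{L^{p_\theta}}\,,\qquad p_\theta := \tfrac{2p}{p+1}\,,
\end{equation*}
by bilinear Riesz--Thorin interpolation between the two endpoint Young estimates $\|Q^+_{o,b_1^\varepsilon}(g,h)\|_{L^\infty}\le \varepsilon^{-d/2}C(b)\|g\|_{L^2}\|h\|_{L^2}$ (first line of Lemma \ref{APq+}) and $\|Q^+_{o,b_1^\varepsilon}(g,h)\|_{L^1}\le C(b)\|g\|_{L^1}\|h\|_{L^1}$ (the trivial mass-preservation identity on nonnegative inputs). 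Choosing the interpolation parameter $\theta = 1/p'$ gives $r_\theta = p$, $p_\theta = q_\theta = 2p/(p+1)$ and combined constant $C(b)^{1-\theta}(\varepsilon^{-d/2}C(b))^{\theta} = \varepsilon^{-d/(2p')}C(b)$. I then specialize $g = f_K$ and exploit the pointwise cap $f_K\le K$: $\|f_K\|_{L^{p_\theta}} \le K^{1-1/p_\theta}\|f\|_{L^1}^{1/p_\theta} = K^{1/(2p')}\|f\|_{L^1}^{(p+1)/(2p)}$, while for $h=f$ the identity $1/p_\theta = \tfrac12(1 + 1/p)$ makes the $L^1$--$L^p$ interpolation parameter exactly $\tfrac12$, so $\|f\|_{L^{p_\theta}}\le \|f\|_{L^1}^{1/2}\|f\|_{L^p}^{1/2}$. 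Multiplying, the exponent of $\|f\|_{L^1}$ becomes $(p+1)/(2p) + 1/2 = 1 + 1/(2p)$, producing exactly the middle term of the claim.

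The main technical point I expect to require care is the bookkeeping of bilinear interpolation: one must verify that bilinear Riesz--Thorin applies to $Q^+_{o,b_1^\varepsilon}$ (routine, reducing to the nonnegative case via $|Q^+(g,h)|\le Q^+(|g|,|h|)$) and that the $\varepsilon$-power emerges cleanly as $-d/(2p')$ with no logarithmic or dimensional artifacts. As a consistency check, the endpoints $p=\infty$ and $p=1$ of the final inequality recover, respectively, the first and second estimates of Lemma \ref{APq+} (with the $K^{1/2}$ factor at $p=\infty$ coming from $\|f_K\|_{L^2}\le K^{1/2}\|f\|_{L^1}^{1/2}$). Summing the three pieces then yields the claimed inequality for every $p\in[1,\infty]$.
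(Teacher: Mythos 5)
Your plan is close in spirit to the paper's, but there is a genuine error in the slot choice for the truncation, and it is not cosmetic. In Theorem \ref{ApT-1} the two ``easy'' endpoints of Young's inequality are not symmetric: the case $p_Y=1$ (first argument in $L^{1}$, constant \eqref{C-2}, first line) carries the weight $\big(\frac{1-s}{2}\big)^{-d/(2q')}$, while the case $q_Y=1$ (second argument in $L^{1}$, constant \eqref{C-2}, second line) carries $\big(\frac{1+s}{2}\big)^{-d/(2p')}$. Since $b$ is supported in $s\in[0,1]$, the factor $(1+s)/2$ is bounded below by $1/2$, so the $q_Y=1$ constant is $\lesssim C(b)$ for \emph{any} piece of $b$; but the factor $(1-s)/2$ is not, and this matters precisely for the decomposition \eqref{DAK}. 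On the support of $b^{\varepsilon}_{1}$ one only has $1-s\gtrsim\varepsilon^{2}$, so the $p_Y=1$ constant for $Q^{+}_{o,b^{\varepsilon}_{1}}$ is $\lesssim\varepsilon^{-d/q'}\,C(b)$, not $C(b)$; on the support of $b^{\varepsilon}_{2}$ one has $1-s\lesssim\varepsilon^{2}$, so the $p_Y=1$ constant for $Q^{+}_{o,b^{\varepsilon}_{2}}$ is not $\mathfrak{m}(b^{\varepsilon}_{2})$ either (and may even diverge for large $p$). Your assertion that the $(1,p,p)$ case ``carries no $\varepsilon$-singular prefactor'' therefore has the wrong endpoint: it is $q_Y=1$, not $p_Y=1$, that is safe.

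Concretely, this pollutes your first term. Placing the truncation in the first slot and estimating $Q^{+}_{o,b^{\varepsilon}_{1}}(f^{K},f)$ with the $(1,p,p)$ Young endpoint produces a constant $\sim\varepsilon^{-d/p'}C(b)$, so your first term reads $\frac{\varepsilon^{-d/p'}C(b)}{\ln K}\|f\ln f\|_{L^{1}}\|f\|_{L^{p}}$, which is strictly weaker than the claim (the stated lemma has no $\varepsilon$-dependence there). The fix is to split $f$ in the \emph{second} slot, $Q^{+}_{o,b^{\varepsilon}_{1}}(f,f)=Q^{+}_{o,b^{\varepsilon}_{1}}(f,f\mathbf{1}_{\{f\le K\}})+Q^{+}_{o,b^{\varepsilon}_{1}}(f,f\mathbf{1}_{\{f>K\}})$, which is what the paper does: then $Q^{+}_{o,b^{\varepsilon}_{1}}(f,f\mathbf{1}_{\{f>K\}})\le\ln(K)^{-1}Q^{+}_{o,b^{\varepsilon}_{1}}(f,f\ln f)$ pointwise, and Young's with $(p,1,p)$ gives $C(b)\|f\|_{L^{p}}\|f\ln f\|_{L^{1}}/\ln K$ cleanly, and likewise $(p,1,p)$ gives $\mathfrak{m}(b^{\varepsilon}_{2})\|f\|_{L^{p}}\|f\|_{L^{1}}$ for the $b^{\varepsilon}_{2}$ piece. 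Your handling of the middle term is correct: the bilinear interpolation between $L^{1}\times L^{1}\to L^{1}$ and $L^{2}\times L^{2}\to L^{\infty}$ at parameter $\theta=1/p'$ reproduces exactly the symmetric $(2p/(p+1),2p/(p+1),p)$ estimate the paper reads off directly from Theorem \ref{ApT-1} (the constant there is a geometric mean of the $(1-s)$ and $(1+s)$ integrals, giving $\varepsilon^{-d/(2p')}$), and the bookkeeping $K^{1/(2p')}$, $\|f\|_{L^{1}}^{1+1/(2p)}$, $\|f\|_{L^{p}}^{1/2}$ is right and insensitive to which slot carries the cap. So only the first and third terms need the slot correction.
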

\begin{proof}
Use the usual decomposition \eqref{DAK} to write $Q^{+}_{o,b}(f,f)= Q^{+}_{o,b^{\varepsilon}_{1}}(f,f) + Q^{+}_{o,b^{\varepsilon}_{2}}(f,f)$.  On the one hand,  from \cite[Theorem 1]{ACG} or Theorem \ref{ApT-1} we know that
\begin{equation*}
\| Q^{+}_{o,b^{\varepsilon}_{2}}(f,f) \|_{L^{p}(\mathbb{R}^{d})} \leq \mathfrak{m}(b^{\epsilon}_{2})\| f \|_{L^{p}(\mathbb{R}^{d})} \|f\|_{L^{1}(\mathbb{R}^{d})}\,.
\end{equation*}
On the other hand, note that bilinearity implies
\begin{align*}
Q^{+}_{o,b^{\varepsilon}_{1}}(f,f) &= Q^{+}_{o,b^{\varepsilon}_{1}}(f,f\,\text{1}_{\{f\leq K\}}) + Q^{+}_{o,b^{\varepsilon}_{1}}(f,f\,\text{1}_{\{f > K\}})\\
&\leq Q^{+}_{o,b^{\varepsilon}_{1}}(f,f\,\text{1}_{\{f\leq K\}}) + \ln(K)^{-1}\,Q^{+}_{o,b^{\varepsilon}_{1}}(f,f\,\ln(f))\,.
\end{align*}
Moreover, Young's inequality for the gain collision operator, see \cite[Theorem 1]{ACG} or Theorem \ref{ApT-1}, implies
\begin{align*}
\| Q^{+}_{o,b^{\varepsilon}_{1}}(f,f\,\ln(f)) \|_{L^{p}(\mathbb{R}^{d})} &\leq C(b)\| f \|_{L^{p}(\mathbb{R}^{d})} \|f\,\ln(f)\|_{L^{1}(\mathbb{R}^{d})}\,,\\
\| Q^{+}_{o,b^{\varepsilon}_{1}}(f,f\,\text{1}_{\{f\leq K\}}) \|_{L^{p}(\mathbb{R}^{d})} &\leq \varepsilon^{-\frac{d}{2p'}}C(b)\|f\|_{L^{\frac{2p}{p+1}}(\mathbb{R}^{d})}\|f\,\text{1}_{\{f\leq K\}}\|_{L^{\frac{2p}{p+1}}(\mathbb{R}^{d})}\\
&\leq \varepsilon^{-\frac{d}{2p'}}\,K^{\frac{1}{2p'}}\,C(b)\|f\|_{L^{\frac{2p}{p+1}}(\mathbb{R}^{d})}\|f\|^{\frac{p+1}{2p}}_{L^{1}(\mathbb{R}^{d})}\,.
\end{align*}
The estimate follows using, in the last inequality, the interpolation
\begin{equation}\label{lasteq}
\|f\|_{L^{\frac{2p}{p+1}}(\mathbb{R}^{d})}\leq\|f\|^{\frac{1}{2}}_{L^{1}(\mathbb{R}^{d})}\|f\|^{\frac{1}{2}}_{L^{p}(\mathbb{R}^{d})}\,.
\end{equation}
\end{proof}
Thanks to the previous lemma, we next prove propagation of $L^p$ norms without weights.
\begin{proposition}\label{P1MM}
Consider Maxwell molecules $\gamma=0$ with angular kernel $b\in L^{1}(\mathbb{S}^{d-1})$ (with mass normalized to unity).  Assume that the initial data satisfies
\begin{equation*}
\|f_0(\cdot)\|_{(L^{1}_{2}\cap L^{p})(\mathbb{R}^{d})} <\infty\,,
\end{equation*} 
for some $p\in[1,\infty]$.  Then, there exist positive constant $C$ depending on the initial mass, energy, entropy, $\|f_0\|_{L^{p}(\mathbb{R}^{d})}$, $\gamma$ and $b$ such that
\begin{equation*}
\|f(t,\cdot)\|_{L^{p}(\mathbb{R}^{d})} \leq C\,,\quad t\geq 0\,,
\end{equation*}
for the solution $f(t,v)$ of the Boltzmann equation.
\end{proposition}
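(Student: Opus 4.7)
My plan is to run the usual $L^p$-energy argument on $\|f\|_{L^p}^p$, using Lemma \ref{L1MM} to handle $Q^+$, and to absorb its ``bad'' coefficient into the loss term by exploiting mass conservation and the uniform bound on the entropy recalled in \eqref{entropycontrol}. The case $p=1$ is of course immediate from conservation of mass, and the case $p=\infty$ will be handled afterwards by bootstrapping off the case $p=2$ via Duhamel's formula, exactly as in Theorems \ref{Tinfty} and \ref{T1}.

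Fix $p\in(1,\infty)$. Multiplying the Boltzmann equation by $p f^{p-1}$ and integrating in $v$, and observing that for $\gamma=0$ the loss term is simply $\rho\,f(v)$ with $\rho=\|f_{0}\|_{L^{1}}=\|f(t)\|_{L^{1}}$, one gets
\begin{equation*}
\tfrac{1}{p}\,\partial_{t}\|f\|_{L^{p}}^{p} \leq \|Q^{+}_{o,b}(f,f)\|_{L^{p}}\,\|f\|_{L^{p}}^{p-1} - \rho\,\|f\|_{L^{p}}^{p}.
\end{equation*}
Inserting the estimate of Lemma \ref{L1MM} and using $\|f\|_{L^{1}}=\rho$ together with the entropy bound $\sup_{t\ge0}\|f\ln f\|_{L^{1}}\le H_{\ast}$ from \eqref{entropycontrol}, this gives
\begin{equation*}
\tfrac{1}{p}\,\partial_{t}\|f\|_{L^{p}}^{p} \leq \Big(\tfrac{C(b)H_{\ast}}{\ln K} + \mathfrak{m}(b^{\varepsilon}_{2})\,\rho - \rho\Big)\|f\|_{L^{p}}^{p} + A(K,\varepsilon)\,\|f\|_{L^{p}}^{p-\frac{1}{2}},
\end{equation*}
where $A(K,\varepsilon):=\varepsilon^{-d/(2p')}K^{1/(2p')}C(b)\rho^{1+1/(2p)}$. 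Choosing $K$ large so that $C(b)H_{\ast}/\ln K\le\rho/4$ and then $\varepsilon$ small so that $\mathfrak{m}(b^{\varepsilon}_{2})\rho\le\rho/4$, the square bracket becomes $\le -\rho/2$. Setting $y(t):=\|f(t)\|_{L^{p}}^{p}$, one is left with a scalar differential inequality
\begin{equation*}
\dot{y}(t)\le -\tfrac{p\rho}{2}\,y(t) + p\,A\,y(t)^{1-\frac{1}{2p}},
\end{equation*}
whose right-hand side has a unique positive zero $y_{\ast}=\bigl(2A/\rho\bigr)^{2p}$, is positive for $y<y_{\ast}$ and negative for $y>y_{\ast}$. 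A standard comparison argument therefore yields $y(t)\le\max\{y(0),y_{\ast}\}$ uniformly in $t\ge0$, which is the claimed bound for $p\in(1,\infty)$.

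For $p=\infty$, one uses the $p=2$ case just established to propagate $\sup_{t\ge0}\|f(t)\|_{L^{2}}<\infty$, applies the decomposition $Q^{+}_{o,b}=Q^{+}_{o,b^{\varepsilon}_{1}}+Q^{+}_{o,b^{\varepsilon}_{2}}$ together with Lemma \ref{APq+} (taken with $\gamma=0$), and integrates Duhamel's formula $f(t)=f_{0}e^{-\rho t}+\int_{0}^{t}e^{-\rho(t-s)}Q^{+}_{o,b}(f,f)(s)\,ds$ exactly as in the $L^{\infty}$ step of Theorem \ref{T1}. Choosing $\varepsilon$ so that $\mathfrak{m}(b^{\varepsilon}_{2})\le 1/2$ lets one absorb the $\|f\|_{L^{\infty}}$ term on the right, yielding a uniform bound depending on $\|f_{0}\|_{L^{\infty}}$, $\rho$, and $\sup_{t}\|f\|_{L^{2}}^{2}$.

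The main obstacle, and the real reason the Maxwell case is critical, is that Lemma \ref{APq-} gives no help when $\gamma=0$: the loss term is just the constant mass $\rho$ and not the coercive factor $\langle v\rangle^{\gamma}$ used in Theorem \ref{T1}. The whole argument rests on replacing that lost coercivity by the smallness that entropy buys in Lemma \ref{L1MM}, namely the $1/\ln K$ factor in front of $\|f\ln f\|_{L^{1}}$. Checking that $K$ and $\varepsilon$ can be tuned \emph{simultaneously} so that the combined coefficient of $\|f\|_{L^{p}}^{p}$ becomes strictly negative is the delicate point, and it is precisely for this that the finite-entropy assumption (automatic here since $f_{0}\in L^{1}_{2}\cap L^{p}$ with $p>1$) is essential.
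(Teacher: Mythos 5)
Your proposal is correct and follows essentially the same route as the paper: the $L^{p}$-energy estimate combined with Lemma \ref{L1MM}, the uniform entropy bound \eqref{entropycontrol}, the choice of $K$ large and $\varepsilon$ small to make the coefficient of $\|f\|_{L^{p}}^{p}$ strictly negative, and an ODE comparison, with the $p=\infty$ case bootstrapped from $p=2$ via Lemma \ref{APq+} and the Duhamel argument of Theorem \ref{T1}. Your explicit tracking of the constants ($\rho$, $H_{\ast}$, $A(K,\varepsilon)$, $y_{\ast}$) merely spells out what the paper compresses into "taking $K$ sufficiently large and $\varepsilon$ sufficiently small."
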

\begin{proof}
Without loss of generality assume $\|f(t,\cdot)\|_{L^{1}(\mathbb{R}^{d})}=1$.  The case $p\in[1,\infty)$ is a direct consequence of Lemma \ref{L1MM}.  Indeed, multiply the Boltzmann equation by $f^{p-1}$ and integrate in velocity to obtain
\begin{align*}
\frac{1}{p}\frac{\text{d}}{\text{d}t}\|f\|^{p}_{L^{p}(\mathbb{R}^{d})} = \int_{\mathbb{R}^{d}}&Q^{+}(f,f)(v)f^{p-1}(v)\text{d}v - \|f\|^{p}_{L^{p}(\mathbb{R}^{d})} \\
& \leq \|Q^{+}(f,f)\|_{L^{p}(\mathbb{R}^{d})}\|f\|^{p-1}_{L^{p}(\mathbb{R}^{d})} - \|f\|^{p}_{L^{p}(\mathbb{R}^{d})}\,.
\end{align*}
Taking $K>1$ sufficiently large and $\varepsilon>0$ sufficiently small in Lemma \ref{L1MM}, we simply obtain that
\begin{equation*}
\frac{1}{p}\frac{\text{d}}{\text{d}t}\|f\|^{p}_{L^{p}(\mathbb{R}^{d})} \leq C(f_0)\|f\|^{p-\frac{1}{2}}_{L^{p}(\mathbb{R}^{d})} - \frac{1}{2}\|f\|^{p}_{L^{p}(\mathbb{R}^{d})}\,,
\end{equation*}
where the cumulative constant $C(f_0)$ depends on mass, energy, entropy and scattering kernel $b$.  This is enough to conclude that
\begin{equation}\label{estimateLpmax}
\sup_{t\geq0}\|f(t)\|_{L^{p}(\mathbb{R}^{d})}\leq \max\big\{ \|f_0\|_{L^p(\mathbb{R}^{d})}, 4C(f_0)^{2} \big\}\,.
\end{equation}
The case $p=\infty$ is straightforward from here.  Indeed, using \eqref{estimateinf1} which is valid for $\gamma=0$, it follows that
\begin{align*}
\partial_{t}f(v) &\leq \varepsilon^{-\frac{d}{2}}C(b)\|f\|^{2}_{L^{2}(\mathbb{R}^{d})} + \mathfrak{m}(b^{\epsilon}_{2})\|f\|_{L^{1}(\mathbb{R}^{d})}\|f\|_{L^{\infty}(\mathbb{R}^{d})} - f(v) \\
&\leq C(f_0) + \frac{1}{4}\|f\|_{L^{\infty}(\mathbb{R}^{d})} - f(v)\,.
\end{align*}
We argue as in Theorem \ref{T1} to conclude that 
\begin{equation*}
\sup_{t\geq 0}\|f(t)\|_{L^{\infty}(\mathbb{R}^{d})} \leq \frac{4}{3}\Big(\|f_0\|_{L^{\infty}(\mathbb{R}^{d})} + C(f_0)\Big)\,.
\end{equation*}
\end{proof}
\begin{remark}  Since Lemma \ref{L1MM} is valid for $p=\infty$, estimate \eqref{estimateLpmax} does not degenerate as $p\rightarrow\infty$.  Thus, the $L^{\infty}$-estimate can be obtained by simply sending $p\rightarrow\infty$ in \eqref{estimateLpmax}.  This is at odds with the usual estimates for hard potentials which use interpolation and degenerate as $p$ increases to infinity.  The explanation for  such difference is that Lemma \ref{L1MM} explicitly uses the propagation of entropy which was avoided in the context of hard potentials.  This approach of using the entropy  functional leads to an alternative argument for propagation of $L^{p}$-norms for hard potentials as long as propagation of entropic moments is at hand.  Such propagation of entropic moments  will be shown in the last section.
\end{remark}
We next work towards adding exponential weights. To that end, we first obtain an $L^p$ estimate on the gain operator evaluated at $f \, e^{r\langle\cdot\rangle^{\alpha}}$.

\begin{lemma}\label{L2MM}
Define $g:=f \, e^{r\langle\cdot\rangle^{\alpha}}$, and set $a>0$, $r\in(0,a)$, $\alpha\in (0,2]$.  Then, for any $R>0$, $\varepsilon>0$ and $p\in[1,\infty]$ it holds that
\begin{align*}
\|Q^{+}(g,g)\|_{L^{p}(\mathbb{R}^{d})}  & \leq e^{ -(a-r)R^{\alpha} }C(b)\|f\,e^{ a\langle \cdot\rangle^{\alpha} }\|_{ L^{1}(\mathbb{R}^{d}) }\|g\|_{ L^{p}(\mathbb{R}^{d}) }\\
&\hspace{-.6cm} + e^{rR^{\alpha}}\varepsilon^{-\frac{d}{2p'}}C(b)\sqrt{\|f\|_{1}\|f\|_{p}\|g\|_{1}\|g\|_{p}} + \mathfrak{m}(b^{\epsilon}_{2})\|g\|_{L^{1}(\mathbb{R}^{d})}\|g\|_{L^{p}(\mathbb{R}^{d})}\,.
\end{align*}
\end{lemma}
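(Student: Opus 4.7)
The plan is to mirror the proof of Lemma \ref{L1MM} but replace the value-threshold decomposition ($f\le K$ vs $f>K$) by a velocity-threshold decomposition ($\langle v\rangle\le R$ vs $\langle v\rangle>R$). This substitution is natural for exponential weights because $e^{r\langle v\rangle^{\alpha}}$ is monotone in $|v|$, so truncating in velocity directly controls the ratio between $g=fe^{r\langle\cdot\rangle^{\alpha}}$ and $f$ from above in terms of $R$.

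First I apply the angular splitting \eqref{DAK} to write $Q^{+}(g,g) = Q^{+}_{o,b^{\varepsilon}_{1}}(g,g) + Q^{+}_{o,b^{\varepsilon}_{2}}(g,g)$. Young's inequality for the gain operator in the case $(p,q,r)=(p,1,p)$, exactly as in Lemma \ref{APq+}, immediately yields
\begin{equation*}
\|Q^{+}_{o,b^{\varepsilon}_{2}}(g,g)\|_{L^{p}(\mathbb{R}^{d})} \leq \mathfrak{m}(b^{\varepsilon}_{2})\,\|g\|_{L^{1}(\mathbb{R}^{d})}\|g\|_{L^{p}(\mathbb{R}^{d})}\,,
\end{equation*}
which is the last term in the claimed bound.

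For the $b^{\varepsilon}_{1}$-piece I use bilinearity in the second argument to write
\begin{equation*}
Q^{+}_{o,b^{\varepsilon}_{1}}(g,g) = Q^{+}_{o,b^{\varepsilon}_{1}}(g,g\,\mathbf{1}_{\{\langle\cdot\rangle>R\}}) + Q^{+}_{o,b^{\varepsilon}_{1}}(g,g\,\mathbf{1}_{\{\langle\cdot\rangle\leq R\}})\,.
\end{equation*}
On $\{\langle v\rangle>R\}$ the pointwise identity $g(v)=f(v)e^{a\langle v\rangle^{\alpha}}e^{-(a-r)\langle v\rangle^{\alpha}}$ yields $g\,\mathbf{1}_{\{\langle\cdot\rangle>R\}}\leq e^{-(a-r)R^{\alpha}}\,f\,e^{a\langle\cdot\rangle^{\alpha}}$, and Young's inequality $(p,1,p)$ produces the first term in the claim. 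On $\{\langle v\rangle\leq R\}$ the inequality $g\,\mathbf{1}_{\{\langle\cdot\rangle\leq R\}}\leq e^{rR^{\alpha}}f$ together with Young's inequality with exponents $(\tfrac{2p}{p+1},\tfrac{2p}{p+1},p)$ --- the same case that produces the factor $\varepsilon^{-d/(2p')}$ in Lemma \ref{L1MM} --- gives
\begin{equation*}
\|Q^{+}_{o,b^{\varepsilon}_{1}}(g,g\,\mathbf{1}_{\{\langle\cdot\rangle\leq R\}})\|_{L^{p}(\mathbb{R}^{d})} \leq \varepsilon^{-d/(2p')}\,C(b)\,e^{rR^{\alpha}}\,\|g\|_{L^{2p/(p+1)}}\|f\|_{L^{2p/(p+1)}}\,.
\end{equation*}
Interpolating $\|h\|_{L^{2p/(p+1)}}\leq \|h\|_{L^{1}}^{1/2}\|h\|_{L^{p}}^{1/2}$ separately for $h=f$ and $h=g$ (as in \eqref{lasteq}) produces the middle term and completes the estimate.

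There is no genuine obstacle here; the argument is essentially a dictionary between the value-threshold truncation used in Lemma \ref{L1MM} and the velocity-threshold truncation appropriate to exponential weights. The only bookkeeping is to verify that the pre-factor $\varepsilon^{-d/(2p')}$ from Young's inequality is consistent with the computation of Lemma \ref{APq+} across the whole range $p\in[1,\infty]$ (at $p=\infty$ one has $p'=1$ and the interpolation exponent $2p/(p+1)=2$ remains well-defined), and to leave $R>0$ as a free parameter, since the balance between $e^{-(a-r)R^{\alpha}}$ and $e^{rR^{\alpha}}$ will be chosen later when the lemma is used to propagate $L^{p}_{exp}$-integrability for Maxwell molecules.
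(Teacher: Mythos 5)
Your proof is correct and follows essentially the same route as the paper: a velocity-threshold truncation at $R$, the angular splitting \eqref{DAK}, Young's inequality for $Q^{+}$ in the cases $(p,1,p)$ and $(\tfrac{2p}{p+1},\tfrac{2p}{p+1},p)$, and the interpolation \eqref{lasteq}. The only cosmetic difference is that you apply the angular split to the whole operator and the velocity truncation only to the $b^{\varepsilon}_{1}$-piece, whereas the paper truncates in velocity first and splits the angular kernel only on the small-velocity part; since $C(b^{\varepsilon}_{1})\leq C(b)$, this is immaterial.
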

\begin{proof}
Note that for any $R>0$ we can write
\begin{equation*}
Q^{+}(g,g) = Q^{+}(g,g\,\text{1}_{|\cdot| \leq R}) + Q^{+}(g,g\,\text{1}_{|\cdot| > R})\,.
\end{equation*}
On the one hand, for the latter one simply estimates
\begin{align*}
\|Q^{+}(g,g\,\text{1}_{|\cdot| > R})\|_{L^{p}(\mathbb{R}^{d})} &\leq C(b)\|g\|_{L^{p}(\mathbb{R}^{d})} \|g\,\text{1}_{|\cdot| > R}\|_{L^{1}(\mathbb{R}^{d})}\\
& \leq e^{-(a-r)R^{\alpha}}C(b)\|g\|_{L^{p}(\mathbb{R}^{d})} \|f\,e^{a\langle\cdot\rangle^{\alpha}}\|_{L^{1}(\mathbb{R}^{d})}\,.
\end{align*}
On the other hand, for the former one uses the decomposition \eqref{DAK}
\begin{equation*}
Q^{+}(g,g\,\text{1}_{|\cdot| \leq R}) = Q^{+}_{b^{1}_{\varepsilon}}(g,g\,\text{1}_{|\cdot| \leq R})+Q^{+}_{b^{2}_{\varepsilon}}(g,g\,\text{1}_{|\cdot| \leq R})\,.
\end{equation*}
Each of the terms on the right side is easily controlled by the Young's inequality for the gain collision operator.  Indeed, for the operator with $b^{1}_{\varepsilon}$
\begin{align*}
\|Q^{+}_{b^{1}_{\varepsilon}}(g,g\,\text{1}_{|\cdot| \leq R})\|_{L^{p}(\mathbb{R}^{d})} &\leq \varepsilon^{-\frac{d}{2p'}} C(b) \|g\|_{L^{\frac{2p}{p+1}}(\mathbb{R}^{d})}\|g\,\text{1}_{|\cdot|\leq R}\|_{L^{\frac{2p}{p+1}}(\mathbb{R}^{d})}\\
& \leq e^{rR^{\alpha}}\varepsilon^{-\frac{d}{2p'}} C(b) \|g\|_{L^{\frac{2p}{p+1}}(\mathbb{R}^{d})}\|f\|_{L^{\frac{2p}{p+1}}(\mathbb{R}^{d})}\\
&\hspace{-1cm} \leq e^{rR^{\alpha}}\varepsilon^{-\frac{d}{2p'}} C(b) \sqrt{\|g\|_{L^{1}(\mathbb{R}^{d})}\|g\|_{L^{p}(\mathbb{R}^{d})}\|f\|_{L^{1}(\mathbb{R}^{d})}\|f\|_{L^{p}(\mathbb{R}^{d})}}\,,
\end{align*}
where the last inequality follows from \eqref{lasteq}.  Similarly, for the operator with $b^{2}_{\varepsilon}$
\begin{align*}
\|Q^{+}_{b^{2}_{\varepsilon}}(g,g\,\text{1}_{|\cdot| \leq R})\|_{L^{p}(\mathbb{R}^{d})} \leq \mathfrak{m}(b^{\epsilon}_{2})\|g\|_{L^{p}(\mathbb{R}^{d})}\|g\,\text{1}_{|\cdot| \leq R}\|_{L^{1}(\mathbb{R}^{d})} \leq \mathfrak{m}(b^{\epsilon}_{2})\|g\|_{L^{p}(\mathbb{R}^{d})}\|g\|_{L^{1}(\mathbb{R}^{d})}\,.
\end{align*}
The result follows after gathering these estimates.
\end{proof}
Finally, we prove propagation of exponentially weighted $L^p$ norms for the solution to the homogeneous Boltzmann equation for Maxwell molecules.
\begin{theorem}\label{T2}
Consider Maxwell molecules $\gamma=0$, let $b\in L^{1}(\mathbb{S}^{d-1})$ be the angular kernel (with mass normalized to unity) and suppose
\begin{equation*}
\|f_0(\cdot)\,e^{a_{o}\langle\cdot\rangle^{\alpha}}\|_{(L^{1}\cap L^{p})(\mathbb{R}^{d})} = C_{o}<\infty\,,
\end{equation*} 
for some $\alpha\in(0,2]$, $p\in[1,\infty]$ and positive constants $a_o$ and $C_{o}$.  Then, there exist positive constants $a$ and $C$ depending on the initial mass, energy, entropy, $a_o$, $C_o$ and $b$ such that
\begin{equation*}
\|f(t,\cdot)\,e^{a\langle\cdot\rangle^{\alpha}}\|_{L^{p}(\mathbb{R}^{d})} \leq C\,,\quad t\geq 0\,,
\end{equation*}
for the solution $f(t,v)$ of the Boltzmann equation.
\end{theorem}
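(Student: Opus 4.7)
The strategy is to adapt the scheme of Theorem \ref{T1} (hard potentials) to the Maxwell setting, replacing the drift term $c_o\langle v\rangle^{\gamma}$, which is absent when $\gamma=0$, by the constant loss rate $\|f\|_{L^1}=1$ coming from mass conservation, and invoking Lemma \ref{L2MM} in place of the raw Young convolution bound. Before starting, I would collect two a priori ingredients: propagation of $L^1$ exponential moments for Maxwell molecules, which provides $\sup_{t\geq 0}\|f(t)\,e^{a\langle\cdot\rangle^{\alpha}}\|_{L^1(\mathbb{R}^d)}\leq C$ for some $a\in(0,a_o]$ (a classical fact recorded in \cite[Theorem 2]{ACGM}), and Proposition \ref{P1MM}, which supplies the unweighted bound $\sup_{t\geq 0}\|f(t)\|_{L^p(\mathbb{R}^d)}\leq C$ under the stated hypotheses on $f_0$.

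Fix $r\in(0,a)$ and set $g:=f\,e^{r\langle\cdot\rangle^{\alpha}}$. Using the subadditivity $\langle v\rangle^{\alpha}\leq\langle v'\rangle^{\alpha}+\langle v'_*\rangle^{\alpha}$ exactly as in the proof of Theorem \ref{T1}, together with the identity $f\ast|\cdot|^{0}\equiv\|f\|_{L^1}=1$, multiplying the homogeneous Boltzmann equation by $e^{r\langle v\rangle^{\alpha}}$ yields the master inequality
\begin{equation*}
\partial_t g(v)\leq Q^+(g,g)(v)-g(v),
\end{equation*}
which is the Maxwell analogue of \eqref{inequalityg2}. Note that $\|g(t)\|_{L^1}\leq\|f(t)e^{a\langle\cdot\rangle^{\alpha}}\|_{L^1}\leq C$ since $r<a$.

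For $p\in[1,\infty)$, I would multiply this inequality by $g^{p-1}$ and integrate to obtain
\begin{equation*}
\frac{1}{p}\frac{d}{dt}\|g\|_{L^p}^{p}\leq\|Q^+(g,g)\|_{L^p}\|g\|_{L^p}^{p-1}-\|g\|_{L^p}^{p}.
\end{equation*}
Insert Lemma \ref{L2MM}: since $\|g\|_{L^1}$, $\|f\|_{L^p}$ and $\|f\,e^{a\langle\cdot\rangle^{\alpha}}\|_{L^1}$ are uniformly bounded, choose $R$ large so that $e^{-(a-r)R^{\alpha}}C(b)\|f\,e^{a\langle\cdot\rangle^{\alpha}}\|_{L^1}\leq 1/4$, and then $\varepsilon$ small so that $\mathfrak{m}(b^{\varepsilon}_{2})\|g\|_{L^1}\leq 1/4$. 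The two terms linear in $\|g\|_{L^p}$ are absorbed into $-\|g\|_{L^p}^{p}$, leaving only the middle summand that scales like $\sqrt{\|g\|_{L^p}}$, so
\begin{equation*}
\frac{d}{dt}\|g\|_{L^p}^{p}\leq -\frac{p}{2}\|g\|_{L^p}^{p}+C(f_0)\|g\|_{L^p}^{p-1/2},
\end{equation*}
whence uniform boundedness of $\|g(t)\|_{L^p}$ follows by the Bernoulli-type argument used at the end of Proposition \ref{P1MM}.

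For the endpoint $p=\infty$, I would integrate the pointwise master inequality against $e^{t}$ to obtain
\begin{equation*}
g(t,v)\leq g_0(v)e^{-t}+\int_0^{t}e^{-(t-s)}Q^+(g(s),g(s))(v)\,\text{d}s,
\end{equation*}
apply Lemma \ref{L2MM} with $p=\infty$ (explicitly permitted by its statement), and close by the same reasoning that proceeds from \eqref{estimateinf4} to \eqref{estimateinffinal} in Theorem \ref{T1}, namely take essential supremum in $v$, then supremum in $t\in[0,T]$, and absorb the resulting $\tfrac{1}{4}\sup_{s\leq T}\|g(s)\|_{L^\infty}$ term. The main subtlety is the parameter calibration: one must pick $r$ strictly below the rate $a$ of $L^1$ exponential moment propagation, so the final exponential rate in the conclusion is in general smaller than $a_o$; moreover, the order of choices matters, since $R$ is fixed first (using only $r<a$) and $\varepsilon$ afterwards, so that the $e^{rR^{\alpha}}\varepsilon^{-d/(2p')}$ prefactor of the middle term in Lemma \ref{L2MM} remains a harmless finite constant and does not spoil the absorption.
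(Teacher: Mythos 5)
Your proposal is correct and follows essentially the same route as the paper's proof: derive the master inequality $\partial_t g\leq Q^{+}(g,g)-g$, combine Lemma \ref{L2MM} with the uniform bounds coming from $L^{1}$ exponential moment propagation and Proposition \ref{P1MM}, absorb the linear terms by taking $R$ large and then $\varepsilon$ small, and treat $p=\infty$ by the same absorption scheme as in Theorem \ref{T1}. The only slip is attributing the Maxwell-molecule exponential moment propagation to \cite[Theorem 2]{ACGM}, which concerns hard potentials; the correct input, and the one the paper uses, is \cite[Theorem 4.1]{MM}.
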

\begin{proof}
Again, our first step consists in noticing the propagation of exponential moments for Maxwell molecules \cite[Theorem 4.1]{MM} 
\begin{equation}\label{momentpropagationmax}
\|f(t,\cdot)\,e^{\at \langle\cdot\rangle^{\alpha}}\|_{L^{1}(\mathbb{R}^{d})} \leq C\,,\quad t\geq 0\,,
\end{equation}
for some positive $\at$ and $C$ with dependence as stated.  Now, following the notation and argument of Theorem \ref{T1}, we arrive to the equivalent of equation \eqref{inequalityg2}
\begin{equation}\label{inequalityg2max}
\partial_{t}g(v)\leq Q^{+}(g,g)(v) - g(v)\,,\quad \text{recall that } g:=fe^{a\langle \cdot\rangle^{\alpha}}.
\end{equation}  
After multiplying \eqref{inequalityg2max} by $g^{p-1}$ and integrating in velocity one concludes, as usual, that
\begin{equation}\label{maxg3}
\frac{1}{p}\frac{\text{d}}{\text{d}t}\|g\|_{L^{p}(\mathbb{R}^{d})} \leq \|Q^{+}(g,g)\|_{L^{p}(\mathbb{R}^{d})}\|g\|^{p-1}_{L^{p}(\mathbb{R}^{d})} - \|g\|^{p}_{L^{p}(\mathbb{R}^{d})}\,.
\end{equation}
Furthermore, the norms $\|f\,e^{r\langle\cdot\rangle^{\alpha}}\|_{L^{1}(\mathbb{R}^{d})}$, with $r\in(0,a]$, are uniformly bounded by \eqref{momentpropagationmax}, and the norm $\|f\|_{L^{p}(\mathbb{R}^{d})}$ is uniformly bounded by Proposition \ref{P1MM}.  As a consequence, using Lemma \ref{L2MM} for $R>0$ sufficiently large and $\varepsilon>0$ sufficiently small, we can find a cumulative constant depending only on $g_0$ such that
\begin{equation}\label{maxg4}
\|Q^{+}(g,g)\|_{L^{p}(\mathbb{R}^{d})} \leq C(g_0)\|g\|^{\frac{1}{2}}_{L^{p}(\mathbb{R}^{d})} + \frac{1}{2} \|g\|_{L^{p}(\mathbb{R}^{d})}\,.
\end{equation}
From \eqref{maxg3} and \eqref{maxg4} it readily follows that
\begin{equation}\label{maxfinal}
\sup_{t\geq0}\|g(t)\|_{L^{p}(\mathbb{R}^{d})} \leq \max\big\{ \|g_0\|_{L^{p}(\mathbb{R}^{d})}\, ,\, 4C(g_0)^{2} \big\}\,.
\end{equation}
Finally, the case $p=\infty$ goes in the same manner as in Theorem \ref{T1}, or simply by taking $p\rightarrow\infty$ in \eqref{maxfinal}.
\end{proof}

\section{Propagation of exponentially-tailed Sobolev regularity}

Key ingredients for proving the propagation of exponentially-tailed Sobolev regularity are commutator estimates and the regularization  property of the gain operator.  In order to provide an elementary proof of the commutator estimates, we will rely on a generalization of the Bobylev's identity.  Namely, in the sequel, the Fourier transform $\mathcal{F}$ is denoted by $\widehat{f} := \mathcal{F}(f)$ for any tempered distribution $f$.  We continue using the shorthand notation $\hat{\xi}:=\xi/|\xi|$, with $\xi\in\mathbb{R}^{d}$, to denote unitary vectors since it should not present any confusion with that of the Fourier transform.  Next we state a central identity in the analysis is \cite[equation (2.15)]{BD} which is a generalization of that of Bobylev for Maxwell molecules \cite{B}.

\begin{lemma}\label{bobylev identity}
For any $(v,v_{*})$-integrable function $F_{x}(v,v_{*})$, with $x\in[-1,1]$, it holds that
\begin{equation}\label{BD}
\mathcal{F}\Big\{\int_{\mathbb{R}^{d}}\int_{\mathbb{S}^{d-1}}F_{\hat{u}\cdot\sigma}(v',v'_{*})\text{d}\sigma\text{d}v_{*}\Big\}(\xi) = \int_{\mathbb{S}^{d-1}}\widehat{F_{\hat{\xi}\cdot\sigma}}(\xi^{+},\xi^{-})\text{d}\sigma\,,\quad \xi^{\pm}:=\frac{\xi\pm|\xi|\sigma}{2}\,.
\end{equation}
In formula \eqref{BD}, $\widehat{F_{x}}$ is the Fourier transform of $F_{x}$ in the variables $(v,v_{*})$ keeping $x\in[-1,1]$ fixed.  
\end{lemma}

Such formula is applied to $F_{\hat{u}\cdot\sigma}(v',v'_{*})=f(v')g(v'_{*})B(|u|,\hat{u}\cdot\sigma)$ in \cite{BD}, however, in our context we will need its full generality.  Let us give a short proof of identity \eqref{BD}.
\begin{proof} Using the weak formulation
\begin{equation*}
\int_{\mathbb{R}^{d}}\bigg(\int_{\mathbb{R}^{d}}\int_{\mathbb{S}^{d-1}}F_{\hat{u}\cdot\sigma}(v',v'_{*})\text{d}\sigma\text{d}v_{*}\bigg) e^{-iv\cdot\xi}\text{d}v = \int_{\mathbb{R}^{d}}\int_{\mathbb{R}^{d}}\int_{\mathbb{S}^{d-1}}F_{\hat{u}\cdot\sigma}(v,v_{*})e^{-iv'\cdot\xi}\text{d}\sigma\text{d}v_{*}\text{d}v\,.
\end{equation*}
Now, for any $(v,v_{*},\xi)$ fixed, it follows that
\begin{align*}
\int_{\mathbb{S}^{d-1}}&F_{\hat{u}\cdot\sigma}(v,v_{*})e^{-iv'\cdot\xi}\text{d}\sigma =\int_{\mathbb{S}^{d-1}}F_{\hat{u}\cdot\sigma}(v,v_{*})e^{iu^{-}\cdot\xi}\text{d}\sigma\,e^{-iv\cdot\xi}\\
&=\int_{\mathbb{S}^{d-1}}F_{\hat{\xi}\cdot\sigma}(v,v_{*})e^{iu\cdot\xi^{-}}\text{d}\sigma\,e^{-iv\cdot\xi} = \int_{\mathbb{S}^{d-1}}F_{\hat{\xi}\cdot\sigma}(v,v_{*})e^{-iv_{*}\cdot\xi^{-}}e^{-iv\cdot\xi^{+}}\text{d}\sigma\,.
\end{align*}
We used in the second equality the change of variables $\sigma\rightarrow\text{R}\,\sigma$ where $\text{R}$ is the reflection that interchanges the unit vectors $\hat{u}$ and $\hat{\xi}$.  As a consequence,
\begin{align*}
\int_{\mathbb{R}^{d}}\bigg(\int_{\mathbb{R}^{d}}&\int_{\mathbb{S}^{d-1}}F_{\hat{u}\cdot\sigma}(v',v'_{*})\text{d}\sigma\text{d}v_{*}\bigg) e^{-iv\cdot\xi}\text{d}v \\
&= \int_{\mathbb{S}^{d-1}}\bigg(\int_{\mathbb{R}^{d}}\int_{\mathbb{R}^{d}}F_{\hat{\xi}\cdot\sigma}(v,v_{*})e^{-i(v,v_{*})\cdot(\xi^{+},\xi^{-})}\text{d}v_{*}\text{d}v\bigg)\text{d}\sigma\,,
\end{align*}
which is the desired identity.
\end{proof}

\noindent
Let us mention, in addition, that the proofs in this section are explicitly written for propagation of Sobolev regularity with exponential weights.  Similar results, shown with analogous arguments, hold for polynomial weights.  Also, we will restrict ourselves only to the physical range $\gamma\in[0,1]$; this is not central in the argument, but it simplifies some statements.  More important is the restriction $\alpha\in(0,1]$ in the exponential tail, central for the validity of some estimates with fractional commutators. 
\subsection{Commutators}
With the identity \eqref{BD} at our disposal, we set out to prove in this section that fractional differentiation commutes with the collision operator up to a lower order remainder.  This result is typical of convolution like operators and happens for both, the gain and loss Boltzmann operators.  These commutator estimates are new in the context of  cut-off Boltzmann equation, however, the idea has been used before in the context of Boltzmann equation without cut-off to develop an $L^{2}(\mathbb{R}^{d})$ regularity theory, see for example \cite[Section 3]{AMUXI}.  The technique used to prove these estimates is usually pseudo-differential calculus; here we take a more elementary approach exploiting formula \eqref{BD}.  This method is flexible and it could be adapted to study propagation in general $L^{p}(\mathbb{R}^{d})$ spaces because our commutator identities hold in the pointwise sense.  We do not explore this path and content ourselves with the presentation of the $L^{2}(\mathbb{R}^{2})$ theory.  We start by commutator estimates for the loss operator, followed by commutator estimates for the gain operator.
\begin{lemma}[Commutator for the loss operator]\label{L2Hard}
Take any $\gamma\in[0,1]$, $s\in(0,1]$, $r\in[0,\frac{1}{2})$ and $\alpha\in(0,1]$.  Then,
\begin{equation*}
\big(1+(-\Delta)\big)^{\frac{s}{2}}Q^{-}_{\gamma,b}\big(f,g\big) = Q^{-}_{\gamma,b}\big(\big(1+(-\Delta)\big)^{ \frac{s}{2} }f,g\big) + \mathcal{I}^{-}(f,g)\,,
\end{equation*}
where
\begin{equation*}
\Bigg\{
\begin{array}{cl}
\mathcal{I}^{-}(f,g) = 0 & \text{ for }\; \gamma=0\,, \\
\|\mathcal{I}^{-}(f,g)\,e^{r\langle\cdot\rangle^{\alpha}}\|_{L^{2}(\mathbb{R}^{d})}\leq C\|b\|_{L^{1}(\mathbb{S}^{d-1})}\| f\, e^{r\langle\cdot\rangle^{\alpha}} \|_{L^{2}(\mathbb{R}^{d})}\|g\|_{L^{2}_{\frac{d^{+}-(1-\gamma)}{2}}(\mathbb{R}^{d})} & \text{ for }\; \gamma\in(0,1]\,,
\end{array}
\end{equation*} 
with constant $C:=C(d,s,\gamma,r,\alpha)$.
\end{lemma}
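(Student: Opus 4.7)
The case $\gamma=0$ is immediate: since $Q^{-}_{0,b}(f,g)(v)=\|b\|_{L^{1}(\mathbb{S}^{d-1})}\|g\|_{L^{1}(\mathbb{R}^{d})}f(v)$ is just multiplication of $f$ by a constant, it commutes with every Fourier multiplier and so $\mathcal{I}^{-}(f,g)\equiv 0$. For $\gamma\in(0,1]$ the loss term factors as $Q^{-}_{\gamma,b}(f,g)(v)=f(v)\phi_{g}(v)$ with $\phi_{g}(v):=\|b\|_{L^{1}(\mathbb{S}^{d-1})}(g\ast|\cdot|^{\gamma})(v)$, and the problem reduces to bounding the multiplicative commutator $[(1+(-\Delta))^{s/2},\phi_{g}]f$ in the exponentially-weighted $L^{2}$.

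The plan is to convert this Fourier-multiplier commutator into a pointwise singular integral via the Balakrishnan subordination identity $a^{s/2}=c_{s}\int_{0}^{\infty}(1-e^{-ta})\,t^{-1-s/2}\,dt$, valid for $s\in(0,2)$ with $c_{s}=\tfrac{s/2}{\Gamma(1-s/2)}$. Applied to $a=1+|\xi|^{2}$ and Fourier-inverted, this yields
$$(1+(-\Delta))^{s/2}h(v)=c_{s}\int_{0}^{\infty}\bigl[h(v)-e^{-t}(G_{t}\ast h)(v)\bigr]\,\frac{dt}{t^{1+s/2}},$$
with $G_{t}$ the heat kernel on $\mathbb{R}^{d}$. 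Substituting $h=f\phi_{g}$ and $h=f$, and subtracting $\phi_{g}(v)$ times the latter from the former, the pointwise $\phi_{g}(v)f(v)$ terms cancel and one arrives at the clean identity
$$\mathcal{I}^{-}(f,g)(v)=c_{s}\int_{0}^{\infty}e^{-t}\!\int_{\mathbb{R}^{d}}G_{t}(y)\,f(v-y)\bigl[\phi_{g}(v)-\phi_{g}(v-y)\bigr]\,dy\,\frac{dt}{t^{1+s/2}}.$$

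The next step is a careful bound on $\phi_{g}(v)-\phi_{g}(v-y)$ uniform in $v$. After translating $v_{*}\mapsto z=v_{*}-v$ this equals $\|b\|_{L^{1}}\int g(v+z)\bigl(|z|^{\gamma}-|z+y|^{\gamma}\bigr)\,dz$, and I would exploit the sharp pointwise inequality $\bigl||z|^{\gamma}-|z+y|^{\gamma}\bigr|\le C\min\bigl(|y|^{\gamma},\,|y|(|z|^{\gamma-1}+|z+y|^{\gamma-1})\bigr)$, valid for $\gamma\in(0,1]$, splitting the $z$-integral at the scale $|z|\sim|y|$. Cauchy--Schwarz against the weight $\langle z\rangle^{(d-1+\gamma)/2}$ on each piece (near-field using the volume of the ball, far-field using the polynomial decay $|z|^{\gamma-1}$) produces a uniform bound of the form $|\phi_{g}(v)-\phi_{g}(v-y)|\le C\|g\|_{L^{2}_{(d-1+\gamma)/2}}\omega(|y|)$, where $\omega(|y|)$ is a modulus of continuity of strict H\"older order larger than $s/2$ at the origin --- precisely the regularity needed for the subsequent $t$-integral to converge near zero.

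Finally, to restore the exponential weight, the subadditivity $\langle v\rangle^{\alpha}\le\langle v-y\rangle^{\alpha}+|y|^{\alpha}$ (valid for $\alpha\in(0,1]$) gives $e^{r\langle v\rangle^{\alpha}}|f(v-y)|\le e^{r|y|^{\alpha}}|F(v-y)|$ with $F:=fe^{r\langle\cdot\rangle^{\alpha}}$. Plugging this into the pointwise representation and taking the $L^{2}_{v}$-norm via Minkowski, the $v$-dependence collapses to $\|G_{t}\ast|F|\|_{L^{2}}\le\|F\|_{L^{2}}$ (since $G_{t}$ is a probability kernel), leaving the scalar integral $\int_{0}^{\infty}e^{-t}\int_{\mathbb{R}^{d}} G_{t}(y)\omega(|y|)e^{r|y|^{\alpha}}\,dy\,t^{-1-s/2}\,dt$, which converges near $t=0$ by the H\"older regularity of $\omega$ and at $t=\infty$ by Gaussian integration combined with $\alpha\le 1$ and the hypothesis $r<1/2$ (the latter is used precisely to absorb $e^{r|y|^{\alpha}}$ against the Gaussian $e^{-|y|^{2}/4t}$ uniformly in $t$). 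I expect the main obstacle to be the sharp weighted estimate of the previous paragraph: the exponent $(d-1+\gamma)/2$ lies strictly below the $L^{1}$--Sobolev threshold $m>d/2$ when $\gamma<1$, so the naive H\"older bound $|y|^{\gamma}\|g\|_{L^{1}}$ is unavailable, and one must delicately balance the near- and far-field $z$-regions to both recover the correct weight on $g$ and preserve enough cancellation in $|y|$ for the $t$-integral to close.
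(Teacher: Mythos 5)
Your reduction is sound up to a point: the loss term does factor as $Q^{-}_{\gamma,b}(f,g)=f\,\phi_{g}$ with $\phi_{g}=\|b\|_{L^{1}}\,g\ast|\cdot|^{\gamma}$, the Balakrishnan/heat-semigroup representation of $(1+(-\Delta))^{s/2}$ is legitimate, the resulting pointwise formula for $\mathcal{I}^{-}(f,g)$ is correct, and the weight bookkeeping at the end is fine (the Gaussian does absorb $e^{r|y|^{\alpha}}$ for $\alpha\in(0,1]$, $r<\tfrac12$). The genuine gap is in the small-$t$ convergence criterion and, as a consequence, at the endpoint $s=1$. Since $\int_{\mathbb{R}^{d}}G_{t}(y)|y|^{\beta}\,dy\simeq t^{\beta/2}$, the integral $\int_{0}^{1}t^{\beta/2-1-s/2}\,dt$ converges if and only if $\beta>s$, not $\beta>s/2$ as you assert. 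With only $\|g\|_{L^{2}_{(d-1+\gamma)/2}}$ available, the best uniform modulus your near/far splitting can produce for $\phi_{g}$ is Lipschitz, $\omega(|y|)\simeq|y|$ for small $|y|$, coming from $\sup_{v}|\nabla\phi_{g}(v)|\lesssim\|g\|_{L^{2}_{(d-1+\gamma)/2}}$ (the naive $|y|^{\gamma}\|g\|_{L^{1}}$ bound is, as you note, unavailable, and would in any case only cover $s<\gamma$). With the corrected threshold $\beta>s$ your argument therefore closes for $s\in(0,1)$, but for $s=1$ -- which the lemma asserts -- the $t$-integral diverges logarithmically, and no modulus-of-continuity bound alone can repair this: you would have to extract the principal term, e.g. write $\phi_{g}(v)-\phi_{g}(v-y)=\nabla\phi_{g}(v)\cdot y+\text{remainder}$, treat $\int_{0}^{\infty}e^{-t}t^{-1-s/2}\int G_{t}(y)\,y\,f(v-y)\,dy\,dt$ as a bounded operator on $L^{2}$, and control the remainder by a strictly better-than-Lipschitz modulus, which requires H\"older continuity of $\nabla\phi_{g}$ that does not follow from the stated norm of $g$.

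This is exactly where the paper's route is more economical: it writes the commutator exactly through the kernel $\varphi=\mathcal{F}^{-1}\{\langle\cdot\rangle^{s-2}\}$ (Lemma \ref{ApL1}), so the remainder only involves $\int_{0}^{1}\nabla|\cdot|^{\gamma}(v-v_{*}+\theta x)\,d\theta$, i.e. a single derivative of the kinetic potential, bounded by $3|v-v_{*}|^{-(1-\gamma)}$ (Lemma \ref{ApL0}); no regularity of $g\ast|\cdot|^{\gamma}$ beyond that is ever used, and the $s=1$ endpoint is handled separately (Remark \ref{R1Ap}) by isolating the singular term $\nabla|\cdot|^{\gamma}(v-v_{*})\cdot(\nabla\varphi\ast f)$, whose $L^{2}$-boundedness follows from $\widehat{\nabla\varphi}\in L^{\infty}$. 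To salvage your proposal you would need the analogous extraction at $s=1$; as written, the miscomputed threshold $>s/2$ conceals this obstruction.
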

\begin{proof}
The case $\gamma=0$ is trivial, thus, consider $\gamma\in(0,1]$.  Using Lemma \ref{ApL1} one has
\begin{align*}
\big(1+(-\Delta)\big)^{\frac{s}{2}}Q^{-}_{\gamma,b}\big(f,g\big)(v) = Q^{-}_{\gamma,b}\big(\big(1+(-\Delta)\big)^{\frac{s}{2}}f,g\big)(v) + \int_{\mathbb{R}^{d}}g(v_{*})\mathcal{R}_{v_*}(f)(v)\text{d}v_{*}\,. 
\end{align*}
Then, we define
\begin{align}\label{L2Hard*1}
\begin{split}
\mathcal{I}^{-}(f,g)(v) &:= \int_{\mathbb{R}^{d}}g(v_{*})\mathcal{R}_{v_*}(f)(v)\text{d}v_{*}\\
&= s\int_{\mathbb{R}^{d}}f(v-x)\nabla\varphi(x)\cdot\int_{\mathbb{R}^{d}}g(v_{*})\bigg(\int^{1}_{0}\nabla|\cdot|^{\gamma}(v-v_{*} - \theta\,x)\text{d}\theta\bigg)\text{d}v_{*}\text{d}x\,.
\end{split}
\end{align}
Recall that $\varphi := \mathcal{F}^{-1}\big\{ \langle \cdot \rangle^{s-2} \big\}$  is the inverse Fourier transform of the Bessel kernel of order $2-s$.  Now, let us first consider the case $s\in(0,1)$.  Using Lemma \ref{ApL0} we control the integral in $\theta$ in \eqref{L2Hard*1}
\begin{align*}
\big| \mathcal{I}^{-}(f,g)(v)\big| &\leq (2-\gamma)s\int_{\mathbb{R}^{d}}\big| f(v-x) \big| \big| \nabla\varphi(x)\big|\bigg(\int_{\mathbb{R}^{d}}\big|g(v_{*})\big||v-v_{*}|^{-(1-\gamma)}\text{d}v_{*}\bigg)\text{d}x\\
& \leq s\,C_{d,\gamma}\|g\|_{L^{2}_{\frac{d^{+}-(1-\gamma)}{2}}(\mathbb{R}^{d})} \int_{\mathbb{R}^{d}}\big| f(v-x) \big| \big| \nabla\varphi(x)\big|\text{d}x\,.
\end{align*}
In the last inequality we simply used Cauchy-Schwarz inequality in the $v_{*}$-integral,
\begin{align*}
\int_{\mathbb{R}^{d}}\frac{|g(v_{*})|}{|v-v_{*}|^{1-\gamma}}\text{d}v_{*}&\leq
\bigg(\int_{\mathbb{R}^{d}} g(v_{*}) |^{2}\langle v_{*} \rangle^{d^{+}-(1-\gamma)}\text{d}v_{*}\bigg)^{\frac{1}{2}}
\bigg(\int_{\mathbb{R}^{d}}\frac{\langle v_{*} \rangle^{-d^{+}+(1-\gamma)}}{|v-v_{*}|^{2(1-\gamma)}}\text{d}v_{*}\bigg)^{\frac{1}{2}}\\
&\leq C_{d,\gamma}\|g\|_{L^{2}_{\frac{d^{+}-(1-\gamma)}{2}}(\mathbb{R}^{d})}\,.
\end{align*}
Additionally, $|v|^{\alpha}\leq |v-x|^{\alpha} + |x|^{\alpha}$ for any $\alpha\in(0,1]$.  As a consequence,
\begin{align}\label{L2Harde1}
\big| \mathcal{I}^{-}(f,g)(v)\big|e^{r\langle v \rangle^{\alpha}} \leq s\,C_{d,\gamma}\|g\|_{L^{2}_{\frac{d^{+}-(1-\gamma)}{2}}(\mathbb{R}^{d})}\int_{\mathbb{R}^{d}}\big| f(v-x) e^{r\langle v-x \rangle^{\alpha}} \big| \big| \nabla\varphi(x) e^{r\langle x \rangle^{\alpha}}\big|\text{d}x\,,
\end{align}
and, invoking Young's inequality for convolutions yields
\begin{equation*}
\| \mathcal{I}^{-}(f,g)\,e^{r\langle\cdot\rangle^{\alpha}}\|_{L^{2}(\mathbb{R}^{d})} \leq s\,C_{d,\gamma}\|g\|_{L^{2}_{\frac{d^{+}-(1-\gamma)}{2}}(\mathbb{R}^{d})} \|f\,e^{r\langle\cdot\rangle^{\alpha}}\|_{L^{2}(\mathbb{R}^{d})}\|\nabla\varphi\,e^{r\langle\cdot\rangle^{\alpha}}\|_{L^{1}(\mathbb{R}^{d})}\,.
\end{equation*}
It is well known (see for instance \cite[Proposition 6.1.5]{GF}) that $\varphi$ is smooth, except at the origin, and decaying as $e^{-\frac{|x|}{2}}$.  Moreover,
\begin{equation}\label{bpb0}
\varphi(x) = \frac{1}{\Gamma(2-s)}|x|^{(2-s)-d} + 1 + O(|x|^{(2-s)-d+2})\,,\quad x\approx 0\,.
\end{equation}
Thus, $e^{r\langle\cdot\rangle^{\alpha}} \nabla\varphi \in L^{1}(\mathbb{R}^{d})$ for any $s\in(0,1)$ and $r\in[0,\frac{1}{2})$.  This proves the lemma in this case.  The case $s=1$ needs a special treatment, but the essential idea remains intact.  Indeed, using Remark \ref{R1Ap} in \eqref{L2Hard*1} it readily follows that
\begin{align*}
\big| \mathcal{I}^{-}(f,g)(v)\big| 
\leq & \bigg|\int_{\mathbb{R}^{d}}g(v_{*})\nabla|\cdot|^{\gamma}(v-v_{*})\text{d}v_{*}\bigg|\,\big|\big(\nabla\varphi\ast f\big)(v)\big|\\
&+C\int_{\mathbb{R}^{d}}\big| f(v-x) \big| \big| |x|^{\varepsilon}\nabla\varphi(x)\big|\bigg(\int_{\mathbb{R}^{d}}\big|g(v_{*})\big||v-v_{*}|^{-(1 - \gamma +\varepsilon)}\text{d}v_{*}\bigg)\text{d}x\,.
\end{align*}
The lemma follows from here using previous arguments and the fact that
\begin{equation*}
\|e^{r\langle\cdot\rangle^{\alpha}}\,\nabla\varphi\ast f\|_{L^{2}(\mathbb{R}^{d})} \leq C(s,r,\alpha)\|f\,e^{r\langle\cdot\rangle^{\alpha}}\|_{L^{2}(\mathbb{R}^{d})}\,.
\end{equation*}
\end{proof}
\begin{lemma}[Commutator for the gain operator]\label{L1Hard}
Let the potential be $\gamma\in[0,1]$, $s\in(0,1]$, $r\in[0,\frac{1}{4})$ and $\alpha\in(0,1]$.  Then,
\begin{equation*}
\big(1+(-\Delta)\big)^{\frac{s}{2}}Q^{+}_{\gamma,b}\big(f,g\big) = Q^{+}_{\gamma,b_{s}}\big(\big(1+(-\Delta)\big)^{ \frac{s}{2} }f,g\big) + \mathcal{I}^{+}(f,g)\,,
\end{equation*}
where $b_{s}(\cdot)=\Big(\frac{ 2 }{ 1+\, \cdot }\Big)^{\frac{s}{2}}b(\cdot)$, and
\begin{equation*}
\|\mathcal{I}^{+}(f,g)\,e^{r\langle\cdot\rangle^{\alpha}}\|_{L^{2}(\mathbb{R}^{d})}\leq C\,\|b\|_{L^{1}(\mathbb{S}^{d-1})}\|f\,e^{r\langle\cdot\rangle^{\alpha}}\|_{L^{2}_{\gamma}(\mathbb{R}^{d})}\|g\,e^{r\langle\cdot\rangle^{\alpha}}\|_{L^{2}_{\gamma + \frac{d^{+}}{2}}(\mathbb{R}^{d})}\,,
\end{equation*}
with constant $C:=C(d,s,\gamma,r,\alpha)$.
\end{lemma}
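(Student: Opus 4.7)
The plan is to work on the Fourier side and exploit the Bobylev-type identity \eqref{BD}. Setting $h := (1+(-\Delta))^{s/2} f$ so that $\widehat h(\xi) = \langle\xi\rangle^s \widehat f(\xi)$, formula \eqref{BD} represents both $\widehat{Q^+_{\gamma,b}(f,g)}(\xi)$ and $\widehat{Q^+_{\gamma,b_s}(h,g)}(\xi)$ as integrals over $\sigma \in \mathbb{S}^{d-1}$ of a common bilinear expression evaluated at $(\xi^+,\xi^-)$, weighted respectively by the multipliers $b(\hat\xi\cdot\sigma)\langle\xi\rangle^s$ and $b_s(\hat\xi\cdot\sigma)\langle\xi^+\rangle^s$. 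The identity to be proved therefore reduces to the Fourier multiplier decomposition
$$\langle\xi\rangle^s \;=\; \Big(\tfrac{2}{1+\hat\xi\cdot\sigma}\Big)^{s/2}\langle\xi^+\rangle^s \;-\; R_s(\xi,\sigma),$$
which follows directly from the algebraic relation $|\xi|^2 = 2|\xi^+|^2/(1+\hat\xi\cdot\sigma)$. The first term, once inserted into \eqref{BD}, absorbs the extra factor $(2/(1+y))^{s/2}$ into $b_s$ and produces $\widehat{Q^+_{\gamma,b_s}(h,g)}$; the remainder term then \emph{defines} $\mathcal{I}^+(f,g)$ via inverse Fourier transform.

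The main quantitative ingredient is a uniform bound on $R_s$. Writing $y := \hat\xi\cdot\sigma \in [0,1]$ (on the support of $b$) and
$$R_s(\xi,\sigma) \;=\; \big(\tfrac{2}{1+y}+|\xi|^2\big)^{s/2} - (1+|\xi|^2)^{s/2},$$
the elementary subadditivity $|a^p-b^p|\leq |a-b|^p$, valid for $p = s/2 \in (0,1/2]$ and $a,b \geq 0$, yields
$$|R_s(\xi,\sigma)| \;\leq\; \big(\tfrac{1-y}{1+y}\big)^{s/2} \;\leq\; (1-y)^{s/2},$$
uniformly in $\xi$. Consequently, inverse transforming, $\mathcal{I}^+(f,g)$ can be realized as a gain-type bilinear operator with a modified angular weight $\tilde b(y) \sim b(y)(1-y)^{s/2}$ that remains in $L^1(\mathbb{S}^{d-1})$ and symmetrized.

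To conclude the weighted $L^2$ estimate, I would distribute the exponential weight through
$$e^{r\langle v\rangle^\alpha} \;\leq\; e^{r\langle v'\rangle^\alpha}\,e^{r\langle v'_*\rangle^\alpha}, \qquad \alpha \in (0,1],$$
exactly as in the proof of Theorem \ref{T1}, and then apply Cauchy-Schwarz in $(\sigma, v_*)$ or, equivalently, Young's inequality for the gain operator (Theorem \ref{ApT-1}). The hard-potential factor $|v-v_*|^\gamma$ is absorbed through the crude pointwise bound $|v-v_*|^\gamma \leq 2^{\gamma/2}\langle v\rangle^\gamma\langle v_*\rangle^\gamma$, explaining the $L^2_\gamma$ norm on $f$; the additional $\langle \cdot \rangle^{d^+/2}$ on $g$ compensates for the integrability cost in $d\sigma$ uniformly in $\xi$.

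The main obstacle I expect is in handling the hard-potential case $\gamma \in (0,1]$: the identity \eqref{BD} involves the \emph{joint} Fourier transform of the non-tensorial quantity $|v-v_*|^\gamma f(v)g(v_*)$, so unwinding the remainder piece back into a bilinear operator on $f$ and $g$ in physical space requires careful bookkeeping, presumably by treating $|v-v_*|^\gamma$ as a convolution factor in one variable and absorbing it into the polynomial weights on $f$ and $g$. The stricter constraint $r < 1/4$ (as opposed to $r < 1/2$ in the loss commutator Lemma \ref{L2Hard}) reflects that the exponential weight must now be split across \emph{two} post-collisional velocities $v'$ and $v'_*$ rather than a single convolution variable, halving the admissible exponent.
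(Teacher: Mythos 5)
Your starting point coincides with the paper's: apply the Bobylev--Desvillettes identity \eqref{BD} with $F(v',v'_*)=f(v')|v'-v'_*|^\gamma g(v'_*)$ and split the multiplier via $\langle\xi\rangle^{s}=\big(\tfrac{2}{1+\hat\xi\cdot\sigma}\big)^{s/2}\langle\xi^+\rangle^{s}-R_s(\xi,\sigma)$; your uniform bound $|R_s|\leq(1-\hat\xi\cdot\sigma)^{s/2}$ is correct. The gap is at the very next step, where you assert that $\int_{\mathbb{S}^{d-1}}\langle\xi^+\rangle^{s}\widehat F(\xi^+,\xi^-)\,b_s(\hat\xi\cdot\sigma)\,\text{d}\sigma$ equals $\mathcal{F}\big\{Q^+_{\gamma,b_s}\big((1+(-\Delta))^{s/2}f,g\big)\big\}(\xi)$. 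For $\gamma>0$ this is false: $\langle\xi^+\rangle^s$ acts as a Fourier multiplier in the $v'$-variable and therefore differentiates the \emph{product} $f(v')|v'-v'_*|^\gamma$, not $f$ alone; that is,
\begin{equation*}
\langle\xi^+\rangle^s\widehat F(\xi^+,\xi^-)=\mathcal{F}\Big\{\big(1+(-\Delta)\big)^{\frac{s}{2}}\big[f\,\tau_{v'_*}|\cdot|^\gamma\big](v')\,g(v'_*)\Big\}(\xi^+,\xi^-),
\end{equation*}
which differs from $\mathcal{F}\big\{\big[(1+(-\Delta))^{s/2}f\big](v')\,|v'-v'_*|^\gamma g(v'_*)\big\}$ by a fractional-Leibniz remainder. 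This produces a \emph{second} remainder $\mathcal{I}^+_2$ in the commutator, which your proposal never isolates or estimates. You flag this in your last paragraph as an expected obstacle and suggest absorbing $|v-v_*|^\gamma$ into polynomial weights, but that cannot repair the argument: the issue is not the size of the kinetic factor but the fact that $(1+(-\Delta))^{s/2}$ does not commute with multiplication by $|v'-v'_*|^\gamma$.

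In the paper, controlling $\mathcal{I}^+_2$ is in fact the bulk of the proof: one uses the fractional Leibniz identity of Lemma~\ref{ApL1} to represent the remainder as an integral against $\nabla\varphi$ and $\int_0^1\nabla|\cdot|^\gamma(v-v_*+\theta x)\,\text{d}\theta$, the pointwise bound of Lemma~\ref{ApL0} to dominate the latter by $|v-v_*|^{-(1-\gamma)}$, the weak formulation with the pre-post collisional change of variables together with Cauchy--Schwarz, and the singular change of variables $u\mapsto u^+$ to handle the test-function factor $\phi(v')$; the endpoint $s=1$ needs separate treatment (Remark~\ref{R1Ap}) because $\nabla\varphi$ is not integrable at the origin. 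Two secondary inaccuracies in your write-up: $R_s(\xi,\sigma)$ depends on $|\xi|$ as well as on $\hat\xi\cdot\sigma$, so it is \emph{not} a pure angular multiplier and cannot simply be folded into a modified kernel $\tilde b(y)\sim b(y)(1-y)^{s/2}$; the paper instead writes $R_s$ as a $\theta$-integral over Bessel multipliers $\big\langle\xi^+/\sqrt{\ell}\big\rangle^{-(2-s)}$, pulls back to velocity space, and bounds $|\mathcal{I}^+_1|$ pointwise by $\tfrac{s}{2}Q^+_{0,\tilde b}(\tilde\varphi\ast\tilde f,\tilde g)$ before applying Young's inequality. And the constraint $r<\tfrac14$ does not come from splitting the exponential weight over two post-collisional velocities; it comes from the dilation $\varphi\mapsto\varphi(\cdot/2)$ (used to dominate the $\ell$-dependent Bessel kernels uniformly in $\ell\in[\tfrac12,1]$), whose decay is only $e^{-|x|/4}$, so that $\|\varphi(\cdot/2)\,e^{r\langle\cdot\rangle^\alpha}\|_{L^1}<\infty$ forces $r<\tfrac14$. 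As written, your argument establishes the lemma only in the Maxwell case $\gamma=0$, where $\mathcal{I}^+_2$ vanishes.
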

\begin{proof}
Using formula \eqref{BD} with $F(v',v'_{*})=f(v')|v'-v'_{*}|^{\gamma}g(v'_{*})$ one has
\begin{align}\label{L1Harde1}
\begin{split}
\mathcal{F}\big\{\big( 1 + (-\Delta) \big)^{\frac{s}{2}}&Q^{+}_{\gamma,b}(f,g)\big\}(\xi) = \langle \xi \rangle^{s}\int_{\mathbb{S}^{d-1}} \widehat{F}(\xi^{+},\xi^{-})b(\hat{\xi}\cdot\sigma)\text{d}\sigma\\
&= \int_{\mathbb{S}^{d-1}} \langle \xi^{+} \rangle^{s} \widehat{F}(\xi^{+},\xi^{-})b_{s}(\hat{\xi}\cdot\sigma)\text{d}\sigma + \widehat{\mathcal{I}^{+}_{1}(f,g)}(\xi)\,,
\end{split}
\end{align}
where
\begin{equation*}
\widehat{\mathcal{I}^{+}_{1}(f,g)}(\xi) :=  \int_{\mathbb{S}^{d-1}} \Big( \big( \tfrac{1+\hat{\xi}\cdot\sigma}{2} +  |\xi^{+}|^{2} \big)^{\frac{s}{2}} - \big( 1 + |\xi^{+}|^{2} \big)^{\frac{s}{2}} \Big) \widehat{F}(\xi^{+},\xi^{-})b_{s}(\hat{\xi}\cdot\sigma)\text{d}\sigma\,.
\end{equation*}
Now,
\begin{equation*}
\langle \xi^{+} \rangle^{s} \widehat{F}(\xi^{+},\xi^{-}) = \mathcal{F}\big\{\big(1+(-\Delta)\big)^{\frac{s}{2}} (f\, \tau_{v'_{*}}|\cdot|^{\gamma})(v')g(v'_{*})\big\}(\xi^{+},\xi^{-})\,.
\end{equation*}
We continue by using Lemma \ref{ApL1}
\begin{align}\label{L1Harde2}
\begin{split}
& \langle \xi^{+} \rangle^{s} \widehat{F}(\xi^{+},\xi^{-})  \\
& =\mathcal{F}\big\{\big(1+(-\Delta)\big)^{\frac{s}{2}}f(v')\times|v'-v'_{*}|^{\gamma}g(v'_{*})\big\}(\xi^{+},\xi^{-}) + \mathcal{F}\big\{\mathcal{R}_{v'_{*}}(f)(v')g(v'_{*})\big\}(\xi^{+},\xi^{-})\,.
\end{split}
\end{align}
The conclusion of \eqref{L1Harde1} and \eqref{L1Harde2} is
\begin{align*}
\mathcal{F}\big\{\big( 1 + (-\Delta) & \big)^{\frac{s}{2}}Q^{+}_{\gamma,b}(f,g)\big\}(\xi) = \widehat{\mathcal{I}^{+}_{1}(f,g)}(\xi) + \widehat{\mathcal{I}^{+}_{2}(f,g)}(\xi)\\
&+ \int_{\mathbb{S}^{d-1}} \mathcal{F}\big\{\big(1+(-\Delta)\big)^{\frac{s}{2}}f(v')\times|v'-v'_{*}|^{\gamma}g(v'_{*})\big\}(\xi^{+},\xi^{-}) b_{s}(\hat{\xi}\cdot\sigma)\text{d}\sigma\,,
\end{align*}
where
\begin{equation*}
\widehat{\mathcal{I}^{+}_{2}(f,g)}(\xi) := \int_{\mathbb{S}^{d-1}}\mathcal{F}\big\{\mathcal{R}_{v'_{*}}(f)(v')g(v'_{*})\big\}(\xi^{+},\xi^{-})b_{s}(\hat{\xi}\cdot\sigma)\text{d}\sigma\,.
\end{equation*} 
As a consequence, taking the inverse Fourier transform
\begin{equation}\label{L1Harde3}
\big(1 + (-\Delta)\big)^{\frac{s}{2}}Q^{+}_{\gamma,b}(f,g) = Q^{+}_{\gamma,b_{s}}\big( \big(1 + (-\Delta)\big)^{\frac{s}{2}}f,g\big)(v) + \mathcal{I}^{+}_{1}(f,g)(v) + \mathcal{I}^{+}_{2}(f,g)(v)\,.
\end{equation}
\textit{Controlling the term $\mathcal{I}^{+}_{2}$:}  Assume $\gamma>0$, otherwise, this term is zero.  Also, assume first $s\in(0,1)$ since the case $s=1$ needs special treatment.  Thanks to formula \eqref{BD}, the remainder term $\mathcal{I}^{+}_{2}$ is
\begin{align*}
\mathcal{I}^{+}_{2}(f,g)(v)&=\int_{\mathbb{R}^{d}}\int_{\mathbb{S}^{d-1}}\mathcal{R}_{v'_{*}}(f)(v')g(v'_{*})b_{s}(\hat{u}\cdot\sigma)\text{d}\sigma\text{d}v_{*}\,.
\end{align*}
Thus, using the pre-post collisional change of variables and Lemmas \ref{ApL1} and \ref{ApL0} one obtains that for any test function $\phi$
\begin{align*}
\bigg|&\int_{\mathbb{R}^{d}}\mathcal{I}^{+}_{2}(f,g)(v)\phi(v)\text{d}v\bigg|=\bigg|\int_{\mathbb{R}^{d}}\int_{\mathbb{R}^{d}}\int_{\mathbb{S}^{d-1}}\mathcal{R}_{v_{*}}(f)(v)g(v_{*})\phi(v')b_{s}(\hat{u}\cdot\sigma)\text{d}\sigma\text{d}v_{*}\text{d}v\bigg|\\
&\leq (2-\gamma)s\int_{\mathbb{R}^{d}}|\nabla\varphi(x)|\bigg(\int_{\mathbb{S}^{d-1}}\int_{\mathbb{R}^{d}}\int_{\mathbb{R}^{d}}\frac{|\tau_{x}f(v)||\phi(v')|}{|v-v_{*}|^{1-\gamma}}|g(v_{*})|b(\hat{u}\cdot\sigma)\text{d}v_{*}\text{d}v\text{d}\sigma\bigg)\text{d}x\,.
\end{align*}
Now, let us introduce the exponential weight by choosing $\phi(v) =e^{r\langle v\rangle^{\alpha}}\tilde{\phi} (v)$.  Since $|v'|^{2}\leq |v|^{2} + |v_{*}|^{2}$, if follows that
\begin{equation*}
e^{r\langle v'\rangle^{\alpha}} \leq e^{r\langle x\rangle^{\alpha}}e^{r\langle v-x\rangle^{\alpha}}e^{r\langle v_{*}\rangle^{\alpha}},\qquad \alpha\in(0,1],\quad r\geq0\,.
\end{equation*}
As a consequence,
\begin{align}\label{L1Harde4}
\begin{split}
\bigg|&\int_{\mathbb{R}^{d}}\mathcal{I}^{+}_{2}(f,g)(v)\phi(v)\text{d}v\bigg|\\
&\hspace{-.6cm}\leq (2-\gamma)s\int_{\mathbb{R}^{d}}|\tilde\varphi(x)|\bigg(\int_{\mathbb{S}^{d-1}}\int_{\mathbb{R}^{d}}\int_{\mathbb{R}^{d}}\frac{|\tau_{x}\tilde f(v)||\tilde\phi(v')|}{|v-v_{*}|^{1-\gamma}}|\tilde g(v_{*})|b(\hat{u}\cdot\sigma)\text{d}v_{*}\text{d}v\text{d}\sigma\bigg)\text{d}x\,.
\end{split}
\end{align}
where $\tilde\varphi=e^{r\langle\cdot\rangle^{\alpha}}\nabla\varphi$, $\tilde f = e^{r\langle\cdot\rangle^{\alpha}}f$ and $\tilde g = e^{r\langle\cdot\rangle^{\alpha}}g$.  For the integral in parenthesis in \eqref{L1Harde4} we use Cauchy-Schwarz inequality
\begin{align}\label{L1Harde4.5}
\begin{split}
\int_{\mathbb{S}^{d-1}}\int_{\mathbb{R}^{d}}\int_{\mathbb{R}^{d}}&\frac{|\tau_{x}\tilde f(v)||\tilde\phi(v')|}{|v-v_{*}|^{1-\gamma}}|\tilde g(v_{*})|b(\hat{u}\cdot\sigma)\text{d}v_{*}\text{d}v\text{d}\sigma\\
&\leq\bigg(\int_{\mathbb{S}^{d-1}}\int_{\mathbb{R}^{d}}\int_{\mathbb{R}^{d}}\frac{|\tau_{x}\tilde f(v)|^{2}}{|v-v_{*}|^{1-\gamma}}|\tilde g(v_{*})|b(\hat{u}\cdot\sigma)\text{d}v_{*}\text{d}v\text{d}\sigma\bigg)^{\frac{1}{2}}\\
&\hspace{1cm}\times\bigg(\int_{\mathbb{S}^{d-1}}\int_{\mathbb{R}^{d}}\int_{\mathbb{R}^{d}}\frac{|\tilde\phi(v')|^{2}}{|v-v_{*}|^{1-\gamma}}|\tilde g(v_{*})|b(\hat{u}\cdot\sigma)\text{d}v_{*}\text{d}v\text{d}\sigma\bigg)^{\frac{1}{2}}\,.
\end{split}
\end{align}
As for the first integral on the right side, one notes that
\begin{align*}
\int_{\mathbb{R}^{d}}\frac{|\tilde g(v_{*})|}{|v-v_{*}|^{1-\gamma}}\text{d}v_{*}
&\leq  \bigg(\int_{\mathbb{R}^{d}}|\tilde g(v_{*}) |^{2}\langle v_{*} \rangle^{d^{+}-(1-\gamma)}\text{d}v_{*}\bigg)^{\frac{1}{2}}
\times\bigg(\int_{\mathbb{R}^{d}}\frac{\langle v_{*} \rangle^{-d^{+}+(1-\gamma)}}{|v-v_{*}|^{2(1-\gamma)}}\text{d}v_{*}\bigg)^{\frac{1}{2}}\\
&\leq C_{d,\gamma}\|g\,e^{r|\cdot|^{\alpha}}\|_{L^{2}_{\frac{d^{+}-(1-\gamma)}{2}}(\mathbb{R}^{d})}\,.
\end{align*}
Therefore,
\begin{align}\label{L1Harde4.6}
\begin{split}
\bigg(\int_{\mathbb{S}^{d-1}}\int_{\mathbb{R}^{d}}\int_{\mathbb{R}^{d}}&\frac{|\tau_{x}\tilde f(v)|^{2}}{|v-v_{*}|^{1-\gamma}}|\tilde g(v_{*})|b(\hat{u}\cdot\sigma)\text{d}v_{*}\text{d}v\text{d}\sigma\bigg)^{\frac{1}{2}}\\
&\leq C_{d,\gamma}\|b\|^{\frac{1}{2}}_{L^{1}(\mathbb{S}^{d-1})}\|g\,e^{r|\cdot|^{\alpha}}\|^{\frac{1}{2}}_{L^{2}_{\frac{d^{+}-(1-\gamma)}{2}}(\mathbb{R}^{d})}\|f\,e^{r|\cdot|^{\alpha}}\|_{L^{2}(\mathbb{R}^{d})}\,.
\end{split}
\end{align}
For the second integral on the right side, just recall the definition $v'=v_{*}+u^{+}$ which gives us the identity
\begin{equation*}
|v'-v_{*}| = |u^{+}| = |u|\sqrt{\frac{1+\hat{u}\cdot\sigma}{2}}\,.
\end{equation*}
Then, using the classical change of variables $y = u^{+}$ (with fixed $\sigma$) having Jacobian $J=\frac{1}{2^{d}}(1+\hat{u}\cdot\sigma)$, it follows that
\begin{align}\label{L1Harde5}
\int_{\mathbb{R}^{d}}\frac{ |\tilde\phi(v')|^{2}} { |u|^{1-\gamma} } b(\hat{u}\cdot\sigma)\text{d}u = \int_{\mathbb{R}^{d}}\frac{ |\tilde\phi(v_{*} + y)|^{2}} { |y|^{1-\gamma} } \tilde{b}(\hat{u}(y,\sigma)\cdot\sigma)\text{d}y\,,\quad\; \tilde{b}(\cdot) := \frac{2^{d-\frac{1-\gamma}{2}}b(\cdot)}{(1+\cdot)^{ \frac{1+\gamma}{2} } }\,.
\end{align}
Furthermore, using that $\hat{u}(y,\sigma)\cdot\sigma = 2 (\hat{y}\cdot\sigma)^{2}-1$, it follows by a direct computation that
\begin{align}\label{L1Harde6}
\int_{\mathbb{S}^{d-1}}\tilde{b}(\hat{u}(y,\sigma)\cdot\sigma)\text{d}\sigma = 2^{-\frac{d}{2}}\int_{\mathbb{S}^{d-1}}\frac{\tilde{b}(e_{1}\cdot\sigma)}{(1+e_{1}\cdot\sigma)^{\frac{d - 2}{2}}}\text{d}\sigma\leq C_{d,\gamma}\|b\|_{L^{1}(\mathbb{S}^{d-1})}\,.
\end{align}
Here $e_{1}$ is any fixed unitary vector.  As a consequence of \eqref{L1Harde5} and \eqref{L1Harde6} we obtain
\begin{align}\label{L1Harde7}
\begin{split}
\bigg(\int_{\mathbb{S}^{d-1}}\int_{\mathbb{R}^{d}}\int_{\mathbb{R}^{d}}\frac{|\tilde \phi(v')|^{2}}{|v-v_{*}|^{1-\gamma}}&|\tilde g(v_{*})|b(\hat{u}\cdot\sigma)\text{d}v_{*}\text{d}v\text{d}\sigma\bigg)^{\frac{1}{2}}\\
&\hspace{-.5cm}\leq C_{d,\gamma}\|b\|^{\frac{1}{2}}_{L^{1}(\mathbb{S}^{d-1})}\|g \,e^{r\langle\cdot\rangle^{\alpha}}\|^{\frac{1}{2}}_{L^{2}_{\frac{d^{+} - (1-\gamma)}{2}}(\mathbb{R}^{d})}\|\tilde\phi\|_{L^{2}(\mathbb{R}^{d})}\,.
\end{split}
\end{align}
Therefore, after gathering \eqref{L1Harde4.5}, \eqref{L1Harde4.6} and \eqref{L1Harde7} and plugging into \eqref{L1Harde4} we have
\begin{align}\label{L1Harde8}
\begin{split}
&\bigg|\int_{\mathbb{R}^{d}}\mathcal{I}^{+}_{2}(f,g)(v)\,e^{r\langle\cdot\rangle^{\alpha}}\tilde\phi(v)\text{d}v\bigg|\\
&\hspace{-.5cm}\leq C_{d,\gamma}\,s\,\|b\|_{L^{1}(\mathbb{S}^{d-1})}\|\tilde\varphi\|_{L^{1}(\mathbb{R}^{d})}\|f\,e^{r\langle\cdot\rangle^{\alpha}}\|_{L^{2}(\mathbb{R}^{d})}\|g\,e^{r\langle\cdot\rangle^{\alpha}}\|_{L^{2}_{\frac{d^{+} - (1-\gamma)}{2}}(\mathbb{R}^{d})}\|\tilde\phi\|_{L^{2}(\mathbb{R}^{d})}\,.
\end{split}
\end{align}
Since $\|\tilde\varphi\|_{L^{1}(\mathbb{R}^{d})}=\|\nabla\varphi\,e^{r\langle\cdot\rangle^{\alpha}}\|_{L^{1}(\mathbb{R}^{d})}<\infty$, for any $\alpha\in(0,1]$ and $r\in[0,\frac{1}{2})$, the result follows by duality.

\noindent
The case $s=1$ needs special mention due to the fact that $\nabla\varphi$ is not integrable at the origin.  However, this issue presents no obstacle for the estimate.  Indeed, using Remark \ref{R1Ap} in the Appendix, one has
\begin{align*}
\bigg|&\int_{\mathbb{R}^{d}}\mathcal{I}^{+}_{2}(f,g)(v)\phi(v)\text{d}v\bigg|=\bigg|\int_{\mathbb{R}^{d}}\int_{\mathbb{R}^{d}}\int_{\mathbb{S}^{d-1}}\mathcal{R}_{v_{*}}(f)(v)g(v_{*})\phi(v')b_{s}(\hat{u}\cdot\sigma)\text{d}\sigma\text{d}v_{*}\text{d}v\bigg|\\
& \leq \gamma\int_{\mathbb{S}^{d-1}}\int_{\mathbb{R}^{d}}\int_{\mathbb{R}^{d}}\frac{|(\nabla\varphi\ast f)(v)||\phi(v')|}{|v-v_{*}|^{1-\gamma}}|g(v_{*})|b(\hat{u}\cdot\sigma)\text{d}v_{*}\text{d}v\text{d}\sigma\\
&+  C\int_{\mathbb{R}^{d}}\big||x|^{\varepsilon}\nabla\varphi(x)\big|\bigg(\int_{\mathbb{S}^{d-1}}\int_{\mathbb{R}^{d}}\int_{\mathbb{R}^{d}}\frac{|\tau_{x}f(v)||\phi(v')|}{|v-v_{*}|^{1+\varepsilon -\gamma}}|g(v_{*})|b(\hat{u}\cdot\sigma)\text{d}v_{*}\text{d}v\text{d}\sigma\bigg)\text{d}x\,.
\end{align*}
The first term on the right side can be estimated by adapting previous argument with $\nabla\varphi\ast f$ instead of $f$.  While for the second term, one simply repeats the argument with $|\cdot|^{\varepsilon}\nabla\varphi$ instead of $\nabla\varphi$.\\

\noindent
\textit{Controlling the term $\mathcal{I}^{+}_{1}$:} Using the fact that
\begin{align*}
-\bigg( \big( 1 + |\xi^{+}|^{2} \big)^{\frac{s}{2}} - \big( a(\hat{\xi}\cdot\sigma) +  |\xi^{+}|^{2} \big)^{\frac{s}{2}} \bigg) = -\frac{s}{2}\int^{1}_{0}\frac{1-a(\hat{\xi}\cdot\sigma)}{\big((1-\theta)a(\hat{\xi}\cdot\sigma)+\theta+|\xi^{+}|^{2}\big)^{\frac{2-s}{2}}}\text{d}\theta\,,
\end{align*}
one can conclude, for $a(\hat{\xi}\cdot\sigma):=\frac{1+\hat{\xi}\cdot\sigma}{2}$ and $\ell(\theta,\hat{\xi}\cdot\sigma):=(1-\theta)a(\hat{\xi}\cdot\sigma) + \theta$, that
\begin{align*}
\widehat{\mathcal{I}^{+}_{1}(f,g)}(\xi) = -\frac{s}{2}\int^{1}_{0}\int_{\mathbb{S}^{d-1}}\Big\langle \tfrac{\xi^{+}}{\sqrt{\ell(\theta,\hat{\xi}\cdot\sigma)}}\Big\rangle^{-(2-s)}\widehat{F}(\xi^{+},\xi^{-})\frac{\big(1-a(\hat{\xi}\cdot\sigma)\big)b_{s}(\hat{\xi}\cdot\sigma)}{\ell(\theta,\hat{\xi}\cdot\sigma)^{\frac{2-s}{2}}}\text{d}\sigma\text{d}\theta\,.
\end{align*}
In addition, note that
\begin{align*}
\Big\langle \tfrac{\xi^{+}}{\sqrt{\ell(\theta,\hat{\xi}\cdot\sigma)}}\Big\rangle^{-(2-s)}&\widehat{F}(\xi^{+},\xi^{-}) \\
&\hspace{-2cm}= \, \ell(\theta,\hat{\xi}\cdot\sigma)^{\frac{d}{2}}\mathcal{F}\Big\{ \varphi\big(\ell(\theta,\hat{\xi}\cdot\sigma)^{\frac{1}{2}}\,\cdot\big)\ast \big(f(\cdot)\tau_{v'_{*}}|\cdot|^{\gamma}\big)(v')\,g(v'_{*})\Big\}(\xi^{+},\xi^{-})\,.
\end{align*}
As a consequence, we deduce from formula \eqref{BD} (for the inner integral $\theta$ is fixed) that
\begin{align*}
\mathcal{I}^{+}_{1}(f,g)(v)=-\frac{s}{2}\int^{1}_{0}\int_{\mathbb{R}^{d}}&\int_{\mathbb{S}^{d-1}}\varphi\big(\ell(\theta,\hat{u}\cdot\sigma)^{\frac{1}{2}}\,\cdot\big)\ast \big(f(\cdot)\tau_{v'_{*}}|\cdot|^{\gamma}\big)(v')\,g(v'_{*})\\
&\times\ell(\theta,\hat{u}\cdot\sigma)^{\frac{d+s-2}{2}}\big(1-a(\hat{u}\cdot\sigma)\big)b_{s}(\hat{u}\cdot\sigma)\text{d}\sigma\text{d}v_{*}\text{d}\theta\,.
\end{align*}
Now, $\ell\in[\frac{1}{2},1]$ and $\varphi$ is monotone decreasing, thus $\varphi(\ell\,\cdot)\leq \varphi(\frac{\cdot}{2})$.  As a consequence,
\begin{align*}
e^{r\langle v\rangle^{\alpha}} \Big|\mathcal{I}^{+}_{1}(f,g)(v)\Big|& \leq \frac{s\,e^{r\langle v\rangle^{\alpha}}}{2}\int_{\mathbb{R}^{d}}\int_{\mathbb{S}^{d-1}}\varphi(\tfrac{\cdot}{2})\ast (f(\cdot)\tau_{v'_{*}}|\cdot|^{\gamma})(v')\,g(v'_{*})\tilde{b}(\hat{u}\cdot\sigma)\text{d}\sigma\text{d}v_{*}\\
&\leq \frac{s}{2}\int_{\mathbb{R}^{d}}\int_{\mathbb{S}^{d-1}}(\tilde\varphi\ast\tilde{f})(v')\,\tilde{g}(v'_{*})\tilde{b}(\hat{u}\cdot\sigma)\text{d}\sigma\text{d}v_{*}=\frac{s}{2}\,Q^{+}_{0,\tilde{b}}(\tilde\varphi\ast\tilde{f},\tilde{g})(v)\,,
\end{align*}
where
\begin{equation*}
\tilde\varphi:=\varphi(\frac{\cdot}{2})e^{r\langle\cdot\rangle^{\alpha}},\;\; \tilde{f}(\cdot):=f(\cdot)\,e^{r\langle\cdot\rangle^{\alpha}}\langle \cdot \rangle^{\gamma},\;\; \tilde{g}(\cdot) := g(\cdot)\,e^{r\langle\cdot\rangle^{\alpha}}\langle \cdot \rangle^{\gamma},\;\; \tilde{b}(\cdot):=(1-a(\cdot))b_{s}(\cdot).
\end{equation*}
Therefore, using Young's inequality for the $Q^{+}_{0,\tilde{b}}$ it readily follows that
\begin{align}\label{L1Harde9}
\begin{split}
\big\|\mathcal{I}^{+}_{1}(f,g)\,e^{r\langle\cdot\rangle^{\alpha}}\big\|_{L^{2}(\mathbb{R}^{d})}
&\leq c_{d}\,s\,\|\tilde{b}\|_{L^{1}(\mathbb{S}^{d-1})}\|\tilde\varphi\ast\tilde{f}\|_{L^{2}(\mathbb{R}^{d})}\|\tilde{g}\|_{L^{1}(\mathbb{R}^{d})}\\
&\leq c_{d}\,s\,\|b\|_{L^{1}(\mathbb{S}^{d-1})}\|\tilde\varphi\|_{L^{1}(\mathbb{R}^{d})}\|f\,e^{r\langle\cdot\rangle^{\alpha}}\|_{L^{2}_{\gamma}(\mathbb{R}^{d})}\|g\,e^{r\langle\cdot\rangle^{\alpha}}\|_{L^{1}_{\gamma}(\mathbb{R}^{d})}\,.
\end{split}
\end{align}
The result follows using \eqref{L1Harde8}, \eqref{L1Harde9}, the estimate $\|g\|_{L^{1}(\mathbb{R}^{d})}\leq C_{d}\|g\|_{L^{2}_{d^{+}/2}(\mathbb{R}^{d})}$ valid for general function $g$, and $\|\tilde\varphi\|_{L^{1}(\mathbb{R}^{d})}<\infty$ for $\alpha\in(0,1]$, $r\in[0,\frac{1}{4})$.
\end{proof}
\begin{remark} Although the method of proof of Lemmas \ref{L2Hard} and \ref{L1Hard} uses Fourier transform, the argument is essentially made in the velocity space.  Thus, the method is flexible and it can be modified to obtain more general $L^{p}(\mathbb{R}^{d})$ estimates.  Observe also that polynomial weights can be readily handled using the same argument and leading to analogous estimates.
\end{remark}
\subsection{Regularization of the Boltzmann gain operator}
The second main ingredient for proving the propagation of exponentially-tailed Sobolev regularity is the regularization  property of the gain operator.
This regularizing effect  was first established in \cite{Lions} and later, with a more elementary proof, polynomial weights were added in \cite{Wennberg, MV}.  The following theorem is a version of the regularizing effect using exponential weights, and the proof is a simple adaptation of the technique given in \cite[Lemma 2.3]{ALodsSS} for polynomial weights.
\begin{theorem}\label{TSmooth}
Assume that collision kernel is of the form
\begin{equation*}
B(x,y)=\Phi(x)b(y),\quad \text{with}\quad b\in\mathcal{C}^{\infty}_{0}\big([0,1)\big)\,,
\end{equation*}
and $\Phi\in\mathcal{C}^{\infty}_{0}\big((0,\infty]\big)$, with $\Phi(x)\approx x^{\gamma}$ for large $x$ ($\gamma\geq0$).  Then, for $d\geq2$, $s\geq0$, $r\geq0$, and $\alpha\in(0,1]$
\begin{equation*}
\|e^{r\langle\cdot\rangle^{\alpha}}Q^{+}(f,g)\|_{H^{s+\frac{d-1}{2}}(\mathbb{R}^{d})}\leq C\,\|e^{r\langle\cdot\rangle^{\alpha}}\langle\cdot\rangle^{\mu}f\|_{H^{s}(\mathbb{R}^{d})}\|e^{2r\langle\cdot\rangle^{\alpha}}\langle\cdot\rangle^{\mu}g\|_{L^{1}(\mathbb{R}^{d})}\,,
\end{equation*}
where $\mu:=\mu(s,\gamma)=s^{+}+\gamma+\frac{3}{2}$, and the constant $C>0$ depends, in particular, on the distance from the support of $\Phi$ and $b$ to zero and one respectively.
\end{theorem}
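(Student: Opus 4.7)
The plan is to adapt the proof of \cite[Lemma 2.3]{ALodsSS}, which establishes the analogous smoothing estimate with polynomial weights. Two ingredients power the adaptation: (i) the subadditivity of $\langle\cdot\rangle^{\alpha}$ on the collisional manifold, valid precisely because $\alpha\in(0,1]$, to pass the exponential weight through the bilinear operator $Q^{+}$; and (ii) the fractional commutator of Lemma \ref{L1Hard} to pass the fractional derivatives through $Q^{+}$.

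First I would reduce matters to $s=0$. Applying Lemma \ref{L1Hard} yields
\begin{equation*}
\big(1+(-\Delta)\big)^{s/2} Q^{+}(f,g) = Q^{+}_{\gamma,b_{s}}\!\left(\big(1+(-\Delta)\big)^{s/2}f,\, g\right) + \mathcal{I}^{+}(f,g),
\end{equation*}
where $b_{s}=(2/(1+\cdot))^{s/2}b$ still lies in $\mathcal{C}^{\infty}_{0}([0,1))$ and preserves the hypotheses on the angular kernel, while the remainder $\mathcal{I}^{+}(f,g)$ is already controlled in the weighted $L^{2}$-norm by the right-hand side of the target inequality (this is where the $s^{+}$ in the definition of $\mu$ enters). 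Hence it suffices to establish the $s=0$ version
\begin{equation*}
\|e^{r\langle\cdot\rangle^{\alpha}} Q^{+}(F,G)\|_{H^{(d-1)/2}(\mathbb{R}^{d})} \leq C \|e^{r\langle\cdot\rangle^{\alpha}}\langle\cdot\rangle^{\gamma+3/2} F\|_{L^{2}} \|e^{2r\langle\cdot\rangle^{\alpha}}\langle\cdot\rangle^{\gamma+3/2} G\|_{L^{1}}.
\end{equation*}

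Second, for the unweighted $(d-1)/2$-smoothing I would use the identity \eqref{BD}: since $b$ is compactly supported away from $1$ and $\Phi$ is compactly supported away from $0$ and $\infty$, the sphere integration in $\sigma$ acts as a Radon-type averaging that gains $(d-1)/2$ derivatives, following the argument of \cite{Lions} as reworked in \cite{Wennberg,MV}. The exponential weight is then brought inside via the pointwise inequality $e^{r\langle v\rangle^{\alpha}}\leq e^{r\langle v'\rangle^{\alpha}}e^{r\langle v'_{*}\rangle^{\alpha}}$, which holds for $\alpha\in(0,1]$ by subadditivity combined with the energy identity $|v|^{2}+|v_{*}|^{2}=|v'|^{2}+|v'_{*}|^{2}$. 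Inserted into the collisional integral, this yields the dominating representation $e^{r\langle v\rangle^{\alpha}} Q^{+}(F,G)(v)\leq Q^{+}(e^{r\langle\cdot\rangle^{\alpha}}F,\,e^{r\langle\cdot\rangle^{\alpha}}G)(v)$, to which the unweighted smoothing estimate above applies.

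The asymmetric factor $e^{2r\langle\cdot\rangle^{\alpha}}$ on $G$, rather than the $e^{r\langle\cdot\rangle^{\alpha}}$ one might naively expect, arises when commuting $\big(1+(-\Delta)\big)^{(d-1)/4}$ across the multiplication operator $e^{r\langle\cdot\rangle^{\alpha}}$: a Leibniz/commutator expansion in the spirit of Lemma \ref{L1Hard} generates a remainder whose second argument inherits an additional factor of $e^{r\langle\cdot\rangle^{\alpha}}$. Since $G$ is measured in the softer $L^{1}$-norm, this extra weight is affordable there and produces the $e^{2r}$. The main obstacle, I expect, is verifying that the compact-support assumptions on $b$ and $\Phi$, together with the restriction $r<1/4$ inherited from Lemma \ref{L1Hard}, are strong enough to carry the Radon-averaging argument through the exponentially weighted Sobolev machinery without degrading the gain of $(d-1)/2$ derivatives. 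In particular, at each step one must keep track of the commutator remainders and of the pointwise subadditivity bound so that they remain in $L^{2}$-norms compatible with the weight structure of the right-hand side.
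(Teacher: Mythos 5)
Your route is genuinely different from the paper's, and as written it has two gaps that break it. The paper does not pass through the Bobylev identity plus a pointwise weight transfer at all: it simply reruns the kernel computation of \cite[Lemma 2.3]{ALodsSS} (itself a relaxation of \cite[Theorem 3.1]{MV}) with the exponential weight carried along, so that the whole proof reduces to checking finiteness of
$\max_{|\nu|\leq s+\frac{d-1}{2}}\sup_{w\in\mathbb{S}^{d-1}}\big\| \mathcal{B}(|z|,|z\cdot w|)\,e^{r\langle z\cdot w\rangle^{\alpha}}z^{\nu}\,e^{-r\langle z\rangle^{\alpha}}\langle z\rangle^{-\mu}\big\|_{H^{S}_{z}}$,
which holds for \emph{every} $r\geq0$ because $|z\cdot w|\leq|z|$ makes the exponential ratio bounded, while the compact supports of $\Phi$ and $b$ keep $\mathcal{B}$ smooth and $\mu=s^{+}+\gamma+\tfrac32$ supplies the decay. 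The $e^{2r\langle\cdot\rangle^{\alpha}}$ on $g$ comes out of this weight bookkeeping in the Carleman/duality computation, not from a Leibniz commutator as you speculate.

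The first gap is in your $s=0$ step: from the pointwise inequality $e^{r\langle v\rangle^{\alpha}}Q^{+}(F,G)(v)\leq Q^{+}(e^{r\langle\cdot\rangle^{\alpha}}F,e^{r\langle\cdot\rangle^{\alpha}}G)(v)$ you cannot conclude $\|e^{r\langle\cdot\rangle^{\alpha}}Q^{+}(F,G)\|_{H^{(d-1)/2}}\leq\|Q^{+}(e^{r\langle\cdot\rangle^{\alpha}}F,e^{r\langle\cdot\rangle^{\alpha}}G)\|_{H^{(d-1)/2}}$, because Sobolev norms of positive order are not monotone under pointwise domination (that device is used in the paper only for $L^{p}$ norms, as in Theorems \ref{Tinfty} and \ref{T1}); so the weight cannot be ``brought inside'' after the smoothing estimate — it has to be threaded through the smoothing mechanism itself, which is precisely what the displayed kernel quantity accomplishes. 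The second gap is the reduction to $s=0$ via Lemma \ref{L1Hard}: after writing $(1+(-\Delta))^{s/2}Q^{+}(f,g)=Q^{+}_{\gamma,b_{s}}\big((1+(-\Delta))^{s/2}f,g\big)+\mathcal{I}^{+}(f,g)$ you must still estimate \emph{both} terms in $H^{(d-1)/2}$, but the lemma controls $\mathcal{I}^{+}$ only in weighted $L^{2}$, so the $(d-1)/2$ gain is lost on the remainder unless you prove a separate smoothing estimate for it; moreover Lemma \ref{L1Hard} is confined to $s\in(0,1]$ and $r\in[0,\tfrac14)$, whereas the theorem claims every $s\geq0$ and $r\geq0$, so even if repaired your argument would yield a strictly weaker statement than the one to be proved.
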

\begin{proof}
One uses the argument of \cite[Lemma 2.3]{ALodsSS} which is a relaxation of \cite[Theorem 3.1]{MV}; the central step is to make sure that the expression
\begin{equation*}
\max_{|\nu|\leq s+\frac{d-1}{2}}\sup_{w\in\mathbb{S}^{d-1}}\bigg\| \mathcal{B}\big(|z|,|z \cdot w|\big)\frac{e^{r\langle z\cdot w \rangle^{\alpha}}z^{\nu}}{e^{r\langle z\rangle^{\alpha}}\langle z \rangle^{\mu}}\bigg\|_{H^{S}_{z}(\mathbb{R}^{d})}
\end{equation*} 
remains finite.  Here
\begin{equation*}
\mathcal{B}(x,y):= \frac{\Phi(x)\,b\Big(2 \frac{y^{2}}{x^{2}} - 1\Big)}{x^{d-2}\;y},\qquad x,\,y>0\,.
\end{equation*}
Also, $\nu$ is a multi-index and $S:=s + \lfloor d/2 \rfloor + 1$.  This holds true if $\mu:=s^{+}+\gamma+\frac{3}{2}$.
\end{proof}
Consider now the following decomposition of the gain collision operator, in the spirit of \cite{MV}, where $n$ stand for ``nice'' and $r$ for ``remainder'' terms:
\begin{equation}\label{decomposition}
x^{\gamma} = \Phi_{n}(x) + \Phi_{r}(x),\qquad b(y)= b_{n}(y) + b_{r}(y),
\end{equation}
with nonnegative functions satisfying the following properties for $\varepsilon,\delta>0$ sufficiently small
\begin{itemize}
\item [(1)] $\Phi_{n}\in\mathcal{C}^{\infty}\big((0,\infty)\big)$ vanishing in $(0,\delta)$.
\item [(2)] $\Phi_{r}\in L^{2}\cap L^{\infty}\big((0,\infty)\big)$ vanishing in $(2\delta,\infty)$.  Note that
\begin{equation*}
\|\Phi_{r}\|_{L^{\infty}}\leq (2\delta)^{\gamma},\qquad \|\Phi_{r}\|_{L^{2}}\leq (2\delta)^{\gamma+1/2}.
\end{equation*}
\item [(4)] $b_{n}\in\mathcal{C}^{\infty}\big((0,1)\big)$ vanishing in $(1-\varepsilon,1)$.
\item [(3)] $b_{r}\in L^{1}\big((1-y^{2})^{\frac{d-3}{2}}\text{d}y\big)$, with $\|b_{r}\|_{L^{1}\big((1-y^{2})^{\frac{d-3}{2}}\text{dy}\big)}\sim \mathfrak{m}(b_{r})$ sufficiently small.
\end{itemize}
Decomposition of the collision kernel \eqref{decomposition} leads to the decomposition of the gain operator
\begin{equation*}
Q^{+}(f,g) = Q^{+}_{nn}(f,g)+Q^{+}_{nr}(f,g)+Q^{+}_{rn}(f,g)+Q^{+}_{rr}(f,g)\,,
\end{equation*}
where $nr$ stands for nice kinetic potential with the remainder in the angular scattering.  In similar fashion we denote the other terms.
\begin{lemma}[Remainder terms]\label{RL}
Fix $\varepsilon,\delta>0$ sufficiently small.  The following estimates hold for any $\gamma\in[0,1]$
\begin{align*}
\big\|\langle\cdot \rangle^{-\gamma/2} Q^{+}_{nr}(f,g)\big\|_{L^{2}(\mathbb{R}^{d})} \leq \mathfrak{m}(b_{r})\|f\langle\cdot\rangle^{\gamma/2}\|_{L^{2}(\mathbb{R}^{d})}\|g\langle\cdot\rangle^{\gamma}\|_{L^{1}(\mathbb{R}^{d})}\,,\\
\big\|\langle\cdot \rangle^{-\gamma/2} Q^{+}_{rr}(f,g)\big\|_{L^{2}(\mathbb{R}^{d})}\leq \mathfrak{m}(b_{r})\|f\langle\cdot\rangle^{\gamma/2}\|_{L^{2}(\mathbb{R}^{d})}\|g\langle\cdot\rangle^{\gamma}\|_{L^{1}(\mathbb{R}^{d})}\,.
\end{align*}
For the $rn$ term we have
\begin{align*}
\big\|\langle\cdot \rangle^{-\gamma/2} Q^{+}_{rn}(f,g)\big\|_{L^{2}(\mathbb{R}^{d})} &\leq C(b)\,\delta^{\gamma}\,\|f\langle\cdot\rangle^{\gamma/2}\|_{L^{2}(\mathbb{R}^{d})}\|g\langle\cdot\rangle^{\gamma}\|_{L^{1}(\mathbb{R}^{d})},\quad \gamma>0\,,\\
\big\| Q^{+}_{rn}(f,g)\big\|_{L^{2}(\mathbb{R}^{d})} &\leq C(b,\varepsilon)\delta^{1/2}\|f\|_{L^{2}(\mathbb{R}^{d})}\|g\|_{L^{2}(\mathbb{R}^{d})},\quad \gamma=0\,.
\end{align*}
\end{lemma}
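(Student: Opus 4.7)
My plan is to dispatch each estimate by applying Young's inequality for the Boltzmann gain operator (Theorem~\ref{ApT-1}, the same tool used in Lemma~\ref{APq+}), with the exponents tailored so that the required smallness is drawn from either $\mathfrak{m}(b_r)$ (in the angular-remainder cases $Q^+_{nr}$ and $Q^+_{rr}$) or from $\delta$ (in the kinetic-remainder case $Q^+_{rn}$). The preparatory step is to push the kinetic factor $\Phi(|u|)\le |u|^\gamma$ onto the primed velocities using the inequalities
\[
|u|\le 2\sqrt 2\,\langle v\rangle\langle v'_*\rangle,\qquad |u|\le 2\sqrt 2\,\langle v_*\rangle\langle v'\rangle,\qquad \langle v\rangle\le \langle v'\rangle\langle v'_*\rangle,
\]
the first two of which are used in the proof of Theorem~\ref{Tinfty} and rely on $b$ being symmetrized to $\hat u\cdot w\ge 0$, and the last following from the energy conservation $|v|^2+|v_*|^2=|v'|^2+|v'_*|^2$.

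For $Q^+_{nr}$ and $Q^+_{rr}$ I first note that both are pointwise dominated by $Q^+_{|u|^\gamma,b_r}(f,g)$, since $\Phi_n,\Phi_r\le |u|^\gamma$. I then combine the kinetic bound $|u|^\gamma\le C\langle v\rangle^\gamma\langle v'_*\rangle^\gamma$ with the subadditivity $\langle v\rangle^{\gamma/2}\le \langle v'\rangle^{\gamma/2}+\langle v'_*\rangle^{\gamma/2}$ (for $\gamma\in[0,2]$, a consequence of $\langle v\rangle^2\le \langle v'\rangle^2+\langle v'_*\rangle^2$ and concavity of $x^{\gamma/4}$) to absorb the outer weight $\langle\cdot\rangle^{-\gamma/2}$ inside the gain integral while simultaneously producing the asymmetric weights $\langle v'\rangle^{\gamma/2}$ on $f$ and $\langle v'_*\rangle^\gamma$ on $g$. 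This reduces the operator to $Q^+_{0,b_r}$ acting on $f\langle\cdot\rangle^{\gamma/2}$ and $g\langle\cdot\rangle^\gamma$, and Young's inequality with $(p,q,r)=(2,1,2)$ then yields the stated bound with constant proportional to $\|b_r\|_{L^1(\mathbb{S}^{d-1})}\sim \mathfrak{m}(b_r)$.

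For $Q^+_{rn}$ in the case $\gamma>0$, the kinetic remainder is supported in $[0,2\delta]$, hence $\Phi_r(|u|)\le (2\delta)^\gamma\,\mathbf 1_{|u|\le 2\delta}$. On the support of $\mathbf 1_{|u|\le 2\delta}$ the four velocities $v,v_*,v',v'_*$ pairwise differ by at most $O(\delta)$, so the $\langle\cdot\rangle$-weights are pairwise equivalent up to a factor $1+O(\delta)$ and can be freely reshuffled among the factors; after extracting $(2\delta)^\gamma$ and applying Young's $(2,1,2)$ to the residual pure-angular operator $Q^+_{0,b_n}$, I obtain $C(b)\delta^\gamma$ times the weighted norms as stated. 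In the case $\gamma=0$ the factor $(2\delta)^0=1$ offers no smallness, so I instead exploit $\|\Phi_r\|_{L^2((0,\infty))}\le (2\delta)^{1/2}$ together with $b_n\in\mathcal{C}^\infty_0([0,1-\varepsilon))$ (which ensures $b_n\in L^\infty$ with norm depending on $\varepsilon$), and apply the $(p,q,r)=(2,2,2)$ configuration of Young's inequality with the kinetic factor placed in $L^2$ rather than $L^\infty$; this produces $C(b,\varepsilon)\delta^{1/2}\|f\|_{L^2}\|g\|_{L^2}$.

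The main obstacle will be the weight bookkeeping for the angular-remainder case: the asymmetric weights $(\gamma/2,\gamma)$ do not arise from any single application of the bounds on $|u|^\gamma$ in isolation, and require the combination with the subadditivity of $\langle v\rangle^{\gamma/2}$ described above. A secondary technical point is verifying that the $L^2$-variant of Young's inequality applies to the $\gamma=0$ subcase of $Q^+_{rn}$; this is essential because the $L^\infty$-smallness of $\Phi_r$ disappears when $\gamma=0$, and is the reason the $\gamma=0$ estimate loses one power of $\delta$ (becoming $\delta^{1/2}$ instead of $\delta^\gamma$).
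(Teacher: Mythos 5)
Your proposal takes essentially the same route as the paper's (very terse) proof: the $nr$, $rr$ and $rn$ ($\gamma>0$) bounds are standard applications of the weighted Young inequality of Theorem~\ref{ApT-1} (the paper simply points to \cite{MV}), with the smallness drawn from $\|b_r\|_{L^1(\mathbb{S}^{d-1})}\sim\mathfrak{m}(b_r)$, respectively $\|\Phi_r\|_{L^\infty}\le(2\delta)^\gamma$, and the $\gamma=0$ case of $Q^{+}_{rn}$ is handled exactly as in the paper via the generalized Young inequality \cite[Corollary 8]{ACG} with the kinetic factor measured in $L^2$ and $b_n$ truncated away from grazing angles (your ``$(2,2,2)$'' is not a convolution-admissible triple, but that is precisely what the cited generalized estimate supplies). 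The only slip is the weight transfer for $nr$/$rr$: applying $|u|^\gamma\le C\langle v\rangle^{\gamma}\langle v'_*\rangle^{\gamma}$ and then the subadditivity of $\langle v\rangle^{\gamma/2}$ leaves a cross term carrying weight $3\gamma/2$ on $g$; to land on the stated weights, split $|u|^{\gamma}=|u|^{\gamma/2}|u|^{\gamma/2}$, bound one factor by $C\langle v\rangle^{\gamma/2}\langle v'_*\rangle^{\gamma/2}$ (cancelling the outer $\langle v\rangle^{-\gamma/2}$) and the other by $|v'-v'_*|^{\gamma/2}\le C\langle v'\rangle^{\gamma/2}\langle v'_*\rangle^{\gamma/2}$ — a harmless adjustment that uses exactly the kinematic inequalities you listed.
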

\begin{proof}
The estimates for the terms $nr$, $rr$ (for $\gamma\geq0$), and $rn$ (for $\gamma>0$) are classical and can be found, for example, in \cite{MV}.  They can be obtained as a simple consequence of a standard application of Young's inequality for the gain collision operator Theorem \ref{ApT-1}.  The constant in front of the inequality for the terms $nr$ and $rr$ is small due to the small mass of the remainder angular kernel $b_{r}$.  For term $rn$ in the case of Maxwell molecules, i.e. $\gamma=0$, one can use a generalized version of the Young's inequality that includes the kinetic kernel potential, see \cite[Corollary 8]{ACG} 
\begin{align*}
\big\| Q^{+}_{rn}(f,g)\big\|_{L^{2}(\mathbb{R}^{d})} &\leq C(b,\varepsilon)\|\Phi_{r}\|_{L^{2}(\mathbb{R}^{d})}\|f\|_{L^{2}(\mathbb{R}^{d})}\|g\|_{L^{2}(\mathbb{R}^{d})}\\
&\leq C(b,\varepsilon) \delta^{1/2} \|f\|_{L^{2}(\mathbb{R}^{d})}\|g\|_{L^{2}(\mathbb{R}^{d})}\,.
\end{align*}
The finiteness of the constant $C(b,\varepsilon)>0$ is ensured by the truncation of $b_{n}$ near $1$.
\end{proof}
\subsection{Propagation of regularity with exponential tails}
We are now ready to address the main question of this section -  propagation of exponentially-tailed Sobolev regularity.  We start by considering  small enough regularity. This result will later form a basis for an inductive proof of the propagation of higher regularity with exponential tails.

\begin{proposition}\label{PSmooth1}
Consider the integrable assumption $b\in L^{1}(\mathbb{S}^{d-1})$.  Assume that for some $\gamma\in[0,1]$, $s\in(0,\min\{1,\frac{d-1}{2}\}]$, and $\alpha\in(0,1]$, one has
\begin{equation*}
e^{a_o\langle v \rangle^{\alpha}}\big(1 + (-\Delta) \big)^{\frac{s}{2}}f_0 \in L^{2}(\mathbb{R}^{d}). 
\end{equation*} 
Then, for the solution $f(t,v)$ of the Boltzmann equation, there exists $a\in(0, a_o]$ such that for any $r\in(0,a)$ it holds that
\begin{equation*}
\big\|e^{r\langle\cdot\rangle^{\alpha}}\big(1 + (-\Delta) \big)^{\frac{s}{2}}f(t,\cdot)\big\|_{L^{2}(\mathbb{R}^{d})}\leq C(f_0)+\big\|e^{r\langle \cdot \rangle^{\alpha}}\big(1 + (-\Delta) \big)^{\frac{s}{2}}f_0\big\|_{L^{2}(\mathbb{R}^{d})},
\end{equation*}
where $C(f_0):=C(f_0,\gamma,s,a_o,\alpha)$ depends on lowers norms of $f_0$.
\end{proposition}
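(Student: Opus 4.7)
The plan is to derive an $L^{2}$-energy inequality for $h := e^{r\langle\cdot\rangle^{\alpha}}F$, where $F := (1+(-\Delta))^{s/2}f$, that features coercivity from the loss operator absorbing the contributions of the gain and commutator terms. Since the weight is time-independent,
\begin{equation*}
\tfrac{1}{2}\partial_{t}\|h\|_{L^{2}}^{2} = \int_{\mathbb{R}^{d}} e^{2r\langle v\rangle^{\alpha}}\,F(v)\,(1+(-\Delta))^{s/2}Q(f,f)(v)\,\mathrm{d}v.
\end{equation*}
The first step is to invoke the commutator Lemmas \ref{L1Hard} and \ref{L2Hard} to rewrite the right-hand side as $\int e^{2r\langle\cdot\rangle^{\alpha}}F\,Q^{+}_{\gamma,b_{s}}(F,f)\,\mathrm{d}v - \int h^{2}(f*|\cdot|^{\gamma})\,\mathrm{d}v + \int e^{2r\langle\cdot\rangle^{\alpha}}F\,(\mathcal{I}^{+}-\mathcal{I}^{-})(f,f)\,\mathrm{d}v$. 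The hypotheses $r<1/4$ and $\alpha\in(0,1]$ needed in those lemmas are accommodated by choosing $a$ sufficiently small at the outset.

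Two of these three blocks are immediate. By Lemma \ref{APq-}, the loss integral is $\leq -c_{o}\|h\langle\cdot\rangle^{\gamma/2}\|_{L^{2}}^{2}$, providing the coercivity that drives the estimate. The commutator estimates of Lemmas \ref{L1Hard}, \ref{L2Hard} bound $\|\mathcal{I}^{\pm}(f,f)\,e^{r\langle\cdot\rangle^{\alpha}}\|_{L^{2}}$ by products of low-order exponentially and polynomially weighted $L^{2}$ and $L^{1}$ norms of $f$ that are uniformly controlled by Theorems \ref{T1} and \ref{T2}; a single Cauchy--Schwarz against $h$ bounds their combined contribution by $C\|h\|_{L^{2}}$ for a constant $C$ depending only on propagated low-order quantities.

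For the main gain term, the plan is to apply decomposition \eqref{decomposition} with $b_{s}$ in place of $b$ (noting $b_{s}$ inherits all required properties since $\big(\tfrac{2}{1+y}\big)^{s/2}\leq 2^{s/2}$ on $\mathrm{supp}\,b$), producing $Q^{+}_{\gamma,b_{s}} = Q^{+}_{nn}+Q^{+}_{nr}+Q^{+}_{rn}+Q^{+}_{rr}$. For the three remainder pieces I would use the pointwise inequality $e^{r\langle v\rangle^{\alpha}}Q^{+}(F,f)(v) \leq Q^{+}(h,g)(v)$ with $g:=fe^{r\langle\cdot\rangle^{\alpha}}$ (from $e^{r\langle v\rangle^{\alpha}}\leq e^{r\langle v'\rangle^{\alpha}}e^{r\langle v'_{*}\rangle^{\alpha}}$), followed by Lemma \ref{RL} combined with Cauchy--Schwarz through the weight split $\langle\cdot\rangle^{\gamma/2}\cdot\langle\cdot\rangle^{-\gamma/2}$. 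This yields bounds of the form $\eta\,\|h\langle\cdot\rangle^{\gamma/2}\|_{L^{2}}^{2}\cdot(\text{propagated norms})$ with $\eta$ arbitrarily small by picking $\delta,\,\varepsilon$ small enough to force $\mathfrak{m}(b_{r})$ and $\delta^{\gamma}$ small; these pieces are then absorbed into the coercive term.

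The main obstacle is the nice piece $Q^{+}_{nn}$. Theorem \ref{TSmooth} gives, after the same exponential transfer, $\|e^{r\langle\cdot\rangle^{\alpha}}Q^{+}_{nn}(F,f)\|_{L^{2}}\leq C\|h\langle\cdot\rangle^{\mu}\|_{L^{2}}\cdot\|fe^{2r\langle\cdot\rangle^{\alpha}}\|_{L^{1}_{\mu}}$ with $\mu = \gamma+3/2$, and the polynomial overweight $\mu > \gamma/2$ cannot be absorbed directly by coercivity. The closure uses the strict gap $r<a\leq a_{o}$: for any $\varepsilon_{0}\in(0,a-r)$ one has $\langle v\rangle^{\mu}e^{r\langle v\rangle^{\alpha}}\leq C_{\mu,\varepsilon_{0}}\,e^{(r+\varepsilon_{0})\langle v\rangle^{\alpha}}$, converting the unwanted polynomial weight into a small increase of the exponential rate. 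This reduces the problem to propagating the $L^{2}$-norm of $e^{r'\langle\cdot\rangle^{\alpha}}F$ along a finite ladder of rates $r=r_{0}<r_{1}<\cdots<r_{N}<a$, using Young's inequality $\|h\|_{L^{2}}\|h_{r+\varepsilon_{0}}\|_{L^{2}}\leq\tfrac{1}{2\eta}\|h\|_{L^{2}}^{2}+\tfrac{\eta}{2}\|h_{r+\varepsilon_{0}}\|_{L^{2}}^{2}$ coupled with the $L^{2}$-exponential tail propagation of $f$ itself at rate $a$ (Theorem \ref{T1} with $p=2$) as the base case. The resulting closed differential inequality $\tfrac{d}{dt}\|h\|_{L^{2}}^{2}\leq -\tfrac{c_{o}}{2}\|h\|_{L^{2}}^{2}+C$ yields the stated uniform-in-time bound by ODE comparison, with $a$ chosen small enough to absorb all constraints from the commutator lemmas and the ladder closure.
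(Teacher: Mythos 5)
The overall architecture of your argument is sound up to the treatment of the $nn$ block: the energy estimate on $h:=e^{r\langle\cdot\rangle^{\alpha}}(1+(-\Delta))^{s/2}f$, the use of Lemma~\ref{L2Hard} together with the lower bound Lemma~\ref{APq-} for the coercive loss contribution, Lemma~\ref{L1Hard} for the gain commutators, and Lemma~\ref{RL} with $\mathfrak{m}(b_{r})$ and $\delta^{\gamma}$ small to absorb the $nr$, $rn$, $rr$ remainders into the coercive term -- this is all in line with the paper's proof. The genuine gap is in your treatment of $Q^{+}_{nn}$, and the ladder you propose as a fix does not close.

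The difficulty you correctly identify (an overweight $\mu>\gamma/2$ on $h$ that coercivity cannot absorb) is an artifact of having pushed the fractional derivative \emph{through} the collision operator via Lemma~\ref{L1Hard}, which puts $F:=(1+(-\Delta))^{s/2}f$ inside $Q^{+}_{nn}$. Once you evaluate $Q^{+}_{nn}(F,f)$ you can only use Theorem~\ref{TSmooth} with $s'=0$ on the input, and then downgrading $H^{(d-1)/2}\to L^{2}$ on the output wastes the entire regularity gain: what comes out is $\|e^{r\langle\cdot\rangle^{\alpha}}\langle\cdot\rangle^{\mu}F\|_{L^{2}}$, a polynomially overweighted norm of the \emph{differentiated} unknown. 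The paper instead applies the weight commutator Lemma~\ref{ApLMw} -- commuting $(1+(-\Delta))^{s/2}$ with $e^{r\langle\cdot\rangle^{\alpha}}$, not with $Q^{+}$ -- to obtain
\begin{equation*}
e^{r\langle v\rangle^{\alpha}}(1+(-\Delta))^{s/2}Q^{+}_{nn}(f,f)=(1+(-\Delta))^{s/2}\bigl(e^{r\langle\cdot\rangle^{\alpha}}Q^{+}_{nn}(f,f)\bigr)-\mathcal{R}\bigl(Q^{+}_{nn}(f,f)\bigr),
\end{equation*}
leaving $f$ undifferentiated inside $Q^{+}_{nn}$. Then Theorem~\ref{TSmooth} with $s'=0$ gives $\|e^{r\langle\cdot\rangle^{\alpha}}Q^{+}_{nn}(f,f)\|_{H^{s}}\lesssim\|e^{r\langle\cdot\rangle^{\alpha}}f\|_{L^{2}_{\mu}}\|e^{2r\langle\cdot\rangle^{\alpha}}f\|_{L^{1}_{\mu}}$ because $s\le\frac{d-1}{2}$: the $s$ derivatives are fully absorbed by the smoothing gain, and the right-hand side involves only \emph{lower-order} norms of $f$ that Theorems~\ref{T1} and~\ref{T2} already propagate. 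This is precisely the reason for the restriction $s\le\min\{1,\frac{d-1}{2}\}$ in the proposition, a restriction your argument never engages with.

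The ladder does not repair the defect. Converting $\langle v\rangle^{\mu}e^{r\langle v\rangle^{\alpha}}\le Ce^{(r+\varepsilon_{0})\langle v\rangle^{\alpha}}$ produces, in the differential inequality for $Y(r,t):=\|e^{r\langle\cdot\rangle^{\alpha}}F(t)\|_{L^{2}}^{2}$, a term $\eta\,Y(r+\varepsilon_{0},t)$ with small $\eta$ but with a \emph{higher} exponential rate on the \emph{same differentiated quantity}. Each rung therefore requires a bound at the next rate up; the chain never terminates inside the allowed range $r<\min\{a_{o},1/4\}$, and the quantity $Y(r_{N},t)$ at the top rung is not a priori finite uniformly in time -- it is exactly what the proposition asserts. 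Theorem~\ref{T1} with $p=2$ controls $\|fe^{a\langle\cdot\rangle^{\alpha}}\|_{L^{2}}$, with no derivative, and hence cannot serve as a base case. The claimed ``closed differential inequality'' is thus unjustified. To repair the argument, drop Lemma~\ref{L1Hard} for the $nn$ block and use Lemma~\ref{ApLMw} together with the full strength of Theorem~\ref{TSmooth}, as the paper does.
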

\begin{proof}
Fix $\varepsilon,\delta>0$ and write the equation as
\begin{equation}\label{PR1e0}
\partial_{t}f=Q^{+}_{nn}(f,f) + Q^{+}_{nr}(f,f) +Q^{+}_{rn}(f,f) + Q^{+}_{rr}(f,f) - Q^{-}(f,f).
\end{equation}
We will estimate each of these terms on the right side starting from the remainder terms.  Note that for the term $nr$ it follows, invoking the commutator Lemma \ref{L1Hard}, that
\begin{equation*}
e^{r\langle v \rangle^{\alpha}}\big(1+(-\Delta)\big)^{s/2}Q^{+}_{nr}(f,f) = e^{r\langle v \rangle^{\alpha}}Q^{+}_{nr}\big(\big(1+(-\Delta)\big)^{s/2}f,f\big) + e^{r\langle v \rangle^{\alpha}}\mathcal{I}^{+}_{nr}(f,f)\,.
\end{equation*}
Using Cauchy-Schwarz inequality and Lemma \ref{RL}
\begin{align*}
\int_{\mathbb{R}^{d}} & e^{r\langle v \rangle^{\alpha}}Q^{+}_{nr}\big(\big(1+(-\Delta)\big)^{s/2}f,f\big)(v)\,e^{r\langle v \rangle^{\alpha}}\big(1+(-\Delta)\big)^{s/2}f(v)\text{d}v\\
&\leq\big\|\langle\cdot\rangle^{-\gamma/2}e^{r\langle v \rangle^{\alpha}}Q^{+}_{nr}\big(\big(1+(-\Delta)\big)^{s/2}f,f\big)\big\|_{L^{2}(\mathbb{R}^{d})}  \big\|\langle\cdot\rangle^{\gamma/2}e^{r\langle v \rangle^{\alpha}}\big(\big(1+(-\Delta)\big)^{s/2}f\big\|_{L^{2}(\mathbb{R}^{d})}\\
 &\leq \mathfrak{m}(b_{r})\big\|\langle\cdot\rangle^{\gamma/2}e^{r\langle \cdot \rangle^{\alpha}}\big(1+(-\Delta)\big)^{s/2}f\big\|^{2}_{L^{2}(\mathbb{R}^{d})}\big\|\langle\cdot\rangle^{\gamma}e^{r\langle \cdot \rangle^{\alpha}} f\big\|_{L^{1}(\mathbb{R}^{d})}.
\end{align*}
Also, using Cauchy-Schwarz and Lemma \ref{L1Hard} again
\begin{align*}
\int_{\mathbb{R}^{d}}e^{r\langle v \rangle^{\alpha}}&\mathcal{I}^{+}_{nr}\big(f,f\big)(v)\,e^{r\langle v \rangle^{\alpha}}\big(1+(-\Delta)\big)^{s/2}f(v)\text{d}v\\
&\leq C(b)\big\|\langle \cdot \rangle^{\gamma+\frac{d^{+}}{2}} e^{r\langle\cdot \rangle^{\alpha}}f \big\|^{2}_{L^{2}(\mathbb{R}^{2})}\big\|e^{r\langle \cdot \rangle^{\alpha}}\big(1+(-\Delta)\big)^{s/2}f\big\|_{L^{2}(\mathbb{R}^{d})}.
\end{align*}
Thus, thanks to propagation of exponential moments and Theorems \ref{T1} and \ref{T2}, there exists $0<a\leq a_o$ such that for any $r\in(0,a)$
\begin{align}\label{PR1e1}
\begin{split}
\int_{\mathbb{R}^{d}} & e^{r\langle v \rangle^{\alpha}}\big(1+(-\Delta)\big)^{s/2}Q^{+}_{nr}\big(f,f\big)(v)\,e^{r\langle v \rangle^{\alpha}}\big(1+(-\Delta)\big)^{s/2}f(v)\text{d}v\\
&\leq \mathfrak{m}(b_{r})\big\|e^{r\langle \cdot \rangle^{\alpha}}\big(1+(-\Delta)\big)^{s/2}f\big\|^{2}_{L^{2}_{\gamma/2}(\mathbb{R}^{d})} + C\big\|e^{r\langle \cdot \rangle^{\alpha}}\big(1+(-\Delta)\big)^{s/2}f\big\|_{L^{2}_{\gamma/2}(\mathbb{R}^{d})}.
\end{split}
\end{align}
The same estimate holds for the term $rr$ as well.  Furthermore, similar argument leads to the control for the $rn$ term
\begin{align}\label{PR1e2}
\begin{split}
\int_{\mathbb{R}^{d}} & e^{r\langle v \rangle^{\alpha}}\big(1+(-\Delta)\big)^{s/2}Q^{+}_{rn}\big(f,f\big)(v)\,e^{r\langle v \rangle^{\alpha}}\big(1+(-\Delta)\big)^{s/2}f(v)\text{d}v\\
&\leq C(b,\varepsilon)\delta^{1/2}\big\|e^{r\langle \cdot \rangle^{\alpha}}\big(1+(-\Delta)\big)^{s/2}f\big\|^{2}_{L^{2}_{\gamma/2}(\mathbb{R}^{d})} + C\big\|e^{r\langle \cdot \rangle^{\alpha}}\big(1+(-\Delta)\big)^{s/2}f\big\|_{L^{2}_{\gamma/2}(\mathbb{R}^{d})}.
\end{split}
\end{align}
These last two estimates will handle the remainder term.  For the term $nn$, one uses the commutator formula of Lemma \ref{ApLMw} 
\begin{align*}
e^{r\langle v \rangle^{\alpha}}\big(1+(-\Delta)\big)^{s/2}Q^{+}_{nn}\big(f,f\big)(v)=\big(1+(-\Delta)\big)^{s/2}e^{r\langle \cdot \rangle^{\alpha}}Q^{+}_{nn}\big(f,f\big)(v) - \mathcal{R}\big( Q^{+}_{nn}(f,f)\big)\,.
\end{align*}
Thus, invoking Theorem \ref{TSmooth} and Lemma \ref{ApLMw} to estimate $Q^{+}_{nn}$ and $\mathcal{R}$ respectively
\begin{align*}
\int_{\mathbb{R}^{d}} & e^{r\langle v \rangle^{\alpha}}\big(1+(-\Delta)\big)^{s/2}Q^{+}_{nn}\big(f,f\big)(v)\,e^{r\langle v \rangle^{\alpha}}\big(1+(-\Delta)\big)^{s/2}f(v)\text{d}v\\
&=\int_{\mathbb{R}^{d}}\Big[\big(1+(-\Delta)\big)^{s/2}e^{r\langle \cdot \rangle^{\alpha}}Q^{+}_{nn}\big(f,f\big)(v) - \mathcal{R}\big( Q^{+}_{nn}(f,f)\big)\Big]e^{r\langle v \rangle^{\alpha}}\big(1+(-\Delta)\big)^{s/2}f(v)\text{d}v\\
&\leq  C(\delta,\varepsilon)\big\|e^{r\langle \cdot \rangle^{\alpha}}f\big\|_{L^{2}_{\mu}(\mathbb{R}^{d})}\big\|e^{2r\langle \cdot \rangle^{\alpha}}f\big\|_{L^{1}_{\mu}(\mathbb{R}^{d})}\big\|e^{r\langle \cdot \rangle^{\alpha}}\big(1+(-\Delta)\big)^{s/2}f\big\|_{L^{2}(\mathbb{R}^{d})}\\
&\hspace{1.cm}+C\big\|e^{r\langle \cdot \rangle^{\alpha}}Q^{+}_{nn}(f,f)\big\|_{L^{2}(\mathbb{R}^{d})}\big\|e^{r\langle \cdot \rangle^{\alpha}}\big(1+(-\Delta)\big)^{s/2}f\big\|_{L^{2}(\mathbb{R}^{d})}\,.
\end{align*}
As a consequence, one concludes that
\begin{align}\label{PR1e3}
\begin{split}
\int_{\mathbb{R}^{d}}e^{r\langle v \rangle^{\alpha}}\big(1+&(-\Delta)\big)^{s/2}Q^{+}_{nn}\big(f,f\big)(v)\,e^{r\langle v \rangle^{\alpha}}\big(1+(-\Delta)\big)^{s/2}f(v)\text{d}v\\
&\leq C(f_0,\delta,\varepsilon)\big\|e^{r\langle \cdot \rangle^{\alpha}}\big(1+(-\Delta)\big)^{s/2}f\big\|_{L^{2}(\mathbb{R}^{d})}\,.
\end{split}
\end{align}
Finally, using Lemma \ref{L2Hard} and Lemma \ref{APq-}
\begin{align}\label{PR1e4}
\begin{split}
\int_{\mathbb{R}^{d}} &e^{r\langle v \rangle^{\alpha}}\big(1+(-\Delta)\big)^{s/2}Q^{-}\big(f,f\big)(v)\,e^{r\langle v \rangle^{\alpha}}\big(1+(-\Delta)\big)^{s/2}f(v)\text{d}v\\
&\hspace{0cm}= \int_{\mathbb{R}^{d}}e^{r\langle v \rangle^{\alpha}}\Big(Q^{-}\big(\big(1+(-\Delta)\big)^{s/2}f,f\big)(v) + \mathcal{I}^{-}\big(f,f\big)\Big)e^{r\langle v \rangle^{\alpha}}\big(1+(-\Delta)\big)^{s/2}f(v)\text{d}v\\
&\geq  c_0\big\|e^{r\langle \cdot \rangle^{\alpha}}\big(1+(-\Delta)\big)^{s/2}f\big\|^{2}_{L^{2}_{\gamma/2}(\mathbb{R}^{d})} - C(f_0)\big\|e^{r\langle \cdot \rangle^{\alpha}}\big(1+(-\Delta)\big)^{s/2}f\big\|_{L^{2}(\mathbb{R}^{d})}\,.
\end{split}
\end{align}
In summary, gathering estimates \eqref{PR1e1}, \eqref{PR1e2}, \eqref{PR1e3} and \eqref{PR1e4}, it follows from \eqref{PR1e0} that
\begin{align*}
\frac{\text{d}}{\text{d}t} \big\|&e^{r\langle \cdot \rangle^{\alpha}}\big(1+(-\Delta)\big)^{s/2}f\big\|^{2}_{L^{2}(\mathbb{R}^{d})}\\
& \leq C(f_0,\varepsilon,\delta) 
+ \Big( \mathfrak{m}(b_{r}) + C(b,\varepsilon)\delta^{\frac{1}{2}} - c_0\Big)\big\|e^{r\langle \cdot \rangle^{\alpha}}\big(1+(-\Delta)\big)^{s/2}f\big\|^{2}_{L^{2}_{\gamma/2}(\mathbb{R}^{d})}\\
&\hspace{0cm}\leq C(f_0,\varepsilon,\delta) -\frac{c_0}{2}\big\|e^{r\langle \cdot \rangle^{\alpha}}\big(1+(-\Delta)\big)^{s/2}f\big\|^{2}_{L^{2}_{\gamma/2}(\mathbb{R}^{d})}\,.
\end{align*} 
We used Young's inequality to control all the linear terms, for instance,
\begin{align*}
C(f_0,\delta,\varepsilon)\big\|e^{r\langle \cdot \rangle^{\alpha}}&\big(1+(-\Delta)\big)^{s/2}f\big\|_{L^{2}(\mathbb{R}^{d})}\\
&\leq \delta^{-\frac{1}{2}}C(f_0,\delta,\varepsilon)^{2} + \delta^{\frac{1}{2}}\big\|e^{r\langle \cdot \rangle^{\alpha}}\big(1+(-\Delta)\big)^{s/2}f\big\|^{2}_{L^{2}(\mathbb{R}^{d})}\,,
\end{align*}
and consolidated all constants in $C(f_0,\varepsilon,\delta)$ and $C(b,\varepsilon)$.  Additionally, for the last inequality we chose $\varepsilon$ and, then, $\delta:=\delta(\varepsilon)$ sufficiently small.  This estimate proves the result.
\end{proof}
By an inductive argument, we next prove propagation of exponentially-tailed regularity for an arbitrary order of regularity.

\begin{theorem}[Propagation of exponentially-tailed regularity]\label{TFinal1}
Assume $\gamma\in[0,1]$, $k\geq0$, and $\alpha\in(0,1]$.  Also,
\begin{equation*}
e^{a_o\langle v \rangle^{\alpha}}\big(1 + (-\Delta) \big)^{\frac{k}{2}}f_0 \in L^{2}(\mathbb{R}^{d}). 
\end{equation*} 
Then, for the solution $f(t,v)$ of the Boltzmann equation, there exists $a\in(0,a_o]$ such that for any $r\in(0,a)$  it holds
\begin{equation*}
\big\|e^{r\langle\cdot\rangle^{\alpha}}\big(1 + (-\Delta) \big)^{\frac{k}{2}}f(t,\cdot)\big\|_{L^{2}(\mathbb{R}^{d})}\leq C(f_0) + \big\|e^{r\langle\cdot\rangle^{\alpha}}\,\big(1 + (-\Delta) \big)^{\frac{k}{2}}f_0\big\|_{L^{2}(\mathbb{R}^{d})}.
\end{equation*}
The constant depends as $C(f_0):=C(f_0,\gamma,k,r,\alpha)$ and lower order $H^{k^{-}}_{exp}(\mathbb{R}^{d})$ Sobolev norms of $f_0$.   
\end{theorem}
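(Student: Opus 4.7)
My plan is to extend Proposition \ref{PSmooth1} from the range $s\in(0,s_{0}]$, with $s_{0}:=\min\{1,\frac{d-1}{2}\}$, to arbitrary $k\geq 0$ by a bootstrap induction, incrementing the Sobolev level by $s_{0}$ at each stage. The base case $k\in[0,s_{0}]$ is exactly Proposition \ref{PSmooth1}. Fixing $n\geq 1$, I assume the propagation holds for all regularity levels $k'\leq n\,s_{0}$, and for any exponential rate $r'$ in a suitable range, and prove it for $k\in(n\,s_{0},(n+1)s_{0}]$.

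For the inductive step I would perform the energy estimate on
\begin{equation*}
E_{k}(t):=\big\|e^{r\langle\cdot\rangle^{\alpha}}\big(1+(-\Delta)\big)^{k/2}f(t,\cdot)\big\|_{L^{2}(\mathbb{R}^{d})}^{2},
\end{equation*}
in complete analogy with the proof of Proposition \ref{PSmooth1}, using the splitting \eqref{decomposition} to write $Q = Q^{+}_{nn}+Q^{+}_{nr}+Q^{+}_{rn}+Q^{+}_{rr} - Q^{-}$. The treatment of the four pieces is as follows. For the remainder gain terms $Q^{+}_{nr}$, $Q^{+}_{rn}$, $Q^{+}_{rr}$, I apply the commutator Lemma \ref{L1Hard} with fractional order $s_{k}:=k-n\,s_{0}\in(0,s_{0}]$ to extract $(1+(-\Delta))^{s_{k}/2}$ through $Q^{+}$, and combine with Leibniz-type commutation of the remaining integer derivatives (which commute with the collision operators, by translation invariance of $Q^{+}$ and integration by parts in the $|\cdot|^{\gamma}$ convolution of $Q^{-}$). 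The leading pieces reduce to $Q^{+}_{\mathrm{rem}}\big((1+(-\Delta))^{k/2}f,f\big)$ type objects, estimated via Lemma \ref{RL} with the small coefficients $\mathfrak{m}(b_{r})$ and $C(b,\varepsilon)\delta^{1/2}$; the commutator residues $\mathcal{I}^{+}$ produce strictly lower-order norms of $f$ in $H^{k'}_{\exp}$ with $k'\leq n\,s_{0}$, handled by the inductive hypothesis together with propagation of exponentially-tailed moments \eqref{momentpropagation}, \eqref{momentpropagationmax}.

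For the loss term $Q^{-}$, Lemma \ref{L2Hard} applied with order $s_{k}\in(0,1]$, together with Leibniz for the integer part, gives
\begin{equation*}
\big(1+(-\Delta)\big)^{k/2}Q^{-}(f,f) = Q^{-}\big(\big(1+(-\Delta)\big)^{k/2}f,f\big) + \mathcal{I}^{-}_{k}(f,f),
\end{equation*}
with $\mathcal{I}^{-}_{k}$ a lower-order remainder controlled by inductively known $H^{k'}_{\exp}$ norms. The lower bound Lemma \ref{APq-} then provides the crucial dissipation $-c_{o}\|e^{r\langle\cdot\rangle^{\alpha}}(1+(-\Delta))^{k/2}f\|_{L^{2}_{\gamma/2}}^{2}$. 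For the nice gain term $Q^{+}_{nn}$, I apply the weight-shift Lemma \ref{ApLMw} together with the smoothing Theorem \ref{TSmooth}, which converts $(1+(-\Delta))^{k/2}Q^{+}_{nn}(f,f)$ into norms of $f$ involving at most $H^{k-(d-1)/2}_{\exp}$-regularity; since $k-(d-1)/2\leq n\,s_{0}$ by the choice of increment $s_{0}\leq(d-1)/2$, the inductive hypothesis applies. Assembling these estimates and choosing $\varepsilon$ first and then $\delta=\delta(\varepsilon)$ small enough that the remainder coefficients are absorbed by $c_{o}$, I obtain
\begin{equation*}
\frac{\mathrm{d}}{\mathrm{d}t}E_{k}(t) \leq C(f_{0},\varepsilon,\delta) - \tfrac{c_{o}}{2}\,E_{k}(t),
\end{equation*}
and Gr\"{o}nwall closes the bound in the required form.

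The main obstacle I anticipate is the bookkeeping across iterations. First, the commutator lemmas \ref{L1Hard}–\ref{L2Hard} hold only for fractional orders $s\in(0,1]$, so higher regularity must be gained by combining integer-order Leibniz (a clean bilinear distribution through $Q^{\pm}$) with a single fractional commutator at the end; this forces the induction to proceed in steps of size at most $s_{0}$. Second, each commutator tightens the admissible range of the exponential rate through the constraints $r<\tfrac{1}{2}$ in Lemma \ref{L2Hard} and $r<\tfrac{1}{4}$ in Lemma \ref{L1Hard}, and one must verify that after the finitely many iterations needed to reach the prescribed $k$ the parameter $a\in(0,a_{o}]$ remains positive — which it does since $k$ is fixed at the outset. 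Third, the residues $\mathcal{I}^{\pm}$ demand exponentially-weighted polynomial control (norms $L^{2}_{\gamma+d^{+}/2}$ and $L^{2}_{(d-(1-\gamma))/2}$), available from the previously established $L^{p}_{\exp}$ propagation (Theorems \ref{T1}, \ref{T2}) and the inductive $H^{k'}_{\exp}$ bound with $k'\leq n\,s_{0}$.
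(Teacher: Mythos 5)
Your proposal is correct and follows essentially the same route as the paper: an induction on the regularity level that distributes integer derivatives by Leibniz, pulls the remaining fractional order through $Q^{\pm}$ with the commutator Lemmas \ref{L1Hard}--\ref{L2Hard}, treats the nice part via Lemma \ref{ApLMw} and Theorem \ref{TSmooth}, extracts dissipation from Lemma \ref{APq-}, absorbs the $nr$, $rn$, $rr$ pieces by choosing $\varepsilon$ and $\delta$ small, and closes with Gr\"onwall while slightly shrinking the exponential rate ($r<r^{+}<a$) at each stage. The only difference is bookkeeping: the paper inducts on integer $k$ (writing $\mathrm{D}^{k}=(-\Delta)^{n}\nabla^{i}$, detailed for $d\geq3$) and adds a single fractional step at the end for non-integer $k$, whereas you increment by $s_{0}=\min\{1,\tfrac{d-1}{2}\}$ throughout, which is equivalent in substance.
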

\begin{proof}
Here we only show the proof for $d\geq3$.  The same argument with slight modifications will do the job for the case $d=2$.  The proof uses induction by first considering the case $k\in\mathbb{N}$.  When $k=\{0,1\}$ the result follows using Theorem \ref{T1} or \ref{T2} for $k=0$, and Proposition \ref{PSmooth1} for $k=1$.  For $k\geq2$, assume the validity of the result for $k$ and conclude it for $k+1$.  Write $k=2n+i$, with $n\in\mathbb{N}$ and $i\in\{0,1\}$, and consider the operator $\text{D}^{k}:=(-\Delta)^{n}\nabla^{i}$.  Using classic Leibniz formula for integer differentiation, it is not difficult to check that
\begin{align*}
\text{D}^{k}Q(f,f) = & Q(\text{D}^{k}f,f) + Q(f,\text{D}^{k}f)\\
&\hspace{0cm}+ \mathop{\sum\sum}_{|j_1 +j_{2}|\leq (k-i-1)} C_{j_1,j_2}\Big(Q\big(\partial^{j_{1}}\nabla^{i}f, \partial^{j_{2}}f \big)+Q\big(\partial^{j_{1}}f, \partial^{j_{2}}\nabla^{i}f \big)\Big),
\end{align*}
for some coefficients $C_{j_1,j_2}$ and with $j_1$ and $j_2$ multi-indexes with order ranging as described in the sums.  Thus, applying $\big(1+(-\Delta)\big)^{\frac{1}{2}}\text{D}^{k}$ to the Boltzmann equation one gets
\begin{align}\label{SRe1}
\begin{split}
\partial_{t}\big(1&+(-\Delta)\big)^{\frac{1}{2}}\text{D}^{k}f = \big(1+(-\Delta)\big)^{\frac{1}{2}}Q(\text{D}^{k}f,f)+\big(1+(-\Delta)\big)^{\frac{1}{2}}Q(f,\text{D}^{k}f)\\
&\hspace{-.5cm}+ \mathop{\sum\sum}_{|j_1 +j_{2}|\leq (k-i-1)} C_{j_1,j_2}\big(1+(-\Delta)\big)^{\frac{1}{2}}\Big(Q\big(\partial^{j_{1}}\nabla^{i}f, \partial^{j_{2}}f \big)+ Q\big(\partial^{j_{1}}f, \partial^{j_{2}}\nabla^{i}f \big)\Big).
\end{split}
\end{align}
Let us control the terms on the right side of \eqref{SRe1} starting with the sum.  Using the commutator Lemmas \ref{L2Hard} and \ref{L1Hard}
\begin{align*}
\big(1+(-\Delta)\big)^{\frac{1}{2}}Q\big(\partial^{j_{1}}\nabla^{i}f, \partial^{j_{2}}f \big) = &Q\big(\big(1+(-\Delta)\big)^{\frac{1}{2}}\partial^{j_{1}}\nabla^{i}f, \partial^{j_{2}}f \big)\\
& + \mathcal{I}^{+}\big( \partial^{j_{1}}\nabla^{i}f, \partial^{j_{2}}f \big) - \mathcal{I}^{-}\big( \partial^{j_{1}}\nabla^{i}f, \partial^{j_{2}}f \big).
\end{align*}
Therefore, for $r\in(0,a)$ and $\alpha\in(0,1]$ it holds that
\begin{align*}
\Big\|e^{r\langle\cdot\rangle^{\alpha}}&\big(1+(-\Delta)\big)^{\frac{1}{2}}Q\big(\partial^{j_{1}}\nabla^{i}f, \partial^{j_{2}}f \big) \Big\|_{L^{2}(\mathbb{R}^{d})} \\
&\leq \Big\|e^{r\langle\cdot\rangle^{\alpha}}Q\big(\big(1+(-\Delta)\big)^{\frac{1}{2}}\partial^{j_{1}}\nabla^{i}f, \partial^{j_{2}}f \big) \Big\|_{L^{2}(\mathbb{R}^{d})}
+\Big\|e^{r\langle\cdot\rangle^{\alpha}}\mathcal{I}^{+}\big(\partial^{j_{1}}\nabla^{i}f, \partial^{j_{2}}f \big) \Big\|_{L^{2}(\mathbb{R}^{d})}\\
& \qquad+\Big\|e^{r\langle\cdot\rangle^{\alpha}}\mathcal{I}^{-}\big(\partial^{j_{1}}\nabla^{i}f, \partial^{j_{2}}f \big) \Big\|_{L^{2}(\mathbb{R}^{d})}\\
&\hspace{0cm}\leq C\|b\|_{L^{1}(\mathbb{S}^{d-1})}\big\|e^{r^{+}\langle\cdot\rangle^{\alpha}} \big(1+(-\Delta)\big)^{\frac{1}{2}}\partial^{j_{1}}\nabla^{i}f \big\|_{L^{2}(\mathbb{R}^{d})}\big\|e^{r^{+}\langle\cdot\rangle^{\alpha}}\partial^{j_{2}}f \big\|_{L^{2}(\mathbb{R}^{d})}.
\end{align*}
Observe that in this $L^{2}-L^{2}$ estimate we paid a small price in the weight $e^{r\langle\cdot\rangle^{\alpha}}\rightarrow e^{r^{+}\langle\cdot\rangle^{\alpha}}$ since the natural estimate from Young's inequality is an $L^{2}-L^{1}$ estimate.  Furthermore, using the commutator Lemma \ref{ApLMw}, it follows for $\alpha\in(0,1]$ that
\begin{align*}
\big\|e^{r^{+}\langle\cdot\rangle^{\alpha}} \big(1+(-\Delta)\big)^{\frac{1}{2}}\partial^{j_{1}}\nabla^{i}f \big\|_{L^{2}(\mathbb{R}^{d})} &\lesssim \big\|e^{r^{+}\langle\cdot\rangle^{\alpha}} \big(1+(-\Delta)\big)^{\frac{|j_1|+i+1}{2}} f \big\|_{L^{2}(\mathbb{R}^{d})}\\
&\lesssim \big\|e^{r^{+}\langle\cdot\rangle^{\alpha}} \big(1+(-\Delta)\big)^{\frac{k}{2}}f \big\|_{L^{2}(\mathbb{R}^{d})}\,.
\end{align*}
For the last inequality we used that $|j_{1}|\leq k-i-1$.  Choosing $r^{+}<a$ one has, by induction hypothesis, that
\begin{equation}\label{SRe2}
\Big\|e^{r\langle\cdot\rangle^{\alpha}}\big(1+(-\Delta)\big)^{\frac{1}{2}}Q\big(\partial^{j_{1}}\nabla^{i}f, \partial^{j_{2}}f \big) \Big\|_{L^{2}(\mathbb{R}^{d})} \leq C(f_0),
\end{equation}
with constant depending on $k^{th}$-Sobolev regularity of $f_0$.  This controls the sum.  In the same fashion, using the commutator lemmas, it follows for the second term
\begin{equation*}
\big(1+(-\Delta)\big)^{\frac{1}{2}}Q(f,\text{D}^{k}f) = Q\big(\big(1+(-\Delta)\big)^{\frac{1}{2}}f,\text{D}^{k}f\big) + \mathcal{I}^{+}\big(f,\text{D}^{k}f\big) - \mathcal{I}^{-}\big(f,\text{D}^{k}f\big)\,.
\end{equation*}
As a consequence,
\begin{align}\label{SRe3}
\begin{split}
\big\|e^{r\langle\cdot\rangle^{\alpha}}&\big(1+(-\Delta)\big)^{\frac{1}{2}}Q(f,\text{D}^{k}f)\big\|_{L^{2}(\mathbb{R}^{d})}\\
&\lesssim C\|b\|_{L^{1}(\mathbb{S}^{d-1})}\big\|e^{r^{+}\langle\cdot\rangle^{\alpha}} \big(1+(-\Delta)\big)^{\frac{1}{2}} f \big\|_{L^{2}(\mathbb{R}^{d})}\big\|e^{r^{+}\langle\cdot\rangle^{\alpha}}\text{D}^{k}f \big\|_{L^{2}(\mathbb{R}^{d})}\leq C(f_0).
\end{split}
\end{align}
In the last inequality we used the induction hypothesis again (valid for $r^{+}<a$).  Finally, following the same argument as in the proof of Proposition \ref{PSmooth1} (with $s=1$)\footnote{Indeed, the careful reader observes that such argument is bilinear.}, the first term on the right side can be estimated as
\begin{align}\label{SRe4}
\begin{split}
\int_{\mathbb{R}^{d}}e^{r\langle v \rangle^{\alpha}}&\big(1+(-\Delta)\big)^{\frac{1}{2}}Q\big(\text{D}^{k}f,f\big)(v)\,e^{r\langle v \rangle^{\alpha}}\big(1+(-\Delta)\big)^{\frac{1}{2}}\text{D}^{k}f(v)\text{d}v\\
&\leq  C(f_0) - \frac{c_o}{2}\big\|e^{r\langle \cdot \rangle^{\alpha}}\big(1+(-\Delta)\big)^{\frac{1}{2}}\text{D}^{k}f\big\|^{2}_{L^{2}(\mathbb{R}^{d})}.
\end{split}
\end{align}
As a consequence, multiplying equation \eqref{SRe1} by $e^{2r\langle\cdot\rangle}\big(1+(-\Delta)\big)^{\frac{1}{2}}\text{D}^{k}f$ and using Cauchy-Schwarz inequality together with estimates \eqref{SRe2}, \eqref{SRe3} and \eqref{SRe4}, one finds a constant $C(f_0)$ depending on lower order norms of $f_0$ such that
\begin{equation*}
\frac{\text{d}}{\text{d}t}\big\|e^{r\langle \cdot \rangle^{\alpha}}\big(1+(-\Delta)\big)^{\frac{1}{2}}\text{D}^{k}f\big\|^{2}_{L^{2}(\mathbb{R}^{d})}\leq C(f_0) - \frac{c_o}{2}\big\|e^{r\langle \cdot \rangle^{\alpha}}\big(1+(-\Delta)\big)^{\frac{1}{2}}\text{D}^{k}f\big\|^{2}_{L^{2}(\mathbb{R}^{d})}.
\end{equation*}
Then,
\begin{equation*}
\big\|e^{r\langle \cdot \rangle^{\alpha}}\big(1+(-\Delta)\big)^{\frac{1}{2}}\text{D}^{k}f(t)\big\|_{L^{2}(\mathbb{R}^{d})}\leq \max\Big\{\sqrt{\frac{2}{c_o}C(f_0)},\big\|e^{r\langle \cdot \rangle^{\alpha}}\big(1+(-\Delta)\big)^{\frac{1}{2}}\text{D}^{k}f_0\big\|_{L^{2}(\mathbb{R}^{d})}\Big\}.
\end{equation*}
This proves the case $k\in\mathbb{N}$ after observing that
\begin{equation*}
\big\|e^{r\langle \cdot \rangle^{\alpha}}\big(1+(-\Delta)\big)^{\frac{k+1}{2}}f(t)\big\|_{L^{2}(\mathbb{R}^{d})}\sim \big\|e^{r\langle \cdot \rangle^{\alpha}}\big(1+(-\Delta)\big)^{\frac{1}{2}}\text{D}^{k}f(t)\big\|_{L^{2}(\mathbb{R}^{d})} +\big\|e^{r\langle \cdot \rangle^{\alpha}}f(t)\big\|_{L^{2}(\mathbb{R}^{d})}.
\end{equation*}
When $k\in\mathbb{R}^{+}\setminus\mathbb{N}$, write $k=\lfloor k \rfloor + s $ with $s\in(0,1)$.  Then, by previous argument
\begin{equation*}
\big\|e^{r\langle \cdot \rangle^{\alpha}}\text{D}^{\lfloor k \rfloor}f(t)\big\|_{L^{2}(\mathbb{R}^{d})}\leq C(f_0)+\big\|e^{r\langle \cdot \rangle^{\alpha}}\text{D}^{\lfloor k \rfloor}f_0\big\|_{L^{2}(\mathbb{R}^{d})}.
\end{equation*}
Perform, again, previous argument for the operator $\big(1+(-\Delta)\big)^{\frac{s}{2}}\text{D}^{\lfloor k \rfloor}$ to conclude the proof.
\end{proof}
We complete this section with a result in $L^\infty$ based spaces.

\begin{corollary}\label{CFinal2}
Assume $\gamma\in[0,1]$, $k\geq0$, and $\alpha\in(0,1]$.  Also,
\begin{equation*}
e^{a_o\langle v \rangle^{\alpha}}\big(1 + (-\Delta) \big)^{\frac{k}{2}}f_0 \in L^{\infty}(\mathbb{R}^{d}). 
\end{equation*} 
Then, for the solution $f(t,v)$ of the Boltzmann equation, there exists $a\in(0,a_o]$ such that for any $r\in(0,a)$ it holds that
\begin{equation*}
\big\|e^{r\langle\cdot\rangle^{\alpha}}\big(1 + (-\Delta) \big)^{\frac{k}{2}}f(t,\cdot)\big\|_{L^{\infty}(\mathbb{R}^{d})}\leq C(f_0) + \big\|e^{r\langle\cdot\rangle^{\alpha}}\,\big(1 + (-\Delta) \big)^{\frac{k}{2}}f_0\big\|_{L^{\infty}(\mathbb{R}^{d})}.
\end{equation*}
The constant depends as $C(f_0):=C(f_0,\gamma,k,r,\alpha)$ and lower order $H^{k^{-}}_{exp}(\mathbb{R}^{d})$ Sobolev norms of $f_0$.   
\end{corollary}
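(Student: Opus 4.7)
The plan is to combine the Duhamel-type pointwise integration scheme used at the end of Theorem \ref{Tinfty} and Theorem \ref{T1} with the commutator machinery developed in Section~4, using the already established $L^{2}$-Sobolev propagation of Theorem \ref{TFinal1} as an input. As in Theorem \ref{TFinal1}, I would proceed by induction on integer $k$ and then reduce the fractional case $k \in \mathbb{R}^{+}\setminus\mathbb{N}$ by applying the integer argument to $\lfloor k\rfloor$ followed by one final step with $\bigl(1+(-\Delta)\bigr)^{s/2}$ where $s=k-\lfloor k\rfloor$. The base case $k=0$ is Theorem \ref{T1} (hard potentials) or Theorem \ref{T2} (Maxwell); for the inductive step, the hypothesis provides $L^{2}$ control through $\|e^{r\langle\cdot\rangle^{\alpha}}\bigl(1+(-\Delta)\bigr)^{k/2}f_{0}\|_{L^{2}} \leq \|e^{a_{o}\langle\cdot\rangle^{\alpha}}\bigl(1+(-\Delta)\bigr)^{k/2}f_{0}\|_{L^{\infty}}\|e^{(r-a_{o})\langle\cdot\rangle^{\alpha}}\|_{L^{2}}$ for any $r<a_{o}$, so Theorem \ref{TFinal1} is at our disposal throughout.

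Set $F:=\bigl(1+(-\Delta)\bigr)^{k/2}f$ and apply $\bigl(1+(-\Delta)\bigr)^{k/2}$ to \eqref{HBE}. Using the commutator Lemmas \ref{L2Hard} and \ref{L1Hard} (whose proofs are in fact pointwise and thus transfer to $L^{\infty}$ estimates on the remainders $\mathcal{I}^{\pm}$ by applying Young's convolution inequality in the $(p,q,r)=(\infty,1,\infty)$ instead of $(2,1,2)$), I would rewrite the equation as
\begin{equation*}
\partial_{t}F(v) = Q^{+}_{\gamma,b_{k}}(F,f)(v) - F(v)\bigl(f*|\cdot|^{\gamma}\bigr)(v) + \mathcal{E}(t,v),
\end{equation*}
where $\mathcal{E}$ collects the commutator remainders $\mathcal{I}^{\pm}$, the loss cross-terms coming from the integer Leibniz rule, and the gain cross-terms $Q^{+}(\partial^{j_{1}}\nabla^{i}f,\partial^{j_{2}}f)$ of order strictly less than $k$ (as in equation \eqref{SRe1}). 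Since $f\geq 0$, Lemma \ref{APq-} yields the crucial pointwise lower bound $(f*|\cdot|^{\gamma})(v)\geq c_{o}\langle v\rangle^{\gamma}$, which plays the role of a dissipative coefficient in the linear ODE in $t$ satisfied by $F(t,v)$ for each fixed $v$. Solving this ODE explicitly and taking absolute values gives
\begin{equation*}
|F(t,v)|e^{r\langle v\rangle^{\alpha}}\leq |F_{0}(v)|e^{r\langle v\rangle^{\alpha}}e^{-c_{o}\langle v\rangle^{\alpha}t}+\int_{0}^{t}e^{-c_{o}\langle v\rangle^{\gamma}(t-s)}\bigl(|Q^{+}_{\gamma,b_{k}}(F,f)|+|\mathcal{E}|\bigr)(s,v)e^{r\langle v\rangle^{\alpha}}\,\mathrm{d}s,
\end{equation*}
exactly as in \eqref{Tinfty-e4}.

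The integrand must be bounded uniformly in $v$ by $C\langle v\rangle^{\gamma}$ plus a small multiple of $\|e^{r\langle\cdot\rangle^{\alpha}}F(s)\|_{L^{\infty}}$. For the gain term I would carry out the splitting $b=b^{\varepsilon}_{1}+b^{\varepsilon}_{2}$ of \eqref{DAK} and apply Lemma \ref{APq+}: the $b^{\varepsilon}_{1}$-piece is dominated in $L^{\infty}$ by $\varepsilon^{-d/2}\|e^{r\langle\cdot\rangle^{\alpha}}F\|_{L^{2}}\|e^{r\langle\cdot\rangle^{\alpha}}f\langle\cdot\rangle^{\gamma}\|_{L^{2}}$, which is finite by Theorem \ref{TFinal1} applied at level $k$ and at level $0$, while the $b^{\varepsilon}_{2}$-piece is bounded by $\mathfrak{m}(b^{\varepsilon}_{2})\|e^{r\langle\cdot\rangle^{\alpha}}F\|_{L^{\infty}}\|e^{r\langle\cdot\rangle^{\alpha}}f\langle\cdot\rangle^{\gamma}\|_{L^{1}}$ with prefactor made arbitrarily small by choosing $\varepsilon$ small. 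The remainders in $\mathcal{E}$ are treated in the same spirit using the adapted $L^{\infty}$ versions of Lemmas \ref{L2Hard} and \ref{L1Hard}, together with the already-established $H^{k}_{exp}\hookrightarrow $ exponentially-weighted $L^{2}$ bounds and, inductively, $L^{\infty}$ bounds on lower-order Sobolev norms. Performing the essential supremum over $v$ and then over $t\in[0,T]$ in the Duhamel inequality, the small coefficient $\mathfrak{m}(b^{\varepsilon}_{2})C_{2}(f_{0})\leq c_{o}/4$ lets us absorb the $\|e^{r\langle\cdot\rangle^{\alpha}}F\|_{L^{\infty}}$ term on the right into the left, exactly as in estimates \eqref{estimateinf4}--\eqref{estimateinffinal}, giving the desired propagation.

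The main obstacle I anticipate is that $F=\bigl(1+(-\Delta)\bigr)^{k/2}f$ is not sign-definite, so the classical trick $\partial_{t}F\leq Q^{+}-c_{o}\langle v\rangle^{\gamma}F$ (which drives the proof of Theorems \ref{Tinfty} and \ref{T1}) fails. The resolution is to treat the equation for $F$ as a genuine linear ODE in $t$ at fixed $v$ with driving term and dissipative coefficient $(f*|\cdot|^{\gamma})(v)\geq c_{o}\langle v\rangle^{\gamma}$, then take absolute values \emph{after} integration, rather than before. A second technical point is that the commutator identities of Section~4 are only stated in $L^{2}$; however the pointwise inequalities obtained in their proofs (e.g.\ \eqref{L2Harde1} and the bound on $|\mathcal{I}^{+}_{1}|$ by $Q^{+}_{0,\tilde b}(\tilde\varphi\ast\tilde f,\tilde g)$) allow one to replace $L^{2}$ by $L^{\infty}$ at the cost of changing which factor absorbs $\|\nabla\varphi\,e^{r\langle\cdot\rangle^{\alpha}}\|_{L^{1}}$, and this is the only modification required.
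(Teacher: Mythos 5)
Your proposal follows essentially the same route as the paper: the paper's proof of this corollary is precisely the combination of Theorem \ref{TFinal1} with the $L^{2}$-to-$L^{\infty}$ bootstrap of Theorems \ref{T1} and \ref{T2} (the splitting \eqref{DAK}, Lemma \ref{APq+}, and the Duhamel absorption argument), which is exactly what you carry out at the level of $F=\big(1+(-\Delta)\big)^{k/2}f$. Your two technical observations --- handling the sign-indefinite $F$ by solving the linear ODE with dissipative coefficient $(f\ast|\cdot|^{\gamma})(v)\geq c_{o}\langle v\rangle^{\gamma}$ before taking absolute values, and upgrading the pointwise commutator bounds of Lemmas \ref{L2Hard} and \ref{L1Hard} to weighted $L^{\infty}$ via the $(\infty,1,\infty)$ form of Young's inequality --- are the details the paper leaves implicit, and they are sound.
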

\begin{proof}
This is a direct consequence of Theorem \ref{TFinal1} and the techniques presented in Theorems \ref{T1} and \ref{T2}, for hard potentials and Maxwell molecules respectively, to show $L^{\infty}$-norm propagation based on the $L^{2}$-norm propagation.
\end{proof}
\section{Convergence towards equilibrium and decomposition theorem}
In this last section we are interested in studying the convergence of the solution of the homogeneous Boltzmann equation with hard potentials towards the Maxwellian distribution and its propagation of smoothness and singularities.

Let us start by introducing the relative entropy
\begin{equation*}
\mathcal{H}(f | \mathcal{M}_f ) = \int_{\mathbb{R}^{d}}f \log\Big(\frac{f}{\mathcal{M}_f}\Big)\text{d}v\,,
\end{equation*}
and the entropy production
\begin{equation*}
\mathcal{D}(f)=\tfrac{1}{4}\int_{\mathbb{R}^{2d}}\int_{\mathbb{S}^{d-1}}\big(f' f'_{*} - f f_{*}\big)\log\Big(\frac{f' f'_*}{f f_*}\Big)B(u,\widehat{u}\cdot\sigma)\text{d}\sigma\text{d}v_*\text{d}v\,.
\end{equation*}
The function $\mathcal{M}_{f}$ is the thermodynamical equilibrium
\begin{equation*}
\mathcal{M}_{f}(v):=\frac{\rho_f}{(2\pi\, T_f)^{\frac{d}{2}}}e^{-\frac{|v-\mu_f|^{2}}{2T_f}}\,,
\end{equation*}
where $\rho_f:=\int f$ is the density, $\mu_f:=\frac{1}{\rho}\int f v$ is the momentum, and $T_f:=\frac{1}{d\rho}\int f |v-\mu|^{2}$ is the temperature associated to $f$.  Clearly, for the Boltzmann flow $\mathcal{M}_{f}=\mathcal{M}_{f_0}$ and $$0\leq\mathcal{H}(f | \mathcal{M}_{f})\leq \mathcal{H}(f_0 | \mathcal{M}_{f_0})\,.$$
In the sequel, and without loss of generality, we assume $\rho=1$ and $\mu=0$.  We refer the reader to \cite{V} for additional details and references.

The proof of the convergence towards the equilibrium consists of two steps. First, we use the Csisz\'ar-Kullback-Pinsker inequality  and the  relative entropy estimate (Proposition \ref{transitory-relaxation}) to show that after some time, the solution becomes close to the equilibrium. Then in the second step, we use properties of the linearized operator (Proposition \ref{linearized-spectrum})  to prove that the solution converges to the equilibrium exponentially fast.

We start by addressing the first main ingredient mentioned above - the relative entropy estimate (Proposition \ref{transitory-relaxation}). Such estimate is proved thanks to an entropy dissipation estimate (Theorem \ref{Dissipation-entropy}) and an estimate of entropic moments (Proposition \ref{entropic-moments}).

Then later, we will focus on the second ingredient of the strategy, which is the spectral gap result for the linearized operator on an enlarged space (Proposition \ref{linearized-spectrum}).

We start by proving the dissipation of entropy estimate related to the nonlinear dynamics.
\subsection{Dissipation of Entropy}

In the following theorem, we relax conditions of Theorem 4.1 in \cite{V} with respect to the regularity constraints.

\begin{theorem}\label{Dissipation-entropy}
Let the scattering kernel satisfy
\begin{equation}\label{K_B}
B( u ,\widehat{u}\cdot\sigma)\geq K_{B}\min\big\{|u|^{\gamma},|u|^{-\beta}\big\}\,,\quad \gamma\geq0,\;\; \beta\geq 0,
\end{equation}
and let $f\geq0$ be a function with sufficiently high number of moments and entropic moments, and such that
\begin{equation*}
f(v)\geq K_o\, e^{-A_o|v|^{q_o}}\,,\quad K_o>0,\,A_o>0,\,q_o\geq2\,.
\end{equation*}
Then, for any $\varepsilon\in(0,1)$ we have:\\

\textit{(i)} If $f\in L^{p}(\mathbb{R}^{d})$, with $p\in(1,\infty]$,
\begin{equation*}
\mathcal{D}(f)\geq A_{\varepsilon,p}(f)\mathcal{H}(f | \mathcal{M}_{f})^{(1+\varepsilon)(1+\frac{\gamma p'}{d})}\,,
\end{equation*}
where the constant $A_{\varepsilon,p}(f)$ is given in \eqref{EPe5}.\\

\textit{(ii)} If $f\in L\log L(\mathbb{R}^{d})$, then
\begin{equation*}
\mathcal{D}(f)\geq A_{\varepsilon,L\log L}(f)\,\mathcal{H}( f | \mathcal{M}_{f})^{(1+\varepsilon)(1+\frac{\gamma}{d})}e^{-\frac{2\gamma\widetilde{K}_{L\log L}(f)}{d K_{\varepsilon}(f)\mathcal{H}(f|\mathcal{M}_{f})^{1+\varepsilon}}}\,,
\end{equation*}
where the constant $A_{\varepsilon, L\log L}(f)$ is given in \eqref{EPe8}.
\end{theorem}
\begin{remark}
The fact that for all $t\geq t_o>0$
\begin{equation*}
f(t,v)\geq K_o\, e^{-A_o|v|^2}\,,\quad K_o>0,\;\;A_o>0\,,
\end{equation*}
for positive solutions $f(t,v)$ of the homogeneous Boltzmann equation satisfying only \eqref{Initial-Data} was proven in the classical paper \cite{PW}.  The constants $K_o$ and $A_o$ depend on the mass, energy, entropy of the initial data, as well as $t_o$, but are uniform for all $t\geq t_o$.  Although these constants are explicit in terms of these quantities, the dependence is quite complicated.
\end{remark} 
\begin{proof}
Note that
\begin{align}\label{EPe1}
\begin{split}
B(u,\widehat{u}\cdot\sigma) &= \Big(K_{B}\,R^{\gamma}\,\text{1}_{|u|\leq R} + B(u,\widehat{u}\cdot\sigma)\Big) - K_{B}\,R^{\gamma}\,\text{1}_{|u|\leq R}\\
&\geq K_{B}\Big(R^{\gamma}\,\text{1}_{|u|\leq R} + \min\big\{|u|^{\gamma},|u|^{-\beta}\big\} \Big) - K_{B}\,R^{\gamma}\,\text{1}_{|u|\leq R}\\
&\geq K_{B}\,R^{\gamma}\langle u \rangle^{-\beta} - K_{B}\,R^{\gamma}\,\text{1}_{|u|\leq R}\,.
\end{split}
\end{align}
As a consequence, $\mathcal{D}(f) \geq \mathcal{D}_{1}(f) - \mathcal{D}_{2}(f)$ where $\mathcal{D}_{i}(f)$, with $i\in\{1,2\}$, corresponds to each term on the right side of \eqref{EPe1} respectively.  Using \cite[Theorem 3.1]{V} one has
\begin{equation}\label{EPe2}
\mathcal{D}_{1}(f)\geq R^{\gamma} K_{\varepsilon}(f) \mathcal{H}(f | \mathcal{M}_{f} )^{1+\varepsilon}\,.
\end{equation}
An explicit form for $K_{\varepsilon}(f)$ can be found in \cite{V}.  We just mention here that it depends on mass and temperature (energy) of $f$, the parameters $A_o, K_o, q_o$, and
\begin{equation*}
\int_{\mathbb{R}^{d}} f(v)\langle v \rangle^{2+\frac{2+\beta}{\varepsilon}}\big|\log(f)\big|\text{d}v\,,\qquad \int_{\mathbb{R}^{d}} f(v)\langle v \rangle^{2+q_o+\frac{2+\beta}{\varepsilon}}\text{d}v\,.
\end{equation*}
For $\mathcal{D}_{2}(f)$ one can proceed in similar fashion to the proof of \cite[Theorem 3.1]{V} to obtain
\begin{align}\label{EPe3}
\begin{split}
\mathcal{D}_{2}(f)\leq 2K_{B}&|\mathbb{S}^{d-1}|R^{\gamma}\bigg(4\int_{\{|u|\leq R\}}f \log(f) f_{*}\text{d}v_{*}\text{d}v \\
&\hspace{2cm} + 2^{\frac{q_o}{2}+1}\Big(\log\Big(\frac{1}{K_o}\Big)+A_o\Big)\int_{\{|u|\leq R\}}\langle v \rangle^{q_0} f f_{*}\text{d}v_{*}\text{d}v\bigg)\,. 
\end{split}
\end{align}
\textit{Case $f \in L^{p}(\mathbb{R}^{d})$:}  Since, by H\"{o}lders inequality,
\begin{equation*}
\int_{\{|u|\leq R\}}f(v)\text{d}v \leq \big|\tfrac{1}{d}\,\mathbb{S}^{d-1}\big|^{\frac{1}{p'}}R^{\frac{d}{p'}}\|f\|_{L^{p}(\mathbb{R}^{d})}
\end{equation*}
one concludes from \eqref{EPe3} that
\begin{equation}\label{EPe4}
\mathcal{D}_{2}(f)\leq CK_{B}R^{\gamma+\frac{d}{p'}}\|f\|_{L^{p}(\mathbb{R}^{d})}\Big(\int_{\mathbb{R}^{d}}f\big|\log(f)\big|\text{d}v + \int_{\mathbb{R}^{d}}f\langle v \rangle^{q_o}\text{d}v\Big) =: R^{\gamma+\frac{d}{p'}}\widetilde{K}_p(f)\,,
\end{equation}
where $C>0$ is a universal constant.  Gathering \eqref{EPe2} and \eqref{EPe4} it follows that
\begin{equation*}
\mathcal{D}(f)\geq R^{\gamma}\Big(K_{\varepsilon}(f)\mathcal{H}(f | \mathcal{M}_{f})^{1+\varepsilon} - R^{\frac{d}{p'}}\widetilde{K}_p(f)\Big). 
\end{equation*}
Choosing $R^{\frac{d}{p'}}:=\frac{K_{\varepsilon}(f)}{2\widetilde{K}_p(f)}\mathcal{H}(f | \mathcal{M}_{f})^{1+\varepsilon}$ it follows that
\begin{equation}\label{EPe5}
\mathcal{D}(f)\geq \frac{K_{\varepsilon}(f)^{1+\frac{\gamma p'}{d}}}{2^{1+\frac{\gamma p'}{d}}\widetilde{K}_p(f)^{\frac{\gamma p'}{d}}}\mathcal{H}( f | \mathcal{M}_{f})^{(1+\varepsilon)(1+\frac{\gamma p'}{d})}=:A_{\varepsilon,p}(f)\,\mathcal{H}( f | \mathcal{M}_{f})^{(1+\varepsilon)(1+\frac{\gamma p'}{d})}\,.
\end{equation}
\textit{Case $f\in L\log L (\mathbb{R}^{d})$}:  Using the generalized Young's inequality, see for example the proof of \cite[Proposition A.1]{ALods}, one concludes that
\begin{equation}
\int_{\{|u|\leq R\}} f(v) \text{d}v \leq \frac{\int_{\mathbb{R}^{d}} f |\log(f)|\text{d}v}{\mathcal{W}\Big(\frac{\int_{\mathbb{R}^{d}} f|\log(f)|\text{d}v}{|\{|v|\leq R\}|}\Big)}\,,
\end{equation}
where $\mathcal{W}(x)$ is the Lambert function, that is, $\mathcal{W}^{-1}(x)=x\,e^{x}$.  Therefore, recalling \eqref{EPe3}
\begin{align}\label{EPe6}
\begin{split}
\mathcal{D}_{2}(f) 
&\leq CK_{B}R^{\gamma}\Big(\int_{\mathbb{R}^{d}}f\big|\log(f)\big|\text{d}v + \int_{\mathbb{R}^{d}}f\langle v \rangle^{q_o}\text{d}v\Big)\frac{\int_{\mathbb{R}^{d}}f\big|\log(f)\big|\text{d}v}{\mathcal{W}\Big(\frac{\int_{\mathbb{R}^{d}} f|\log(f)|\text{d}v}{|\{|v|\leq R\}|}\Big)} \\
&=: \frac{R^{\gamma}\,\widetilde{K}_{L\log L}(f)}{\mathcal{W}\Big(\frac{\int_{\mathbb{R}^{d}} f|\log(f)|\text{d}v}{|\{|v|\leq R\}|}\Big)}\,.
\end{split}
\end{align}
Using \eqref{EPe2} and \eqref{EPe6}
\begin{equation}\label{EPe7}
\mathcal{D}(f)\geq R^{\gamma}\bigg(K_{\varepsilon}(f)\mathcal{H}(f | \mathcal{M}_{f})^{1+\varepsilon} - \frac{\widetilde{K}_{L\log L}(f)}{\mathcal{W}\Big(\frac{\int_{\mathbb{R}^{d}} f|\log(f)|\text{d}v}{|\{|v|\leq R\}|}\Big)}\bigg). 
\end{equation}
Now choose, $R>0$ such that
\begin{equation*}
\frac{\widetilde{K}_{L\log L}(f)}{\mathcal{W}\Big(\frac{\int_{\mathbb{R}^{d}} f|\log(f)|\text{d}v}{|\{|v|\leq R\}|}\Big)} = \frac{K_{\varepsilon}(f)}{2}\mathcal{H}(f | \mathcal{M}_{f})^{1+\varepsilon}\,,
\end{equation*}
or more precisely,
\begin{equation*}
R^{d} := \frac{d\int_{\mathbb{R}^{d}}f|\log(f)|\text{d}v}{2\big|\mathbb{S}^{d-1}\big|\widetilde{K}_{L\log L}(f)}\,K_{\varepsilon}(f)\mathcal{H}(f | \mathcal{M}_{f})^{1+\varepsilon}e^{-\frac{2\tilde{K}_{L\log L}(f)}{K_{\varepsilon}(f)\mathcal{H}( f | \mathcal{M}_{f})^{1+\varepsilon}}}
\end{equation*}
we obtain
\begin{align}\label{EPe8}
\begin{split}
\mathcal{D}(f)
& \geq \frac{R^{\gamma}}{2}K_{\varepsilon}(f)\mathcal{H}( f | \mathcal{M}_{f})^{1+\varepsilon}\\
& = \frac{K_{\varepsilon}(f)^{1+\frac{\gamma}{d}}}{2^{1+\frac{\gamma}{d}}}\left(\frac{d\int_{\mathbb{R}^{d}}f|\log(f)|\text{d}v}{|\mathbb{S}^{d-1}|\widetilde{K}_{L\log L}(f)}\right)^{\frac{\gamma}{d}}\mathcal{H}( f | \mathcal{M}_{f})^{(1+\varepsilon)(1+\frac{\gamma}{d})}e^{-\frac{2\gamma\widetilde{K}_{L\log L}(f)}{d K_{\varepsilon}(f)\mathcal{H}(f|\mathcal{M}_{f})^{1+\varepsilon}}}\\
& =: A_{\varepsilon,L\log L}(f)\,\mathcal{H}( f | \mathcal{M}_{f})^{(1+\varepsilon)(1+\frac{\gamma}{d})}e^{-\frac{2\gamma\widetilde{K}_{L\log L}(f)}{d K_{\varepsilon}(f)\mathcal{H}(f|\mathcal{M}_{f})^{1+\varepsilon}}}\,.
\end{split}
\end{align}
\end{proof}

\subsection{Entropic moments and transitory relaxation}
We next  prove a priori estimates needed for the Proposition \ref{transitory-relaxation}), we first need to obtain propagation of entropic moments.

\begin{proposition}\label{entropic-moments}
Assume $f$ is solution of the homogeneous Boltzmann problem for hard potentials with initial data satisfying \eqref{Initial-Data}.  Then, for any $t_o>0$ and $s\in[0,\infty)$
\begin{equation*}
\sup_{t\geq t_o}\int_{\mathbb{R}^{d}}f(v)\langle v \rangle^{s}|\log(f)| dv < \infty\,.
\end{equation*}
\end{proposition}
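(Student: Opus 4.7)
The plan is to split the integrand by the sign of $\log f$, writing
\[
\int_{\mathbb{R}^{d}} f\langle v\rangle^{s}|\log f|\,dv
= \int_{\{f \leq 1\}} f\langle v\rangle^{s}\log\tfrac{1}{f}\,dv + \int_{\{f \geq 1\}} f\langle v\rangle^{s}\log f\,dv,
\]
and estimate the two pieces separately. The key background facts I would exploit are: (a) for hard potentials, polynomial moments of arbitrary order are instantaneously created, so $\sup_{t\geq t_{o}}\int f\langle v\rangle^{M}dv<\infty$ for every $M\geq 0$ (Povzner--Desvillettes--Wennberg, cf.\ \cite{ACGM}); and (b) the entropy is uniformly bounded, $\int f|\log f|\,dv \leq \int f_0 \log f_0\,dv + O(\text{moments})$, by the $H$-theorem together with the $\{f\leq 1\}$-trick below.

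For the small-values contribution I would apply the elementary bound $x\log(1/x)\leq \varepsilon^{-1}x^{1-\varepsilon}$ (for $x\in(0,1]$, $\varepsilon\in(0,1)$) followed by H\"older's inequality with an intermediate weight $\langle v\rangle^{M}$:
\[
\int_{\{f \leq 1\}} f\langle v\rangle^{s}\log\tfrac{1}{f}\,dv \leq \frac{1}{\varepsilon}\left(\int f\langle v\rangle^{M}dv\right)^{1-\varepsilon}\left(\int \langle v\rangle^{(s-M(1-\varepsilon))/\varepsilon}dv\right)^{\varepsilon}.
\]
Choosing $M$ large enough that $(s-M(1-\varepsilon))/\varepsilon<-d$, the second factor is a harmless finite constant, and the first factor is uniformly bounded for $t\geq t_{o}$ by fact (a).

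For the large-values contribution, the inequality $\log f\leq \varepsilon^{-1}f^{\varepsilon}$ (valid for $f\geq 1$) yields
\[
\int_{\{f\geq 1\}} f\langle v\rangle^{s}\log f\,dv \leq \frac{1}{\varepsilon}\int f^{1+\varepsilon}\langle v\rangle^{s}dv,
\]
and interpolating $L^{1+\varepsilon}$ between a weighted $L^{1}$ (controlled by moments) and $L^{p}$ for some $p>1+\varepsilon$ reduces matters to showing $\sup_{t\geq t_{o}}\|f(t,\cdot)\|_{L^{p}}<\infty$ for some $p>1$. This uniform $L^{p}$-regularization from merely $L^{1}_{2}\cap L\log L$ data is the main obstacle. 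To address it I would test the Boltzmann equation against $pf^{p-1}$: the coercive lower bound of Lemma \ref{APq-} supplies the damping $-c_{o}p\int f^{p}\langle v\rangle^{\gamma}dv\leq -c_{o}p\|f\|_{L^{p}}^{p}$, while Lemma \ref{L1MM}, applicable thanks to the uniform entropy bound from fact (b), bounds $\|Q^{+}(f,f)\|_{L^{p}}$ by a sublinear power of $\|f\|_{L^{p}}$ once $K$ is taken large and $\varepsilon$ small enough to defeat the prefactor of the $\|f\|_{L^{p}}$ terms. This yields a Bernoulli-type inequality
\[
\tfrac{d}{dt}\|f\|_{L^{p}}^{p}\leq -c\,\|f\|_{L^{p}}^{p}+C\,\|f\|_{L^{p}}^{p-1/2},
\]
whose solutions are a priori uniformly bounded in terms of $(C/c)^{2p}$. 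Since $\|f_{0}\|_{L^{p}}$ may be infinite, I would apply the inequality to a sequence of smooth $L^{p}$-approximants of $f_{0}$ and pass to the limit using weak-$L^{1}$ stability of the Boltzmann flow and Fatou's lemma in $L^{p}$ at each $t\geq t_{o}$, producing the required $L^{p}$-bound and closing the estimate.
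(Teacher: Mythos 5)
There is a genuine gap, and it sits exactly where you place the main weight of the argument: the claim that $\sup_{t\geq t_o}\|f(t)\|_{L^{p}}<\infty$ for some $p>1$ can be extracted from data satisfying only \eqref{Initial-Data}. No such $L^{p}$-regularization holds for the cut-off equation: from the loss term one has the pointwise Duhamel lower bound $f(t,v)\geq f_0(v)\,e^{-\int_0^t (f\ast|\cdot|^{\gamma})(v)\,ds}$, so if $f_0\notin L^{p}_{loc}$ then $f(t)\notin L^{p}$ for every finite $t$ (this is precisely why the paper stresses that only tails, not integrability, are created, and why the decomposition theorem keeps a non-smooth part carrying the initial singularities). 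Your Bernoulli inequality cannot circumvent this: setting $y=\|f\|_{L^p}^{p}$ and $z=y^{1/(2p)}$, the inequality $y'\leq -c\,y+C\,y^{1-\frac{1}{2p}}$ linearizes to $z'\leq -\tfrac{c}{2p}z+\tfrac{C}{2p}$, whose solutions satisfy $z(t)\leq (z(0)-C/c)e^{-ct/(2p)}+C/c$; the bound always retains the factor $z(0)$, so applying it to approximants $f_{0,n}\in L^{p}$ with $\|f_{0,n}\|_{L^p}\to\infty$ gives bounds that diverge at every fixed $t\geq t_o$, and Fatou yields nothing. There is also a circularity in the step producing that inequality: Lemma \ref{L1MM} is stated for Maxwell molecules ($\gamma=0$); for hard potentials the kinetic factor $|u|^{\gamma}$ forces the $\{f>K\}$ part of the splitting to be paid with the \emph{weighted} entropic moment $\|f\langle\cdot\rangle^{\gamma}\log f\|_{L^{1}}$ — exactly the quantity the proposition is trying to bound (the remark after Proposition \ref{P1MM} makes this dependence explicit).

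By contrast, the paper never passes through $L^{p}$: it writes a closed differential inequality for $\int f\langle v\rangle^{s}|\log f|\,dv$ directly, using the coercive lower bound of Lemma \ref{APq-} on $Q^{-}$ (which supplies the damping $-c_o\|f\log f\,\langle\cdot\rangle^{s+\gamma}\|_{L^1}$), the Pulvirenti--Wennberg Gaussian lower bound $f(t,v)\geq K_o e^{-A_o|v|^{2}}$ for $t\geq t_o$ to control $|\log f'|$ on $\{f'\leq1\}$ by $\langle v\rangle^{2}\langle v_*\rangle^{2}$, and the generalized Young inequality $xy\leq x\log x-x+e^{y}$ combined with the angular splitting \eqref{DAK} and the singular changes of variables $v\to v'$, $v_*\to v'$ to handle $\{f'>1\}$, closing with the instantaneous creation of polynomial moments \cite{WennbergMP}. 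Your treatment of the small-value region $\{f\leq1\}$ via $x\log(1/x)\leq \varepsilon^{-1}x^{1-\varepsilon}$ and moments is fine, but that was never the difficulty; the large-value region must be handled without assuming integrability the solution does not possess.
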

\begin{proof}
The case $s=0$ is clear.  Thus, multiply the equation by $\langle{v}\rangle^{s}|\log(f)|$, with $s>0$, to obtain
\begin{align*}
\frac{\text{d}}{\text{d}t}\int_{\mathbb{R}^{d}}f(v)\langle{v}\rangle^{s}|\log(f)|\text{d}v = \int_{\mathbb{R}^{d}}&Q(f,f)(v)\langle{v}\rangle^{s}|\log(f)|\text{d}v\\
&+\int_{\mathbb{R}^{d}}Q(f,f)(v)\langle{v}\rangle^{s}\text{sgn}\big(\log(f)\big)\text{d}v\,.
\end{align*}
For the second term on the right side it readily follows that
\begin{equation}\label{EMe1}
\Big|\int_{\mathbb{R}^{d}}Q(f,f)(v)\langle{v}\rangle^{s}\text{sgn}\big(\log(f)\big)\text{d}v\Big|\leq 2\|b\|_{L^{1}(\mathbb{S}^{d-1})}\|f\langle v \rangle^{\gamma+s}\|_{L^{1}(\mathbb{R}^{d})}\|f\langle v \rangle^{\gamma+s}\|_{L^{1}(\mathbb{R}^{d})}\,.
\end{equation}
The first term has a positive and negative parts.  For the negative, one concludes using the uniform boundedness of entropy that  
\begin{equation}\label{EMe2}
\int_{\mathbb{R}^{d}}Q^{-}(f,f)(v)\langle{v}\rangle^{s}|\log(f)|\text{d}v\geq c_{o}\int_{\mathbb{R}^{d}}f(v)\langle{v}\rangle^{s+\gamma}|\log(f)|\text{d}v\,,
\end{equation}
where $c_o>0$ is a constant depending of the initial entropy, mass and energy.  For the positive part, one has
\begin{align}\label{EMe3}
\begin{split}
\int_{\mathbb{R}^{d}}Q^{+}(f,f)(v)\langle{v}\rangle^{s}|& \log(f)|\text{d}v=\int_{\mathbb{R}^{2d}}f(v)f(v_*)|u|^{\gamma}\int_{\mathbb{S}^{d-1}}\big|\log(f(v'))\big|\langle v ' \rangle^{s}b(\widehat{u}\cdot\sigma)\text{d}\sigma\text{d}v_{*}\text{d}v\\
&\leq C_{s}\int_{\mathbb{R}^{2d}}\sum_{(i,j)}f(v)\langle v \rangle^{i}f(v_*)\langle v_* \rangle^{j}\int_{\mathbb{S}^{d-1}}\big|\log(f(v'))\big|b(\widehat{u}\cdot\sigma)\text{d}\sigma\text{d}v_{*}\text{d}v\,.\\
\end{split}
\end{align}
The sum is performed on $(i,j)\in\{(s+\gamma,0),(s,\gamma),(\gamma,s),(0,s+\gamma)\}$.  We estimate the right side in \eqref{EMe3} controlling the integral in the integration sets $\{f'\leq 1\}$ and $\{f'>1\}$ separately.  For the former recall the classical result proved in \cite{PW}: for any $t_o>0$ there exists positive $K_o, A_o$ depending only on the initial mass, energy, entropy and $t_o$ such that
\begin{equation*}
f(t,v)\geq K_o e^{-A_o|v|^{2}}\,,\qquad t\geq t_o>0\,.
\end{equation*}
As a consequence,
\begin{equation}\label{EMe4}
\int_{\mathbb{S}^{d-1}}\big|\log(f(v'))\big|\text{1}_{f'\leq1}b(\widehat{u}\cdot\sigma)\text{d}\sigma\leq \|b\|_{L^{1}(\mathbb{S}^{d-1})}\Big(\log\Big(\frac{1}{K_o}\Big)+A_o\Big)\langle v \rangle^{2}\langle v_{*}\rangle^{2}\,.
\end{equation}
The latter set $\{f'>1\}$ is trickier.  We concentrate first on the combination $(i,j)=(0,s + \gamma)$ since the others follow a simpler argument.  First, we fix $\varepsilon>0$ and use the usual angular split \eqref{DAK}.  In each component we use the generalized Young's inequality
\begin{equation*}
xy\leq x\log x - x + e^{y}\,,\qquad x\geq0,\; y\in\mathbb{R}\,,
\end{equation*}
in slightly, but crucially, different way.  For the good part, the one with $b^{\varepsilon}_{1}\big(\cos\theta \big)$, we choose $x=f(v)$ and $y=|\log(f')|\text{1}_{f'>1}$ to get
\begin{equation*}
f\big|\log(f')\big|\text{1}_{f'>1} \leq f\log(f) - f + f'\,.
\end{equation*}
Thus,
\begin{align}\label{EMe5}
\begin{split}
\int_{\mathbb{R}^{2d}}f(v_*)&\langle v_* \rangle^{s+\gamma}f(v)\int_{\mathbb{S}^{d-1}}\big|\log(f(v'))\big|\text{1}_{f'>1}b^{\varepsilon}_1(\widehat{u}\cdot\sigma)\text{d}\sigma\text{d}v_{*}\text{d}v\\
&\leq\|b^{\varepsilon}_{1}\|_{L^{1}(\mathbb{S}^{d-1})}\|f\langle v \rangle ^{s+\gamma}\|_{L^{1}(\mathbb{R}^{d})}\|f\log(f)\|_{L^{1}(\mathbb{R}^{3})}\\
&\hspace{1cm}+2^{d}\Big\|\frac{b^{\varepsilon}_{1}(\cos(\theta))}{1-\cos(\theta)}\Big\|_{L^{1}(\mathbb{S}^{d-1})}\|f\langle v \rangle ^{s+\gamma}\|_{L^{1}(\mathbb{R}^{d})}\|f\|_{L^{1}(\mathbb{R}^{d})},
\end{split}
\end{align}
where we have used the singular change of variable $v\rightarrow v'$ in the integral containing $f'$.  For the bad part, the one with $b^{\varepsilon}_{2}\big(\cos\theta \big)$, we choose $x=\langle v_*\rangle^{s+\gamma} f(v_*)$ and $y=|\log(f')|\text{1}_{f'>1}$ to get
 \begin{equation*}
\langle v_*\rangle^{s+\gamma}f_*\big|\log(f')\big|\text{1}_{f'>1} \leq \langle v_*\rangle^{s+\gamma}f_*\log\big(\langle v_*\rangle^{s+\gamma}f_*\big) - \langle v_*\rangle^{s+\gamma}f_* + f'\,.
\end{equation*}
Therefore,
\begin{align}\label{EMe6}
\begin{split}
\int_{\mathbb{R}^{2d}} & f(v_*)\langle v_* \rangle^{s+\gamma}f(v)\int_{\mathbb{S}^{d-1}}\big|\log(f(v'))\big|\text{1}_{f'>1}b^{\varepsilon}_2(\widehat{u}\cdot\sigma)\text{d}\sigma\text{d}v_{*}\text{d}v\\
&\leq\|b^{\varepsilon}_{2}\|_{L^{1}(\mathbb{S}^{d-1})}\Big(\|f\langle v \rangle ^{s+\gamma}\log(f)\|_{L^{1}(\mathbb{R}^{d})}\|f\|_{L^{1}(\mathbb{R}^{3})}\\
&\hspace{0cm} +(s+\gamma)\|f\langle v \rangle ^{s+\gamma}\log\langle v \rangle\|_{L^{1}(\mathbb{R}^{d})}\|f\|_{L^{1}(\mathbb{R}^{3})}\Big)+2^{d}\Big\|\frac{b^{\varepsilon}_{2}(\cos(\theta))}{1+\cos(\theta)}\Big\|_{L^{1}(\mathbb{S}^{d-1})}\|f\|^{2}_{L^{1}(\mathbb{R}^{d})},
\end{split}
\end{align}
where we used the change of variable $v_*\rightarrow v'$ in the integral containing $f'$.  Of course, in this case it is harmless since $b(\cdot)$ is supported in $[0,1]$.  Furthermore, we recall that $\|b^{\varepsilon}_{2}\|_{L^{1}(\mathbb{S}^{d-1})}\sim\mathfrak{m}(b^{\varepsilon}_{2})$ can be made as small as desired.

For the rest of the cases one does not need to split the kernel in two.  It suffices to choose $x=f(v_*)\langle v_* \rangle^{s}$ when $(i,j)=(\gamma,s)$, $x=f(v_*)\langle v_* \rangle^{\gamma}$ when $(i,j)=(s,\gamma)$, and $x=f(v_*)$ when $(i,j)=(s+\gamma,0)$.  In all cases $y=|\log(f')|\text{1}_{f'>1}$.  Furthermore, the resulting lower order entropic moments can be controlled as 
\begin{equation}\label{EMe7}
\|f\log(f)\langle v \rangle^{\gamma}\|_{L^{1}(\mathbb{R}^{d})}\leq \|f\log(f)\langle v \rangle^{s+\gamma}\|^{\frac{\gamma}{s+\gamma}}_{L^{1}(\mathbb{R}^{d})}\|f\log(f)\|^{\frac{s}{s+\gamma}}_{L^{1}(\mathbb{R}^{d})}\,.
\end{equation} 
In summary, after gathering \eqref{EMe3}, \eqref{EMe4}, \eqref{EMe5}, \eqref{EMe6} and \eqref{EMe7} one concludes
\begin{align}\label{EMe8}
\begin{split}
\int_{\mathbb{R}^{d}} &Q^{+}(f,f)(v)  \langle{v}\rangle^{s}|\log(f)|\text{d}v\leq \mathfrak{m}(b^{\varepsilon}_{2})\|f\log(f)\langle v \rangle^{s+\gamma}\|_{L^{1}(\mathbb{R}^{d})}\\
&+C_1(f)\|f\log(f)\langle v \rangle^{s+\gamma}\|^{\frac{\gamma}{s+\gamma}}_{L^{1}(\mathbb{R}^{d})}+C_{2}(f)\|f\log(f)\langle v \rangle^{s+\gamma}\|^{\frac{s}{s+\gamma}}_{L^{1}(\mathbb{R}^{d})}+C^{\varepsilon}_{3}(f)\,,
\end{split}
\end{align}
where the constants $C_{i}(f)$, with $i=1,2,3$, depend on the mass, temperature, initial entropy and $\|f\langle v \rangle^{s+\gamma+2}\|_{L^{1}(\mathbb{R}^{d})}$.  Now, choose $\varepsilon$ such that $\mathfrak{m}(b^{\varepsilon}_{2})\leq \frac{c_o}{2}$ and use the estimates \eqref{EMe1}, \eqref{EMe2} and \eqref{EMe3} to get
\begin{align*}
\frac{\text{d}}{\text{d}t}&\|f\log(f)\langle v \rangle^{s}\|_{L^{1}(\mathbb{R}^{d})}\leq C_1(f)\|f\log(f)\langle v \rangle^{s+\gamma}\|^{\frac{\gamma}{s+\gamma}}_{L^{1}(\mathbb{R}^{d})}\\
&+C_{2}(f)\|f\log(f)\langle v \rangle^{s+\gamma}\|^{\frac{s}{s+\gamma}}_{L^{1}(\mathbb{R}^{d})}+C^{\varepsilon}_{3}(f) - \frac{c_o}{2}\|f\log(f)\langle v \rangle^{s+
\gamma}\|_{L^{1}(\mathbb{R}^{d})}\,.
\end{align*}
The result follows from here after invoking the instantaneous appearance of moments proven in \cite{WennbergMP}.
\end{proof}
We next prove estimates on the relative entropy, and note that the rate of the decay  depends on how rough initial data is. If initial data has finite mass, energy and entropy, the decay rate is almost logarithmic (see \eqref{ln decay}). If, in addition, initial data has a finite moment of order $p>1$, then the decay rate is algebraic (see \eqref{poly decay}).

\begin{proposition}\label{transitory-relaxation}
Assume $f$ is solution of the homogeneous Boltzmann problem for hard potentials with initial data satisfying \eqref{Initial-Data}.  Then, for any $\varepsilon>0$
\begin{equation}\label{ln decay}
\mathcal{H}(f | \mathcal{M}_{f}) \leq C_{\varepsilon}(f_0)\big(\ln(e+t)\big)^{-\frac{1}{1+\varepsilon}}\,,
\end{equation}
with constant $C_{\varepsilon}(f_0)$ depending on mass, temperature, and entropy of $f_0$.  Furthermore, if additionally $f_0\in L^{p}(\mathbb{R}^{d})$, with $p\in(1,\infty]$, then
\begin{equation}\label{poly decay}
\mathcal{H}(f | \mathcal{M}_{f}) \leq C_{\varepsilon}(f_0)(1+t)^{-\frac{1}{\frac{\gamma p'}{d}+\varepsilon\big(1+\frac{\gamma p'}{d}\big)}}\,,
\end{equation}
with constant $C_{\varepsilon}(f_0)$ depending additionally of the $L^{p}$-norm of $f_0$.
\end{proposition}

\begin{proof}
Let us prove the first statement.  Thanks to the appearance of moments \cite{WennbergMP, AGbams},  the  pointwise lower Gaussian bound \cite{PW} and the estimate on entropic moments  from Proposition \ref{entropic-moments},  we can use the entropy dissipation estimate from Theorem \ref{Dissipation-entropy} to conclude that there exist constants $\mathcal{A}_o(f_0)$ and $\mathcal{B}_o(f_0)$ only depending on initial mass, temperature, entropy, and $t_o>0$ (and $\varepsilon>0$) such that
\begin{equation*}
\frac{\text{d}}{\text{d}t}\mathcal{H}( f | \mathcal{M}_{f}) + \mathcal{A}_o(f_0)\mathcal{H}( f | \mathcal{M}_{f})^{(1+\varepsilon)(1+\frac{\gamma}{d})}e^{-\frac{\mathcal{B}_o(f_0)}{\mathcal{H}( f | \mathcal{M}_{f})^{1+\varepsilon}}}\leq 0\,,\qquad t\geq t_o\,.
\end{equation*}
It is not difficult to check that $X(t)= C\big(\ln(e+t)\big)^{-\frac{1}{1+\varepsilon}}$ satisfies
\begin{equation*}
\frac{\text{d}}{\text{d}t}X + \mathcal{A}_o(f_0)X^{(1+\varepsilon)(1+\frac{\gamma}{d})}e^{-\frac{\mathcal{B}_o(f_0)}{X^{1+\varepsilon}}}\geq 0\,,
\end{equation*}
provided $C>0$ is taken large enough depending on $\mathcal{A}_o(f_0)$ and $\mathcal{B}_o(f_0)$.  Denote any such value as $C_*$ and choose $C=C(f_0):=\max\{\mathcal{H}(f_0 | \mathcal{M}_{f_0} ), C_*\}$.  Since $\mathcal{H}(f(t_o) | \mathcal{M}_{f(t_o)})\leq \mathcal{H}(f_0 | \mathcal{M}_{f_0} )$ the result follows by a comparison principle.  The second statement follows similar argument and it is left to the reader.
\end{proof}

\subsection{Exponential convergence}
After fixing the initial mass, momentum, and temperature, one can rewrite the Boltzmann equation \eqref{HBE} by taking $f=\mathcal{M}_{f_0} + h$, where $h$ is understood as a perturbation of the thermodynamical equilibrium.  The equation for $h$ reads
\begin{equation}\label{HBE-pf}
\partial_{t}h(v) = \mathcal{L}(h)(v) + Q(h,h)(v)\,,\qquad (t,v)\in\mathbb{R}^{+}\times\mathbb{R}^{d}\,.
\end{equation}
Here, the linear component of the dynamics is generated by the operator
\begin{equation}\label{linear-operator}
\mathcal{L}(h)(v) = Q(\mathcal{M}_{f_0},h) + Q(h,\mathcal{M}_{f_0})\,.
\end{equation}
This operator was shown to be self-adjoint non-positive with a spectral gap in $L^{2}(\mathcal{M}^{-1/2}_{f_0})$ in the references \cite{C}, \cite{G1} and \cite{G2} in the cut-off case.  Later, allowing more general kernels, an explicit estimate of the spectral gap in this same space was made in \cite{Mo1}.  This is the starting point at which the spectral enlargement technique works \cite{Mo,GMM} to obtain spectral gap in more general spaces.

Based on \cite[Theorem 2.3]{GMM} or \cite[Theorem 3.1]{CL}, if a spectral enlargement from $L^{2}(\mathcal{M}^{-1/2}_{f_0})$ to $L^{1}_{k}$ is desired, we need to decompose the linearized operator $\mathcal{L}$ into two operators with suitable properties.  More precisely, $\mathcal{L} = \mathcal{A} + \mathcal{B}$ where $\mathcal{B}:L^{1}_{k}(\mathbb{R}^{d})\rightarrow L^{1}_{k}(\mathbb{R}^{d})$ is dissipative \footnote{That is, the operator $\mathcal{B}$ is closed with domain $L^{1}_{k+\gamma}(\mathbb{R}^{d})$ and satisfying $\langle \mathcal{B}f, f \rangle\leq0$.} and $\mathcal{A}:L^{1}_{k}(\mathbb{R}^{d})\rightarrow L^{2}(\mathcal{M}^{-1/2}_{f_0};\mathbb{R}^{d})$ is bounded.  Here we stress that the space $L^{2}(\mathcal{M}^{-1/2}_{f_0};\mathbb{R}^{d})$ is taken as baseline space since a detailed quantification of the spectrum is available in the aforementioned references.

The decomposition is based on truncation of small and large velocities, and glancing angles (similar but simpler to the decomposition given in \cite[subsection 4.3.3]{GMM} or \cite{Mo}).  For the scattering kernel, recall the decomposition \eqref{DAK} we have used in this paper
\begin{equation}\label{DAK1}
b\big(\cos(\theta)\big) = b\big(\cos(\theta)\big) \big(\text{1}_{|\sin(\theta)| \geq \varepsilon} + \text{1}_{|\sin(\theta)| < \varepsilon}\big)=:b^{\varepsilon}_{1}\big(\cos(\theta)\big) + b^{\varepsilon}_{2}\big(\cos(\theta)\big)\,.
\end{equation}
For the kinetic potential write $|\cdot|^{\gamma}=:\Phi_{1}+\Phi_{2}$, where
\begin{equation}\label{DAK2}
\Phi_{1}(|u|):= |u|^{\gamma}\text{1}_{\delta\leq |u|\leq \delta^{-1}}\,,\qquad \Phi_{2}(|u|):= |u|^{\gamma}\big(1-\text{1}_{\delta\leq |u|\leq \delta^{-1}}\big)\,.
\end{equation}
With the notation $\mathcal{L}_{\Phi,b}$ to express the dependence of the collision kernel, one can write
\begin{align}\label{Decomposition}
\begin{split}
\mathcal{L}_{x^{\gamma},b}& h = \mathcal{L}_{x^{\gamma},b^{\varepsilon}_{1}}h + \mathcal{L}_{x^{\gamma},b^{\varepsilon}_2}h = \mathcal{L}^{o}_{x^{\gamma},b^{\varepsilon}_{1}}h + \mathcal{L}_{x^{\gamma},b^{\varepsilon}_2}h - h\int_{\mathbb{R}^{d}}\mathcal{M}_{f_0}(v_*)|u|^{\gamma}\text{d}v_{*}\\
& =  \mathcal{L}^{o}_{\Phi_{1},b^{\varepsilon}_{1}}h + \Big(\mathcal{L}^{o}_{\Phi_{2},b^{\varepsilon}_{1}}h + \mathcal{L}_{x^{\gamma},b^{\varepsilon}_2}h - h\int_{\mathbb{R}^{d}}\mathcal{M}_{f_0}(v_*)|u|^{\gamma}\text{d}v_{*}\Big)=: \mathcal{A}_{\delta,\varepsilon}h + \mathcal{B}_{\delta,\varepsilon}h\,.
\end{split}
\end{align}
Of course,
\begin{equation}\label{Lo}
\mathcal{L}^{o}_{\Phi_{1},b^{\varepsilon}_{1}}h := Q^{+}_{\Phi_{1},b^{\varepsilon}_{1}}(\mathcal{M}_{f_0},h) + Q^{+}_{\Phi_{1},b^{\varepsilon}_{1}}(h,\mathcal{M}_{f_0}) - Q^{-}_{\Phi_{1},b^{\varepsilon}_{1}}(\mathcal{M}_{f_0},h)\,.
\end{equation}
Let us prove that $\mathcal{B}_{\delta,\varepsilon}$ is dissipative for sufficiently small parameters $\delta>0,\, \varepsilon>0$ and that $\mathcal{A}_{\delta,\varepsilon}$ has the stated ``regularizing'' property for \text{any} $\delta>0,\, \varepsilon>0$.  For the first statement, one essentially needs the following lemma controlling moments of the linearized operator.
\begin{lemma}\label{control-linear-moments}
Consider angular kernel $b\in L^{1}(\mathbb{S}^{d})$ and potential $0\leq \Phi(|u|) \leq |u|^{\gamma}$.  Then for any $h\in L^{1}_{k+\gamma}(\mathbb{R}^{d})$ and any $k\geq2$
\begin{align*}
\int_{\mathbb{R}^{d}}sgn(h)\big(1+ |v|^{k}\big)\mathcal{L}_{\Phi,b}(h)(v)&dv \leq C_{k}\big\|b\|_{L^{1}(\mathbb{S}^{d-1})}\|\langle v \rangle^{k-1}\beta_{\Phi}(v)h\|_{L^{1}(\mathbb{R}^{d})}\\
&-(1-\gamma_{k})\int_{\mathbb{R}^{2d}}|h(v)|\mathcal{M}_{f_0}(v_*)\Phi(|u|)\big(|v|^{k}+|v_*|^{k}\big)dv_{*}dv\,.
\end{align*}
Here above, $0<\gamma_{k}<\big\|b\|_{L^{1}(\mathbb{S}^{d-1})}=1$ is such that $\gamma_{k}\searrow0$ as $k\rightarrow\infty$.  And,
\begin{equation*}
\beta_{\phi}(v):=\int_{\mathbb{R}^{d}}\mathcal{M}_{f_0}(v_*)\langle v_* \rangle^{k}\Phi(|u|)dv_{*}\,.
\end{equation*}
\end{lemma}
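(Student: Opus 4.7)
The plan is to split $\mathcal{L}_{\Phi,b}(h)$ into its four natural pieces,
$$\mathcal{L}_{\Phi,b}(h) = Q^{+}_{\Phi,b}(\mathcal{M}_{f_0},h) + Q^{+}_{\Phi,b}(h,\mathcal{M}_{f_0}) - \mathcal{M}_{f_0}(v)(h\ast\Phi)(v) - h(v)(\mathcal{M}_{f_0}\ast\Phi)(v),$$
and to observe that only the last piece produces the genuine dissipation $-|h|(v)(\mathcal{M}_{f_0}\ast\Phi)(v)$ after multiplication by $\mathrm{sgn}(h)$. The other three pieces are controlled pointwise using $|\mathrm{sgn}(h(v))Q^{+}_{\Phi,b}(g,h)(v)|\leq Q^{+}_{\Phi,b}(g,|h|)(v)$ (valid since $\mathcal{M}_{f_0}\geq 0$), its symmetric analogue, and $\mathrm{sgn}(h)\,\mathcal{M}_{f_0}(h\ast\Phi)\leq \mathcal{M}_{f_0}(|h|\ast\Phi)$. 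This yields the pointwise inequality
$$\mathrm{sgn}(h)\mathcal{L}_{\Phi,b}(h) \leq Q^{+}_{\Phi,b}(\mathcal{M}_{f_0},|h|) + Q^{+}_{\Phi,b}(|h|,\mathcal{M}_{f_0}) + \mathcal{M}_{f_0}(v)(|h|\ast\Phi)(v) - |h|(v)(\mathcal{M}_{f_0}\ast\Phi)(v).$$

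Next I would multiply by $\psi(v):=1+|v|^{k}$ and integrate. Using the swap $v\leftrightarrow v_{*}$ (together with the symmetrization of $b$) to reexpress the second gain as $\psi(v'_{*})$ on the same support $\mathcal{M}_{f_0}(v)|h|(v_{*})$, the two gain pieces combine into $\iint\mathcal{M}_{f_0}(v)|h|(v_{*})\Phi(|u|)\int_{\mathbb{S}^{d-1}}[\psi(v')+\psi(v'_{*})]b\,dw\,dv_{*}\,dv$, while the two loss pieces combine, after the analogous swap in the $-|h|(\mathcal{M}_{f_0}\ast\Phi)$ term, into $\iint\mathcal{M}_{f_0}(v)|h|(v_{*})\Phi(|u|)[\psi(v)-\psi(v_{*})]\,dv_{*}\,dv$. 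Using $\|b\|_{L^{1}}=1$ and inserting $\pm(\psi(v)+\psi(v_{*}))$ inside the angular integral, the whole right-hand side condenses into
$$\iint\mathcal{M}_{f_0}(v)|h|(v_{*})\Phi(|u|)\Big\{\int_{\mathbb{S}^{d-1}}\bigl[\psi(v')+\psi(v'_{*})-\psi(v)-\psi(v_{*})\bigr]b\,dw + 2\psi(v)\Big\}dv_{*}\,dv.$$

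At this stage I would invoke the classical angle-averaged Povzner inequality: for $k\geq 2$ there exist $\gamma_{k}\in(0,1)$ with $\gamma_{k}\searrow 0$ as $k\to\infty$, and $C'_{k}>0$, such that
$$\int_{\mathbb{S}^{d-1}}\bigl[|v'|^{k}+|v'_{*}|^{k}-|v|^{k}-|v_{*}|^{k}\bigr]b\,dw \leq -(1-\gamma_{k})(|v|^{k}+|v_{*}|^{k}) + C'_{k}\sum_{j=1}^{k-1}|v|^{j}|v_{*}|^{k-j}.$$
Because the additive constants in $\psi=1+|\cdot|^{k}$ cancel inside the angular difference, the term $-(1-\gamma_{k})(|v|^{k}+|v_{*}|^{k})$ transfers verbatim to the outer integral, and a last $v\leftrightarrow v_{*}$ swap (permissible by symmetry of $\Phi(|u|)$ and $|v|^{k}+|v_{*}|^{k}$) produces precisely the stated dissipative quantity $-(1-\gamma_{k})\iint|h|(v)\mathcal{M}_{f_0}(v_{*})\Phi(|u|)(|v|^{k}+|v_{*}|^{k})dv\,dv_{*}$.

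It remains to bundle the Povzner remainder $C'_{k}\sum_{j}|v|^{j}|v_{*}|^{k-j}$ with the $2\psi(v)$ contribution into the advertised lower-order term. After the swap $v\leftrightarrow v_{*}$ and pulling $|h|(v)$ outside the inner integral, using $|v|^{j}\leq\langle v\rangle^{k-1}$ for $1\leq j\leq k-1$ together with $|v_{*}|^{j}\leq\langle v_{*}\rangle^{k}$ for $0\leq j\leq k$ (both valid for $k\geq 2$), every inner $v_{*}$-integral is bounded by $\langle v\rangle^{k-1}\int\mathcal{M}_{f_0}(v_{*})\langle v_{*}\rangle^{k}\Phi(|u|)dv_{*}=\langle v\rangle^{k-1}\beta_{\Phi}(v)$ up to a $k$-dependent numerical factor. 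Collecting these gives the required upper bound $C_{k}\|b\|_{L^{1}(\mathbb{S}^{d-1})}\|\langle v\rangle^{k-1}\beta_{\Phi}h\|_{L^{1}}$. The main technical obstacle is the sharp Povzner inequality with $\gamma_{k}$ strictly below one and tending to zero; this is classical and, crucially for the purposes of this paper, holds under the mere cut-off hypothesis $b\in L^{1}(\mathbb{S}^{d-1})$, with no regularity or $L^{2}$ assumption on $b$. The rest is bookkeeping with the standard pre-post collisional change of variables.
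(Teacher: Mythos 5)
Your proof is correct and follows essentially the same route as the paper: both reduce the left side to the weak representation $\int\int\int \mathcal{M}_{*}|h|\,\big[\psi'+\psi'_{*}+\psi_{*}-\psi\big]\,\Phi\,b$ (you by unfolding the four pieces of $\mathcal{L}$ and running the pre-post collisional change of variables, the paper by invoking the weak form of $\mathcal{L}$ directly with the test function $\text{sgn}(h)(1+|v|^{k})$), and then apply the classical angle-averaged Povzner lemma to the $[\psi'+\psi'_{*}-\psi-\psi_{*}]$ block while absorbing the leftover $2\psi_{*}$ and the Povzner remainder into $\|\langle\cdot\rangle^{k-1}\beta_{\Phi}h\|_{L^{1}}$. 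Your derivation is longer but identical in substance to the paper's two-line proof.
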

\begin{proof}
Using the weak representation
\begin{equation*}
\int_{\mathbb{R}^{d}}\varphi\,\mathcal{L}_{\Phi,b}h(v)\text{d}v = \int_{\mathbb{R}^{2d}}\int_{\mathbb{S}^{d-1}}\mathcal{M}_{f_0}(v_*)h\big(\varphi'+\varphi'_{*}-\varphi_{*}-\varphi\big)\Phi(|u|)b(\widehat{u}\cdot\sigma)\text{d}\sigma\text{d}v_*\text{d}v\,.
\end{equation*}
Thus, for $\varphi(v)=\text{sgn}(h)\big(1 + |v|^{k}\big)$ one readily checks that
\begin{align*}
\int_{\mathbb{R}^{d}}\text{sgn}(h)\big(1 + |v|^{k}\big)\mathcal{L}_{\Phi,b}h(v)&\text{d}v\leq \int_{\mathbb{R}^{d}} |v|^{k} \mathcal{L}_{\Phi,b}|h|(v)\text{d}v + 2\big\|b\|_{L^{1}(\mathbb{S}^{d-1})}\|\beta_{\Phi}(v)h\|_{L^{1}}\,.
\end{align*}
The result follows using a Povzner angular averaging lemma in the first term of the right side, see for example \cite[Lemma 2.6]{AL}.
\end{proof}
Thanks to  the above lemma, we now prove an estimate on moments of the  operator $\mathcal{B_{\delta,\varepsilon}}$.

\begin{lemma}\label{dissipativeop}
Consider angular kernel $b\in L^{1}(\mathbb{S}^{d})$ and potentials with $\gamma\in(0,1]$.  For $\varepsilon\in(0,1)$, $\delta\in(0,1)$ and $k\geq2$, it follows that 
\begin{align*}
\int_{\mathbb{R}^{d}}sgn(h)\big(1+|v|^{k}\big)\mathcal{B_{\delta,\varepsilon}}(h)\text{d}v &\leq C_{k}\mathfrak{m}(b^{\varepsilon}_{2})\,\|h\langle v \rangle ^{k+\gamma}\|_{L^{1}(\mathbb{R}^{d})}\\
&+C_{k}\,\delta^{\gamma}\,\|h\langle v \rangle ^{k+\gamma}\|_{L^{1}(\mathbb{R}^{d})} - c_o\|(1+|v|^{k})\langle v \rangle ^{\gamma}h\|_{L^{1}(\mathbb{R}^{d})}\,.
\end{align*}
The constant $C_{k}>0$ in addition to $k$, depends on mass and temperature.  The constant $c_o$ depends only on mass and temperature.
\end{lemma}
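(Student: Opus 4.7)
The proof reduces to estimating each of the three summands in
\begin{equation*}
\mathcal{B}_{\delta,\varepsilon}(h) = \mathcal{L}^{o}_{\Phi_{2},b_{1}}(h) + \mathcal{L}_{x^{\gamma},b_{2}}(h) - h(v)\,\nu(v),
\end{equation*}
where we abbreviate $\nu(v):=\int_{\mathbb{R}^{d}}\mathcal{M}_{f_{0}}(v_{*})|u|^{\gamma}dv_{*}$. The last, multiplicative piece is the only source of coercivity: since $\mathcal{M}_{f_{0}}$ satisfies all hypotheses of Lemma \ref{APq-}, one has $\nu(v)\geq c_{o}\langle v\rangle^{\gamma}$ with $c_{o}$ depending only on mass and temperature, so
\begin{equation*}
\int_{\mathbb{R}^{d}}\mathrm{sgn}(h)(1+|v|^{k})\bigl(-h(v)\nu(v)\bigr)dv \leq -c_{o}\bigl\|(1+|v|^{k})\langle v\rangle^{\gamma}h\bigr\|_{L^{1}(\mathbb{R}^{d})},
\end{equation*}
which is exactly the negative term in the statement.

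For the $b_{2}$ piece, apply Lemma \ref{control-linear-moments} to $\mathcal{L}_{x^{\gamma},b_{2}}$ with $\Phi(|u|)=|u|^{\gamma}$ and angular kernel $b_{2}$. Discarding the non-positive Povzner dissipation term, what remains is $C_{k}\|b_{2}\|_{L^{1}(\mathbb{S}^{d-1})}\|\langle v\rangle^{k-1}\beta_{|\cdot|^{\gamma}}(v)h\|_{L^{1}}$. The elementary inequality $|u|^{\gamma}\leq\langle v\rangle^{\gamma}\langle v_{*}\rangle^{\gamma}$ (valid because $\gamma\in(0,1]$) and the finiteness of all Maxwellian moments yield $\beta_{|\cdot|^{\gamma}}(v)\leq C\langle v\rangle^{\gamma}$; together with $\|b_{2}\|_{L^{1}}\sim\mathfrak{m}(b_{2}^{\varepsilon})$, this contributes $C_{k}\mathfrak{m}(b_{2}^{\varepsilon})\|h\langle v\rangle^{k+\gamma}\|_{L^{1}}$.

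For the $\Phi_{2}$ piece, write $\mathcal{L}^{o}_{\Phi_{2},b_{1}}=\mathcal{L}_{\Phi_{2},b_{1}}+Q^{-}_{\Phi_{2},b_{1}}(\mathcal{M}_{f_{0}},h)$. To the full linearization apply once more Lemma \ref{control-linear-moments}; the essential work is to control
\begin{equation*}
\beta_{\Phi_{2}}(v)=\int_{\mathbb{R}^{d}}\mathcal{M}_{f_{0}}(v_{*})\langle v_{*}\rangle^{k}|u|^{\gamma}\bigl(\mathbf{1}_{|u|\leq\delta}+\mathbf{1}_{|u|\geq\delta^{-1}}\bigr)\,dv_{*}.
\end{equation*}
On $\{|u|\leq\delta\}$ the integrand is bounded pointwise by $\delta^{\gamma}\mathcal{M}_{f_{0}}(v_{*})\langle v_{*}\rangle^{k}$, yielding at most $C\delta^{\gamma}$. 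On $\{|u|\geq\delta^{-1}\}$ use $|u|^{\gamma}\leq\langle v\rangle^{\gamma}\langle v_{*}\rangle^{\gamma}$ and split cases: if $|v|\leq\delta^{-1}/2$ then necessarily $|v_{*}|\geq\delta^{-1}/2$, so the Maxwellian tail yields exponential smallness in $1/\delta$; if $|v|\geq\delta^{-1}/2$, then $1\leq(2\delta\langle v\rangle)^{\gamma}$, giving a bound of order $\delta^{\gamma}\langle v\rangle^{2\gamma}$. Combining, $\beta_{\Phi_{2}}(v)\leq C\delta^{\gamma}\langle v\rangle^{2\gamma}$, and since $\gamma\leq 1$ the weight $\langle v\rangle^{k-1+2\gamma}$ is dominated by $\langle v\rangle^{k+\gamma}$, so this piece contributes $C_{k}\delta^{\gamma}\|h\langle v\rangle^{k+\gamma}\|_{L^{1}}$. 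The residual $Q^{-}_{\Phi_{2},b_{1}}(\mathcal{M}_{f_{0}},h)$ carries a prefactor $\mathcal{M}_{f_{0}}(v)$ whose exponential decay, tested against $(1+|v|^{k})$, produces a contribution of the same $C_{k}\delta^{\gamma}$ order.

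Summing the three contributions gives the announced inequality. The principal obstacle is the estimate on $\beta_{\Phi_{2}}$: the kernel $\Phi_{2}$ is \emph{not} uniformly small on its support, since $|u|^{\gamma}$ is unbounded on $\{|u|\geq\delta^{-1}\}$; one cannot simply factor out $\delta^{\gamma}$ pointwise, and instead must use the exponential decay of $\mathcal{M}_{f_{0}}$ together with the restriction $\gamma\in(0,1]$ to trade a power of $\langle v\rangle$ for the missing factor of $\delta^{\gamma}$.
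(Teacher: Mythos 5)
Your proof is correct and follows essentially the same route as the paper's: the same three-way splitting of $\mathcal{B}_{\delta,\varepsilon}$, the coercive lower bound $\int_{\mathbb{R}^{d}}\mathcal{M}_{f_0}(v_*)|u|^{\gamma}\mathrm{d}v_{*}\geq c_o\langle v\rangle^{\gamma}$ for the multiplication piece, and Lemma \ref{control-linear-moments} applied to the $b^{\varepsilon}_{2}$ and $\Phi_{2}$ pieces with the weight bounds $\beta_{x^{\gamma}}(v)\lesssim\langle v\rangle^{\gamma}$ and $\beta_{\Phi_{2}}(v)\lesssim\delta^{\gamma}\langle v\rangle^{2\gamma}$. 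The only difference is that you spell out the case analysis behind the $\beta_{\Phi_{2}}$ bound (which the paper merely asserts, in the form $\beta_{\Phi_{2}}\leq\widetilde{C}_{k}\delta^{\gamma}\langle v\rangle^{1+\gamma}$) and you explicitly estimate the residual loss term coming from $\mathcal{L}^{o}_{\Phi_{2},b_{1}}$, which the paper leaves implicit; both additions are correct and consistent with the stated constants.
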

\begin{proof}
First, let us estimate $\mathcal{L}_{x^{\gamma},b^{\varepsilon}_{2}}$ and $\mathcal{L}^{o}_{\Phi_{2},b^{\varepsilon}_{1}}$ separately.  For the former, using Lemma \ref{control-linear-moments} one readily concludes
\begin{align}\label{disspativeop-e1}
\begin{split}
\int_{\mathbb{R}^{d}}sgn(h)\big(1+ |v|^{k}\big)\mathcal{L}_{x^{\gamma},b^{\varepsilon}_{2}}(h)(v)dv &\leq C_{k}\big\|b^{\varepsilon}_{2}\|_{L^{1}(\mathbb{S}^{d-1})}\|\langle v \rangle^{k-1}\beta_{x^{\gamma}}(v)h\|_{L^{1}(\mathbb{R}^{d})}\\
&\leq C_{k}\mathfrak{m}(b^{\varepsilon}_{2})\|\langle v \rangle^{k+\gamma-1}h\|_{L^{1}(\mathbb{R}^{d})}\,.
\end{split}
\end{align}
We used, for the second inequality, that $\beta_{x^{\gamma}}(v)\leq \widetilde{C}_{k}\langle v \rangle^{\gamma}$ and $\| b^{\varepsilon}_{2}\|_{L^{1}(\mathbb{S}^{d-1})} \sim \mathfrak{m}(b^{\varepsilon}_{2})$\,.  Similarly for the latter, note that $\mathcal{L}^{o}_{\Phi_{2},b^{\varepsilon}_{1}} = \mathcal{L}_{\Phi_{2},b^{\varepsilon}_{1}} + Q^{-}_{ \Phi_2 , b^{\varepsilon}_{1}} (\mathcal{M}_{f_0},h)$, thus, using Lemma \ref{control-linear-moments} it follows that
\begin{align}\label{dissipativeop-e2}
\begin{split}
\int_{\mathbb{R}^{d}}sgn(h)\big(1+ |v|^{k}\big)\mathcal{L}_{\Phi_{2},b^{\varepsilon}_{1}}(h)(v)dv &\leq C_{k}\big\|b^{\varepsilon}_{1}\|_{L^{1}(\mathbb{S}^{d-1})}\|\langle v \rangle^{k-1}\beta_{\Phi_{2}}(v)h\|_{L^{1}(\mathbb{R}^{d})}\\
&\leq C_{k}\,\delta^{\gamma}\,\|b\|_{L^{1}(\mathbb{S}^{d-1})}\|\langle v \rangle^{k+\gamma}h\|_{L^{1}(\mathbb{R}^{d})}\,.
\end{split}
\end{align}
We used, in the second inequality, the fact that $b^{\varepsilon}_{2}\leq b$ and that $\beta_{\Phi_{2}}(v)\leq \widetilde{C}_{k}\delta^{\gamma}\langle v \rangle^{1+\gamma}$.  Second, for the dissipation term, one uses that
\begin{equation}\label{spectral-gap}
\int_{\mathbb{R}^{d}}\mathcal{M}_{f_0}(v_*)|u|^{\gamma}\text{d}v_{*}\geq c_o\langle v \rangle^{\gamma}\,,
\end{equation} 
for constant $c_o$ depending only on mass and temperature.  Then,
\begin{equation}\label{disspativeop-e3}
\int_{\mathbb{R}^{d}}\big(1+|v|^{k}\big)|h|\int_{\mathbb{R}^{d}}\mathcal{M}_{f_0}(v_*)|u|^{\gamma}\text{d}v_{*}\geq c_o\|(1+|v|^{k})\langle v \rangle^{\gamma}h\|_{L^{1}(\mathbb{R}^{d})}\,.
\end{equation}
The result follows after gathering \eqref{disspativeop-e1}, \eqref{dissipativeop-e2}, \eqref{disspativeop-e3}.
\end{proof}

As a corollary of the previous lemma, we prove that $\mathcal{B}_{\delta,\varepsilon}$ is dissipative for sufficiently small parameters $\delta>0,\, \varepsilon>0$.

\begin{corollary}\label{dissipativeop1}
There exists positive $(\delta_{k},\varepsilon_k)$ depending on mass and temperature such that $\mathcal{B}_{\delta,\varepsilon}$ is dissipative in $L^{1}_{k}(\mathbb{R}^{d})$ for any $\delta\in(0,\delta_{k})$ and $\varepsilon\in(0,\varepsilon_k)$.  Furthermore, for any $c^{-}_o<c_o$, given in \eqref{spectral-gap}, it is possible to find $(\delta,\varepsilon)$ in such ranges and such that
\begin{equation}\label{dissipativeopnorm}
\|e^{t \mathcal{B}_{\delta,\varepsilon}} \|_{L^{1}_{k}(\mathbb{R}^{d})}\leq 2^{\frac{k}{2}-1} e^{- c^{-}_o\,t}.
\end{equation}
\end{corollary}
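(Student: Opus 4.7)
The plan is to deduce both conclusions directly from the pointwise bound of Lemma \ref{dissipativeop} by absorbing the two ``bad'' terms into the dissipative one through a sufficiently small choice of $(\delta,\varepsilon)$. First, I note the elementary equivalence, valid for $k\geq 2$ via Jensen's inequality,
\begin{equation*}
2^{1-k/2}\,\langle v\rangle^{k} \leq 1+|v|^{k} \leq 2\,\langle v\rangle^{k},
\end{equation*}
which makes the norm $\|\,(1+|v|^{k})\,\cdot\,\|_{L^{1}(\mathbb{R}^{d})}$ equivalent to $\|\cdot\|_{L^{1}_{k}(\mathbb{R}^{d})}$ with explicit constants, and which allows one to compare the gain term $\|h\langle v\rangle^{k+\gamma}\|_{L^{1}}$ with the dissipation term $\|(1+|v|^{k})\langle v\rangle^{\gamma}h\|_{L^{1}}$.

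Next, I would combine the three terms on the right side of Lemma \ref{dissipativeop} into a single multiple of $\|(1+|v|^{k})\langle v\rangle^{\gamma}h\|_{L^{1}(\mathbb{R}^{d})}$, obtaining
\begin{equation*}
\int_{\mathbb{R}^{d}}\!\mathrm{sgn}(h)(1+|v|^{k})\,\mathcal{B}_{\delta,\varepsilon}(h)\,\mathrm{d}v \leq \bigl[2^{k/2-1}C_{k}\bigl(\mathfrak{m}(b^{\varepsilon}_{2})+\delta^{\gamma}\bigr) - c_{o}\bigr]\,\|(1+|v|^{k})\langle v\rangle^{\gamma}h\|_{L^{1}(\mathbb{R}^{d})}.
\end{equation*}
Because $\mathfrak{m}(b^{\varepsilon}_{2})\to 0$ as $\varepsilon\to 0$ (Lemma \ref{APq+}) and $\delta^{\gamma}\to 0$ as $\delta\to 0$, given any $c_{o}^{-}<c_{o}$ I may choose $\varepsilon_{k},\delta_{k}>0$ small enough so that, for $\varepsilon\in(0,\varepsilon_{k})$, $\delta\in(0,\delta_{k})$, the bracket above is bounded by $-c_{o}^{-}$. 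Since $\langle v\rangle^{\gamma}\geq 1$, this yields in particular $\int \mathrm{sgn}(h)(1+|v|^{k})\mathcal{B}_{\delta,\varepsilon}(h)\,\mathrm{d}v \leq -c_{o}^{-}\,\|(1+|v|^{k})h\|_{L^{1}(\mathbb{R}^{d})}\leq 0$, proving the dissipativity.

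For the semigroup bound, I would apply this inequality to $h(t):=e^{t\mathcal{B}_{\delta,\varepsilon}}h_{0}$ after identifying, by the standard Kato-type duality/subdifferential argument in $L^{1}$, the derivative of the equivalent norm $t\mapsto \int(1+|v|^{k})|h(t)|\,\mathrm{d}v$ with the dual pairing $\int \mathrm{sgn}(h)(1+|v|^{k})\mathcal{B}_{\delta,\varepsilon}(h)\,\mathrm{d}v$. A direct application of Gr\"onwall's inequality then produces
\begin{equation*}
\|(1+|v|^{k})h(t)\|_{L^{1}(\mathbb{R}^{d})} \leq e^{-c_{o}^{-}t}\,\|(1+|v|^{k})h_{0}\|_{L^{1}(\mathbb{R}^{d})},
\end{equation*}
and translating back to the genuine $L^{1}_{k}$ norm via the equivalence constants $2^{1-k/2}$ and $2$ gives \eqref{dissipativeopnorm}, up to an absorption of the multiplicative constants into the stated prefactor.

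The only genuinely delicate point is the Kato-type differentiation step, since $\mathrm{sgn}(h)$ is not a classical derivative. I would handle it by regularizing the norm (e.g.\ replacing $|h|$ by $\sqrt{h^{2}+\eta^{2}}$), obtaining a smooth differential inequality, and then passing to the limit $\eta\to 0$ using dominated convergence; alternatively, one invokes the fact that $\mathcal{B}_{\delta,\varepsilon}$ generates a strongly continuous positivity-preserving semigroup on $L^{1}_{k}(\mathbb{R}^{d})$, which already ensures the Gr\"onwall step in integral form. Everything else is bookkeeping.
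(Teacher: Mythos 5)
Your proposal is correct and follows essentially the same route as the paper: absorb the two remainder terms of Lemma \ref{dissipativeop} into the dissipative one by choosing $(\delta,\varepsilon)$ small (using $\langle v\rangle^{k}\leq 2^{\frac{k}{2}-1}(1+|v|^{k})$ and $\langle v\rangle^{\gamma}\geq 1$), then run Gr\"onwall on $t\mapsto\|(1+|v|^{k})h(t)\|_{L^{1}}$ — the paper simply writes this differential inequality without the Kato-type justification you spell out. The only small bookkeeping point is that to recover the exact prefactor $2^{\frac{k}{2}-1}$ you should use the one-sided bound $1+|v|^{k}\leq\langle v\rangle^{k}$ (valid for $k\geq 2$) when passing back to the $L^{1}_{k}$ norm of $h_{0}$, as the paper does, rather than the two-sided equivalence with the extra factor $2$.
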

\begin{proof}
Consider the problem $h' = \mathcal{B}_{\delta,\epsilon}(h)$ in $L^{1}_{k}(\mathbb{R}^{d})$.  Using Lemma \ref{dissipativeop} and the elementary inequality $\langle v \rangle^{k}\leq 2^{\frac{k}{2}-1}(1+|v|^{k})$, it follows that
\begin{equation*}
\frac{\text{d}}{\text{d}t}\|(1+|v|^{k})h\|_{L^{1}(\mathbb{R}^{d})} + c_o(1-\widetilde{\delta}-\widetilde{\varepsilon})\|(1+|v|^{k})h\|_{L^{1}(\mathbb{R}^{d})}\leq0\,,\qquad 0<\widetilde{\delta}+\widetilde{\varepsilon}<1\,,
\end{equation*}
provided we choose $\delta>0$ and $\varepsilon>0$ such that $C_{k}2^{\frac{k}{2}-1}\delta^{\gamma}\leq c_o\,\widetilde{\delta}$ and $C_{k}2^{\frac{k}{2}-1}\mathfrak{m}(b^{\varepsilon}_{2})\leq c_o\,\widetilde{\varepsilon}$.  Here $\widetilde{\delta}+\widetilde{\varepsilon}$ is allowed to be as small as desired.  As a consequence, choosing $(\widetilde{\delta},\widetilde{\varepsilon})$ such that $c^{-}_{o}= c_o(1 - \widetilde{\delta} - \widetilde{\varepsilon})$ it follows that
\begin{align*}
2^{1-\frac{k}{2}}\|&\langle v \rangle^{k} h\|_{L^{1}(\mathbb{R}^{d})}\leq\|(1+|v|^{k})h\|_{L^{1}(\mathbb{R}^{d})}\\
&\leq \|(1+|v|^{k})h_0\|_{L^{1}(\mathbb{R}^{d})}e^{-c^{-}_o\,t}\leq\|\langle v \rangle^{k} h_0\|_{L^{1}(\mathbb{R}^{d})}e^{-c^{-}_o\,t}\,,\qquad t>0\,.
\end{align*}
This is exactly \eqref{dissipativeopnorm}.
\end{proof}

The following lemma shows that the operator $\mathcal{A}_{\delta,\varepsilon}:L^{1}_{k}(\mathbb{R}^{d})\rightarrow L^{2}(\mathcal{M}^{-1/2}_{f_0})$  is bounded.

\begin{lemma}\label{smoothop}
For any $\delta\in(0,1)$ and $\varepsilon\in(0,1)$ and $k\geq2$, the operator $\mathcal{A}_{\delta,\varepsilon}:L^{1}_{k}(\mathbb{R}^{d})\rightarrow L^{2}(\mathcal{M}^{-1/2}_{f_0})$ is bounded. 
\end{lemma}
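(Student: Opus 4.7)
I would bound each of the three summands composing $\mathcal{A}_{\delta,\varepsilon}(h)$, as given in \eqref{Lo}, separately in $L^{2}(\mathcal{M}_{f_0}^{-1/2};\mathbb{R}^{d})$. The decisive observation is that the truncated potential $\Phi_{1}$ is supported in the annulus $\{\delta \le |u| \le \delta^{-1}\}$; combined with the elementary identities $|v-v'| = |u^{-}| \le |u|/\sqrt{2}$ and $|v-v'_{*}| = |u^{+}| \le |u|$, this support restriction forces $\min\{|v'|,|v'_{*}|\} \ge |v|-\delta^{-1}$ at every point of the integrand. Consequently every Maxwellian factor $\mathcal{M}_{f_0}(v')$ or $\mathcal{M}_{f_0}(v'_{*})$ admits the uniform pointwise upper bound $C\,e^{-\alpha(|v|-\delta^{-1})_{+}^{2}}$, where $\alpha := 1/(2T_{f_0})$ is the inverse temperature of the equilibrium.

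The loss-type contribution is handled at once, since it is pointwise of the form $\mathcal{M}_{f_0}(v)\,\|b_{1}^{\varepsilon}\|_{L^{1}(\mathbb{S}^{d-1})}\,(h\ast\Phi_{1})(v)$: the trivial bound $|(h\ast\Phi_{1})(v)| \le \delta^{-\gamma}\|h\|_{L^{1}}$ together with $\mathcal{M}_{f_0}^{1/2} \in L^{2}(\mathbb{R}^{d})$ yields its $L^{2}(\mathcal{M}_{f_0}^{-1/2})$-norm bounded by $C_{\delta,\varepsilon}\|h\|_{L^{1}}\le C_{\delta,\varepsilon}\|h\|_{L^{1}_{k}}$.

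For the two gain terms I would pull the displayed Gaussian supremum out of the integrand and bound the residual integral by Young's inequality for the gain operator (Theorem \ref{ApT-1}) with exponents $(p,q,r)=(\infty,1,\infty)$ or $(1,\infty,\infty)$, each of which gives a constant multiple of $\|h\|_{L^{1}}$. This produces the pointwise estimate
\begin{equation*}
|Q^{+}_{\Phi_{1},b_{1}^{\varepsilon}}(\mathcal{M}_{f_0},h)(v)| + |Q^{+}_{\Phi_{1},b_{1}^{\varepsilon}}(h,\mathcal{M}_{f_0})(v)| \le C_{\delta,\varepsilon}\,e^{-\alpha(|v|-\delta^{-1})_{+}^{2}}\,\|h\|_{L^{1}}.
\end{equation*}
Multiplying by $\mathcal{M}_{f_0}^{-1/2}(v)\le C\,e^{\alpha|v|^{2}/2}$ and squaring gives the integrable exponent $-2\alpha(|v|-\delta^{-1})^{2}+\alpha|v|^{2} = -\alpha(|v|-2\delta^{-1})^{2}+2\alpha\delta^{-2}$, which after integration in $v$ yields the desired bound $C_{\delta,\varepsilon}\|h\|_{L^{1}}\le C_{\delta,\varepsilon}\|h\|_{L^{1}_{k}}$.

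The delicate step is precisely this completion-of-the-square identity: it relies on having the \emph{full} Gaussian decay $e^{-\alpha(|v|-\delta^{-1})^{2}}$ of the Maxwellian factor inside the integrand, which is exactly what is needed to dominate the factor $e^{\alpha|v|^{2}/2}$ coming from the $\mathcal{M}_{f_0}^{-1/2}$ weight on the left-hand side. This is where the \emph{upper} cut-off $|u|\le \delta^{-1}$ in the support of $\Phi_{1}$ is essential (as opposed to merely having $\Phi_{1}\in L^{\infty}$): it forces both $|v'|$ and $|v'_{*}|$ to grow at least like $|v|$ as $|v|\to\infty$, without which the above calculation breaks down entirely.
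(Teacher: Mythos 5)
Your proof is correct, and for the gain terms it transfers the Gaussian weight by a genuinely different mechanism than the paper. The paper proceeds term by term just as you do (and, like you, bounds the loss contribution of \eqref{Lo} in the form $\mathcal{M}_{f_0}(v)\int |h(v_*)|\Phi_1(|u|)b_1\,d\sigma\,dv_*$, which is what \eqref{smoothope1} actually estimates), but for the two gain terms it uses the pointwise inequality $\mathcal{M}^{-1/2}_{f_0}(v)\le \mathcal{M}^{-3/4}_{f_0}(v')\,\mathcal{M}^{-3/2}_{f_0}(u)$ (and its analogue with $v'_*$), absorbs $\|\mathcal{M}^{-3/2}_{f_0}(u)\Phi_1\|_{L^\infty}$ — finite precisely because of the upper truncation $|u|\le\delta^{-1}$ — and then applies Theorem \ref{ApT-1} in the $L^2\times L^1\to L^2$ (resp. $L^1\times L^2\to L^2$) form, keeping $\mathcal{M}^{1/4}_{f_0}$ inside the gain operator. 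You instead convert the support restriction into the uniform factor $e^{-\alpha(|v|-\delta^{-1})_+^2}$, use the $L^\infty$ endpoints of Theorem \ref{ApT-1} on the residual, and integrate explicitly by completing the square; this buys a stronger, pointwise Gaussian-decaying bound on $\mathcal{A}_{\delta,\varepsilon}h$ from which the $L^2(\mathcal{M}^{-1/2}_{f_0})$ bound is immediate and which makes transparent why the cut-off $|u|\le\delta^{-1}$ is indispensable, while the paper's route avoids the explicit Gaussian bookkeeping and gets the $L^2$ bound in a single application of Young's inequality. Two small points you should make explicit, since both arguments secretly rely on them: (i) for $Q^+_{\Phi_1,b^{\varepsilon}_1}(h,\mathcal{M}_{f_0})$ the Young constant in the $(p,q,r)=(1,\infty,\infty)$ case involves $\int_0^1\big(\tfrac{1-s}{2}\big)^{-\frac{d}{2}}(1-s^2)^{\frac{d-3}{2}}b(s)\,ds$, which is finite only because $b^{\varepsilon}_1$ vanishes for $s>\sqrt{1-\varepsilon^2}$ — this is exactly where the paper remarks that $b_1$ is cut off near the grazing angle, and it is the source of your $\varepsilon$-dependence; (ii) when applying Theorem \ref{ApT-1} to the residual integrals with the constant function $1$ in one slot, you must first replace $\Phi_1$ by its bound $\delta^{-\gamma}$ so as to invoke the theorem with $\gamma=0$, since the weighted statement with $\gamma>0$ would require the infinite quantity $\|1\|_{L^\infty_\gamma}$. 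With these two remarks inserted, your argument is complete and fully equivalent in strength to the paper's.
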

\begin{proof}
Let us estimate each term separately in $\mathcal{A}_{\delta,\varepsilon} = \mathcal{L}^{o}_{\Phi_1,b^{\varepsilon}_{1}}$, recall \eqref{Lo}.  It is direct that the last term is controlled by
\begin{equation}\label{smoothope1}
\|\mathcal{M}^{-1/2}_{f_0}Q^{-}_{\Phi_{1},b^{\varepsilon}_{1}}(\mathcal{M}_{f_0}, h)\|_{L^{2}(\mathbb{R}^{d})}\leq \|b\|_{L^{1}(\mathbb{S}^{d-1})}\|\langle v \rangle^{\gamma}\mathcal{M}^{1/2}_{f_0}\|_{L^{2}(\mathbb{R}^{d})}\|\langle v \rangle^{\gamma}h\|_{L^{1}(\mathbb{R}^{d})}\,.
\end{equation}
For the first term we use the elementary inequality $|x+y|^{2}\leq\tfrac{3}{2}|x|^{2}+3|y|^{2}$ valid for any $x,\,y\in\mathbb{R}^{d}$, and set $x=v'=v-u^{-}$ and $y=u^{-}$ to show that
\begin{equation*}
\mathcal{M}^{-1/2}_{f_0}(v)\leq \mathcal{M}^{-3/4}_{f_0}(v')\mathcal{M}^{-3/2}_{f_0}(u)\,.
\end{equation*}
Thus, we discover that
\begin{equation*}
\mathcal{M}^{-1/2}_{f_0}(v)Q^{+}_{\Phi_1,b^{\varepsilon}_{1}}\big( \mathcal{M}_{f_0}, |h|\big)(v) \leq \|\mathcal{M}^{-3/2}_{f_0}(u)\Phi_{1}\|_{L^{\infty}(\mathbb{R}^{d})}Q^{+}_{\Phi_1,b^{\varepsilon}_{1}}\big(\mathcal{M}^{1/4}_{f_0}, |h|\big)(v)\,.
\end{equation*}
As a conclusion, using Young's inequality for the $Q^{+}$,
\begin{align}\label{smoothope2}
\begin{split}
\|\mathcal{M}^{-1/2}_{f_0}&Q^{+}_{\Phi_1,b^{\varepsilon}_1}\big(\mathcal{M}_{f_0}, |h| \big)\|_{L^{2}(\mathbb{R}^{d})}\\
&\leq C(b^{\varepsilon}_{1})\|\mathcal{M}^{-3/2}_{f_0}(u)\Phi_{1}\|_{L^{\infty}(\mathbb{R}^{d})}\|\mathcal{M}^{1/4}_{f_0}\|_{L^{2}(\mathbb{R}^{d})}\|h\|_{L^{1}(\mathbb{R}^{d})}\,.
\end{split}
\end{align}
In addition, since $b^{\varepsilon}_{1}$ is cut-off near zero angle and (setting $x=v'_{*}=v-u^{+}$ and $y=u^{+}$ above)
\begin{equation*}
\mathcal{M}^{-1/2}_{f_0}(v)\leq \mathcal{M}^{-3/4}_{f_0}(v'_{*})\mathcal{M}^{-3/2}_{f_0}(u)\,,
\end{equation*}
we also have
\begin{align}\label{smoothope3}
\begin{split}
\|\mathcal{M}^{-1/2}_{f_0}&Q^{+}_{\Phi_1,b^{\varepsilon}_1}\big(|h|,\mathcal{M}_{f_0} \big)\|_{L^{2}(\mathbb{R}^{d})}\\
&\leq \widetilde{C}(b^{\varepsilon}_{1})\|\mathcal{M}^{-3/2}_{f_0}(u)\Phi_{1}\|_{L^{\infty}(\mathbb{R}^{d})}\|\mathcal{M}^{1/4}_{f_0}\|_{L^{2}(\mathbb{R}^{d})}\|h\|_{L^{1}(\mathbb{R}^{d})}\,.
\end{split}
\end{align}
The result follows, after gathering \eqref{smoothope1}, \eqref{smoothope2} and \eqref{smoothope3}.
\end{proof}

Next result is on the spectral gap of the linearized operator $\mathcal{L}$ on the enlarged space $L^1_k$, which is a direct consequence of  \cite[Theorem 2.1]{GMM} thanks Corollary \ref{dissipativeop1} and Lemma \ref{smoothop}.

\begin{proposition}\label{linearized-spectrum}
Let $b\in L^{1}(\mathbb{S}^{d-1})$ and $\gamma\in(0,1]$ and $k\geq2$.  The linear operator $\mathcal{L}$ defined in $L^{1}_{k}(\mathbb{R}^{d})$ satisfies
\begin{align*}
\text{Spec}(\mathcal{L})&\subset\{z\in\mathbb{C}\big|\mathfrak{R}e(z)\leq -\lambda\}\cup\{0\}\,,\\
\text{Ker}(\mathcal{L})=\text{Span}&\{\mathcal{M}_{f_0},v_{1}\mathcal{M}_{f_0},\cdots,v_{d}\mathcal{M}_{f_0},|v^{2}|\mathcal{M}_{f_0}\}\,,
\end{align*}
for any $\lambda<\lambda_{o}$ where $\lambda_o$ is the spectral gap of the restriction of $\mathcal{L}$ to $L^{2}(\mathcal{M}^{-1/2}_{f_0};\mathbb{R}^{d})$.  Furthermore, it generates a strongly continuous semigroup $e^{t\,\mathcal{L}}$ which satisfies 
\begin{equation*}
\|e^{t\mathcal{L}}h_0 - \pi h_0\|_{L^{1}_{k}(\mathcal{R}^{d})} \leq C_{k}e^{-\lambda\,t}\|h_0 - \pi h_0 \|_{L^{1}_{k}(\mathcal{R}^{d})}\,.
\end{equation*}
Here $\pi$ stands for the projection onto $\text{Ker}(\mathcal{L})$ defined as
\begin{equation*}
\pi g := \sum_{\varphi\in\{1,v_{1},\cdots,v_{d},|v|^{2}\}}\left(\int_{\mathbb{R}^{d}}\,g\,\varphi\,dv\right)\varphi\,\mathcal{M}_{f_0}\,.
\end{equation*}
\end{proposition}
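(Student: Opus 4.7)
The plan is to invoke the abstract spectral enlargement theorem of \cite[Theorem 2.3]{GMM} (equivalently \cite[Theorem 3.1]{CL}), which transports spectral information from a ``small'' reference space $E_1 := L^{2}(\mathcal{M}^{-1/2}_{f_0};\mathbb{R}^{d})$ to the ``large'' space $E := L^{1}_{k}(\mathbb{R}^{d})$. The hypotheses to check are: (i) $\mathcal{L}$ has spectral gap $\lambda_o$ on $E_1$ with kernel spanned by the collision invariants; (ii) a splitting $\mathcal{L}=\mathcal{A}+\mathcal{B}$ such that $\mathcal{B}+a$ is dissipative on $E$ for some $a > \lambda$; (iii) $\mathcal{A}:E\to E_1$ is bounded; and a mild iteration/regularization compatibility between the two.

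Ingredient (i) is known: the self-adjoint non-positive character of $\mathcal{L}$ on $E_1$ together with an explicit spectral gap estimate $\lambda_o>0$ and the kernel identification
\begin{equation*}
\text{Ker}(\mathcal{L})\big|_{E_1}=\text{Span}\{\mathcal{M}_{f_0},\,v_{1}\mathcal{M}_{f_0},\,\dots,\,v_{d}\mathcal{M}_{f_0},\,|v|^{2}\mathcal{M}_{f_0}\}
\end{equation*}
follow from \cite{C, G1, G2} and \cite{Mo1}, the latter requiring only Grad's cut-off \eqref{Grad-cutoff-lower}. For (ii) and (iii) we take the splitting $\mathcal{L}=\mathcal{A}_{\delta,\varepsilon}+\mathcal{B}_{\delta,\varepsilon}$ constructed in \eqref{Decomposition}. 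Corollary \ref{dissipativeop1} provides, for $(\delta,\varepsilon)$ chosen sufficiently small depending on $k$, mass, and temperature, the bound
\begin{equation*}
\|e^{t\mathcal{B}_{\delta,\varepsilon}}h\|_{L^{1}_{k}(\mathbb{R}^{d})}\leq 2^{\frac{k}{2}-1}e^{-c_o^{-}t}\|h\|_{L^{1}_{k}(\mathbb{R}^{d})},
\end{equation*}
with $c_o^{-}$ arbitrarily close to the lower-bound constant $c_o$ of \eqref{spectral-gap}. Lemma \ref{smoothop} gives $\|\mathcal{A}_{\delta,\varepsilon}\|_{L^{1}_{k}\to L^{2}(\mathcal{M}^{-1/2}_{f_0})}<\infty$, so $\mathcal{A}_{\delta,\varepsilon}$ is bounded into the smaller space.

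With these three ingredients, the enlargement argument proceeds from the iterated Duhamel identity
\begin{equation*}
e^{t\mathcal{L}} = e^{t\mathcal{B}_{\delta,\varepsilon}} + \sum_{n=1}^{N}\big(e^{t\mathcal{B}_{\delta,\varepsilon}}\ast \mathcal{A}_{\delta,\varepsilon}\big)^{(*n)}\ast e^{t\mathcal{B}_{\delta,\varepsilon}} + \big(e^{t\mathcal{L}}\ast \mathcal{A}_{\delta,\varepsilon}\big)^{*(N+1)}\ast e^{t\mathcal{B}_{\delta,\varepsilon}},
\end{equation*}
where $\ast$ denotes convolution in time. The first $N+1$ terms map $L^{1}_{k}\to L^{1}_{k}$ with exponential rate $c_o^{-}$ by hypothesis (ii). For the remainder term, after at least one convolution with $\mathcal{A}_{\delta,\varepsilon}$ the orbit lies in $E_1$, where the reference spectral information (i) furnishes exponential decay at rate $\lambda_o$ off the kernel. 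Fixing parameters so that $c_o^- > \lambda_o$ and taking any $\lambda < \lambda_o$, one reads off both the spectral localization $\text{Spec}(\mathcal{L}|_{E})\subset\{\mathfrak{R}e(z)\leq -\lambda\}\cup\{0\}$ and the exponential convergence of $e^{t\mathcal{L}}(h_0-\pi h_0)$ in $L^{1}_{k}$. The kernel identification on $L^{1}_{k}$ is then automatic: any $h\in L^{1}_{k}$ with $\mathcal{L}h=0$ lies in $E_1$ by elliptic-type regularization (or directly by the formula for $\mathcal{L}h$ combined with Gaussian decay of $\mathcal{M}_{f_0}$), and hence coincides with a combination of the collision invariants.

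The main obstacle is verifying the precise compatibility between the splitting and the abstract hypotheses of \cite[Theorem 2.3]{GMM}: one needs to guarantee that after a fixed finite number of Duhamel iterations the composed operator sends $L^{1}_{k}$ continuously into $E_1$ (this is essentially immediate here since $\mathcal{A}_{\delta,\varepsilon}$ already lands in $E_1$ in one step), and that $c_o^{-}>\lambda_o$ is attainable in the range of admissible parameters — both are ensured by Corollary \ref{dissipativeop1} together with the freedom to take $k\geq 2$ large and $(\delta,\varepsilon)$ small. The projection formula for $\pi$ is then read off from orthogonality of the collision invariants with respect to the weight $\mathcal{M}_{f_0}$.
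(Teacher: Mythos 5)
Your proposal is correct and follows essentially the same route as the paper: the paper's proof is literally a one-line application of the enlargement theorem of \cite{GMM} with $E=L^{2}(\mathcal{M}^{-1/2}_{f_0};\mathbb{R}^{d})\subset L^{1}_{k}(\mathbb{R}^{d})=\mathcal{E}$, the hypotheses being supplied exactly as you say by the decomposition \eqref{Decomposition}, Corollary \ref{dissipativeop1} (dissipativity of $\mathcal{B}_{\delta,\varepsilon}$) and Lemma \ref{smoothop} (boundedness of $\mathcal{A}_{\delta,\varepsilon}$ into the small space), together with the classical spectral gap results of \cite{C,G1,G2,Mo1}. The iterated Duhamel expansion you sketch is simply the internal mechanism of the cited theorem, so no further verification is needed beyond what the paper records.
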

\begin{proof}
This is a direct consequence of \cite[Theorem 2.1]{GMM} after taking the spaces
\begin{equation*}
E:=L^{2}(\mathcal{M}^{-1/2}_{f_0};\mathbb{R}^{d})\subset L^{1}_{k}(\mathbb{R}^{d})=:\mathcal{E}\,.
\end{equation*}
\end{proof}

Now we have all the ingredients for proving  the exponential convergence of the full homogeneous Boltzmann equation \eqref{HBE-pf}. 
\begin{theorem}[Exponential relaxation]\label{full-exponential}
Let the angular kernel satisfy \eqref{Grad-cutoff-lower} and potential $\gamma\in(0,1]$.  Assume the initial datum $f_0$ satisfies \eqref{Initial-Data}.  Then, for every $\lambda<\lambda_o$, there exists time $t_{k}(f_0)$ and $C_{k}(f_0)$ depending on the initial datum through its mass, temperature and entropy, such that
\begin{equation*}
\|f-\mathcal{M}_{f_0}\|_{L^{1}_{k}(\mathbb{R}^{d})} \leq C_{k}(f_0)e^{-\lambda t}\,,\qquad t\geq t_{k}(f_0)\,. 
\end{equation*}
Here $\lambda_o>0$ is the spectral gap of $\mathcal{L}$ in $L^{2}(\mathcal{M}^{-1/2}_{f_0};\mathbb{R}^{d})$.
\end{theorem}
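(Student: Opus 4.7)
The plan is to execute a two-phase argument. In the first phase I drive $f$ into an arbitrarily small $L^{1}_{k+2\gamma}$-neighborhood of $\mathcal{M}_{f_0}$ by entropic methods; in the second phase, once the perturbation is small, I linearize around the equilibrium and use the spectral gap of $\mathcal{L}$ on $L^{1}_{k}(\mathbb{R}^{d})$ obtained via the enlargement decomposition of Proposition \ref{linearized-spectrum}.

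For the first phase, I would start from Csisz\'ar--Kullback--Pinsker, which gives $\|f-\mathcal{M}_{f_0}\|_{L^{1}(\mathbb{R}^{d})}\leq\sqrt{2\mathcal{H}(f|\mathcal{M}_{f_0})}$. Proposition \ref{transitory-relaxation} then yields a (slow) quantitative rate, e.g. $\mathcal{H}(f|\mathcal{M}_{f_0})\lesssim (\ln(e+t))^{-1/(1+\varepsilon)}$, depending only on mass, temperature, and initial entropy. Interpolating this $L^{1}$ decay against the uniformly propagated (and instantaneously created) moments, I upgrade to
\begin{equation*}
\|f-\mathcal{M}_{f_0}\|_{L^{1}_{k+2\gamma}(\mathbb{R}^{d})}\leq C_{k}(f_0)(\ln(e+t))^{-1/(6(1+\varepsilon))}\xrightarrow[t\to\infty]{}0,
\end{equation*}
so that for any prescribed $\eta>0$ there is a time $t_{k}(f_{0})$ with $\|f(t_{k}(f_0))-\mathcal{M}_{f_0}\|_{L^{1}_{k+2\gamma}}\leq \eta$. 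The value of $\eta$ will be fixed below to absorb the quadratic nonlinearity.

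For the second phase, I set $\tilde h(t):=f(t+t_{k}(f_0))-\mathcal{M}_{f_0}$, which satisfies the mild formulation
\begin{equation*}
\tilde h(t)=e^{t\mathcal{L}}\tilde h_{0}+\int_{0}^{t}e^{(t-s)\mathcal{L}}Q(\tilde h,\tilde h)(s)\,\mathrm{d}s.
\end{equation*}
Conservation of mass, momentum, and energy along the Boltzmann flow gives $\pi\tilde h_{0}=0$ and $\pi Q(\tilde h,\tilde h)(s)=0$, so Proposition \ref{linearized-spectrum} provides $\|e^{t\mathcal{L}}\tilde h_{0}\|_{L^{1}_{k}}\leq C e^{-\lambda t}\|\tilde h_{0}\|_{L^{1}_{k}}$ for any $\lambda<\lambda_{o}$, and likewise for the source term. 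The $L^{1}_{k+\gamma}$-boundedness of $Q$ together with the interpolation $\|\tilde h\|_{L^{1}_{k+\gamma}}^{2}\leq\|\tilde h\|_{L^{1}_{k}}\|\tilde h\|_{L^{1}_{k+2\gamma}}$, combined with the smallness $\|\tilde h(s)\|_{L^{1}_{k+2\gamma}}\leq \eta$ (which is propagated uniformly in $s$ via the first-phase estimates and moment propagation), yields $\|Q(\tilde h,\tilde h)(s)\|_{L^{1}_{k}}\leq \eta\,\|\tilde h(s)\|_{L^{1}_{k}}$. Setting $X(t):=e^{\lambda t}\|\tilde h(t)\|_{L^{1}_{k}}$ and applying Gronwall delivers $X(t)\leq X_{0}e^{C\eta t}$, and choosing $\eta<\lambda_{o}-\lambda$ at the outset gives the claimed decay $\|\tilde h(t)\|_{L^{1}_{k}}\lesssim e^{-\lambda t}$.

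The main obstacle I expect is not the Gronwall closure, which is routine once the pieces are in place, but rather guaranteeing that the smallness obtained from the entropy dissipation is measured in a norm $L^{1}_{k+2\gamma}$ strong enough to absorb the quadratic loss $\gamma$ in the collision operator, while simultaneously securing the moment bounds needed both to justify the interpolation step and to invoke Proposition \ref{linearized-spectrum}; this is precisely what Propositions \ref{entropic-moments} and \ref{transitory-relaxation} are designed to provide under the mere hypotheses \eqref{Grad-cutoff-lower}--\eqref{Initial-Data}, and the argument thus avoids any appeal to a decomposition theorem or to $b\in L^{2}(\mathbb{S}^{d-1})$.
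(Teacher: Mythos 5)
Your proposal is correct and follows essentially the same route as the paper: Csisz\'ar--Kullback--Pinsker plus Proposition \ref{transitory-relaxation} and moment interpolation to reach smallness in $L^{1}_{k+2\gamma}$ at some time $t_{k}(f_0)$, then the Duhamel formulation with $\pi\tilde h_0=\pi Q(\tilde h,\tilde h)=0$, the enlarged spectral gap of Proposition \ref{linearized-spectrum}, the bilinear estimate $\|Q(\tilde h,\tilde h)\|_{L^{1}_{k}}\leq\|\tilde h\|_{L^{1}_{k}}\|\tilde h\|_{L^{1}_{k+2\gamma}}$, and Gronwall on $X(t)=e^{\lambda t}\|\tilde h(t)\|_{L^{1}_{k}}$. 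The only differences are cosmetic (your explicit choice $\eta<\lambda_o-\lambda$ versus the paper absorbing the loss into $\lambda<\lambda_o$), so no further comparison is needed.
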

\begin{proof}
 Let $f:=f(t,v)$ be solution of the Boltzmann equation, $\mathcal{M}_{f_0}$ its thermodynamical equilibrium, and set the perturbation $f:=\mathcal{M}_{f_0}+ h$.  The first step is to invoke Csisz\'ar-Kullback-Pinsker inequality to deduce that
\begin{equation*}
\|f-\mathcal{M}_{f_0}\|_{L^{1}(\mathbb{R}^{d})}\leq \sqrt{2\,\mathcal{H}(f | \mathcal{M}_{f_0})}\,.
\end{equation*}
As a consequence, using the estimate on the relative entropy from Proposition \ref{transitory-relaxation} and the property of generation of moments, it follows that for any $k\geq0$
\begin{align*}
\|f-\mathcal{M}_{f_0}\|_{L^{1}_{k}(\mathbb{R}^{d})} 
&\leq \|f-\mathcal{M}_{f_0}\|^{\frac{1}{4}}_{L^{1}_{4k}(\mathbb{R}^{d})}\|f - \mathcal{M}_{f_0}\|^{\frac{3}{4}}_{L^{1}(\mathbb{R}^{d})}\\
&\leq C_{k}(f_0)\|f - \mathcal{M}_{f_0}\|^{\frac{3}{4}}_{L^{1}(\mathbb{R}^{d})}\leq C_{k}(f_0)\big(\ln(e+t)\big)^{-\frac{1}{3}}\,, \qquad t \geq 1\,.
\end{align*}
Here, the constant $C_{k}(f_0)>0$ only depends on the initial datum through its mass, temperature and entropy.  In this way,  for every $\tilde{\varepsilon}\in(0,1)$ there exists time $t_{k}(f_0)$, for example  $t_k(f_0) = e^{\left( \tilde{\varepsilon}/C_k(f_0)\right)^3}$, such that
\begin{equation}\label{expcome1}
\|f-\mathcal{M}_{f_0}\|_{L^{1}_{k}(\mathbb{R}^{d})} \leq \tilde{\varepsilon}\,,\qquad t\geq t_{k}(f_0)\,.
\end{equation} 

Define $\tilde{h}(t):=h(t+t_{k+2\gamma}(f_0))$.  Then, from equation \eqref{HBE-pf}, it follows that such perturbation satisfies  
\begin{equation*}
\tilde{h}(t) = e^{t\mathcal{L}}\tilde{h}_0 + \int^{t}_0 e^{(t-s)\mathcal{L}}Q(\tilde{h},\tilde{h})(s)\text{d}s\,,\qquad t\geq0\,.
\end{equation*}
Since $\pi \tilde{h}_0 = 0$ and $\pi Q(\tilde{h},\tilde{h})(s) = 0$ for every $s\geq0$, it follows from Proposition \ref{linearized-spectrum} that
\begin{align}\label{expcome2}
\begin{split}
\|\tilde{h}(t)\|_{L^{1}_{k}(\mathbb{R}^{d})} &\leq \|e^{t\mathcal{L}}\tilde{h}_0\|_{L^{1}_{k}(\mathbb{R}^{d})} + \int^{t}_0 \|e^{(t-s)\mathcal{L}}Q(\tilde{h},\tilde{h})(s)\|_{L^{1}_{k}(\mathbb{R}^{d})}\text{d}s\\
&\leq e^{-\lambda t}\|\tilde{h}_0\|_{L^{1}_{k}(\mathbb{R}^{d})} +\int^{t}_0 e^{-\lambda(t-s)}\|Q(\tilde{h},\tilde{h})(s)\|_{L^{1}_{k}(\mathbb{R}^{d})}\text{d}s\,.
\end{split}
\end{align}
Furthermore, using \eqref{expcome1}  we have
\begin{equation*}
\|Q(\tilde{h},\tilde{h})(s)\|_{L^{1}_{k}(\mathbb{R}^{d})}\leq \|\tilde{h}(s)\|^{2}_{L^{1}_{k+\gamma}(\mathbb{R}^{d})}\leq \|\tilde{h}(s)\|_{L^{1}_{k}(\mathbb{R}^{d})}\|\tilde{h}(s)\|_{L^{1}_{k+2\gamma}(\mathbb{R}^{d})}\leq \tilde{\varepsilon} \|\tilde{h}(s)\|_{L^{1}_{k}(\mathbb{R}^{d})}\,.
\end{equation*}
As a consequence, setting $X(t):=e^{\lambda t}\|\tilde{h}(t)\|_{L^{1}_{k}(\mathbb{R}^{d})}$, we obtain from \eqref{expcome2}
\begin{equation*}
X(t)\leq X_0 + \tilde{\varepsilon}\int^{t}_0X(s)\text{d}s\,,\qquad t\geq0\,.
\end{equation*}
Using Gronwall's lemma one concludes that $X(t)\leq X_{0}\,e^{\tilde{\varepsilon} t}$, or equivalently,
\begin{equation*}
\|\tilde{h}(t)\|_{L^{1}_{k}(\mathbb{R}^{d})}\leq \|\tilde{h}_0\|_{L^{1}_{k}(\mathbb{R}^{d})}e^{-(\lambda-\tilde{\varepsilon})t}\,,\qquad t\geq0\,. 
\end{equation*}
\end{proof}

Theorem \ref{full-exponential} gives, straightforwardly, the decomposition theorem \cite[Theorem 5.5]{MV} related to the propagation of roughness for the Homogeneous Boltzmann equation.
\begin{corollary}[Decomposition theorem]\label{Decomposition}
Let the angular kernel satisfy \eqref{Grad-cutoff-lower} and potential $\gamma\in(0,1]$.  Assume the initial datum $f_0$ satisfies \eqref{Initial-Data}.  For any $s,k\geq0$ and $t_o>0$ there exist functions $f^{S}\geq0$ and $f^{R}$ such that 
\begin{equation*}
f = f^{S} + f^{R}\,,\qquad t\geq t_o>0\,,
\end{equation*}
satisfying the estimates
\begin{equation*}
\sup_{t\geq t_o}\|f^{S}(t)\|_{H^{s}_{k}(\mathbb{R}^{d})}<C_{s,k}(t_o)\,,\qquad \|f^{R}(t)\|_{L^{1}_{k}(\mathbb{R}^{d})}\leq C_{k}(t_o)e^{-\lambda t}\,,\quad \lambda<\lambda_o\,.
\end{equation*}
The constants depend on mass, energy, and entropy of $f_0$.
\end{corollary}
\begin{proof}
Choose $f^{S}:=\mathcal{M}_{f_0}$ and $f^{R}:= f - \mathcal{M}_{f_0}$, and use the exponential relaxation given in Theorem \ref{full-exponential} for $f^{R}$.
\end{proof}
\begin{remark}
In the context of the homogenous Boltzmann problem, the decomposition theorem loses its value since it is an obvious consequence of Theorem \ref{full-exponential}.  However, the techniques to prove it given in \cite[Theorem 5.5]{MV} continue being useful in other contexts.
\end{remark}

\vspace{.5cm}
\textbf{Acknowledgements.} 
R. Alonso thanks the Oden Institute  for Computational Engineering and Sciences at the University of Texas Austin for its hospitality.  R. Alonso acknowledges the support from Bolsa de Produtividade em Pesquisa CNPq 303325/2019-4 and ONR grant N000140910290.  I. M. Gamba acknowledges support from NSF grant   DMS-1715515 and DMS-2009736.  M.Taskovi\'c gratefully acknowledges support from the NSF grant DMS-2206187.  The three authors were partially funded by NSF-RNMS 1107465 as well as thank and gratefully acknowledged the hospitality and support from the Oden Institute  for Computational Engineering and Sciences and the University of Texas Austin.

\section{Appendix}
\subsection{Young's inequality for the gain collision operator}  The proof of the following theorem can be found in  \cite[Theorem 1]{ACG}.
\begin{theorem}\label{ApT-1}
Let $1\leq p, q, r\leq\infty$ with $1/p+1/q=1+1/r$.  Assume that
\begin{equation*}
B(x,y)=x^{\gamma}\,b(y)\,,\qquad \gamma\geq0, \;\; x\ge 0, \;\; y\in[0,1].
\end{equation*}
Then, for any $k\geq0$
\begin{equation*}
\|Q^{+}(f,g)\|_{L^{r}_{k}}\leq C\|f\|_{L^{p}_{k+\gamma}}\|g\|_{L^{q}_{k+\gamma}}\,,
\end{equation*}
where, whenever finite, the constant $C:=C(b)$ can be taken as
\begin{equation}\label{C-1}
C=K\left(\int^{1}_{0}\Big(\frac{1-s}{2}\Big)^{-\frac{d}{2r'}}\big(1-s^{2}\big)^{\frac{d-3}{2}}b(s)\text{d}s\right)^{\frac{r'}{q'}}\left(\int^{1}_{0}\Big(\frac{1+s}{2}\Big)^{-\frac{d}{2r'}}\big(1-s^{2}\big)^{\frac{d-3}{2}}b(s)\text{d}s\right)^{\frac{r'}{p'}}\,,
\end{equation}
with $K:=2^{k+\gamma+3}\big|\mathbb{S}^{d-2}\big|$.  In the particular cases when $p=1$ or $q=1$, the constant is interpreted as
\begin{align}
\begin{split}
\label{C-2}
C&=K\int^{1}_{0}\Big(\frac{1-s}{2}\Big)^{-\frac{d}{2q'}}\big(1-s^{2}\big)^{\frac{d-3}{2}}b(s)\text{d}s\,,\quad \text{when }\,p=1\\
C&=K\int^{1}_{0}\Big(\frac{1+s}{2}\Big)^{-\frac{d}{2p'}}\big(1-s^{2}\big)^{\frac{d-3}{2}}b(s)\text{d}s\,,\quad \text{when }\,q=1\,.
\end{split}
\end{align}
\end{theorem}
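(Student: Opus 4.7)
The plan is to adapt the classical three-factor H\"older proof of Young's convolution inequality to the spherical integral defining $Q^{+}$.

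First, I would reduce to a weightless Maxwellian kernel. The elementary bounds $|u|^{\gamma}\leq 2^{\gamma/2}(\langle v\rangle^{\gamma}+\langle v_{*}\rangle^{\gamma})$ and $\langle v\rangle^{k}\leq 2^{k/2}(\langle v'\rangle^{k}+\langle v'_{*}\rangle^{k})$ absorb the potential factor $|u|^{\gamma}$ and the polynomial weight $\langle v\rangle^{k}$ into $f$ and $g$, so it suffices to establish
\begin{equation*}
\|Q^{+}_{0,b}(F,G)\|_{L^{r}(\mathbb{R}^{d})}\leq C(b)\,\|F\|_{L^{p}(\mathbb{R}^{d})}\|G\|_{L^{q}(\mathbb{R}^{d})}
\end{equation*}
with $C(b)$ of the claimed form.

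The core algebraic identity is $\tfrac{1}{r}+\tfrac{1}{p'}+\tfrac{1}{q'}=1$, equivalent to the Young condition $\tfrac{1}{p}+\tfrac{1}{q}=1+\tfrac{1}{r}$. Working pointwise in $v$, I would split the integrand of
\begin{equation*}
Q^{+}_{0,b}(F,G)(v)=\int_{\mathbb{R}^{d}\times\mathbb{S}^{d-1}}F(v')G(v'_{*})\,b(\hat u\cdot w)\,dw\,dv_{*}
\end{equation*}
in the style of the classical convolution proof,
\begin{equation*}
F(v')G(v'_{*})b(\hat u\cdot w)=\bigl[|F(v')|^{p}|G(v'_{*})|^{q}b\bigr]^{1/r}\cdot|F(v')|^{1-p/r}b^{\frac{1}{p'}-\frac{1}{r}}\cdot|G(v'_{*})|^{1-q/r}b^{\frac{1}{q'}-\frac{1}{r}},
\end{equation*}
and apply the three-factor H\"older inequality in $(v_{*},w)$ with exponents $\bigl(r,\,\tfrac{pr}{r-p},\,\tfrac{qr}{r-q}\bigr)$, whose reciprocals sum to one. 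The two partial changes of variables $v_{*}\mapsto v'$ and $v_{*}\mapsto v'_{*}$ at fixed $(v,w)$, with Jacobians proportional to $\bigl(\tfrac{1-\hat u\cdot w}{2}\bigr)^{d}$ and $\bigl(\tfrac{1+\hat u\cdot w}{2}\bigr)^{d}$ respectively, convert the second and third factors into $\|F\|_{L^{p}}$ and $\|G\|_{L^{q}}$ multiplied by weighted $w$-integrals of $b$ of precisely the form that appears in \eqref{C-1}. Raising the resulting pointwise bound to the $r$-th power, integrating in $v$, and applying Fubini on the dominant $\bigl[|F|^{p}|G|^{q}b\bigr]^{1/r}$ factor to collapse it to $\|F\|_{L^{p}}^{p}\|G\|_{L^{q}}^{q}$ completes the argument.

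The main obstacle is the precise bookkeeping of exponents and Jacobian weights required to recover the constant \eqref{C-1} on the nose, together with ensuring that the weighted integrals of $b$ converge (for which the $(d-3)/2$ exponents on $(1-s^{2})$ arise from unfolding the spherical measure in the direction of $\hat u$). In the endpoint cases $p=1$ or $q=1$ one of the H\"older exponents $\tfrac{pr}{r-p}$ or $\tfrac{qr}{r-q}$ becomes infinite, so only one of the two changes of variables is needed; this is reflected in the simpler constant \eqref{C-2}, which involves only a single integral of $b$.
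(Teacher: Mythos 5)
The paper does not actually reproduce the proof of this theorem: it simply refers the reader to \cite[Theorem 1]{ACG}. So strictly speaking there is no internal proof to compare against, and what matters is whether your sketch, if fleshed out, would reproduce the claimed constant. In broad outline your plan is in the right spirit: reducing the weighted hard-potential case to the Maxwellian case, exploiting the algebraic relation $1/r+1/p'+1/q'=1$, using the pre-post collisional change of variables for the dominant factor, and using partial changes of variables together with the polar parameterization of $\mathbb{S}^{d-1}$ (whence the factors $(1-s^2)^{(d-3)/2}$ and $|\mathbb{S}^{d-2}|$) for the remaining two. This is indeed the skeleton of the argument in \cite{ACG}.

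However, two concrete pieces of your sketch are wrong and not merely unfinished bookkeeping. First, the three-factor split you display does not reconstruct the integrand: the exponents of $b$ in the product
\begin{equation*}
\bigl[\,|F(v')|^{p}|G(v'_{*})|^{q}b\,\bigr]^{1/r}\cdot|F(v')|^{1-p/r}\,b^{\frac{1}{p'}-\frac{1}{r}}\cdot|G(v'_{*})|^{1-q/r}\,b^{\frac{1}{q'}-\frac{1}{r}}
\end{equation*}
sum to $\tfrac{1}{r}+\bigl(\tfrac{1}{p'}-\tfrac{1}{r}\bigr)+\bigl(\tfrac{1}{q'}-\tfrac{1}{r}\bigr)=1-\tfrac{2}{r}$, not $1$, since $\tfrac{1}{p'}+\tfrac{1}{q'}=1-\tfrac{1}{r}$; so as written the identity fails for any $r<\infty$. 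Second, the Jacobians of the partial maps $v_{*}\mapsto v'$ and $v_{*}\mapsto v'_{*}$ at fixed $(v,w)$ are $2^{-d}\,(1-\hat u\cdot w)$ and $2^{-d}\,(1+\hat u\cdot w)$ respectively — the first power of $(1\mp \hat u\cdot w)$, not the $d$-th power you claim. The exponent $d/(2r')$ appearing in \eqref{C-1} does not come directly from this Jacobian; it emerges only after passing to polar coordinates in $u=v-v_{*}$, using $|u^{\mp}|=|u|\sqrt{(1\mp s)/2}$ to rescale the radial variable (which is where the power $d/2$ first appears), and then reparameterizing the spherical integral in $\sigma$. Because $\hat u$ itself depends on $v_{*}$, the partial change of variables you invoke is also not as clean as in the classical convolution proof. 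You are right that the precise bookkeeping is the main obstacle, but the two misstatements above would prevent the sketch from closing; consult \cite[Theorem 1]{ACG} for the correct decomposition.
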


\begin{remark}\label{remark on C}
We emphasize that this theorem is applied for those kernels $b$ for which the constant $C$ is finite. In particular, a special care is needed around singular points $s=1$ and $s=-1$. Since in this paper we work with kernels $b$ that are supported on $[0,1]$, the only singular point  is $s=1$, which corresponds to grazing collisions $\theta =0$ in the $\theta$-parametrization \eqref{DAK}. This is the main reason why the kernel $b$ is split into two parts in \eqref{DAK} - close to this singularity point ($b^{\varepsilon}_{2}$)  and away from it ($b^{\varepsilon}_{1}$). The singular part $b^{\varepsilon}_{2}$ will require estimates with $q=1$ in Theorem \ref{ApT-1}, see Lemma \ref{APq+}. 
\end{remark}

\subsection{Fractional differentiation lemmas}
In this appendix some identities and estimates are presented when operating with fractional differentiation of products of functions.  They will be handy when proving commutator formulas for the collision operator and dealing with exponential or polynomial weights in the Sobolev estimates.      
\begin{lemma}\label{ApL-1}
For any $\varepsilon\in(0,1]$ and $a\geq0$ it holds that
\begin{equation*}
\bigg|\frac{x}{|x|^{1+a}} - \frac{y}{|y|^{1+a}}\bigg| \leq 2(1+a)|x-y|^{\varepsilon}\bigg(\frac{1}{|x|^{a+\varepsilon}} + \frac{1}{|y|^{a+\varepsilon}}\bigg)\,,\qquad x,\, y \in \mathbb{R}^{d}\,.
\end{equation*}
\begin{proof}
Since the role of the vectors $x$ and $y$ is interchangeable, we assume $|y|\geq|x|$ without loss of generality.  Thus, estimating directly the difference yields
\begin{align}\label{ApL-1e1}
\begin{split}
\bigg|\frac{x}{|x|^{1+a}} - \frac{y}{|y|^{1+a}}\bigg| &= \bigg|\frac{(|y|^{1+a} - |x|^{1+a})x}{|x|^{1+a}|y|^{1+a}} + \frac{|x|^{1+a}(x-y)}{|x|^{1+a}|y|^{1+a}}\bigg| \\
&\leq \frac{\big| |y|^{1+a} - |x|^{1+a}\big|}{|x|^{a}|y|^{1+a}} + \frac{|x-y|}{|y|^{1+a}}\,.
\end{split}
\end{align}
Note that
\begin{equation*}
|x-y|=|x-y|^{\varepsilon}|x-y|^{1-\varepsilon}\leq |x-y|^{\varepsilon}\big(|x|^{1-\varepsilon}+|y|^{1-\varepsilon}\big),\qquad \varepsilon\in(0,1]\,.
\end{equation*}
As a consequence, the second term on the right side in \eqref{ApL-1e1} can be readily estimated as 
\begin{equation}\label{ApL-1e2}
\frac{|x-y|}{|y|^{1+a}} \leq |x-y|^{\varepsilon}\frac{|x|^{1-\varepsilon}+|y|^{1-\varepsilon}}{|y|^{1+a}}\leq 2\frac{|x-y|^{\varepsilon}}{|y|^{a+\varepsilon}}\,.
\end{equation}
For the first term on the right side in \eqref{ApL-1e1}, note that
\begin{equation*}
\big| |y|^{1+a} - |x|^{1+a}\big| \leq (1+a)\max\{|x|^{a},|y|^{a}\}|x-y|\leq (1+a)|y|^{a}|x-y|\,.
\end{equation*}
Thus, bearing in mind that $|y|\geq|x|$, we conclude that
\begin{align}\label{ApL-1e3}
\begin{split}
\frac{\big| |y|^{1+a} - |x|^{1+a}\big|}{|x|^{a}|y|^{1+a}} &\leq (1+a)\frac{|x-y|}{|x|^{a}|y|}\leq (1+a)|x-y|^{\varepsilon}\frac{|x|^{1-\varepsilon}+|y|^{1-\varepsilon}}{|x|^{a}|y|}\\
&\leq 2(1+a)\frac{|x-y|^{\varepsilon}}{|x|^{a}|y|^{\varepsilon}}\leq 2(1+a)\frac{|x-y|^{\varepsilon}}{|x|^{a+\varepsilon}}\,.
\end{split}
\end{align}
The lemma follows by gathering \eqref{ApL-1e1}, \eqref{ApL-1e2} and \eqref{ApL-1e3}.
\end{proof}
\end{lemma}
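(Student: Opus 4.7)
The key observation is that the left-hand side is symmetric in $x$ and $y$, so without loss of generality I would assume $|y|\geq|x|$. Then I would use the elementary algebraic splitting
\begin{equation*}
\frac{x}{|x|^{1+a}} - \frac{y}{|y|^{1+a}} = \frac{x-y}{|y|^{1+a}} + \frac{x\bigl(|y|^{1+a}-|x|^{1+a}\bigr)}{|x|^{1+a}|y|^{1+a}}\,,
\end{equation*}
which isolates a ``translation'' piece and a ``radial'' piece. The point of this decomposition is that each piece factors through quantities that are directly comparable to $|x-y|^{\varepsilon}$ once one interpolates the difference $|x-y|$ between the trivial bound $|x-y|^{0}=1$ and the triangle-inequality bound $|x-y|\leq|x|+|y|\leq 2|y|$.

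For the translation piece I would write $|x-y|=|x-y|^{\varepsilon}|x-y|^{1-\varepsilon}$, bound $|x-y|^{1-\varepsilon}\leq 2^{1-\varepsilon}|y|^{1-\varepsilon}\leq 2|y|^{1-\varepsilon}$ using $|y|\geq|x|$, and obtain $|x-y|/|y|^{1+a}\leq 2|x-y|^{\varepsilon}/|y|^{a+\varepsilon}$. For the radial piece I would apply the mean value theorem to $t\mapsto t^{1+a}$ on $[|x|,|y|]$, giving $\bigl||y|^{1+a}-|x|^{1+a}\bigr|\leq(1+a)|y|^{a}(|y|-|x|)\leq(1+a)|y|^{a}|x-y|$; after simplification this reduces the radial piece to controlling $(1+a)|x-y|/(|x|^{a}|y|)$. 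The same interpolation as before turns this into $(1+a)|x-y|^{\varepsilon}(|x|^{1-\varepsilon}+|y|^{1-\varepsilon})/(|x|^{a}|y|)$, and splitting into the two corresponding summands and using $|y|\geq|x|$ in each bounds both by $|x|^{-(a+\varepsilon)}$, yielding $2(1+a)|x-y|^{\varepsilon}/|x|^{a+\varepsilon}$. Adding the two pieces gives the claimed inequality, since the translation piece contribution $2|x-y|^{\varepsilon}/|y|^{a+\varepsilon}$ is already dominated by $2(1+a)|x-y|^{\varepsilon}/|y|^{a+\varepsilon}$.

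There is no real obstacle here beyond careful bookkeeping: the only step requiring thought is ensuring that after invoking $|y|\geq|x|$ the mixed factor $|x|^{a}|y|^{\varepsilon}$ collapses to $|x|^{a+\varepsilon}$ (and similarly for the swapped positions) without losing the constant $2(1+a)$. Since the decomposition is linear in the two pieces and each is handled by the same interpolation trick, the constant can be tracked by hand and matches the stated $2(1+a)$.
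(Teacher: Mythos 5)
Your proposal is correct and takes essentially the same route as the paper's proof: the identical splitting into a translation piece $\tfrac{x-y}{|y|^{1+a}}$ and a radial piece $\tfrac{x(|y|^{1+a}-|x|^{1+a})}{|x|^{1+a}|y|^{1+a}}$ under the assumption $|y|\geq|x|$, the mean value theorem for $t\mapsto t^{1+a}$, and the interpolation $|x-y|=|x-y|^{\varepsilon}|x-y|^{1-\varepsilon}$ in both pieces, with the same constant tracking. The only cosmetic deviation is bounding $|x-y|^{1-\varepsilon}\leq 2|y|^{1-\varepsilon}$ directly from $|x-y|\leq 2|y|$, where the paper instead uses subadditivity $|x-y|^{1-\varepsilon}\leq|x|^{1-\varepsilon}+|y|^{1-\varepsilon}$; this changes nothing in substance.
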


\begin{lemma}\label{ApL0}
Let $0\leq A < B$ and $a\in[0,1)$.  Then,
\begin{align}\label{ApL0e1}
\int^{1}_{0}\big| -A + \theta B \big|^{-a}d\theta &= \frac{ A^{1-a} + (B-A)^{1-a} }{(1-a)B}\,.
\end{align}
In particular, for any $\gamma\in(0,1]$ and $v , v_{*}, x \in\mathbb{R}^{d}$
\begin{align}\label{ApL0e2}
\gamma \int^{1}_{0}\big| v - v_{*} - \theta\, x \big|^{-(1-\gamma)}d\theta \leq (2-\gamma)|v-v_{*}|^{-(1-\gamma)}\,.
\end{align}
\end{lemma}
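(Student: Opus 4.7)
The plan is to treat the two statements separately. For the scalar identity \eqref{ApL0e1}, the function $\theta \mapsto -A + \theta B$ is affine and, since $0 \leq A < B$, changes sign exactly at $\theta_\star := A/B \in [0,1)$. I would split
\begin{equation*}
\int_0^1 |-A + \theta B|^{-a} d\theta = \int_0^{A/B} (A - \theta B)^{-a} d\theta + \int_{A/B}^1 (\theta B - A)^{-a} d\theta,
\end{equation*}
and integrate each term explicitly using the antiderivative $\frac{\pm 1}{B(1-a)}(-A+\theta B)^{1-a}$. The contributions evaluate to $\frac{A^{1-a}}{B(1-a)}$ and $\frac{(B-A)^{1-a}}{B(1-a)}$ respectively; adding gives the stated formula.

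For the first inequality in \eqref{ApL0e2}, set $y := v - v_*$. The key step is a pointwise reverse triangle inequality obtained by writing $y + \theta x$ as a convex combination of the endpoints of the segment $[y, y+x]$:
\begin{equation*}
|y + \theta x| = |(1-\theta)\,y + \theta\,(y+x)| \geq \bigl| \theta\,|y+x| - (1-\theta)\,|y| \bigr|\,.
\end{equation*}
Since $\gamma \in (0,1]$, we have $\gamma - 1 \leq 0$, and the map $t \mapsto t^{\gamma-1}$ is monotone decreasing on $(0,\infty)$, so raising both sides to the $(\gamma-1)$-th power reverses the inequality. Integrating in $\theta \in [0,1]$ and then applying the scalar identity \eqref{ApL0e1} with $A = |y|$, $B = |y| + |y+x|$, and $a = 1 - \gamma$ produces
\begin{equation*}
\gamma \int_0^1 |y + \theta x|^{\gamma - 1}\, d\theta \leq \frac{|y|^\gamma + |y+x|^\gamma}{|y| + |y+x|}\,,
\end{equation*}
which is exactly the middle term of \eqref{ApL0e2}.

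The final bound by $3|y|^{-(1-\gamma)}$ is elementary. Writing $a := |y|$ and $b := |y+x|$, the desired inequality reduces to $a b^\gamma \leq 2a^{\gamma+1} + 3 a^\gamma b$ (after clearing denominators). If $b \leq a$, monotonicity of $t \mapsto t^\gamma$ yields $ab^\gamma \leq a^{\gamma+1} \leq 2a^{\gamma+1}$; if $b > a$, one notices $a^{1-\gamma} b^{\gamma-1} \leq 1$ since $\gamma - 1 \leq 0$, hence $a b^\gamma \leq 3 a^\gamma b$. The degenerate cases $a = 0$ (with $\gamma < 1$) and $\gamma = 1$ are handled trivially since the right-hand side is either infinite or the middle quantity equals $1$.

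The proof contains no real obstacle: the only subtle point is recognizing that, because $\gamma - 1 \leq 0$, the lower bound on $|y + \theta x|$ from the reverse triangle inequality becomes an \emph{upper} bound after raising to $\gamma - 1$, thereby allowing the reduction to the explicit one-dimensional identity \eqref{ApL0e1}.
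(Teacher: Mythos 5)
Your proof is correct and follows essentially the same route as the paper. The paper phrases the dimension reduction geometrically (the integrand $|\cdot|^{-(1-\gamma)}$ is radially decreasing, so the integral along the segment $\ell$ from $y$ to $y+x$ is dominated by the integral along the segment $\ell_1$ through the origin with the same endpoint norms), while you make exactly this step explicit via the reverse triangle inequality $|(1-\theta)y + \theta(y+x)| \geq \bigl|\theta|y+x| - (1-\theta)|y|\bigr|$; the two are the same argument in different language, and you additionally spell out the elementary final bound that the paper leaves implicit.
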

\begin{proof}
Estimate \eqref{ApL0e1} is clear.  Indeed, break the integral where the integrand is nonpositive and nonnegative, namely, in the intervals $(0,\tfrac{A}{B})$ and $(\tfrac{A}{B},1)$.  Then, perform the integration.\\

Regarding estimate \eqref{ApL0e2}, since $-(1-\gamma)\leq0$, triangle inequality implies
\begin{align*}
\int^{1}_{0}\big| v - v_{*} - \theta\, x \big|^{-(1-\gamma)}\text{d}\theta 
&= \int^{1}_{0}\big| (1-\theta)(v - v_{*}) + \theta\, (v-v_*- x) \big|^{-(1-\gamma)}\text{d}\theta\\
& \leq  \int^{1}_{0}\big| \theta\, |v-v_*- x|  - (1-\theta) |v - v_{*}| \big|^{-(1-\gamma)}\text{d}\theta\\
& =   \int^{1}_{0}\big| - |v-v_*|  + \theta|\big(|v - v_{*}-x| + |v - v_{*}| \big) \big|^{-(1-\gamma)}\text{d}\theta.
\end{align*}
Using \eqref{ApL0e1} with $A=|v-v_*|$ and $B=| v - v_* - x | + | v - v_* |$, it follows that
\begin{align*}
\gamma \int^{1}_{0}\big| v - v_{*} - &\theta\, x \big|^{-(1-\gamma)}d\theta\leq \frac{|v-v_{*}|^{\gamma} + |v-v_{*}-x|^{\gamma} } { |v-v_{*}| + |v-v_{*}-x| }\\
&\leq \bigg(\sup_{X\geq0}\tfrac{1+X^{\gamma}}{1+X}\bigg)\times |v-v_{*}|^{-(1-\gamma)}\leq (2-\gamma)|v-v_{*}|^{-(1-\gamma)}\,.
\end{align*}
This proves estimate \eqref{ApL0e2}.
\end{proof}

\begin{lemma}\label{ApL1}
Let $d\geq 2$, $\gamma\geq0$, $s\in(0,1]$ and $v_{*}\in\mathbb{R}^{d}$.  Then, for any suitable function $f$ it follows that
\begin{equation*}
\big(1+(-\Delta)\big)^{\frac{s}{2}}(f\, \tau_{v_*}|\cdot|^{\gamma}) = \big(1+(-\Delta)\big)^{\frac{s}{2}}f \times \tau_{v_*}|\cdot|^{\gamma} + \mathcal{R}_{v_*}(f)\,,
\end{equation*}
where $\tau$ is the translation operator.  The remainder term is given by
\begin{equation*}
\mathcal{R}_{v_*}(f)(v) := s\int_{\mathbb{R}^{d}}\Bigg(\int^{1}_{0}\nabla |\cdot|^{\gamma} (v-v_{*} - \theta\, x)d\theta\Bigg)\cdot\nabla \varphi(x) f(v-x) \,dx\,,
\end{equation*}
where $\varphi := \mathcal{F}^{-1}\big\{ \langle \cdot \rangle^{s-2} \big\}$  is the Bessel kernel of order $2-s$.
\end{lemma}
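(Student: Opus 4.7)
The plan is to verify the identity on the Fourier side. Set $g := \tau_{v_*}|\cdot|^\gamma$, so $g(v) = |v-v_*|^\gamma$ and $\nabla g(v+\theta x) = \nabla|\cdot|^\gamma(v-v_*+\theta x)$. By definition, $\varphi = \mathcal{F}^{-1}\{\langle\cdot\rangle^{s-2}\}$ is the convolution kernel of $L^{s/2-1} := (1+(-\Delta))^{(s-2)/2}$, so that $L^{s/2} := (1+(-\Delta))^{s/2}$ is the Fourier multiplier by $\langle\xi\rangle^s$. Applying the Fourier convolution formula to the proposed remainder gives
\[
\mathcal{F}\bigl[L^{s/2}(fg) - (L^{s/2}f)g\bigr](\xi) \;=\; c_d\!\int\bigl[\langle\xi\rangle^s - \langle\eta\rangle^s\bigr]\hat f(\eta)\,\hat g(\xi-\eta)\,d\eta,
\]
where $c_d$ is a dimensional constant fixed by the Fourier convention.

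The key algebraic step, which simultaneously produces the factor $s$ and the integral over $\theta$ appearing in $\mathcal{R}_{v_*}(f)$, is the fundamental theorem of calculus along the segment $\zeta(\theta) := (1-\theta)\eta + \theta\xi$ from $\eta$ to $\xi$:
\[
\langle\xi\rangle^s - \langle\eta\rangle^s \;=\; s\,(\xi-\eta)\cdot\int_0^1\langle\zeta(\theta)\rangle^{s-2}\zeta(\theta)\,d\theta.
\]
After the substitution $\mu := \xi - \eta$ and the swap $\theta \leftrightarrow 1-\theta$, this recasts the Fourier side of the remainder as
\[
\mathcal{F}[\mathcal{R}](\xi) \;=\; c_d\,s\int_0^1\!\int \mu\cdot\langle\xi-\theta\mu\rangle^{s-2}(\xi-\theta\mu)\,\hat f(\xi-\mu)\,\hat g(\mu)\,d\mu\,d\theta.
\]

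To pass to physical space, three Fourier identities do the work: $\mathcal{F}\{\nabla\varphi\}(\xi) = i\xi\langle\xi\rangle^{s-2}$, the modulation law $\mathcal{F}^{-1}\{\hat h(\xi-a)\}(v) = e^{ia\cdot v}h(v)$, and $\mathcal{F}^{-1}\{\mu\hat g(\mu)\}(w) = -i\nabla g(w)$. The first two identify $\langle\xi-\theta\mu\rangle^{s-2}(\xi-\theta\mu)\hat f(\xi-\mu)$ as the Fourier transform (up to a factor of $-i$) of the convolution of $e^{i\theta\mu\cdot(\cdot)}\nabla\varphi$ with $e^{i\mu\cdot(\cdot)}f$. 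Carrying out this convolution, integrating against $\mu\hat g(\mu)$ via the third identity, and collecting the phase factors $e^{i\mu[\theta(v-y)+y]} = e^{i\mu[\theta v+(1-\theta)y]}$ yields an evaluation of $\nabla g$ at the convex combination $\theta v + (1-\theta)y$. The substitution $x := v - y$, together with evenness of $\varphi$ (hence oddness of $\nabla\varphi$), then converts this into the stated pairing of $\nabla|\cdot|^\gamma(v-v_*+\theta x)$ with $\nabla\varphi(x)\,f(v-x)$.

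The main obstacle is the careful bookkeeping of modulation phases, signs, and changes of variable in this passage to physical space; in particular one must verify that the convex-combination argument $\theta v + (1-\theta)y$ emerging from the $\mu$-integral matches $v-v_*+\theta x$ in the lemma after $x := v-y$ is substituted. The manipulations are rigorous for Schwartz $f$ and extend by density to the ``suitable $f$'' of the hypothesis, since $\nabla\varphi$ lies in $L^1_{\mathrm{loc}}(\mathbb{R}^d)$ for $s \in (0,1)$ (with singularity $|x|^{(1-s)-d}$ at the origin); the case $s=1$ is borderline and must be read in the principal-value sense, consistent with the separate treatment of $s=1$ in Remark \ref{R1Ap} above.
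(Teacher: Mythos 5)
Your proposal is correct and follows essentially the same route as the paper's own proof: pass to the Fourier side, expand $\langle \xi \rangle^{s}-\langle \eta \rangle^{s}$ by the fundamental theorem of calculus along the connecting segment, and return to physical space through the modulation identities for $\nabla\varphi$ and $\nabla g$. One remark on the step you flag as the main obstacle: if the phase bookkeeping is carried out scrupulously, the convex combination $\theta v+(1-\theta)y$ with $x:=v-y$ produces $\nabla|\cdot|^{\gamma}\big(v-v_{*}-\theta x\big)$ paired with $\nabla\varphi(x)f(v-x)$ and an overall minus sign (equivalently, after $x\to-x$ and oddness of $\nabla\varphi$, the argument $v-v_{*}+\theta x$ but paired with $f(v+x)$), so the formula as displayed in the lemma is recovered only up to this harmless relabeling $x\to -x$ --- the same slip implicit in the paper's own inverse-transform step --- which is immaterial for every subsequent use, since only absolute values and the $x$-uniform bound of Lemma \ref{ApL0} enter the estimates.
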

\begin{remark}\label{R1Ap}
In Lemma \ref{ApL1} the case $\gamma\in(0,1]$ and $s=1$ is special.  In this case write
\begin{align}\label{R1Ape1}
\begin{split}
&\mathcal{R}_{v_*}(f)(v) = \nabla |\cdot|^{\gamma} (v-v_{*})\cdot \big(\nabla \varphi\ast f\big)(v)\\
+ \int_{\mathbb{R}^{d}}&\Bigg(\int^{1}_{0}\Big(\nabla |\cdot|^{\gamma} (v-v_{*} - \theta\, x) - \nabla |\cdot|^{\gamma} (v-v_{*})\Big)d\theta\Bigg)\cdot\nabla \varphi(x) f(v-x) \,dx\,.
\end{split}
\end{align}
The first term on the right side is a singular integral.  Indeed, the conditions
\begin{equation*}
\widehat{\nabla\varphi}(\xi)= \frac{i\xi}{\langle \xi \rangle}\in L^{\infty}(\mathbb{R}^{d})\,,\quad \nabla\varphi \in \mathcal{C}^{1}\big(\mathbb{R}^{d}\setminus\{0\}\big)\,,\quad |\Delta\varphi(x)|\leq \frac{C}{|x|^{d+1}}\,,
\end{equation*}
satisfied by the convolution kernel are sufficient to properly define the convolution as a bounded operator in $L^{p}(\mathbb{R}^{d})$ for any $p\in(1,\infty)$, see \cite[Chapter 4]{GF} for details.  We just notice that for the case in question here, the boundedness in $L^{2}(\mathbb{R}^{d})$ is trivial since $\widehat{\nabla\varphi}\in L^{\infty}(\mathbb{R}^{d})$, thus
\begin{equation*}
\|f\ast\nabla\varphi\|_{L^{2}(\mathbb{R}^{d})} = \big\|\,\widehat{\nabla\varphi}\;\widehat{f}\,\big\|_{L^{2}(\mathbb{R}^{d})} \leq \| f \|_{L^{2}(\mathbb{R}^{d})}\,.
\end{equation*}
The second term is properly defined as well.  Using Lemma \ref{ApL-1} with $x = v - v_{*} - \theta x$ and $y = v - v_{*}$ one has
\begin{align*}
\Big|\nabla |\cdot|^{\gamma}& (v - v_{*} - \theta\, x) - \nabla |\cdot|^{\gamma} (v-v_{*})\Big|\\
&\leq 2\gamma(2-\gamma)|\theta x|^{\varepsilon}\bigg( \frac{1}{|v - v_{*} - \theta x|^{1-\gamma + \varepsilon}}
+\frac{1}{|v - v_{*}|^{1-\gamma + \varepsilon}}\bigg)\,,\qquad \varepsilon \in (0,1].
 \end{align*}
Choosing $\varepsilon\in(0,\gamma)$ one also has $1-\gamma+\varepsilon < 1 $, thus, applying \eqref{ApL0e2} it follows that
\begin{align}\label{R1Ape2}
\begin{split}
\bigg|\int_{\mathbb{R}^{d}}\Bigg(\int^{1}_{0}&\Big(\nabla |\cdot|^{\gamma} (v-v_{*} - \theta\, x) - \nabla |\cdot|^{\gamma} (v-v_{*})\Big)d\theta\Bigg)\cdot\nabla \varphi(x) f(v-x) \,dx\bigg|\\
&\leq \frac{C}{|v-v_{*}|^{1-\gamma+\varepsilon}} \Big(\big| |\cdot|^{\varepsilon}\nabla\varphi\big|\ast |f|\Big)(v)\,.
\end{split}
\end{align}
Then, this term is well-defined since $\big| |\cdot|^{\varepsilon}\nabla \varphi\big|\in L^{1}(\mathbb{R}^{d})$.
\end{remark}

\begin{proof}
Let $g(\cdot)=\tau_{v_*}|\cdot|^{\gamma}$ and note that
\begin{align*}
\mathcal{F}\big\{(1+(-\Delta))^{\frac{s}{2}}(f\,g)\big\}(\xi) &= \langle \xi \rangle^{s}\big(\widehat{f}\ast\widehat{g}\big)(\xi)\\
&\hspace{-2.5cm}= \mathcal{F}\big\{(1+(-\Delta))^{\frac{s}{2}}f \times g\big\}(\xi) +\int_{\mathbb{R}^{d}}\big(\langle \xi \rangle^{s} - \langle \xi - x \rangle^{s} \big)\widehat{g}(x)\widehat{f}(\xi-x)\text{d}x\,.
\end{align*}
Now, the identity
\begin{equation*}
\langle \xi \rangle^{s} - \langle \xi - x \rangle^{s} = -\int^{1}_{0}\frac{\text{d}}{\text{d}\theta}\langle \xi - \theta x \rangle^{s} \text{d}\theta = s\int^{1}_{0}\frac{(\xi-\theta x)\cdot x}{\langle \xi - \theta x \rangle^{2-s}}\text{d}\theta\,
\end{equation*}
leads to the definition of the remainder 
\begin{align*}
\int_{\mathbb{R}^{d}}\big(\langle \xi  \rangle^{s} - &\langle \xi  - x \rangle^{s} \big)\widehat{g}(x)\widehat{f}(\xi-x)\text{d}x\\
&=s\int^{1}_{0}\int_{\mathbb{R}^{d}} \widehat{\nabla g}(x)\cdot\frac{-i(\xi-\theta x)}{\langle \xi - \theta x \rangle^{2-s}}\widehat{f}(\xi-x) \text{d}x\text{d}\theta=:\widehat{\mathcal{R}_{v_{*}}(f)}(\xi)\,.
\end{align*}
In addition, using properties of the Fourier transform yields
\begin{align*}
\mathcal{F}^{-1}\Big\{\frac{-i(\cdot - \theta x)\widehat{f}(\cdot - x)}{\langle \cdot - \theta x \rangle^{2-s}}\Big\}(v)=e^{ix\cdot v }\big(f\ast\phi)(v)\,,\quad\quad \phi: = - e^{-i(1-\theta)x\cdot }\,\nabla\mathcal{F}^{-1}\big\{ \langle \cdot \rangle^{s-2} \big\}\,. 
\end{align*}
Thus, plugging in the definition of $\mathcal{R}_{v_*}(f)$ one gets
\begin{align}\label{PR}
\begin{split}
\mathcal{R}_{v_*}(f)(v) = s\int^{1}_{0}&\int_{\mathbb{R}^{d}}\widehat{\nabla g}(x)\cdot (f\ast \phi)(v) e^{ix\cdot v}\text{d}x\text{d}\theta\\
&= s\int_{\mathbb{R}^{d}}\Bigg(\int^{1}_{0}\nabla g (v - \theta\,x)\text{d}\theta\Bigg)\cdot\nabla \varphi(x) f(v-x)\text{d}x\,,
\end{split}
\end{align}
where $\varphi := \mathcal{F}^{-1}\big\{ \langle \cdot \rangle^{s-2} \big\}$.  For the last equality in \eqref{PR}, we simply used the definition of convolution for the term $(f\ast \phi)(v)$.
\end{proof}
\begin{lemma}\label{ApLMw}
Let $d\geq 2$, $s\in(0,1]$, $r\in(0,\frac{1}{2})$, $\alpha\in(0,1]$.  Then, for any suitable function $f$, it follows that
\begin{equation*}
\big(1+(-\Delta)\big)^{\frac{s}{2}}(f\, e^{r \langle \cdot \rangle^{\alpha}}) = \big(1+(-\Delta)\big)^{\frac{s}{2}}f \times  e^{r \langle \cdot \rangle^{\alpha}} + \mathcal{R}(f)\,.
\end{equation*}
The remainder term is controled by
\begin{equation*}
\big\|\mathcal{R}(f)\big\|_{ L^{2}(\mathbb{R}^{d}) } \leq C(r,\varphi)\|e^{r \langle \cdot \rangle^{\alpha}}f\|_{ L^{2}(\mathbb{R}^{d}) }\,.
\end{equation*}
\end{lemma}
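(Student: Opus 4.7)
The plan is to mimic the argument of Lemma \ref{ApL1} with the translated potential $\tau_{v_{*}}|\cdot|^{\gamma}$ replaced by $g(v):=e^{r\langle v\rangle^{\alpha}}$. That derivation uses only the scalar identity
\begin{equation*}
\langle\xi\rangle^{s}-\langle\xi-x\rangle^{s}=-s\int_{0}^{1}\frac{(\xi-\theta x)\cdot x}{\langle\xi-\theta x\rangle^{2-s}}\,\text{d}\theta\,,
\end{equation*}
combined with the Fourier identity $\mathcal{F}^{-1}\{\langle\cdot\rangle^{s-2}\}=\varphi$ and standard convolution/modulation identities; none of these use anything specific about $g$. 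Running it verbatim should give
\begin{equation*}
\mathcal{R}(f)(v)=s\int_{\mathbb{R}^{d}}\bigg(\int_{0}^{1}\nabla g(v+\theta x)\,\text{d}\theta\bigg)\cdot\nabla\varphi(x)\,f(v-x)\,\text{d}x\,.
\end{equation*}

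The pointwise control $|\nabla g(w)|=r\alpha\,\langle w\rangle^{\alpha-2}|w|\,g(w)\leq r\alpha\,g(w)$ (valid since $\langle w\rangle\geq 1$ and $\alpha\leq 1$) is the crucial gain. To distribute the exponential weight between $f$ and $\varphi$, I would use $\langle a+b\rangle\leq\langle a\rangle+\langle b\rangle$ together with the sub-additivity $(x+y)^{\alpha}\leq x^{\alpha}+y^{\alpha}$ for $\alpha\in(0,1]$ to obtain, after writing $v+\theta x=(v-x)+(1+\theta)x$,
\begin{equation*}
e^{r\langle v+\theta x\rangle^{\alpha}}\leq e^{r\langle v-x\rangle^{\alpha}}\,e^{r\langle 2x\rangle^{\alpha}}\,,\qquad \theta\in[0,1]\,.
\end{equation*}
Plugging this bound into the expression for $\mathcal{R}(f)$ and applying Young's convolution inequality yields
\begin{equation*}
\|\mathcal{R}(f)\|_{L^{2}(\mathbb{R}^{d})}\leq s\,r\,\big\|e^{r\langle 2\cdot\rangle^{\alpha}}\nabla\varphi\big\|_{L^{1}(\mathbb{R}^{d})}\,\big\|e^{r\langle\cdot\rangle^{\alpha}}f\big\|_{L^{2}(\mathbb{R}^{d})}\,,
\end{equation*}
which is the desired estimate once the first factor is shown to be finite.

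The main obstacle is precisely the finiteness of $\|e^{r\langle 2\cdot\rangle^{\alpha}}\nabla\varphi\|_{L^{1}(\mathbb{R}^{d})}$. The Bessel kernel $\varphi$ decays like $e^{-|x|/2}$ at infinity (this is the source of the threshold), and for $\alpha\leq 1$ one has $\langle 2x\rangle^{\alpha}\leq 2\langle x\rangle$, so the large-$|x|$ integrand behaves as $e^{(2r-1/2)|x|}|\nabla\varphi(x)|$; integrability at infinity forces the stated sharp restriction $r<\tfrac{1}{4}$. Near the origin, $|\nabla\varphi(x)|\sim|x|^{1-s-d}$ by the expansion \eqref{bpb0}, which is integrable for $s\in(0,1)$. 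The endpoint $s=1$ is the delicate point and I would treat it exactly as in Remark \ref{R1Ap}: split
\begin{equation*}
\nabla g(v+\theta x)=\nabla g(v)+\big(\nabla g(v+\theta x)-\nabla g(v)\big)\,,
\end{equation*}
handle the first summand as a Calder\'on--Zygmund-type term, $\nabla g(v)\cdot(\nabla\varphi\ast f)(v)$, whose $L^{2}$-boundedness follows from $\widehat{\nabla\varphi}=i\xi/\langle\xi\rangle\in L^{\infty}$ together with $|\nabla g|\leq r\alpha\,g$, and absorb the second summand by trading smoothness of $g$ for an extra factor $|x|^{\varepsilon}$ (via a mean-value step analogous to Lemma \ref{ApL-1}), which renders $|x|^{\varepsilon}\nabla\varphi$ integrable and reduces the bound to the previous Young-inequality argument.
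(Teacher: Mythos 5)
Your proposal is correct and follows essentially the same route as the paper's proof: invoke the integral representation \eqref{PR} with $g=e^{r\langle\cdot\rangle^{\alpha}}$, use $|\nabla g|\lesssim g$ together with the sub-additivity of $\langle\cdot\rangle^{\alpha}$ to split the weight as $e^{r\langle v+\theta x\rangle^{\alpha}}\leq e^{2r\langle x\rangle^{\alpha}}e^{r\langle v-x\rangle^{\alpha}}$, apply Young's convolution inequality, and treat $s=1$ via Remark \ref{R1Ap}. You also correctly fill in the two points the paper leaves implicit — the pointwise gradient bound and the reason the exponential decay of $\nabla\varphi$ forces $r<\tfrac14$ (sharp only when $\alpha=1$) — so no gaps.
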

\begin{proof}
Use formula \eqref{PR} with $g:= e^{r \langle \cdot \rangle^{\alpha}}$.  The validity of such formula for this choice of $g$ is shown by standard approximation procedure.  Note that, for $\alpha\in(0,1]$, one has 
$|\nabla g(v)| \leq r\alpha e^{r\langle v \rangle^\alpha} \langle v \rangle^{\alpha -1} \leq C  e^{r\langle v \rangle^\alpha}$ . Therefore,
\begin{equation*}
\int^{1}_{0}\left| \nabla g (v - \theta\,x) \right| \text{d} \theta
\leq C \int_0^1 e^{r \langle v - \theta x\rangle^\alpha}  \text{d} \theta
\leq C e^{r\langle x \rangle^{\alpha}} e^{r \langle v-x\rangle^{\alpha}}\,.
\end{equation*}
Thus, when $s\in(0,1)$
\begin{equation*}
\big|\mathcal{R}(f)(v)\big| \leq s\,C\int_{\mathbb{R}^{d}}\big|e^{r\langle x \rangle^{\alpha}} \nabla \varphi(x)\big| \big|e^{r\langle v-x \rangle^{\alpha}}f(v-x)\big|\text{d}x\,.
\end{equation*}
As a consequence,
\begin{equation*}
\big\|\mathcal{R}(f)\big\|_{ L^{2}(\mathbb{R}^{d}) } \leq C\|e^{r\langle \cdot \rangle^{\alpha}} \nabla \varphi\|_{L^{1}(\mathbb{R}^{d})}\|e^{r \langle \cdot \rangle^{\alpha}}f\|_{ L^{2}(\mathbb{R}^{d}) }\,.
\end{equation*}
Note that $e^{r\langle \cdot \rangle^{\alpha}} \nabla \varphi\in L^{1}(\mathbb{R}^{d})$ for $r\in[0,\frac{1}{2})$.  For the case $s=1$, follow the  argument of Remark \ref{R1Ap}
\end{proof}


\begin{thebibliography}{99}

\bibitem{AMUXI}
Alexandre, R., Morimoto, Y., Ukai, S., Xu, C-J., Yang, T.,  Smoothing effect of weak solutions for the spatially homogeneous Boltzmann equation without angular cutoff, Kyoto Journal of Mathematics, 52(3), 433-463 (2012).

\bibitem{ACGM}
Alonso, R., Canizo, J. A., Gamba, I., Mouhot, C.,  A New Approach to the Creation and Propagation of Exponential Moments in the Boltzmann Equation, Communications in Partial Differential Equations, 38(1), 155-169 (2013).

\bibitem{ACG}
Alonso, R., Carneiro, E., Gamba, I., Convolution Inequalities for the Boltzmann Collision Operator, Commun. Math. Phys. 298, 293-322 (2010).


\bibitem{AGbams}
Alonso, R., Gamba, I., The Boltzmann equation for hard potentials with integrable angular transition: Coerciveness, exponential tails rates, and Lebesgue integrability. arXiv:2211.09188.

\bibitem{ATh}
Alonso, R., Gamba, I., Propagation of $L^{1}$ and $L^{\infty}$ Maxwellian weighted bounds for derivatives of solutions to the homogeneous elastic Boltzmann equation, J. Math. Pures Appl. 89, 575-595 (2008).

\bibitem{AG}

Alonso, R., Gamba, I., Gain of integrability for the Boltzmann collisional operator, Kinetic and Related Models, 4(1), 41-51 (2011).

\bibitem{AGH}
Alonso, R., Gamba, I., Tharkabhushanam, S.-H., Convergence and error estimates for the Lagrangian-based conservative spectral method for Boltzmann equations.  SIAM J. Numer. Anal. 56 (2018), no. 6,    3534-3579.

\bibitem{AL}
Alonso, R., Lods, B., Free cooling and high-energy tails of granular gases with variable restitution coefficient. SIAM J. Math. Anal. 42, 2499-2538 (2010).

\bibitem{ALods}
Alonso, R., Lods, B., Two proofs of Haff's law for dissipative gases: The use of entropy and the weakly inelastic regime, J. Math. Anal. Appl. 397, 260-275 (2013).

\bibitem{ALodsSS}
Alonso, R., Lods, B., Uniqueness and regularity of steady states of the Boltzmann equation for viscoelastic hard-spheres driven by a thermal bath, Commun. math. sci. 11(3), 807-862 (2013)

\bibitem{A}
Arkeryd, L., $L^{\infty}$-estimates for the space-homogeneous Boltzmann equation, J. Statist.
Phys. 31, 347-361 (1982)

\bibitem{A1}
Arkeryd, L., Stability in $L^{1}$ for the spatially homogeneous Boltzmann equation, Arch.
Rational Mech. Anal. 103, 151-167 (1988)

\bibitem{B}
Bobylev, A., The theory of the nonlinear, spatially uniform Boltzmann equation for Maxwellian molecules, Sov. Sci. Rev. C. Math. Phys. 7, 111-233 (1988).

\bibitem{B1}
Bobylev A., Moment inequalities for the Boltzmann equations and application to spatially homogeneous problems, J. Statist. Phys. 88(5-6), 1183-1214 (1997).

\bibitem{BG}
Bobylev, A., Gamba, I., Upper Maxwellian bounds for the Boltzmann equation with pseudo-Maxwell molecules. Kinet. Relat. Models 10 (2017), no. 3, 
 573-585.

\bibitem{BGP}
Bobylev, A., Gamba, I., Panferov, V., Moment inequalities and high-energy tails for Boltzmann equations with inelastic interactions. J. Statist. Phys., 116(5-6), 1651-1682 (2004).

\bibitem{BD}
Bouchut, F., Desvillettes, L., A proof of the smoothing properties of the positive part of Boltzmann's kernel,  Rev. Mat. Iberoamericana 14, 47-61 (1998).

\bibitem{CL}

Ca\~nizo, J. A., Lods, B., Exponential convergence to equilibrium for subcritical solutions of the Becker-D\"{o}ring equations. J. Differential Equations 255 (5), 905-950 (2013).

\bibitem{C}

Carleman, T., Probl\`{e}mes math\'{e}matiques dans la th\'{e}orie cin\'{e}tique des gaz, Publ. Sci. Inst. Mittag-Leer. 2. Almqvist \& Wiksells Boktryckeri Ab, Uppsala, 1957.

\bibitem{CCL}

Carlen, E., Carvalho, M., Lu, X., On Strong Convergence to Equilibrium for the Boltzmann Equation with Soft Potentials, J Stat Phys 135, 681-736 (2009).

\bibitem{CGT}
Carlen, E., Gabetta, E., Toscani, G. Propagation of smoothness and the rate of exponential convergence to equilibrium for a spatially homogeneous Maxwellian gas, Commun. Math. Phys. 199, 521-546 (1999).

\bibitem{D}
Desvillettes, L., Some applications of the method of moments for the homogeneous {B}oltzmann and {K}ac equations, Arch. Rational Mech. Anal. 123 (4), 387-404 (1993)

\bibitem{DM}
Desvillettes, L., Mouhot, C., About $L^p$ estimates for the spatially homogeneous {B}oltzmann equation, Ann. Inst. H. Poincar\'e Anal. Non Lin\'eaire, 22, no. 2, 127-142 (2005)

\bibitem{DL}

DiPerna, R. J., Lions, P. L., On the Cauchy problem for Boltzmann equations: Global existence and weak stability, Annals of Mathematics, 130, 321-366 (1989).

\bibitem{GPV}

Gamba, I. M., Panferov, V., Villani, C., Upper Maxwellians bounds for the spatially homogeneous Boltzmann equation, Arch. Rat. Mech. Anal, 194, 253-282 (2009).

\bibitem{G1}
Grad, H., Principles of the kinetic theory of gases. In Handbuch der Physik (herausgegeben von S. Fl\"{u}gge), Bd. 12, Thermodynamik der Gase. Springer-Verlag, Berlin, 205-294 (1958).

\bibitem{G2}
Grad, H., Asymptotic theory of the Boltzmann equation. II. In Rarefied Gas Dynamics (Proc. 3rd Internat. Sympos., Palais de l'UNESCO, Paris, 1962), Vol. I. Academic Press, New York, 26-59 (1963).

\bibitem{GF}

Grafakos, L., Classical Fourier Analysis, Second Edition, Springer (2008).

\bibitem{GMM}

Gualdani, M.P., Mischler, S., Mouhot, C., Factorization of non-symmetric operators and exponential H-theorem, M\'{e}moires de la SMF (N.S.) No. 153 (2017), 137 pp. ISBN: 978-2-85629-874-9.

\bibitem{G}

Gustafsson, T., Global $L^p$-properties for the spatially homogeneous Boltzmann equation.  Arch. Rational Mech. Anal. 103, 1-38 (1988).

\bibitem{Lions}

Lions, P.-L., Compactness in Boltzmann's equation via Fourier integral operators and applications I.  J. Math. Kyoto Univ. 34, 391-427 (1994).

\bibitem{MiscWenn}

Mischler, S., Wennberg, B., On the spatially homogeneous Boltzmann equation. Ann. Inst. H. Poincar\'e Anal. Non Lin\'eaire. 16(4), 467-501 (1999).

\bibitem{Mo}
Mouhot, C., Rate of convergence to equilibrium for the spatially homogeneous Boltzmann equation with hard potentials, Comm. Math. Phys. 261(3), 629-672 (2006).

\bibitem{Mo1}
Mouhot, C., Explicit coercivity estimates for the linearized Boltzmann and Landau operators. Comm. Partial Differential Equations 31(7-9 ), 321-1348 (2006).

\bibitem{MV}

Mouhot C., Villani, C., Regularity Theory for the Spatially Homogeneous Boltzmann Equation with Cut-Off, Arch. Rational Mech. Anal. 173 ,169-212 (2004).

\bibitem{MM}
Pavi\'{c}-\v{C}oli\'{c}, M., Taskovi\'{c}, M., Propagation of stretched exponential moments for the Kac equation and Boltzmann equation with Maxwell molecules. Kinet. Relat. Models 11 (2018), no. 3, 597-613.

\bibitem{PW}
Pulvirenti, A., Wennberg, B., Maxwellian lower bound for solutions to the Boltzmann equation, Commun. Math. Phys. 183, 145-160 (1997).

\bibitem{V}
Villani, C.,  Cercignani's conjecture is sometimes true and always almost true, Comm. Math. Phys. 234(3), 455-490 (2003).

\bibitem{Wennberg-stability}
Wennberg, B., Stability and exponential convergence in $L^p$ for the spatially homogeneous Boltzmann equation, Nonlinear Anal., 20, 935-964 (1993).

\bibitem{Wennberg-stability1}
Wennberg, B., Stability and exponential convergence for the Boltzmann equation, Arch. Rational Mech. Anal., 130, 103-144 (1995).

\bibitem{Wennberg}

Wennberg, B., Regularity in the Boltzmann equation and the Radon transform, Comm. Partial Differential Equ. 19, 2057-2074 (1994).

\bibitem{WennbergMP}
Wennberg, B., Entropy dissipation and moment production for the Boltzmann equation, Jour.  Stat. Phys. 86(5/6), 1053-1066, (1997).

\end{thebibliography}
\end{document}